\newtheorem{theorem}{Theorem}[section] %
\newtheorem{assumption}{Assumption}[section] %
\theoremstyle{definition}%
\newtheorem{remark}{Remark}[section]
\DeclareMathOperator{\cv}{cv} 
\DeclareMathOperator{\sign}{sign} 
\DeclareMathOperator{\maxbias}{\overline{bias}} 
\DeclareMathOperator*{\argmin}{argmin} 
\DeclareMathOperator{\var}{var} 
\DeclareMathOperator{\sd}{sd} 
\DeclareMathOperator{\se}{se} 
\newcommand{\hatse}{\widehat{\se}} 
\newcommand{\dd}{\mathrm{d}} 
\newcommand{\po}{{q}} 
\newcommand{\SC}{{M}} 
\newcommand{\FSY}[1]{\mathcal{F}_{\textnormal{T},#1}} 
\newcommand{\FHol}[1]{\mathcal{F}_{\textnormal{Höl},#1}} 
\newcommand{\hrmse}{h^{*}_{\textsc{rmse}}} 
\DeclarePairedDelimiter\hor{[}{)}
\DeclarePairedDelimiter\abs{\lvert}{\rvert}
\renewcommand*{\arraystretch}{1.2}
\newcommand{\hpt}{{h}^*_{\textsc{pt}}} 
\newcommand{\hmsedpi}{\hat{h}^*_{\textsc{pt}}} %
\newcommand{\bmsedpi}{\hat{b}^*_{\textsc{pt}}} %
\newcommand{\hcedpi}{\hat{h}_{\textsc{ce}}} %
\newcommand{\bcedpi}{\hat{b}_{\textsc{ce}}} %
\newcommand{\hhrmse}[1]{\hat{h}^{*}_{\textsc{rmse},#1}} %
\newcommand{\hhrmsez}{\hat{h}^{*}_{\textsc{rmse}}} %
\newcommand{\Mrot}{\hat{M}_{\textsc{rot}}} %
\newcommand{\hfgrot}{\ensuremath\hat{h}^*_{\textsc{pt,rot}}} %
\newcommand{\FLCI}{\text{FLCI}} %
\newcommand{\RMSE}{\text{RMSE}} %
\newcommand{\OCI}{\text{OCI}} %
\definecolor{webbrown}{rgb}{.6,0,0}
\DeclarePairedDelimiter\indicatorfence{\{}{\}}
\newcommand\1{\operatorname{I}\indicatorfence}
\g@addto@macro\TPT@defaults{\small}
\Crefname{equation}{Eq.}{Eqs.}
\crefname{assumption}{assumption}{assumptions}
\crefname{remark}{remark}{remarks}
\crefname{proposition}{proposition}{propositions}
\crefname{sappsec}{Supplemental Appendix}{Supplemental Appendices}
\crefname{sappsubsec}{Supplemental Appendix}{Supplemental Appendices}
\crefname{sappsubsubsec}{Supplemental Appendix}{Supplemental Appendices}
\crefname{appsec}{Appendix}{Appendices}
\title{Simple and Honest Confidence Intervals in Nonparametric
  Regression\thanks{We thank Don Andrews, Sebiastian Calonico, Matias Cattaneo,
    Max Farrell, Christoph Rothe and numerous seminar and conference
    participants for helpful comments and suggestions. We thank Kwok Hao Lee for
    research assistance. All remaining errors are our own. The research of the
    first author was supported by National Science Foundation Grant SES-1628939.
    The research of the second author was supported by National Science
    Foundation Grant SES-1628878.}}
\author{Timothy B. Armstrong\thanks{email: \texttt{timothy.armstrong@yale.edu}}\\
  Yale University \and
  Michal Kolesár\thanks{email: \texttt{mkolesar@princeton.edu}}\\
  Princeton University}%
\date{\today}
\begin{document}

\maketitle

\begin{abstract}
  We consider the problem of constructing honest confidence intervals (CIs) for
  a scalar parameter of interest, such as the regression discontinuity
  parameter, in nonparametric regression based on kernel or local polynomial
  estimators. To ensure that our CIs are honest, we use critical values that
  take into account the possible bias of the estimator upon which the CIs are
  based. We show that this approach leads to CIs that are more efficient than
  conventional CIs that achieve coverage by undersmoothing or subtracting an
  estimate of the bias. We give sharp efficiency bounds of using different
  kernels, and derive the optimal bandwidth for constructing honest CIs. We show
  that using the bandwidth that minimizes the maximum mean-squared error results
  in CIs that are nearly efficient and that in this case, the critical value
  depends only on the rate of convergence. For the common case in which the rate
  of convergence is $n^{-2/5}$, the appropriate critical value for 95\% CIs is
  2.18, rather than the usual 1.96 critical value. We illustrate our results in
  a Monte Carlo analysis and an empirical application.
\end{abstract}

\clearpage

\section{Introduction}

This paper considers the problem of constructing confidence intervals (CIs) for
a scalar parameter $T(f)$ of a function $f$, which can be a conditional mean or
a density. The scalar parameter may correspond, for example, to a conditional
mean, or its derivatives at a point, the regression discontinuity or the
regression kink parameter, or the value of a density or its derivatives at a
point. A popular approach to estimation of $T(f)$ is to use kernel or local
polynomial estimators. These estimators are both simple to implement, and highly
efficient in terms of their mean squared error (MSE) properties
\citep{fan93,cfm97}. CIs are typically formed by undersmoothing (choosing the
bandwidth to shrink more quickly than the MSE optimal bandwidth) or
bias-correction (subtracting an estimate of the estimator's bias).

In this paper, we propose a simple alternative approach to forming CIs based on
these estimators that is more efficient than both undersmoothing and
bias-correction in the sense that it leads to shorter CIs while maintaining
coverage over the same parameter space $\mathcal{F}$ for $f$ (which typically
places bounds on derivatives of $f$). In particular, one simply adds and
subtracts the estimator's standard error times a critical value that is larger
than the usual normal quantile $z_{1-\alpha/2}$, and takes into account the
possible bias of the estimator.\footnote{An R package implementing our CIs in
  regression discontinuity designs is available at
  \url{https://github.com/kolesarm/RDHonest}.} Asymptotically, these CIs
correspond to fixed-length CIs as defined in \citet{donoho94}, and so we refer
to them as fixed-length CIs. We show that the critical value depends only on (1)
the order of the derivative that one bounds to define the parameter space
$\mathcal{F}$; and (2) the criterion used to choose the bandwidth. In
particular, if the MSE optimal bandwidth is used with a local linear estimator,
computing our CI at the 95\% coverage level amounts to replacing the usual
critical value $z_{0.975}=1.96$ with 2.18.

When the criterion for bandwidth choice is the length of the resulting CI, we
show that the resulting bandwidth is in fact \emph{larger} than the MSE optimal
bandwidth. This contrasts with the work of \citet{hall_effect_1992} and
\citet{ccf15} on optimality of undersmoothing. Importantly, these papers
restrict attention to CIs that use the usual critical value $z_{1-\alpha/2}$. It
then becomes necessary to choose a small enough bandwidth so that the bias is
asymptotically negligible relative to the standard error, since this is the only
way to achieve correct coverage. Our results imply that rather than choosing a
smaller bandwidth, it is better to use a larger critical value that takes into
account the potential bias; this also ensures correct coverage regardless of the
bandwidth sequence. While the fixed-length CIs shrink at the optimal rate,
undersmoothed CIs shrink more slowly. We also show that under smoothness
assumptions needed to implement bias-correction, our CIs shrink at a faster rate
than bias-corrected CIs, once the standard error is adjusted to take into
account the variability of the bias estimate (\citet{cct14} show that doing so
is important for maintaining coverage). The oversmoothing relative to the MSE
optimal bandwidth is relatively modest: under a range of conditions most
commonly used in practice, a fixed-length CI centered at the MSE optimal
bandwidth is 99\% efficient relative to using the CI optimal bandwidth.
Therefore, a practically attractive implementation of our CIs is to simply
center them around an estimator with MSE optimal bandwidth, rather than
reoptimizing the bandwidth for length and coverage of the CI\@.

A key requirement that underlies our results is the notion of honesty: as in
\citet{li89honest}, we require that the CIs cover the true parameter
asymptotically at the nominal level uniformly over the parameter space
$\mathcal{F}$. Furthermore, we allow this parameter space to grow with the
sample size. The notion of honesty is closely related to the use of the minimax
criterion used to derive the MSE efficiency results: in both cases, one requires
good performance uniformly over the parameter space $\mathcal{F}$. The
requirement that the CIs be honest is necessary for good finite-sample
performance. In contrast, approaches to inference based on pointwise-in-$f$
asymptotics, such as using bandwidths that optimize the pointwise-in-$f$
asymptotic MSE can lead to arbitrarily poor finite-sample behavior, as we
discuss further in \Cref{sec:rmse-pointw-optim}. To illustrate the
practical importance of this point, we conduct a Monte Carlo study in which we
show that commonly used CIs based on plug-in bandwidths that attempt to estimate
this pointwise-in-$f$ optimal bandwidth exhibit severe undercoverage, even when
combined with undersmoothing or bias-correction.

When the parameter space places a bound $\SC$ on a derivative of $f$, our CIs
require this bound to be specified explicitly. While this may appear to be a
disadvantage of our particular approach, due to impossibility results of
\citet{low97}, \citet{CaLo04}, and \citet{ArKo18optimal}, this cannot be
avoided, regardless of how one forms the CI, without making further restrictions
on the function $f$. In particular, these papers show that, without additional
assumptions on the parameter space, one cannot use a data-driven method to
estimate $\SC$ and maintain coverage over the whole parameter space---any other
method that appears to avoid making this choice must do so \emph{implicitly}.
For example, an apparent advantage of undersmoothing is that it leads to correct
coverage for any fixed smoothness constant $\SC$. However, as we discuss in
detail in \Cref{sec:effic-comp}, a more accurate description of
undersmoothing is that for each sample size $n$, it implicitly chooses a
constant $\SC_n$ under which coverage is controlled. Given a sequence of
undersmoothed bandwidths, we show how $\SC_n$ can be calculated explicitly. One
can then obtain a shorter CI with the same coverage properties by computing a
fixed-length CI for the corresponding $\SC_n$. Regardless of how one chooses
$\SC$, the fixed-length CIs we propose are more efficient than undersmoothed or
bias-corrected CIs that use the same (implicit or explicit) choice of $\SC$. In
fact, it follows from the calculations in \citet{donoho94} and
\citet{ArKo18optimal} that our CIs, when constructed using a length-optimal or
MSE-optimal bandwidth, are highly efficient among \emph{all} honest CIs: no
other approach to inference can substantively improve on their length, while
still maintaining coverage.

As an alternative to choosing $\SC$ a priori, one can place additional
conditions on the function $f$ that allow for an upper bound on $\SC$ to be
estimated. To maintain efficiency of the resulting CI, however, care must be
taken in doing so: if $\SC$ is a bound on the $p$th derivative, and one imposes
a bound $\tilde{\SC}$ on the $(p+1)$th derivative in order to estimate $\SC$,
then the optimal CI will be based on a different estimator and will depend on
the new bound $\tilde\SC$. To avoid such issues, we propose a regularity class
that relates a global polynomial approximation to smoothness of the function $f$
near the point of interest, and we show formally that, for this class, one can
obtain a valid and highly efficient CI using a global polynomial rule of thumb
suggested by \citet{fg96}. However, given the additional assumptions required by
this (or any) data driven choice of $\SC$, we recommend that this approach be
used as a starting point for sensitivity analysis allowing for other choices of
$\SC$.

Another approach to data-driven choices of $\SC$ is to use ``self-similarity''
conditions, as suggested by \citet{GiNi10}, which relate the maximum and minimum
bias at different bandwidths. \citet{bull_honest_2012} and \citet{cck14as} have
obtained rate optimal confidence bands under such conditions, which, like the
CIs considered here, use a critical value based on an upper bound on the bias.
While these results for confidence bands could be extended to cover the problem
of constructing CIs for a scalar parameter, obtaining sharp critical values
appears to be very difficult. Indeed, the results of
\citet{armstrong_adaptation_2018} show that the sharp form of such CIs must
depend to first order on auxiliary constants used to define self-similarity.
Nonetheless, our approach of bounding local smoothness using a global polynomial
approximation is inspired by the self-similarity approach taken by this
literature, and we see it as being in the same spirit. \citet{schennach15} also
uses an upper bound on the bias based on an estimated smoothness constant. While
the coverage of the resulting CIs is pointwise-in-$f$, it is plausible that the
CIs are honest under additional auxiliary conditions, similar in spirit to
self-similarity.

In addition to calculating the relative efficiency of CIs constructed using
different bandwidths, our results allow us to calculate the relative efficiency
of CIs constructed using different kernels. In particular, we show that the
relative efficiency of kernels for the CIs we propose is \emph{the same} as the
relative efficiency of the estimates in terms of MSE\@. Thus, relative
efficiency calculations for MSE, such as the ones in \citet{fan93},
\citet{cfm97}, and \citet{fggbe97} for estimation of a nonparametric mean at a
point (estimation of $f(x_{0})$ for some $x_{0}$) that motivate much of
empirical practice in the applied regression discontinuity literature, translate
directly to CI construction. Despite their importance in motivating empirical
practice, however, such results are subject to a technical critique about how
the parameter space is specified: rather than placing a bound on a derivative of
$f$ (a Hölder condition), currently available relative efficiency results place
assumptions directly on the error of a Taylor approximation at a particular
point, so that some ``nonsmooth'' functions are in fact not ruled
out.\footnote{See \citet{ImWa17}, as well as our discussion in
  \Cref{sec:inference-point-theory} for an elaboration of this critique.} To address
this, we derive the minimax performance of local polynomial estimators under
Hölder restrictions on $f$. These results confirm that the local polynomial
estimators used in empirical practice are also highly efficient under Hölder
restrictions on $f$. Furthermore, while we focus on asymptotic CIs and relative
efficiency, these results include a derivation of the finite-sample worst-case
bias of local polynomial estimators under Hölder restrictions, which was used by
\citet{KoRo16} to form finite-sample valid CIs in a fixed-design regression
setting. These findings may be of independent interest.

The requirement of honesty is also important to ensure that our concept of
optimality is well-defined and consistent. As discussed above, it allows us to
consider bandwidth or kernel efficiency for constructing CIs. In addition, it
also allows us to formally show that using local polynomial regression of an
order that's too high given the amount of smoothness imposed is suboptimal. In
contrast, under pointwise-in-$f$ asymptotics, high-order local polynomial
estimates are superefficient at every point in the parameter space (see Chapter
1.2.4 in \citealp{tsybakov09}, and \citealp{BrLoZh97}).

To illustrate the implementation of the honest CIs, we reanalyze the data from
\citet{LuMi07}, who, using a regression discontinuity design, find a large and
significant effect of receiving technical assistance to apply for Head Start
funding on child mortality at a county level. However, this result is based on
CIs that ignore the possible bias of the local linear estimator around which
they are built, and an ad hoc bandwidth choice. We find that, if one bounds the
second derivative globally by a constant $\SC$ using a Hölder class, the
uncertainty associated with the effect size is much larger than originally
reported, unless one is very optimistic about the constant $\SC$, allowing $f$
to only be linear or nearly-linear.

Our results build on the literature on estimation of linear functionals in
normal models with convex parameter spaces, as developed by \citet{donoho94},
\citet{IbKh85} and many others. As with the results in that literature, our
setup gives asymptotic results for problems that are asymptotically equivalent
to the Gaussian white noise model, including nonparametric regression
\citep{BrLo96} and density estimation \citep{nussbaum_asymptotic_1996}. Our main
results build on the ``renormalization heuristics'' of \citet{DoLo92}, who show
that many nonparametric estimation problems have renormalization properties that
allow easy computation of minimax MSE optimal kernels and rates of convergence.
Our results hold under essentially the same conditions, which apply in many
classical nonparametric settings.

The CIs we consider in this paper are applications of the fixed-length CIs
proposed in the context of inference on linear functionals $T(f)$ in Gaussian
nonparametric regression by \citet{donoho94}, which have also been studied
recently in \citet{ArKo18optimal}, and in contemporaneous and subsequent work by
\citet{KoRo16} and \citet{ImWa17}. In contrast to the finite-sample approach
taken in these papers, we focus on asymptotic results, and we also allow $T(f)$
to be non-linear. Instead of imposing the nonparametric regression model, we
require a renormalization condition (see \Cref{performance_approx_eq} below)
that allows us to apply the ``renormalization heuristics'' of \citet{DoLo92}; we
are thus able to cover settings such as density estimation or estimation of a
bidder valuation in first-price auctions
(see \begin{NoHyper}\Cref{sec:addit-appl}\end{NoHyper}). Our asymptotic approach
allows for simplifications that deliver our main relative efficiency results.
These efficiency results are different from and complementary to the asymptotic
form of the efficiency bounds given in \citet{donoho94} and
\citet{ArKo18optimal}: whereas we consider relative efficiency of estimators and
fixed-length CIs based on different kernels and bandwidths, \citet{donoho94} and
\citet{ArKo18optimal} bound the scope for efficiency gains from CIs that do not
fall into this class. \citet{donoho94} and \citet{ArKo18optimal} find that the
scope for further improvement is small, which motivates our focus on this class
of estimators and CIs. See \Cref{remark:nonlinear_lower_bounds} for further
discussion.

The rest of this paper is organized as follows. \Cref{results_sec} gives the
main results. \Cref{sec:applications} applies our results to inference at a
point, sharp and fuzzy RD, and it discusses practical implementation issues,
including a rule of thumb for choosing $\SC$. \Cref{sec:comp-with-other} gives a
theoretical comparison of our fixed-length CIs to other approaches, and
\Cref{monte_carlo_sec} compares them in a Monte Carlo study. Finally,
\Cref{application_sec} presents an empirical application based on
\citet{LuMi07}. \Cref{proofs_sec} gives proofs of the results in
\Cref{results_sec}. Additional results are collected in Supplemental Appendices.

\section{General results}\label{results_sec}

We are interested in a scalar parameter $T(f)$ of a function $f$, which is
typically a conditional mean or a density. The function $f$ is assumed to lie in
a function class $\mathcal{F}=\mathcal{F}(\SC)$, which places ``smoothness''
conditions on $f$, where $\SC$ indexes the level of smoothness. We focus on
classical nonparametric function classes, in which $\SC$ corresponds to a bound
on a derivative of $f$ of a given order. We allow $\SC=\SC_{n}$ to grow with the
sample size $n$.

We have available a class of estimators $\hat T(h;k)$, indexed by a bandwidth
$h=h_{n}>0$ and a kernel $k$. Let $\hatse(h;k)$ denote the standard error of
$\hat{T}(h;k)$, an estimate of its standard deviation $\sd_f(\hat T(h;k))$. We
assume that a central limit theorem applies to $\hat{T}(h;k)$, so that in large
samples, the $t$-statistic $[\hat T(h;k)-T(f)]/\hatse(h;k)$ will be
approximately normal with variance 1 and mean given by the ratio of bias to
standard deviation, $t_{f}=(E_{f}[\hat{T}(h;k)-T(f)])/\sd_{f}(\hat{T}(h;k))$.
Since $t_{f}$ depends on the unknown function $f$, this ratio is unknown. Note,
however, that we can bound $\abs{t_{f}}$ by the worst-case ratio of bias to
standard deviation (bias-sd ratio),
$t_{\mathcal{F}}=\sup_{f\in\mathcal{F}}\abs{E_{f}[\hat{T}(h;k)-T(f)]}/\sd_{f}(\hat{T}(h;k))$.
Therefore, if this bias-sd ratio can be computed up to asymptotically negligible
terms, we can construct an honest CI as
\begin{equation}\label{eq:honest-two-sided-ci}
  \hat T(h;k)\pm \cv_{1-\alpha}(t)\cdot\hatse(h;k),
\end{equation}
where the approximate bias-sd ratio $t$ satisfies $t=t_{\mathcal{F}}(1+o(1))$,
and $\cv_{1-\alpha}(t)$ is the $1-\alpha$ quantile of the folded normal
distribution $\abs{N(t,1)}$, or, equivalently, the square root of the $1-\alpha$
quantile of a $\chi^{2}$ distribution with $1$ degree of freedom, and
non-centrality parameter $t^{2}$, which is readily available in statistical
software. For easy reference, we list these critical values in
\Cref{tab:cvs} for selected values of $t$. Because the quantiles of a
$\chi^{2}$ distribution are increasing in its non-centrality parameter,
replacing $t_{f}$ with an upper bound that is valid for all $f\in\mathcal{F}$
yields a CI that is honest over $\mathcal{F}$. The CI
in~\eqref{eq:honest-two-sided-ci} is an approximate version of a fixed-length
confidence interval (FLCI) studied in \citet{donoho94}, who replaces
$\hatse(h;k)$ with $\sd_f(\hat T(h;k))$ in the definition of this CI, and
assumes $\sd_f(\hat T(h;k))$ is constant over $f$, in which case its length will
be fixed. We thus refer to CIs of this form as ``fixed-length'', even though
$\hatse(h;k)$ is random.

To motivate our main regularity condition~\eqref{performance_approx_eq} below
that will facilitate studying the performance of these FLCIs and allow for an
easy computation of the bias-sd ratio $t$, suppose that the standard deviation
and the worst-case bias of the estimator $\hat{T}(h;k)$,
\begin{equation*}
  \maxbias(\hat{T}(h;k))=\sup_{f\in\mathcal{F}}\abs{E_f\hat{T}(h;k)-T(f)},
\end{equation*}
scale as powers of $h$. In particular, suppose that, for some $\gamma_b>0$,
$\gamma_s<0$, $B(k)>0$ and $S(k)>0$,
\begin{align}\label{bias_var_scale_eq}
  \maxbias(\hat T(h;k))&=h^{\gamma_b}\SC B(k)(1+o(1)), &
  \sd_{f}(\hat{T}(h;k))&=h^{\gamma_s}n^{-1/2}S(k)(1+o(1)),
\end{align}
where the $o(1)$ term in the second equality is uniform over $f\in\mathcal{F}$.
We show in
\begin{NoHyper}\Cref{verification_sec}\end{NoHyper} that
this condition will hold whenever the renormalization heuristics of
\citet{DoLo92} can be formalized. This includes most classical nonparametric
problems, such as estimation of a density or a conditional mean, or its
derivative, evaluated at a point (which may be a boundary point). In
\Cref{sec:inference-point-theory}, we show that~\eqref{bias_var_scale_eq} holds with
$\gamma_{b}=p$, and $\gamma_{s}=-1/2$ under mild regularity conditions when
$\hat{T}(h;k)$ is a local polynomial estimator of a conditional mean at a point,
and $\mathcal{F}(\SC)$ consists of functions with $p$th derivative bounded by
$\SC$.

\begin{remark}\label{remark:notation}
  The second condition in~\eqref{bias_var_scale_eq} implies that the standard
  deviation does not depend on the underlying function $f$ asymptotically. In
  certain settings, such as density estimation (see
  \begin{NoHyper}\Cref{density_sec}\end{NoHyper}), this may
  require choosing a localized sequence of parameter spaces $\mathcal{F}_n$,
  similar to local asymptotic minimax results in parametric settings
  \citep[e.g., Section 8.7 in][]{van_der_vaart_asymptotic_1998}. While we allow
  for such dependence, we keep any dependence of $\mathcal{F}$ on $n$ implicit
  in our notation in the main text. Similarly, the quantities $B(k)$ and $S(k)$
  generally depend on $\mathcal{F}$ (which if the parameter space is localized
  includes the localization point), as well as on other nuisance parameters,
  such as the variance of the regression errors. To prevent notational clutter,
  we keep this dependence implicit.
\end{remark}

Under~\eqref{bias_var_scale_eq}, we can use the ratio
$t=h^{\gamma_b-\gamma_s}\SC B(k)/(n^{-1/2}S(k))$ of the leading worst-case bias
and standard deviation terms to compute the critical value $\cv_{1-\alpha}(t)$
in~\eqref{eq:honest-two-sided-ci}. Analogously to the two-sided case, honest
one-sided $1-\alpha$ CIs based on $\hat{T}(h;k)$ can be constructed by
subtracting the standard error times a $1-\alpha$ quantile of the distribution
$\mathcal{N}(t,1)$. This is asymptotically equivalent to the CI
\begin{equation}\label{eq:oci}
  \hor{\hat{T}(h;k)-h^{\gamma_b}\SC B(k)-
    z_{1-\alpha}h^{\gamma_s}n^{-1/2}S(k)\;, \;\infty},
\end{equation}
which subtracts the maximum bias, in addition to subtracting $z_{1-\alpha}$
times the standard deviation, from $\hat{T}(h;k)$.

\begin{remark}\label{remark:two-sided-cis}
  One could also form honest two-sided CIs by simply adding and subtracting the
  worst case bias, in addition to adding and subtracting the standard error
  times $z_{1-\alpha/2}=\cv_{1-\alpha}(0)$, the $1-\alpha/2$ quantile of a
  standard normal distribution, forming the CI as
  $\hat{T}(h;k)\pm (h^{\gamma_b}\SC B(k)+z_{1-\alpha/2}\cdot\hatse(h;k))$.
  However, since the estimator $\hat{T}(h;k)$ cannot simultaneously have a large
  positive and a large negative bias, such CI will be conservative, and longer
  than the CI given in \Cref{eq:honest-two-sided-ci}.
\end{remark}

To discuss the optimal choice of bandwidth $h$ and compare efficiency of
different kernels $k$ in forming one- and two-sided CIs, and compare the results
to the bandwidth and kernel efficiency results for estimation, it will be useful
to introduce notation for a generic performance criterion. Let $R(\hat T)$
denote the worst-case (over $\mathcal{F}$) performance of $\hat{T}$ according to
a given criterion, and let $\tilde{R}(b, s)$ denote the value of this criterion
when $\hat T-T(f)\sim N(b,s^2)$. For FLCIs, we can take their half-length as the
criterion, which leads to
\begin{align*}
  R_{\FLCI, \alpha}(\hat T(h;k))& =\inf \big\{\chi :
    P_{f}(\abs{\hat{T}(h;k)-T(f)}\le \chi)\ge 1-\alpha\;\text{for all
      $f\in\mathcal{F}$} \big\}, \\
  \tilde{R}_{\FLCI, \alpha}(b, s) &= \inf \left\{\chi : P_{Z\sim
      N(0,1)}\left(\abs{s Z+b}\le \chi\right)\ge 1-\alpha \right\} =s\cdot
  \cv_{1-\alpha}(b/s).
\end{align*}
To evaluate one-sided CIs, one needs a criterion other than length, which is
infinite. A natural criterion is expected excess length, or quantiles of excess
length. We focus here on the quantiles of excess length. For CI of the
form~\eqref{eq:oci}, its worst-case $\beta$ quantile of excess length is given
by
$R_{\OCI, \alpha, \beta}(\hat{T}(h;k)) = \sup_{f\in\mathcal{F}}q_{f,
  \beta}(T(f)-\hat{T}(h;k)+h^{\gamma_b}\SC B(k)+
z_{1-\alpha}h^{\gamma_s}n^{-1/2}S(k))$, where $q_{f, \beta}(Z)$ is the $\beta$
quantile of a random variable $Z$. The worst-case $\beta$ quantile of excess
length based on an estimator $\hat{T}$ when $\hat T-T(f)$ is normal with
variance $s^2$ and bias ranging between $-b$ and $b$ is
$\tilde{R}_{\OCI, \alpha, \beta}(b, s)= 2b+(z_{1-\alpha}+z_\beta)s$. Finally, to
evaluate $\hat{T}(h;k)$ as an estimator we use the maximum root mean squared
error (RMSE) under $\mathcal{F}$ as the performance criterion:
\begin{align*}
  R_{\RMSE}(\hat{T})&=\sup_{f\in\mathcal{F}}\sqrt{E_f[\hat{T}-T(f)]^2},
  & \tilde{R}_{\RMSE}(b, s)&=\sqrt{b^2+s^2}.
\end{align*}

The key regularity condition that we impose on the class of estimators
$\hat{T}(h; k)$ is that their performance can be approximated in large samples
by the performance of a normally distributed estimator with bias and standard
deviation that scale as powers of $h$,
\begin{equation}\label{performance_approx_eq}
  R(\hat T(h;k))= \tilde{R}(h^{\gamma_b}\SC B(k), h^{\gamma_s}n^{-1/2}S(k))(1+o(1)).
\end{equation}
For the performance criteria above, if the estimator $\hat{T}(h;k)$ satisfies an
appropriate central limit theorem, and \Cref{bias_var_scale_eq} holds,
condition~\eqref{performance_approx_eq} will hold so long as the estimator is
centered, so that, up to asymptotically negligible terms, its maximum and
minimum bias over $\mathcal{F}$ sum to zero,
$\sup_{f\in\mathcal{F}}E_f(\hat{T}(h;k)-T(f))=-\inf_{f\in\mathcal{F}}
E_f(\hat{T}(h;k)-T(f))(1+o(1))$.\footnote{\label{fn:centering}This centering
  condition holds automatically by a symmetry argument for kernel or local
  polynomial estimators if $f$ is a conditional mean or a density, $T(f)$ is its
  value or its derivative at a point, or a regression discontinuity parameter,
  and $\mathcal{F}$ bounds its derivatives. In other cases,
  \Cref{performance_approx_eq} will hold when the estimator is recentered by
  subtracting
  $\mathfrak{B}=(\sup_{f\in\mathcal{F}}E_f(\hat{T}(h;k)-T(f))+\inf_{f\in\mathcal{F}}
  E_f(\hat{T}(h;k)-T(f)))/2$, or an estimate $\hat{\mathfrak{B}}$ of
  $\mathfrak{B}$ that is consistent in the sense that
  $(\hat{\mathfrak{B}}-\mathfrak{B})/\hatse(h;k)$ converges in probability to
  zero, uniformly over $\mathcal{F}$. Recentering the estimator in this way
  improves the estimator's performance under the criteria that we consider.}
Heuristically, this follows because if
$(\hat{T}(h;k)-E_{f}\hat{T}(h;k))/\sd_{f}(\hat{T})$ is asymptotically $N(0, 1)$,
then under~\eqref{bias_var_scale_eq}, $\hat{T}(h;k)-T(f)$ will be in large
samples approximately normal, with standard deviation
$h^{\gamma_{s}}n^{-1/2}S(k)$, and mean bounded above and below by
$h^{\gamma_{b}}\SC B(k)$. In \Cref{sec:inference-point-theory}, we
verify~\eqref{performance_approx_eq} for the problem of estimation of a
conditional mean at a point. For estimation of certain smooth non-linear
functionals of the regression function or non-parametric density, including
fuzzy regression discontinuity discussed in \Cref{sec:applications}, and
estimating a bidder valuation in first price auctions discussed in
\begin{NoHyper}\Cref{auctions_sec}\end{NoHyper}, moments of
the estimator may not exist. In these cases, one can use
\begin{NoHyper}\Cref{general_sufficient_condition_thm,delta_method_thm} in \Cref{verification_sec}\end{NoHyper} to verify
\eqref{performance_approx_eq}, which only require a weaker version
of~\eqref{bias_var_scale_eq} stated in terms of convergence in distribution
rather than moments, so long as one truncates unbounded loss
functions.\footnote{\label{fn:truncation}For evaluating estimators in these
  cases, we focus on minimizing the limit of the scaled truncated RMSE
  $\lim_{c\to\infty}\lim_{n\to\infty}n^{r/2}M^{r-1}R_{\ell_{c}}(\hat{T}(h;k))$,
  where $R_{\ell_{c}}$ denotes the worst-case risk under a version of the RMSE
  that truncates the squared error loss at $c^{2}$. This is equivalent to
  minimizing the (untruncated) asymptotic RMSE (see
  \begin{NoHyper}\Cref{sufficient_conditions_sec}\end{NoHyper}
  for details). Under this criterion, the RMSE optimal bandwidth defined below
  and \Cref{two_R_thm} below are not affected by the truncation.}

We also assume that $\tilde{R}$ is homogeneous of degree one,
\begin{equation}\label{eq:homo-1}
      \tilde{R}(tb, ts)=t \tilde{R}(b, s)\quad\text{for all $t>0$.}
\end{equation}
This condition holds for all three criteria considered above. This allows us to
simplify the right-hand side of~\eqref{performance_approx_eq}. In particular,
using the bias-sd ratio $t=h^{\gamma_b-\gamma_s}\SC B(k)/(n^{-1/2}S(k))$, write
the bandwidth as
$h=\left(tn^{-1/2}S(k)/(\SC B(k))\right)^{1/(\gamma_b-\gamma_s)}$. Substituting
this expression in~\eqref{performance_approx_eq} and using~\eqref{eq:homo-1}
gives
\begin{equation}\label{Rt_eq}
  \begin{split}
    R(\hat{T}(h;k))& = \tilde{R}(t^r n^{-r/2}\SC^{1-r}S(k)^{r}B(k)^{1-r},
    t^{r-1}n^{-r/2}\SC^{1-r}S(k)^{r} B(k)^{1-r})(1+o(1))
    \\
    &= n^{-r/2}\SC^{1-r} S(k)^{r}B(k)^{1-r} t^{r-1} \tilde{R}(t,1)(1+o(1)),
  \end{split}
\end{equation}
where $r=\gamma_b/(\gamma_b-\gamma_s)$. Since the performance criterion
converges at the rate $n^{r/2}$ when $\SC$ is fixed, we refer to $r$ as the rate
exponent (this matches the definition in, e.g., \citealt{DoLo92}). We denote the
bandwidth choice that minimizes the right-hand side of~\eqref{Rt_eq} for a given
performance criterion $R$ by
$h^*_R=(n^{-1/2}S(k)t^*_R/(\SC B(k)))^{1/(\gamma_b-\gamma_s)}$, with
$t^*_R=\argmin_{t} t^{r-1}\tilde{R}(t,1)$, and assume that $t^{*}_{R}$ is finite
and strictly greater than zero, which is the case for the performance criteria
we consider.

The bandwidth choice $h^{*}_{R}$ will be asymptotically optimal so long as it is
suboptimal to choose a bandwidth sequence $h_{n}$ such that such that the bias
or the variance dominates asymptotically, which is the case in the settings
considered here. For our main results, we assume this directly by assuming that
\begin{equation}\label{large_small_h_eq}
  \SC^{r-1}n^{\frac{r}{2}}R(\hat T(h_n;k))\to\infty\;
  \text{for any $h_n$ with}\;
  h_n(n\SC^2)^{\frac{1}{2(\gamma_b-\gamma_s)}}\to\infty\;\text{or}\;
  h_n(n\SC^2)^{\frac{1}{2(\gamma_b-\gamma_s)}}\to 0.
\end{equation}
Under this condition, we only need~\eqref{performance_approx_eq} to hold for
bandwidth sequences that are of the same order
$(n\SC^2)^{-1/[2(\gamma_b-\gamma_s)]}$ as the optimal bandwidth
$h^{*}_{R}$.\footnote{In typical settings, a necessary condition for
  \Cref{performance_approx_eq} to hold is that the optimal bandwidth $h^*_R$
  shrinks at a rate such that $(h^{*}_R)^{-2\gamma_{s}} n\to\infty$ and
  $h^*_R\to 0$. If $\SC$ is fixed, this simply requires that
  $\gamma_b-\gamma_s>1/2$, which basically amounts to a requirement that
  $\mathcal{F}(\SC)$ imposes enough smoothness so that the problem is not
  degenerate in large samples. If $\SC=\SC_{n}\to \infty$, then the condition
  also requires $n^{r/2}\SC^{r-1}\to \infty$, so that $\SC$ does not increase
  too quickly.} Note that optimal bandwidth is of the same order regardless of
the performance criterion---the performance criterion only determines the
optimal bandwidth constant through $t^{*}_{R}$.

The next theorem collects implications of these derivations for the performance
of different kernels. In particular, we consider minimax performance over
bandwidth sequences, that is, bandwidth sequences $h_{n}$ that achieve the
asymptotically best possible worst-case performance in large samples in the
sense that $\SC^{r-1}n^{r/2}(R(\hat T(h_{n};k))-\inf_{h>0}R(\hat T(h;k)))=o(1)$.

\begin{theorem}\label{single_R_thm}
  Let $R$ be a performance criterion with $\tilde R(b, s)>0$ for all
  $(b, s)\ne 0$. Suppose that \Cref{performance_approx_eq} holds for
  any bandwidth sequence $h_{n}$ with
  $\liminf_{n\to\infty} h_{n}(n\SC^2)^{1/[2(\gamma_b-\gamma_s)]}>0$ and
  $\limsup_{n\to\infty} h_{n}(n\SC^2)^{1/[2(\gamma_b-\gamma_s)]}<\infty$, and
  suppose that \Cref{eq:homo-1,large_small_h_eq} hold.
  Define $h^*_R$ and $t^*_R$ as above, and assume that $t^*_R>0$ is unique and
  well-defined. Then:
  \begin{enumerate}[label= ({\roman*})]
  \item\label{item:theorem-1} The asymptotic minimax performance under the
    kernel $k$ is given by
    \begin{equation*}
      \begin{split}
        \SC^{r-1}n^{r/2}\inf_{h>0}R(\hat T(h;k))
        &=\SC^{r-1}n^{r/2}R(\hat T(h^*_R;k))+o(1) \\
        &=S(k)^{r}B(k)^{1-r}(t^{*}_{R})^{r-1}\tilde R(t^{*}_{R},1)+o(1).
      \end{split}
    \end{equation*}
  \item\label{item:theorem-2} The asymptotic relative efficiency of two kernels
    $k_1$ and $k_2$ is given by
    \begin{equation*}
      \lim_{n\to\infty}\frac{\inf_{h>0}R(\hat{T}(h;k_1))}{\inf_{h>0}R(\hat{T}(h;k_2))}
      =\frac{S(k_1)^{r}B(k_1)^{1-r}}{S(k_2)^{r}B(k_2)^{1-r}}.
    \end{equation*}
    It depends on the rate $r$ but not on the performance criterion $R$.
  \item\label{item:theorem-3} If
    we consider two performance criteria $R_1$ and $R_2$ satisfying the
    conditions above, then the limit of the ratio of optimal bandwidths for
    these criteria is
    \begin{equation*}
      \lim_{n\to\infty} \frac{h^*_{R_1}}{h^*_{R_2}}
      =\left(\frac{t^*_{R_1}}{t^*_{R_2}}\right)^{1/(\gamma_b-\gamma_s)}.
    \end{equation*}
    It depends only on $\gamma_b$ and $\gamma_s$ and the performance criteria.
    If~\eqref{bias_var_scale_eq} holds, the asymptotically optimal bias-sd ratio is
    given by
    \begin{equation*}
      \lim_{n\to\infty}
      \frac{\maxbias(\hat{T}(h^*_R;k))}{\sd_f(\hat{T}(h^*_R;k))} = \argmin_{t}
      t^{r-1}\tilde R(t,1)=t^*_R.
    \end{equation*}
    It depends only on the performance criterion $R$ and rate exponent $r$.
  \end{enumerate}
\end{theorem}
Part~\ref{item:theorem-1} gives the optimal bandwidth formula for a given
performance criterion. The performance criterion only determines the optimal
bandwidth constant (the optimal bias-sd ratio) $t^{*}_{R}$.

Part~\ref{item:theorem-2} shows that relative kernel efficiency results do not
depend on the performance criterion. In particular, known kernel efficiency
results under the RMSE criterion such as those in \citet{fan93}, \citet{cfm97},
and \citet{fggbe97} apply unchanged to other performance criteria such as length
of FLCIs, excess length of one-sided CIs, or expected absolute error.

Part~\ref{item:theorem-3} shows that the optimal bias-sd ratio for a given
performance criterion depends on $\mathcal{F}$ only through the rate exponent
$r$, and does not depend on the kernel. The optimal bias-sd ratio for RMSE, FLCI
and OCI, respectively, are
{\allowdisplaybreaks
\begin{align*}
  t^*_{\RMSE}&=\argmin_{t>0}t^{r-1}\tilde{R}_{\RMSE}(t,1)
              =\argmin_{t>0}t^{r-1}\sqrt{t^2+1}=\sqrt{1/r-1}, \\
  t^*_{\FLCI}&=\argmin_{t>0}t^{r-1}\tilde{R}_{\FLCI, \alpha}(t,1)
              =\argmin_{t>0}t^{r-1}\cv_{1-\alpha}(t), \qquad\text{and}\\
  t^*_{\OCI}&=\argmin_{t>0}t^{r-1}\tilde{R}_{\OCI, \alpha, \beta}(t,1)
             =\argmin_{t>0}t^{r-1}[2t+(z_{1-\alpha}+z_\beta)]
             =(1/r-1)\frac{z_{1-\alpha}+z_\beta}{2}.
\end{align*}}
\Cref{fig:bias-variance-flci,fig:bias-variance-oci} plot these quantities as a
function of $r$. Note that the optimal bias-sd ratio is larger for FLCIs (at
levels $\alpha=.05$ and $\alpha=.01$) than for RMSE\@. Since $h$ is increasing
in $t$, it follows that, for FLCI, the optimal bandwidth \emph{oversmooths}
relative to the RMSE optimal bandwidth.

\begin{remark}\label{remark:nonlinear_lower_bounds}
  \Cref{single_R_thm} does not address whether further efficiency
  improvements are possible by using estimators that do not fall into the class
  $\hat T(h;k)$, or by using variable length CIs. However, it follows from
  \citet{donoho94} and \citet{ArKo18optimal} that, in typical settings where our
  results hold, little further improvement is possible. In particular, these
  papers give efficiency bounds that, applied to our setting, yield asymptotic
  lower bounds for $R(\hat T^*)/R(\hat T(h^*;k^*))$, where $\hat T^*$ is the
  optimal estimator or CI among all procedures (for CIs, this includes variable
  length CIs, with performance measured in terms of expected length), and $h^*$
  and $k^*$ are the optimal bandwidth and kernel. These asymptotic lower bounds
  depend only on the rate exponent $r$, and so can be used along with the bounds
  in \Cref{single_R_thm} to obtain the efficiency of a particular kernel
  and bandwidth relative to the fully optimal procedure.
\end{remark}

One can also form FLCIs centered at the estimator that is optimal for different
performance criterion $R$ as
$\hat T(h^*_{R};k)\pm \hatse(h^*_{R};k)\cdot\cv_{1-\alpha}(t^*_{R})$. The
critical value $\cv_{1-\alpha}(t^*_{R})$ depends only on the rate exponent $r$
and the performance criterion $R$. In particular, the CI centered at the RMSE
optimal estimator takes this form with $t^*_{\RMSE}=\sqrt{1/r-1}$, which yields
the CI
\begin{equation}\label{eq:honest-two-sided-ci-rmse}
  \hat T(h^{*}_{\RMSE};k)\pm \cv_{1-\alpha}(\sqrt{1/r-1})\cdot\hatse(h^{*}_{\RMSE};k),
\end{equation}

\Cref{tab:cvs} reports this critical value $\cv_{1-\alpha}(\sqrt{1/r-1})$
for rate exponents $r$ commonly encountered in practice. By~\eqref{Rt_eq}, the
resulting CI is wider than the one computed using the FLCI optimal bandwidth by
a factor of
\begin{equation}\label{mse_bandwidth_ci_inefficiency_eq}
  \frac{(t^*_{\FLCI})^{r-1}\cdot \cv_{1-\alpha}(t^*_{\FLCI})}
     {(t^*_{\RMSE})^{r-1}\cdot \cv_{1-\alpha}(t^*_{\RMSE})}.
\end{equation}
\Cref{fig:eff-mse-flci} plots this quantity as a function of $r$. It can
be seen from the figure that if $r\geq 4/5$, CIs constructed around the RMSE
optimal bandwidth are highly efficient. For example, if $r=4/5$, to construct an
honest 95\% FLCI based on an estimator with bandwidth chosen to optimize RMSE,
one simply adds and subtracts the standard error multiplied by 2.18 (rather than
the usual 1.96 critical value), and the corresponding CI is less than 1\% longer
than the one with bandwidth chosen to optimize CI length. The next theorem gives
a formal statement.

\begin{theorem}\label{two_R_thm}
  Suppose that the conditions of \Cref{single_R_thm} hold for
  $R_{\RMSE}$ and for $R_{\FLCI, \tilde\alpha}$ for all
  $\tilde\alpha$ in a neighborhood of $\alpha$. Let
  $\hatse(\hrmse;k)$ be such that
  $\hatse(\hrmse;k)/[(\hrmse)^{\gamma_s}n^{-1/2}S(k)]$ converges in
  probability to 1 uniformly over $f\in\mathcal{F}$. Then
  \begin{equation*}
    \lim_{n\to\infty}\inf_{f\in\mathcal{F}} P_f\left(T(f)\in
    \left\{\hat{T}(\hrmse;k)\pm \hatse(\hrmse;k)\cdot
        \cv_{1-\alpha}(\sqrt{1/r-1})\right\}\right)=1-\alpha.
  \end{equation*}
  The asymptotic efficiency of this CI relative to the one centered at the FLCI
  optimal bandwidth, defined as $\lim_{n\to\infty} \frac{\inf_{h>0}
    R_{\FLCI, \alpha}(\hat{T}(h;k))}{R_{\FLCI, \alpha}
    (\hat{T}(\hrmse;k))}$, is given
  by~\eqref{mse_bandwidth_ci_inefficiency_eq}. It depends only on $r$.
\end{theorem}

\section{Applications}\label{sec:applications}
In this section, we apply the general results from \Cref{results_sec} to the
problem of inference about a nonparametric regression function at a point, and
to regression discontinuity (RD). Readers who are interested only in implementing our
CIs in these applications can skip
\Cref{sec:theoretical-results}. \begin{NoHyper}\Cref{sec:addit-appl}\end{NoHyper}
discusses two additional applications: estimation of a density at a point, and
estimation of a bidder valuation in first-price auctions.

\subsection{Setup and Estimators}\label{sec:setup-estimators}

\paragraph{Inference at a point}
We are interested in inference about a nonparametric regression function $f$ at
a point, which we normalize to be zero, so that the parameter of interest is
given by $T(f)=f(0)$. We write the nonparametric regression model as
\begin{equation}\label{eq:np-regression}
  y_{i}=f(x_{i})+u_{i}, \quad i=1,\dotsc, n,\qquad E u_{i}=0,\quad \var(u_{i})=\sigma(x_{i}).
\end{equation}
where the design points $x_{i}$ are non-random. We allow the point of interest
$0$ to lie on the boundary of the support of the design points. We focus on
estimating $f(0)$ using a local polynomial estimator of order $\po$ with kernel
$k(\cdot)$,
\begin{equation*}
  \hat{T}_{\po}(h;k)=\sum_{i=1}^{n}w_{\po}^{n}(x_{i}; h, k)y_{i},
\end{equation*}
where the weights $w_{\po}^{n}(x_{i}; h, k)$ are given by
\begin{equation}\label{eq:lp-weights}
  w_{\po}^{n}(x; h, k)=e_{1}'Q_{n}^{-1} m_{\po}(x)k(x/h), \qquad
  Q_{n}=\sum_{i=1}^{n}k(x_{i}/h)m_{\po}(x_{i})m_{\po}(x_{i})'.
\end{equation}
Here $m_{\po}(t)=(1,t, \dotsc, t^{\po})'$, $e_{1}$ is a vector of zeros with
1 in the first position, and $h$ is a bandwidth. Thus, $\hat{T}_{\po}(h;k)$
corresponds to the intercept in a weighted least squares regression of $y_{i}$
on $(1,x_{i}, \dotsc, x_{i}^{\po})$ with weights $k(x_{i}/h)$. Local linear
estimators correspond to $q=1$, and Nadaraya-Watson (local constant) estimators
to $q=0$.

\paragraph{Sharp RD} In a sharp RD design, using data from the nonparametric
regression model~\eqref{eq:np-regression}, the goal is to to estimate the jump
in the regression function $f$ at a known cutoff, which we normalize to $0$, so
that $T(f)=\lim_{x\downarrow 0}f(x)-\lim_{x\uparrow 0}f(x)$. The cutoff
determines participation in a binary treatment: units with $x_{i}\geq 0$ are
treated; units with $x_{i}<0$ are controls. If the regression functions of
potential outcomes are continuous at zero, then $T(f)$ measures the average
effect of the treatment for units with $x_{i}=0$ \citep{htv01}. For brevity, we
focus on estimating $T(f)$ based only on local linear regressions: the estimator
$\hat{T}(h;k)$ is given by a difference between estimates from two local linear
regressions with bandwidth $h$ and kernel $k$ at a boundary point, one for units
with non-negative values running variable $x_{i}$, and one for units with
negative values of the running variable. The estimator can be written as
\begin{equation}\label{eq:sharp-rd-estimate}
  \hat{T}(h;k)=\sum_{i=1}^{n}(w_{+}^{n}(x;h, k)-w_{-}^{n}(x;h, k))y_{i},
\end{equation}
with the weight $w_{+}^{n}$ given by
\begin{equation*}
  w^{n}_{+}(x;h, k)=e_{1}'Q_{n, +}^{-1}m_{1}(x)k_{+}(x/h)
 ,\quad k_{+}(u)=k(u)\1{u\geq 0},
\end{equation*}
and $Q_{n, +}=\sum_{i=1}^{n}k_{+}(x_{i}/h)m_{1}(x_{i}) m_{1}(x_{i})'$. The
weights $w_{-}^{n}$, Gram matrix $Q_{n,-}$ and kernel $k_{-}$ are defined
similarly. Let $\sigma^{2}_{+}(x)=\sigma^{2}(x)\1{x\geq 0}$, and
$\sigma^{2}_{-}(x)=\sigma^{2}(x)\1{x <0}$.

\paragraph{Fuzzy RD}
In a fuzzy RD design, the treatment $d_{i}$ is not entirely determined by
whether the running variable $x_{i}$ exceeds a cutoff. Instead, the cutoff
induces a jump in the treatment probability. This fits into our framework if we
let $f=(f_{1}, f_{2})$ comprise two regression functions, corresponding to the
reduced-form regression of the outcome on the running variable, and the
first-stage regression of the treatment on the running variable:
\begin{equation}\label{eq:frd-regression}
  \begin{matrix}
    y_{i}&=f_{1}(x_{i})+u_{i1},\\
    d_{i}&=f_{2}(x_{i})+u_{i2},\\
  \end{matrix}\qquad i=1,\dotsc,n,\qquad E u_{i}=0,\quad \var(u_{i})=\Omega(x_{i}),
\end{equation}
with $u_{i}=(u_{i1},u_{i2})'$. The parameter of interest is given by the ratio
$T(f)=L_{1}(f)/L_{2}(f)$ of sharp RD parameters
$L_{j}(f)=\lim_{x\downarrow 0}f_{j}(x)-\lim_{x\uparrow 0}f_{j}(x)$ in the
reduced-form ($j=1$) and first-stage regression ($j=2$). If the regression
functions of the potential outcomes and potential treatments are continuous at
zero, and a monotonicity condition holds, then $T(f)$ measures the average
treatment effect for individuals with $x_{i}=0$ who are compliers
\citep[see][]{htv01}. We consider estimating $T(f)$ by its sample analog,
replacing $L_{1}$ and $L_{2}$ with sharp RD local linear estimates, which are
for simplicity assumed to be based on the same bandwidth,
$\hat{T}(h;k)=\hat{L}_{1}(h;k)/\hat{L}_{2}(h;k)$, where
\begin{equation*}
  \hat{L}(h;k)=
\begin{pmatrix}\hat{L}_{1}(h;k)\\\hat{L}_{1}(h;k)\end{pmatrix}
=\sum_{i}(w_{+}^{n}(x;h, k)-w_{-}^{n}(x;h, k))
\begin{pmatrix}y_{i}\\d_{i}
\end{pmatrix},
\end{equation*}
with the weights $w_{+}^{n}$ and $w_{-}^{n}$ defined as
in~\eqref{eq:sharp-rd-estimate}.

\subsection{Theoretical results}\label{sec:theoretical-results}
We now discuss the conditions under which the key regularity
condition~\eqref{performance_approx_eq} holds in each application. We also
discuss kernel efficiency results, and gains from imposing global, rather than
just local, smoothness on $f$.

\subsubsection{Inference at a point}\label{sec:inference-point-theory}
To state the results, it will be convenient to define the equivalent kernel
\begin{equation}\label{equivalent_kernel_eq}
  k^{*}_{\po}(u)=
  e_{1}'\left(\int_{\mathcal{X}}m_{\po}(t)m_{\po}(t)'k(t)\,
    \dd t\right)^{-1}m_{\po}(u)k(u),
\end{equation}
where the integral is over $\mathcal{X}=\mathbb{R}$ if $0$ is an interior point,
and over $\mathcal{X}=\hor{0,\infty}$ if $0$ is a (left) boundary point.

We assume the following conditions on the design points and regression errors
$u_{i}$:
\begin{assumption}\label{x_assump}
  For some $d>0$, the sequence $\{x_i\}_{i=1}^{n}$ satisfies
  $\frac{1}{nh_n}\sum_{i=1}^n g(x_i/h_n)\to d\cdot\int_{\mathcal{X}} g(u)\, du$ for
  any bounded function $g$ with finite support and any sequence $h_n$ with
  $0<\liminf_n h_n (n\SC^2)^{1/(2p+1)}<\limsup_n h_n
  (n\SC^2)^{1/(2p+1)}<\infty$.
\end{assumption}

\begin{assumption}\label{sigma_assump}
  The random variables $\{u_i\}_{i=1}^n$ are independent with $E u_{i}=0$,
  $E u_i^{2+\eta}\le 1/\eta$ for some $\eta>0$, and $\var(u_{i})=\sigma^2(x_i)$
  for some variance function $\sigma^2(x)$ that is continuous at $x=0$ with
  $\sigma^2(0)>0$.
\end{assumption}
\Cref{x_assump} requires that the empirical distribution of the design points is
smooth around $0$. When the support points are treated as random, the constant
$d$ typically corresponds to their density at $0$.

Because the estimator is linear in $y_{i}$, its variance doesn't depend on $f$,
\begin{equation}
  \label{eq:sd-rescaling-lp}
  \sd(\hat{T}_{\po}(h;k))^{2}
  =\sum_{i=1}^{n}w_{\po}^{n}(x_{i})^{2}\sigma^{2}(x_{i})
  =\frac{S(k)^{2}}{nh} (1+o(1)),
  \quad S(k)=\sqrt{\frac{\sigma^{2}(0)
      {\int_{\mathcal{X}}k^{*}_{\po}(u)^{2}\, \dd u}}{
      d}},
\end{equation}
where the second equality holds under \Cref{x_assump,sigma_assump}, as we show
in
\begin{NoHyper}\Cref{sec:local-polyn-estim}\end{NoHyper}.
The condition on the standard deviation in \Cref{bias_var_scale_eq}
thus holds with $\gamma_{s}=-1/2$, and $S(k)$ given in the preceding display.
\begin{NoHyper}\Cref{sec:kernel-constants}\end{NoHyper}
gives the constant $\int_{\mathcal{X}}k^{*}_{\po}(u)^{2}\, \dd u$ for selected
kernels.

On the other hand, the worst-case bias will be driven primarily by the function
class $\mathcal{F}$. We consider inference under two popular function classes.
First, the Taylor class of order $p$,
\begin{equation*}
  \FSY{p}(\SC)
  =\left\{f\colon \abs*{
      f(x)-\textstyle\sum_{j=0}^{p-1}f^{(j)}(0)x^{j}/j!}
    \le \SC \abs{x}^{p} /p! \;\; x\in\mathcal{X}\right\}.
\end{equation*}
This class consists of all functions for which the approximation error from a
($p-1$)-th order Taylor approximation around $0$ can be bounded by
$\frac{1}{p!}\SC \abs{x}^{p}$. It formalizes the idea that the $p$th derivative
of $f$ at zero should be bounded by some constant $\SC $. Using this class of
functions to derive optimal estimators goes back at least to
\citet{legostaeva_minimax_1971}, and it underlies much of existing minimax
theory concerning local polynomial estimators \citep[see][Chapter
3.4--3.5]{fg96}.

While analytically convenient, the Taylor class may not be attractive in some
empirical settings because it allows $f$ to be non-smooth and discontinuous away
from $0$. We therefore also consider inference under Hölder classes (for
  simplicity, we focus on Hölder classes of integer order)
\begin{equation*}
  \FHol{p}(\SC)
  =\left\{f\colon \abs{f^{(p-1)}(x)-f^{(p-1)}(x')}\leq \SC \abs{x-x'},\;
    x, x'\in\mathcal{X}\right\}.
\end{equation*}
This class is the closure of the family of $p$ times differentiable functions
with the $p$th derivative bounded by $\SC$, uniformly over $\mathcal{X}$, not
just at $0$. It formalizes the intuitive notion that $f$ should be $p$-times
differentiable with a bound on the $p$th derivative. The case $p=1$ corresponds
to the Lipschitz class of functions.

\begin{theorem}\label{theorem:maximum-bias-lp}
  Suppose that \Cref{x_assump} holds and that $k(\cdot)$ is bounded with bounded
  support and $\po\geq p-1$. Then, for any
  bandwidth sequence $h_{n}$ with $n h_n\to\infty$ and
  $0<\liminf_n h_n (n\SC^2)^{1/(2p+1)}<\limsup_n h_n
  (n\SC^2)^{1/(2p+1)}<\infty$,
  \begin{equation*}
        \maxbias_{\FSY{p}(\SC)}(\hat{T}_{\po}(h_{n};k)) =
    \frac{\SC h_{n}^p}{p!}\mathcal{B}_{p, \po}^{\text{T}}(k)(1+o(1)), \qquad
    \mathcal{B}_{p, \po}^{\text{T}}(k)=\int_{\mathcal{X}} \abs{u^p k^*_q(u)}\,
    du
  \end{equation*}
  and
  \begin{multline*}
    \maxbias_{\FHol{p}(\SC)}(\hat{T}_q(h_{n};k)) =
    \frac{\SC h_{n}^p}{p!}\mathcal{B}^{\textnormal{Höl}}_{p, \po}(k)(1+o(1)),\\
    \mathcal{B}^{\textnormal{Höl}}_{p, \po}(k) = p \int_{t=0}^\infty
    \abs*{\int_{u\in\mathcal{X}, \abs{u}\ge t} k^*_q(u)(\abs{u}-t)^{p-1}\, du}\,
    dt.
  \end{multline*}
  Thus, the first part of \Cref{bias_var_scale_eq} holds with
  $\gamma_b=p$ and $B(k)= \mathcal{B}_{p, \po}(k)/{p!}$, where
  $\mathcal{B}_{p, \po}(k)=\mathcal{B}^{\textnormal{Höl}}_{p, \po}(k)$ for
  $\FHol{p}(\SC)$, and
  $\mathcal{B}_{p, \po}(k)=\mathcal{B}^{\textnormal{T}}_{p, \po}(k)$ for
  $\FSY{p}(\SC)$.

  If, in addition, \Cref{sigma_assump} holds, then \Cref{performance_approx_eq}
  holds for the RMSE, FLCI and OCI performance criteria, with $\gamma_b$ and
  $B(k)$ given above and $\gamma_s$ and $S(k)$ given in
  \Cref{eq:sd-rescaling-lp}.
\end{theorem}
The theorem verifies the regularity conditions needed for the results in
\Cref{results_sec}, and implies that $r=2p/(2p+1)$ for $\FSY{p}(\SC)$ and
$\FHol{p}(\SC)$. If $p=2$, then we obtain $r=4/5$. By
\Cref{single_R_thm}\ref{item:theorem-1}, the optimal rate of convergence
of a criterion $R$ is $R(\hat{T}(h^{*}_{R};k))=O((n/\SC^{1/p})^{-p/(2p+1)})$. As
we will see from the relative efficiency calculation below, the optimal order of
the local polynomial regression is $\po=p-1$ for the kernels considered here.
The theorem allows $\po\geq p-1$, so that we can examine the efficiency of local
polynomial regressions that are of order that's too high relative to the
smoothness class. Allowing for $\po<p-1$ is not meaningful, as in this case, the
maximum bias is infinite.\footnote{\label{fn:not-nested} The smoothness classes
  $\FSY{p}(\SC)$ and $\FHol{p}(\SC)$ do not restrict derivatives of order $p-1$
  and lower, so that, in order to achieve a finite worst-case bias, the
  estimator needs to be unbiased for polynomials of order $p-1$, which requires
  $\po\geq p-1$.}

Under the Taylor class $\FSY{p}(\SC)$, the least favorable (bias-maximizing)
function is given by $f(x)=\SC/p!\cdot\sign(w^{n}_{\po}(x))\abs{x}^{p}$. In
particular, if the weights are not all positive, it will be discontinuous away
from the boundary. The first part of \Cref{theorem:maximum-bias-lp} then follows
by taking the limit of the bias under this function. \Cref{x_assump} ensures
that this limit is well-defined. Under the Hölder class $\FHol{p}(\SC)$, the
least favorable function takes the form of a $p$th order spline. See
\begin{NoHyper}\Cref{sec:local-polyn-estim}\end{NoHyper}
for details.

These results imply that given a kernel $k$ and order of a local polynomial
$\po$, the RMSE-optimal bandwidth for $\FSY{p}(\SC)$ and $\FHol{p}(\SC)$ is
given by
\begin{equation}\label{minimax_rmse_optimal_bw_eq}
  \hrmse=
  \left(\frac{1}{2pn}\frac{S(k)^{2}}{\SC^2 B(k)^{2}}\right)^{\frac{1}{2p+1}}
  =\left(\frac{\sigma^{2}(0)p!^{2}}{2pn
      d\SC ^{2}}\frac{\int_{\mathcal{X}}k^{*}_{\po}(u)^{2}\, \dd u}{
      \mathcal{B}_{p, \po}(k)^{2}}\right)^{\frac{1}{2p+1}},
\end{equation}
where $\mathcal{B}_{p, \po}(k)=\mathcal{B}^{\textnormal{Höl}}_{p, \po}(k)$ for
$\FHol{p}(\SC)$, and
$\mathcal{B}_{p, \po}(k)=\mathcal{B}^{\textnormal{T}}_{p, \po}(k)$ for
$\FSY{p}(\SC)$. For kernels given by polynomial functions over their support,
$k^{*}_{\po}$ also has the form of a polynomial, and
$\mathcal{B}^{\textnormal{T}}_{p, \po}$ and
$\mathcal{B}^{\textnormal{Höl}}_{p, \po}$ can be computed analytically.
\begin{NoHyper}\Cref{sec:kernel-constants}\end{NoHyper} gives these constants for
selected kernels.

\paragraph{Kernel efficiency}%

It follows from \Cref{single_R_thm}\ref{item:theorem-2} that the optimal
equivalent kernel minimizes $S(k)^{r}B(k)^{1-r}$, independently of the
performance criterion. Under the Taylor class $\FSY{p}(\SC)$, this is equivalent
to minimizing
\begin{equation}\label{eq:minimax-problem}
  \Big(\int_{\mathcal{X}}k^{*}(u)^{2}\, \dd u \Big)^{p}
  \cdot \int_{\mathcal{X}}\abs{u^{p}k^{*}(u)}\, \dd u,
\end{equation}
The solution to this problem follows from \citet[Theorem 1]{SaYl78} (see also
\citet{cfm97}). We give details of the solution in
\begin{NoHyper}\Cref{sec:optim-kern-details}\end{NoHyper}.
\Cref{tab:taylor-eff} compares the asymptotic relative efficiency of local
polynomial estimators based on the uniform, triangular, and Epanechnikov kernels
to the optimal Sacks-Ylvisaker kernels. \citet{fggbe97} and \citet{cfm97},
conjecture that minimizing~\eqref{eq:minimax-problem} yields a sharp bound on
kernel efficiency. It follows from
\Cref{single_R_thm}\ref{item:theorem-2} that this conjecture is correct,
and \Cref{tab:taylor-eff} matches the kernel efficiency bounds in these
papers. \Cref{tab:taylor-eff} shows that the choice of the kernel doesn't
matter very much, so long as the local polynomial is of the right order.
However, if the order is too high, $\po>p-1$, the efficiency can be quite low,
even if the bandwidth used was optimal for the function class or the right
order, $\FSY{p}(\SC)$, especially on the boundary. If the bandwidth picked is
optimal for $\FSY{\po-1}(\SC)$, it will shrink at a lower rate than optimal
under $\FSY{p}(\SC)$, and the resulting rate of convergence will be lower than
$r$. Consequently, the relative asymptotic efficiency will be zero. A similar
point in the context of pointwise asymptotics was made in \citet[Remark 5, page
8]{sun05}.

The solution to minimizing $S(k)^{r}B(k)^{1-r}$ under $\FHol{p}(\SC)$ is only
known in special cases. When $p=1$, the optimal estimator is a local constant
estimator based on the triangular kernel. When $p=2$, the solution is given in
\citet{fuller61} and \citet{zhao97} for the interior point problem, and in
\citet{gao_2018} for the boundary point problem. See
\begin{NoHyper}\Cref{sec:optim-kern-details}\end{NoHyper}
for details. When $p\geq 3$, the solution is unknown. Therefore, for $p=3$, we
compute efficiencies relative to a local quadratic estimator with a triangular
kernel. \Cref{tab:holder-eff} calculates the resulting efficiencies for
local polynomial estimators based on the uniform, triangular, and Epanechnikov
kernels. Relative to the class $\FSY{p}(\SC)$, the bias constants are smaller:
imposing smoothness away from the point of interest helps to reduce the
worst-case bias. Furthermore, the loss of efficiency from using a local
polynomial estimator of order that's too high is smaller. Finally, local linear
regression with a triangular kernel achieves high asymptotic efficiency under
both $\FSY{2}(\SC)$ and $\FHol{2}(\SC)$, both at the interior and at a boundary,
with efficiency at least 97\%, giving a theoretical justification to this
popular choice in empirical work.

\paragraph{Gains from imposing smoothness globally}
The Taylor class $\FSY{p}(\SC)$, only restricts the $p$th derivative locally to
the point of interest, while the Hölder class $\FHol{p}(\SC)$ restricts the
$p$th derivative globally. How much can one tighten a confidence interval or
reduce the RMSE due to this additional smoothness?

It follows from \Cref{theorem:maximum-bias-lp} and from arguments
underlying \Cref{single_R_thm} that the performance of using a local
polynomial estimator of order $p-1$ with kernel $k_{H}$ and optimal bandwidth
under $\FHol{p}(\SC)$ relative to using a local polynomial estimator of order
$p-1$ with kernel $k_{T}$ and optimal bandwidth under $\FSY{p}(\SC)$ is given by
\begin{equation}\label{eq:global-smooothness}
  \frac{\inf_{h>0}R_{\FHol{p}(\SC)}(\hat{T}(h;k_{H}))}{
    \inf_{h>0}R_{\FSY{p}(\SC)}(\hat{T}(h;k_{T}))}=
  \left(\frac{
      \int_{\mathcal{X}}k_{H,p-1}^{*}(u)^{2}\, \dd u
    }{\int_{\mathcal{X}}k_{T,p-1}^{*}(u)^{2}\, \dd u}\right)^{\frac{p}{2p+1}}
  \left(\frac{\mathcal{B}_{p,p-1}^{\textnormal{Höl}}(k_{H})}{
      \mathcal{B}_{p,p-1}^{T}(k_{T})
    }\right)^{\frac{1}{2p+1}}
  (1+o(1)),
\end{equation}
where $R_{\mathcal{F}}(\hat{T})$ denotes the worst-case performance of $\hat{T}$
over $\mathcal{F}$. If the same kernel is used, the first term equals 1, and the
efficiency ratio is determined by the ratio of the bias constants
$\mathcal{B}_{p,p-1}(k)$. \Cref{tab:global-smoothness-gains} computes the
resulting efficiency gain for common kernels. In general, the gains are greater
for larger $p$, and greater at the boundary. For estimation at a boundary point
with $p=2$, for example, imposing global smoothness of $f$ reduces CI length by
about 13--15\%, depending on the kernel, and about 10\% if the optimal kernel is
used.

\subsubsection{Sharp regression discontinuity}\label{sec:sharp-regr-disc}

We focus on the most empirically relevant case in which the
regression function $f$ is assumed to lie in the class $\FHol{2}(\SC)$ on either
side of the cutoff:
\begin{equation*}
  f\in\mathcal{F}_{\text{SRD}}(\SC)=\{f_{+}(x)\1{x\geq 0}-f_{-}(x)\1{x <0}\colon
  f_{+}, f_{-}\in\FHol{2}(\SC)\}.
\end{equation*}
Inference on $T(f)$ is then equivalent to inference on the difference between
two regression functions evaluated at boundary points, and the results follow by
a slight extension of the results for estimation at a boundary point in
\Cref{sec:inference-point-theory}.

It follows from the results in \Cref{sec:inference-point-theory} that if
\Cref{x_assump,sigma_assump} hold (with the requirement that $\sigma^{2}(x)$ is
continuous $0$ replaced by right- and left-continuity of $\sigma^{2}_{+}(x)$ and
$\sigma^{2}_{-}(x)$), then the variance of the estimator doesn't depend on $f$
and satisfies
\begin{equation*}
  \sd(\hat{T}(h;k))^{2}=\sum_{i=1}^{n}
  \tilde{w}^{n}(x_{i})^{2}
  \sigma^{2}(x_{i})
  =\frac{S(k)^{2}}{nh}(1+o(1)),
  \quad S(k)^{2}=\frac{{\int_{0}^{\infty}k^{*}_{1}(u)^{2}\, \dd
      u\left(
        \sigma^{2}_{+}(0)
        + \sigma^{2}_{-}(0)\right)}}{d},
\end{equation*}
with $d$ defined in \Cref{x_assump}, and
$\tilde{w}^{n}(x_{i})=w^{n}_{+}(x_{i})+w^{n}_{-}(x_{i})$.
\Cref{theorem:maximum-bias-lp} and arguments in
\begin{NoHyper}\Cref{sec:local-polyn-estim}\end{NoHyper}
imply that the bias of $\hat{T}(h;k)$ is maximized at
$f(x)=-\SC x^{2}/2\cdot (\1{x\geq 0}-\1{x <0})$, so long as the kernel
$k(\cdot)$ takes on nonnegative values. The worst-case bias therefore satisfies
\begin{equation*}
  \maxbias(\hat{T}(h;k))=
  -\frac{\SC}{2}
  \sum_{i=1}^{n}\tilde{w}^{n}(x_{i})x_{i}^{2}
  =\SC h^{2}B(k) (1+o(1)), \; B(k)=
  -\int_{0}^{\infty}u^{2}k_{1}^{*}(u)\, \dd u.
\end{equation*}
It follows that for the RMSE, FLCI, and OCI criteria,
\Cref{performance_approx_eq} holds with $\gamma_{b}=2$,
$\gamma_{s}=-1/2$, and $B(k)$ and $S(k)$ given in the displays above. Thus, the
RMSE-optimal bandwidth is given by
\begin{equation}\label{eq:rmse-rd}
  \hrmse=\left(\frac{\int_{0}^{\infty}k^{*}_{1}(u)^{2}\, \dd
      u}{(\int_{0}^{\infty}u^{2}k^{*}_{1}(u)\, \dd u)^{2}}
    \cdot\frac{\sigma^{2}_{+}(0)+\sigma^{2}_{-}(0)}{4dn\SC^{2}}
  \right)^{1/5}.
\end{equation}
The kernel efficiency results are analogous to those in
\Cref{sec:inference-point-theory}.

In principle, one could allow the bandwidths on either side of the cutoff to be
different. We show in
\begin{NoHyper}\Cref{sec:rd-with-different}\end{NoHyper},
however, that the loss in efficiency resulting from constraining the bandwidths
to be the same is quite small unless the ratio of variances on either side of
the cutoff, $\sigma_{+}^{2}(0)/\sigma_{-}^{2}(0)$, is quite large.

\subsubsection{Fuzzy regression discontinuity}\label{sec:fuzzy-regr-disc}

We assume that $f=(f_{1},f_{2})$ lies in the class
$\mathcal{F}_{\text{FRD}}(\SC_{1},
\SC_{2})=\mathcal{F}_{\text{SRD}}(\SC_{1})\times
\mathcal{F}_{\text{SRD}}(\SC_{2})$, so that both the reduced-form and the
first-stage regression functions are assumed to have a bounded second derivative
on either side of the cutoff.\footnote{While we allow the bounds $\SC_{1}$ and
  $\SC_{2}$ to change with sample size, we assume that their ratio
  $\SC_{1}/\SC_{2}$ is fixed for simplicity.}

Since the estimator is non-linear, to ensure that~\eqref{performance_approx_eq}
holds, it will be necessary to consider a sequence of parameter spaces
$\mathcal{F}_{\text{FRD}, n}(\SC_{1}, \SC_{2})$ localized around a particular
value $L^{*}$ of $L(f)=(L_{1}(f), L_{2}(f))'$ with a non-zero jump in the
first-stage regression $L^{*}_{2}\neq 0$. This allows us to apply a version of
the delta method to $\hat{L}(h; k)$. We defer details to
\begin{NoHyper}\Cref{sec:fuzzy-rd-suppl}\end{NoHyper}, where we show
that under \Cref{x_assump} and a version of \Cref{sigma_assump}, the
distribution of $\hat{T}(h;k)-T(f)$ can in large samples be approximated by a
normal distribution with variance
\begin{equation*}
  \operatorname{avar}(\hat{T}(h;k))=
  \frac{S(k)^{2}}{nh}=
  \sum_{i=1}^{n}\frac{\varsigma^{2}(x_{i};T(f))}{L_{2}(f)^{2}}
  \tilde{w}^{n}(x_{i};h, k)^{2}(1+o(1)),
\end{equation*}
and mean bounded by
\begin{equation*}
    \operatorname{\overline{abias}}(\hat{T}(h; k))
    =\SC_{1}h^{2}B(k)
     = - \frac{\SC_{1}+\abs{T(f)}\SC_{2}}{2\abs{L_{2}(f)}}
    \sum_{i=1}^{n}\tilde{w}^{n}(x_{i};h, k)x_{i}^{2}(1+o(1)),
\end{equation*}
where $\tilde{w}^{n}(x_{i};h, k)=w_{+}^{n}(x_{i})+w_{-}^{n}(x_{i})$,
$\varsigma^{2}(x_{i};T)=(1,-T)\Omega(x_{i})(1,-T)'$,
\begin{equation*}
  B(k)=-
  \frac{\int_{0}^{\infty}u^{2}k_{1}^{*}(u)\, \dd u (1+\abs{T(f)}\SC_{2}/\SC_{1})
  }{\abs{L_{2}(f)}}, \;
  S(k)^{2}=\frac{\int_{0}^{\infty}k_{1}^{*}(u)^{2}\, \dd u}{d}
  \frac{\varsigma^{2}_{+}(0;T(f))+\varsigma^{2}_{-}(0;T(f))}{L_{2}(f)^{2}},
\end{equation*}
$\varsigma_{+}^{2}(0;T)=\lim_{x\downarrow 0}\varsigma^{2}(x;T)$, and
$\varsigma_{-}^{2}(0;T)=\lim_{x\uparrow 0}\varsigma^{2}(x;T)$.

It then follows that for the FLCI, OCI, and a truncated version of the RMSE
criterion, \Cref{performance_approx_eq} holds with $\SC=\SC_{1}$,
$\gamma_{b}=2$, $\gamma_{s}=-1/2$, and $B(k)$ and $S(k)$ given in the preceding
display. The RMSE-optimal bandwidth is therefore given by
\begin{equation}\label{eq:rmse-frd}
  \hrmse=\left(\frac{\int_{0}^{\infty}k^{*}_{1}(u)^{2}\, \dd
      u}{(\int_{0}^{\infty}u^{2}k^{*}_{1}(u)\, \dd u)^{2}}
    \cdot\frac{\varsigma^{2}(T(f))}{4dn(\SC_{1}+\abs{T(f)}\SC_{2})}
  \right)^{1/5}.
\end{equation}
Since $S(k)$ and $B(k)$ depend on the kernel $k$ through the same quantities as
for inference at a boundary point, the kernel efficiency results are analogous
to those in \Cref{sec:inference-point-theory}.

Because the optimal bandwidth depends on $T(f)$, implementing a feasible version
of it requires replacing it with an initial estimate. An alternative approach to
the construction of two-sided CIs for $T(f)$ that doesn't require localization
or the use of initial estimates is an \citet{anderson_estimation_1949} style
construction studied by \citet{NoRo19}. In particular, \citet{NoRo19} propose
constructing, for each $T_{0}$, an auxiliary CI for the jump in the mean of
$y_{i}-d_{i}T_{0}$ at the cutoff, using an approach similar to that we use for
inference in sharp RD\@. The CI for $T(f)$ is then constructed by collecting all
$T_{0}$'s for which the auxiliary CI contains zero. This approach also has the
additional advantage that it can allow for weak identification
while it yields asymptotically
equivalent CIs under strong identification.\footnote{Because we require that the
  sequence of parameter spaces $\mathcal{F}_{\text{FRD}, n}(\SC_{1}, \SC_{2})$
  be localized around a value of $L^*$ with $L^*_2\neq 0$, we rule out
  sequences in which the jump in the first-stage regression is arbitrarily close
  to zero (the term ``weak identification'' refers to such sequences).
  As a result, the CI we propose, unlike the CI proposed by
  \citet{NoRo19}, is not honest over the original parameter space
  $\mathcal{F}_{\text{FRD}}(\SC_{1}, \SC_{2})$.}
See \citet{NoRo19} for a more
detailed discussion.

\subsection{Practical implementation}\label{sec:pract-impl}

We now discuss some practical issues that arise when implementing our CIs for
inference at a point, and in sharp and fuzzy RD studied in the previous
subsections. To focus the discussion, we consider smoothness classes
$\FHol{2}(\SC)$, $\mathcal{F}_{\text{SRD}}(\SC)$, and
$\mathcal{F}_{\text{FRD}}(\SC_{1}, \SC_{2})$ that constrain the second
derivative globally, so that, in the discussion below, $p=2$. In other words,
for inference at a point, we assume that the conditional mean
$f$~\Cref{eq:np-regression} is (almost everywhere) twice differentiable with the
second derivative bounded by $\SC$; for sharp RD, we assume that that $f$ is
twice differentiable on either side of the cutoff, with the second derivative
bounded by $\SC$; and for fuzzy RD, we assume that $f_{1}$ and $f_{2}$ in
in~\Cref{eq:frd-regression} are twice differentiable on either side of the
cutoff, with the second derivative bounded by $\SC_{1}$ and $\SC_{2}$,
respectively. These assumptions imply optimality of the estimators defined
in~\Cref{sec:setup-estimators} based on local linear regression ($\po=1$), which
is the most popular method in practice; they also imply that both the
Epanechnikov and the triangular kernel are nearly optimal.

\subsubsection{Choice of \texorpdfstring{$\SC$}{M}}\label{sec:choice-M}

Appropriate choice of the smoothness constant is key to implementing our method.
Since the smoothness classes we consider are convex, the results of
\citet{low97}, \citet{CaLo04} and \citet{ArKo18optimal} imply that, to maintain
honesty over the whole function class, a researcher must choose $\SC$ a priori,
rather than attempting to use a data-driven method.\footnote{These negative
  results contrast with more positive results for estimation. See, for example,
  \citet{lepski90} who, in the context of estimating the value of the
  regression function at a point, proposes a data-driven method that automates
  the choice of both $p$ and $\SC$.} We therefore recommend that, whenever
possible, problem-specific knowledge be used to decide what choice of $\SC$ is
reasonable a priori, and that one consider a range of plausible values by way of
sensitivity analysis.\footnote{\label{fn:snooping}As is well-known, if the final
  bandwidth choice is influenced by such sensitivity analysis, the resulting CI
  may undercover, even if the estimator is unbiased. In this case, one can
  combine our method with the bandwidth snooping adjustment of
  \citet{ArKo18snooping}.}

If one imposes additional restrictions on $f$ that make the parameter space for
$f$ non-convex, a data-driven method for choosing $\SC$ may be
feasible.\footnote{\label{fn:hall-horowitz}An alternative to restricting the
  parameter space is to change the notion of coverage. For example, in the
  context of constructing confidence bands for a regression function $f(x)$,
  \citet{hall_simple_2013} propose bands that have an average coverage property
  in that the bands achieve coverage of $f(x)$ for a random subset of values of
  $x$. This subset may vary with the unknown regression function and the
  realized sample.}
In \begin{NoHyper}\Cref{data-driven_bw_sec_append}\end{NoHyper}, we consider a
restriction which relates $\SC$ to a global polynomial approximation to the
regression function. In particular, the restriction formalizes the notion that
the second derivative in a neighborhood of zero is bounded by the maximum second
derivative of a $\tilde{p}$th order global polynomial approximation.
Heuristically, such restriction will hold if the local smoothness of $f$ is no
smaller than its smoothness at large scales.

This restriction allows us to calibrate $\SC$ based on the following rule of
thumb. For inference at a point, let $\breve{f}(x)$ be an estimate of $f$ based
on a global polynomial regression of order $\tilde{p}$, and let
$[x_{\min}, x_{\max}]$ denote the support of $x_i$. Put
$\Mrot=\sup_{x\in[x_{\min}, x_{\max}]}\abs{\breve{f}^{(p)}(x)}$. This rule of
thumb is similar to the suggestion of \citet[Chapter 4.2]{fg96}, with the
important distinction that their rule of thumb was designed to estimate the
pointwise-in-$f$ optimal bandwidth. We discuss the difference between this
bandwidth and $\hrmse$ in \Cref{sec:comp-with-other}. In sharp RD, the rule of
thumb is analogous, except we define $\breve{f}^{(p)}(x)$ to be the global
polynomial estimate of order $\tilde{p}$ in which the intercept and all
coefficients are allowed to be different on either side of the discontinuity
(that is, as regressors, we use $1,x_{i}, \dotsc, x_{i}^{\tilde{p}}$, and their
interactions with the indicator $\1{x_i\geq 0}$). For fuzzy RD, we use an
analogous approach to separately calibrate the reduced-form and first-stage
smoothness parameters $\SC_{1}$ and $\SC_{2}$ based on the reduced-form and
first-stage regressions.

As a default choice, we set $\tilde{p}=p+2=4$.
In \begin{NoHyper}\Cref{data-driven_bw_sec_append}\end{NoHyper}, we give a
formal analysis of this rule, showing that the resulting CIs are honest and
nearly optimal (over a regularity class that imposes the additional restriction
$f$ discussed above). In contrast, we expect that calibrating $\SC$ based on
local smoothness estimates may be difficult to justify, since estimating a local
derivative of $f$ is a harder problem than the initial problem of estimating its
value at a point. We investigate the finite-sample performance of FLCIs based on
$\Mrot$ in a Monte Carlo exercise in \Cref{monte_carlo_sec}.

\subsubsection{Computation of RMSE-optimal bandwidth}

Given a choice of $\SC$, one can compute a feasible version
$\hat{h}^{*}_{\textsc{rmse}}$ of the RMSE-optimal bandwidth by plugging this
choice into the expressions~\eqref{minimax_rmse_optimal_bw_eq},
\eqref{eq:rmse-rd}, and \eqref{eq:rmse-frd}, along with consistent estimates of
$d$, and of the variance at $0$ (for fuzzy RD, one also needs a preliminary
estimate of $T(f)$). In the simulation exercise and empirical application below,
we use an alternative approach based on directly minimizing the finite-sample
RMSE over the bandwidth $h$. To describe it, let $\tilde{w}^{n}(x_{i};h, k)$
denote the weights $w^{n}_{1}(x_{i};h, k)$ given in~\eqref{eq:lp-weights} if the
parameter of interest is the conditional mean at a point, and let
$\tilde{w}^{n}(x_{i};h, k)=w_{+}^{n}(x_{i})+w_{-}^{n}(x_{i})$ if the parameter
of interest is the sharp or fuzzy RD parameter.

For inference at a point, or for sharp RD, the finite-sample RMSE takes the form
\begin{equation}\label{eq:fs-mse-point}
  {\operatorname{RMSE}}(h; \SC)^{2}= \frac{\SC^{2}}{4}
  \left(\sum_{i=1}^{n}\tilde{w}^{n}(x_{i};h, k) x_{i}^{2}
  \right)^{2}+
  \sum_{i=1}^{n}\tilde{w}^{n}(x_{i};h, k)
  {\sigma}^{2}(x_{i}),
\end{equation}
Since $\sigma^{2}(x_{i})$ is typically unknown, one needs to replace it by an
estimate. For inference at a point, the simplest choice is to use some estimate
$\hat{\sigma}^{2}(x_{i})=\hat{\sigma}^{2}$ that assumes homoskedasticity of the
variance function. For sharp RD, one can use the estimate
$\hat{\sigma}^{2}(x_{i})=\hat{\sigma}^{2}_{+}(0)\1{x\geq
  0}+\hat{\sigma}^{2}_{-}(0)\1{x< 0}$, where $\hat{\sigma}^{2}_{+}(0)$ and
$\hat{\sigma}^{2}_{-}(0)$ are some preliminary variance estimates based on
observations above and below the cutoff. We use the bandwidth
$\hhrmse{\tilde{\SC}}$ that minimizes \Cref{eq:fs-mse-point} for
$M=\tilde{\SC}$, the chosen smoothness constant. This method was considered
previously in \citet{ArKo18optimal}. %

Since the estimate in fuzzy RD is non-linear, its moments, and hence the
finite-sample RMSE do not exist. However, one can still employ an analogous
approach minimizing the finite-sample analog of the asymptotic RMSE\@. As the
asymptotic bias and the asymptotic standard deviation both scale with the jump
in the first-stage regression at the cutoff, $L_{2}(f)$, this scaling doesn't
affect the optimum, we can equivalently minimize the asymptotic RMSE times
$L_{2}(f)$,
\begin{equation*}
  \operatorname{ARMSE}(h;\SC_{1}, \SC_{2})^{2}=
  \frac{(\SC_{1}+\abs{T(f)}\SC_{2})^{2}}{4}
  \left(\sum_{i=1}^{n}\tilde{w}^{n}(x_{i};h, k)x_{i}^{2 }\right)^{2}+
  \sum_{i=1}^{n}w_{\po}^{n}(x_{i};h;k)^{2}\varsigma^{2}(x_{i};T(f)),
\end{equation*}
with $\varsigma^{2}(x;T)=(1,-T)\Omega(x)(1,-T)'$. Since $\Omega(x_{i})$ is
unknown, one can again replace it with
$\hat{\Omega}^{2}(x_{i})=\hat{\Omega}^{2}_{+}(0)\1{x\geq
  0}+\hat{\Omega}^{2}_{-}(0)\1{x< 0}$, where $\hat{\Omega}^{2}_{+}(0)$ and
$\hat{\Omega}^{2}_{-}(0)$ are some preliminary variance estimates for
observations above and below the cutoff. As a preliminary estimate of $T(f)$,
one can take the estimate $\hat{T}(\hat{h}_0;k) $, where $\hat{h}_0$ minimizes
the above expression at $T(f)=0$. One can also use $\hat{h}_0$ directly as a
simple bandwidth selector, which, while not RMSE optimal, has the advantage that
it doesn't depend on the choice of $\SC_{2}$.

\subsubsection{Construction of FLCIs}
Given an estimate $\hhrmsez$ of $\hrmse$, such as the estimate
$\hhrmse{\tilde{\SC}}$ discussed above, an honest FLCI can be constructed as
\begin{equation}\label{eq:infeasible-CI}
  \hat{T}(\hhrmsez;k)\pm \cv_{1-\alpha}(t)\cdot
  \hatse(\hhrmsez;k),
\end{equation}
where $t$ is an estimate of the bias-sd ratio, and
$\hatse(\hhrmsez;k)$ is an estimate of the standard
error. For the standard error, many choices are available in the literature. For
inference at a point and sharp RD, the estimator
$\hat{T}(\hhrmsez;k)$ is a weighted least squares estimator,
and one can directly estimate its finite-sample conditional variance by the
nearest neighbor variance estimator considered in \citet{AbIm06match} and
\citet{AbImZh14}. Given a bandwidth $h$, the estimator takes the form
\begin{equation}\label{eq:nn-estimator}
  \hatse(h, k)^{2}=\sum_{i=1}^{n}\tilde{w}^{n}(x_{i};h, k)^{2}\hat{\sigma}^{2}(x_{i}),
  \qquad \hat{\sigma}^{2}(x_{i})=\frac{J}{J+1}\left(y_{i}
    -\frac{1}{J}\sum_{j=1}^{J}y_{j(i)}\right)^{2},
\end{equation}
for some fixed (small) $J\geq 1$, where $j(i)$ denotes the $j$th closest
observation to $i$ (for sharp RD $j(i)$ is only taken among units with the same
sign of the running variable.). In contrast, the usual Eicker-Huber-White
estimator sets $\hat{\sigma}^{2}(x_{i})=\hat{u}_{i}^{2}$, where $\hat{u}_{i}$ is
the regression residual, and it can be shown that this estimator will generally
overestimate the conditional variance. For $t$, one can either use the
asymptotic bias-sd ratio $t=1/2$, or else an estimate of the finite-sample
bias-sd ratio
$t=-\SC \sum_{i=1}^{n}\tilde{w}^{n}(x_{i};\hhrmsez, k)
x_{i}^{2}/2\hatse(\hhrmsez, k)$. We use the latter approach in the Monte Carlo
and empirical application below. While both approaches are asymptotically
equivalent when $x_{i}$ is continuous, the latter approach has the advantage
that it remains valid even when the covariates are discrete.\footnote{See
  \citet{ArKo18optimal}, \citet{KoRo16} and \citet{ImWa17} for a more thorough discussion of the case
  with discrete covariates.}

For fuzzy RD, one can use an analogous approach to estimate the standard error
as
\begin{equation*}
  \hatse(h, k)^{2} =\frac{1}{\hat{L}_{2}(h;k)^{2}}
  \sum_{i=1}^{n}\tilde{w}^{n}(x_{i};h, k)^{2}\hat{\varsigma}^{2}(x_{i}, \hat{T}
  (h;k)),
\end{equation*}
where
$\hat{\varsigma}^{2}(x_{i};T)=\frac{J}{J+1}(1,-T)(z_{i}
-\frac{1}{J}\sum_{j=1}^{J}z_{j(i)})(z_{i}
-\frac{1}{J}\sum_{j=1}^{J}z_{j(i)})'(1,-T)'$, $z_{i}=(y_{i}, d_{i})'$, and $j(i)$
denotes that $j$th closest observation with the same sign of the running
variable. For $t$, one can use $t=1/2$, or else the finite-sample analog of the
asymptotic bias-sd ratio,\footnote{For inference based on
  $\hat{T}(\hat{h}_{0};k)$, it is necessary to use the finite-sample analog of
  the bias-sd ratio, since the bandwidth $\hat{h}_{0}$ is not RMSE optimal.}
$t=-(\tilde{\SC}_{1}+\abs{\hat{T}}\tilde{\SC}_{2})\cdot
\sum_{i=1}^{n}\tilde{w}^{n}(x_{i};\hhrmsez, k)x_{i}^{2} /\allowbreak
2\sqrt{\sum_{i=1}^{n}\hat{\varsigma}^{2}(x_{i};\hat{T})\tilde{w}^{n}(x_{i};\hhrmsez, k)^{2}}$.

\section{Comparison with other approaches}\label{sec:comp-with-other}

In this section, we compare our approach to inference about the parameter $T(f)$
to three other approaches to inference. To make the comparison concrete, we make
the comparison in the context of inference about a nonparametric regression
function at a point, discussed in \Cref{sec:applications}. The first approach,
which we term ``conventional,'' ignores the potential bias of the estimator and
constructs the CI as $\hat{T}_{\po}(h, k)\pm z_{1-\alpha/2}\hatse(h;k)$. The
bandwidth $h$ is typically chosen to minimize the asymptotic mean squared error
(MSE) of $\hat{T}_{\po}(h;k)$ under pointwise-in-$f$ (or ``pointwise'', for
short) asymptotics. We refer to this bandwidth as $\hpt$. We discuss the
distinction between $\hpt$ and the bandwidth $\hrmse$ in
\Cref{sec:rmse-pointw-optim}. Under the second approach, undersmoothing, one
chooses a sequence of smaller bandwidths, so that in large samples, the bias of
the estimator is dominated by its standard error. Finally, in bias correction,
one re-centers the conventional CI by subtracting an estimate of the leading
bias term from $\hat{T}_{\po}(h;k)$. In \Cref{sec:effic-comp}, we compare the
coverage and length properties of these CIs to the fixed-length CI (FLCI) based
on $\hat{T}_{\po}(\hrmse;k)$.

Implementing any of these CIs in practice requires feasible bandwidth and tuning
parameter choices. This may require auxiliary assumptions (such as assumptions
relating local and global smoothness of $f$ if one picks $\SC$ using the rule of
thumb discussed in~\Cref{sec:choice-M}), which may differ across the methods.
For clarity of comparison, we keep implementation issues separate, and focus in
this section on a theoretical comparison, assuming any tuning parameters
(including the smoothness parameter $\SC$) are known. The Monte Carlo exercise
in \Cref{monte_carlo_sec} below considers their finite-sample performance when
the tuning parameters need to be chosen.

\subsection{RMSE and pointwise optimal bandwidth}\label{sec:rmse-pointw-optim}

The RMSE optimal bandwidth given in \Cref{minimax_rmse_optimal_bw_eq} seeks to
minimize the asymptotic approximation to the maximum RMSE (or, equivalently,
MSE) over $f\in\FSY{p}(\SC)$ or $f\in\FHol{p}(\SC)$. In contrast, the bandwidth
$\hpt$ is intended to optimize the MSE at the function $f$ itself. In
particular, it minimizes the sum of the leading squared bias and variance terms
under pointwise asymptotics for the case $\po=p-1$. It is given by \citep[see,
for example,][Eq.~(3.20)]{fg96}
\begin{equation}\label{pointwise_rmse_optimal_bw_eq}
  \hpt= \left(
    \frac{\sigma^{2}(0)p!^{2}}{2p n d f^{(p)}(0)^{2}}
    \frac{\int_{\mathcal{X}}k^{*}_{\po}(u)^{2}\, \dd u}{
      (\int_{\mathcal{X}} t^{p}k_{\po}^{*}(t)\, \dd t)^{2}}
  \right)^{\frac{1}{2p+1}}.
\end{equation}
Comparing this expression with that for $\hrmse$ in
\Cref{minimax_rmse_optimal_bw_eq}, we see that the pointwise optimal bandwidth
replaces $\SC$ with the $p$th derivative at zero, $f^{(p)}(0)$, and it replaces
$\mathcal{B}_{p, \po}(k)$ with $\int_{\mathcal{X}} t^{p}k_{\po}^{*}(t)\, \dd t$.
Note that
$\mathcal{B}_{p, \po}(k)\ge \abs{\int_{\mathcal{X}} t^{p}k_{\po}^{*}(t)\, \dd
  t}$ (this can be seen by noting that the right-hand side corresponds to the
bias at the function $f(x)=\pm x^p/p!{}$, while the left-hand side is the
supremum of the bias over functions with $p$th derivative bounded by $1$). Thus,
assuming that $f^{(p)}(0)\le \SC$ (this holds by definition for any
$f\in\mathcal{F}$ when $\mathcal{F}=\FHol{p}(\SC)$), we will have
$\hpt/\hrmse
\geq\left(\SC/\abs{f^{(p)}(0)}
\right)^{\frac{2}{2p+1}} \ge 1$.

Even though the bandwidth $\hpt$ is intended to optimize the RMSE at the
function $f$ itself, its performance may be arbitrarily bad relative to $\hrmse$
at functions for which $f^{(p)}(0)$ is close to zero. For example, consider the
function $f(x)=x^{p+1}$ if $p$ is odd, or $f(x)=x^{p+2}$ if $p$ is even. This is
a smooth function with all derivatives bounded on the support of $x_i$. Since
$f^{(p)}(0)=0$, $\hpt$ is infinite, and the resulting estimator is a global
$p$th order polynomial least squares estimator. Its RMSE will be poor, since the
estimator is not even consistent.\footnote{To ensure consistency and finiteness
  of $\hpt$, it is standard to assume that $f^{(p)}\neq 0$. However, the RMSE
  can still be arbitrarily poor whenever the $p$th derivative is locally small,
  but non-zero, and large globally, such as when $f(x)=x^{p+1}+\eta x^{p}$ for
  $p$ odd and $f(x)=x^{p+2}+\eta x^{p}$ if $p$ is even, provided $\eta$ is
  sufficiently small.}

To address this problem, plug-in bandwidths that estimate $\hpt$ include tuning
parameters to prevent them from approaching infinity. The RMSE of the resulting
estimator at such functions is then determined almost entirely by these tuning
parameters. Furthermore, if one uses such a bandwidth as an input to an
undersmoothed or bias-corrected CI, the coverage will be determined by these
tuning parameters, and can be arbitrarily bad if the tuning parameters allow the
bandwidth to be large. Indeed, we find in our Monte Carlo analysis in
\Cref{monte_carlo_sec} that plug-in estimates of $\hpt$ used in practice
can lead to very poor coverage even when used as a starting point for a
bias-corrected or undersmoothed estimator.

\subsection{Efficiency and coverage comparison}\label{sec:effic-comp}

Let us now consider the efficiency and coverage properties of conventional,
undersmoothed, and bias-corrected CIs relative to the FLCI based on
$\hat{T}_{p-1}(\hrmse, k)$. To keep the comparison meaningful, and avoid the
issues discussed in the previous subsection, we assume these CIs are also based
on $\hrmse$, rather than $\hpt$ (in case of undersmoothing, we assume that the
bandwidth is undersmoothed relative to $\hrmse$). Suppose that the smoothness
class is either $\FSY{p}(\SC)$ or $\FHol{p}(\SC)$ and denote it by
$\mathcal{F}_{p}(\SC)$. For concreteness, let $p=2$, and $\po=1$.

Consider first conventional CIs, given by
$\hat{T}_{1}(h;k)\pm z_{1-\alpha/2}\hatse(h;k)$. If the bandwidth $h$ equals
$\hrmse$, then these CIs are shorter than the 95\% FLCIs by a factor of
$z_{0.975}/\cv_{0.95}(1/2)=0.90$. Consequently, their coverage is $92.1\%$
rather than the nominal $95\%$ coverage.
At the RMSE-optimal bandwidth, the bias-sd ratio equals
$1/2$, so disregarding the bias doesn't result in severe undercoverage. If one
uses a larger bandwidth, however, the bias-sd ratio will be larger, and the
undercoverage problem more severe: for example, if the bandwidth is 50\% larger
than $\hrmse$, so that the bias-sd ratio equals $1/2\cdot
(1.5)^{(5/2)}$, the coverage is only $71.9\%$.

Second, consider undersmoothing. This amounts to choosing a bandwidth sequence
$h_n$ such that $h_n/\hrmse\to 0$, so that for any fixed $\SC$, the bias-sd
ratio $t_n=h_n^{\gamma_b-\gamma_s}\SC B(k)/(n^{-1/2}S(k))$ approaches zero, and
the CI
$\hat T(h^n;k)\pm \cv_{1-\alpha}(0)\hatse(h_n;k)=\hat T(h^n;k)\pm
z_{1-\alpha/2}\hatse(h_n;k)$ will consequently have proper coverage in large
samples. However, the CIs shrink at a slower rate than $n^{r/2}=n^{4/5}$, and
thus the asymptotic efficiency of the undersmoothed CI relative to the optimal
FLCI is zero.

On the other hand, an apparent advantage of the undersmoothed CI is that it
appears to avoid specifying the smoothness constant $\SC$. However, a more
accurate description of undersmoothing is that the bandwidth sequence $h_{n}$
implicitly chooses a sequence of smoothness constants $\SC_n\to\infty$ such that
coverage is controlled under the sequence of parameter spaces
$\mathcal{F}_{p}(\SC_n)$. We can improve on the coverage and length of the
resulting CI by making this sequence explicit and computing an optimal (or
near-optimal) FLCI for $\mathcal{F}_{p}(\SC_n)$.

To this end, given a sequence $h_n$, a better approximation to the finite-sample
coverage of the CI $\hat T(h_n;k)\pm z_{1-\alpha/2}\hatse(h_n;k)$ over the
parameter space $\mathcal{F}_{p}(\SC)$ is
$P_{Z\sim N(0,1)}(\abs{Z+t_n(\SC)}\ge z_{1-\alpha/2})$ where
$t_n(\SC)=h_n^{\gamma_b-\gamma_s}\SC B(k)/(n^{-1/2}S(k))$ is the bias-sd ratio
for the given choice of $\SC$. This approximation is exact in idealized
settings, such as the white noise model discussed in
\begin{NoHyper}\Cref{verification_sec}\end{NoHyper}. For a
given level of undercoverage $\eta=\eta_n$, one can then compute $\SC_n$ as the
greatest value of $\SC$ such that this approximation to the coverage is at least
$1-\alpha-\eta$. In order to trust the undersmoothed CI, one must be convinced
of the plausibility of the assumption $f\in\mathcal{F}_{p}(\SC_n)$: otherwise
the coverage will be worse than $1-\alpha-\eta$. This suggests that, in the
interest of transparency, one should make this smoothness constant explicit by
reporting $\SC_n$ along with the undersmoothed CI\@. However, once the sequence
$\SC_{n}$ is made explicit, a more efficient approach is to simply report an
optimal or near-optimal CI for this sequence, either at the coverage level
$1-\alpha-\eta$ (in which case the CI will be strictly smaller than the
undersmoothed CI while maintaining the same coverage) or at level $1-\alpha$ (in
which case the CI will have better finite-sample coverage and may also be
shorter than the undersmoothed CI).

Finally, let us consider bias correction. It is known that re-centering
conventional CIs by an estimate of the leading bias term often leads to poor
coverage \citep{hall_effect_1992}. In an important paper, \citet[CCT
hereafter]{cct14} show that the coverage properties of this bias-corrected CI
are much better if one adjusts the standard error estimate to account for the
variability of the bias estimate, which they call robust bias correction (RBC).
For simplicity, consider the case in which the main bandwidth and the pilot
bandwidth (used to estimate the bias) are the same, and that the main bandwidth
is chosen optimally in that it equals $\hrmse$. In this case, the bias-corrected
local linear estimator coincides with a local quadratic estimator. As a result,
the RBC procedure in this case amounts to using a local quadratic estimator, but
with a bandwidth $\hrmse$, optimal for a local linear estimator. The resulting
CI obtains by adding and subtracting $z_{1-\alpha/2}$ times the standard
deviation of the estimator.

To ensure that the bias is estimable, the theory of bias correction requires
that the conditional mean function is sufficiently smooth, which requires
$\po<p-1$ (thus, assuming that $f$ is sufficiently smooth to ensure that the
bias of $\hat{T}_{1}(h;k)$ can be estimated implies that the polynomial order
$\po=1$ of the original estimator is not optimal). Suppose, therefore, that the
smoothness class is given by $\mathcal{F}_{3}(\SC)$ (with $\po=1$, and
$h=\hrmse$ still chosen to be MSE optimal for $\mathcal{F}_{2}(\SC)$). In this
case the RBC interval can be considered an undersmoothed CI based on a second
order local polynomial estimator. Following the discussion of undersmoothed CIs
above, the limiting coverage is $1-\alpha$ when $\SC$ is fixed (this matches the
pointwise-in-$f$ coverage statements in CCT, which assume the existence of a
continuous third derivative in the present context). Due to this undersmoothing,
however, the RBC CI shrinks at a slower rate than the optimal CI\@.

It is also interesting to consider the case when the order $\po=1$ of the local
polynomial of the estimator $\hat{T}_{1}(\hrmse;k)$ is optimal under the
maintained smoothness assumption, so that the smoothness class is given by
$\mathcal{F}_{2}(\SC)$. In this case, the smoothness of the conditional mean
function is too low for the bias to be estimable: the bias of the bias-corrected
estimator will be of the same order as the bias of the original estimator.
Consequently, the estimator will remain asymptotically biased, even after the
bias correction. In particular, bias-sd ratio of the estimator is given by
\begin{equation}\label{eq:bias-sd-cct}
  t_{\text{RBC}}=(\hrmse)^{5/2}
  \frac{\SC \mathcal{B}_{2,2}(k) /2}{\sigma(0)
    (\int k^{*}_{2}(u)^{2}\, \dd u/dn)^{1/2}}=
  \frac{1}{2}
  \frac{\mathcal{B}_{2,2}(k)}{\mathcal{B}_{2,1}(k)}
  \left(\frac{\int_{\mathcal{X}}k^{*}_{1}(u)^{2}\, \dd u}{
      {\int_{\mathcal{X}}k^{*}_{2}(u)^{2}\, \dd u}}\right)^{1/2}.
\end{equation}
The resulting coverage is given by
$\Phi(t_{\text{RBC}}+z_{1-\alpha/2})-\Phi(t_{\text{RBC}}-z_{1-\alpha/2})$. The
RBC interval length relative to the $1-\alpha$ FLCI around a local linear
estimator with the same kernel and minimax MSE bandwidth is the same under both
$\FSY{p}(\SC)$, and $\FHol{p}(\SC)$, and given by
\begin{equation}\label{eq:length-ratio-cct}
  \frac{z_{1-\alpha/2}\left(\int_{\mathcal{X}}k^{*}_{2}(u)^{2}\, \dd u\right)^{1/2}
  }{\cv_{1-\alpha}(1/2)\left(\int_{\mathcal{X}}k^{*}_{1}(u)^{2}\, \dd u\right)^{1/2}}
  (1+o(1)).
\end{equation}

The resulting coverage and relative length is given in
\Cref{tab:cct-minimax-mse}. One can see that although the undercoverage is
very mild, (since $t_{\text{RBC}}$ is quite low in all cases), the intervals are
about 30\% longer than the FLCIs around the RMSE bandwidth.

Under the class $\FHol{2}(\SC)$, the RBC intervals are also reasonably robust to
using a larger bandwidth: if the bandwidth used is 50\% larger than $\hrmse$, so
that the bias-sd ratio in \Cref{eq:bias-sd-cct} is larger by a factor of
$(1.5)^{5/2}$, the resulting coverage is still at least 93.0\% for the kernels
considered in \Cref{tab:cct-minimax-mse}. Under $\FSY{2}(\SC)$, using a
bandwidth 50\% larger than $\hrmse$ yields coverage of about $80\%$ on the
boundary and 87\% in the interior. Thus, depending on the smoothness class, the
95\% RBC CI has close to 95\% coverage and efficiency loss of about 30\%, or
exactly 95\% coverage at the cost of shrinking at a slower than optimal rate.

Our asymptotic efficiency comparisons focus on minimizing length among CIs with
coverage at least $1-\alpha$ for all $f\in\mathcal{F}$, which follows the usual
definition of coverage. One may also consider a criterion that also penalizes
CIs that cover ``too much,'' by placing an upper bound $1-\underline\alpha$ on
coverage. For the CIs considered in this paper, the maximum coverage occurs when
the bias is zero, and is given by
$P_{Z\sim N(0,1)}(|Z|\le \cv_{1-\alpha}(t))=1-2\Phi(-\cv_{1-\alpha}(t))$ where
$t$ is the asymptotic bias-sd ratio. In particular, when
$\mathcal{F}=\FSY{2}(\SC)$ or $\mathcal{F}=\FHol{2}(\SC)$ and the RMSE optimal
bandwidth is used, the maximum coverage of a FLCI with $95\%$ (minimum) coverage
is $1-2\Phi(-2.18)=.971$. If one wants the maximum coverage to be smaller, then
undersmoothing (or subtracting an estimate of the bias) will be necessary, and
Edgeworth expansions may be needed to deal with higher order approximation terms
if one wants $\alpha-\underline\alpha\to 0$ quickly enough with the sample size
\citep[see][]{calonico_coverage_2019}. Because, as we discuss above,
undersmoothing or bias correction yields longer CIs than the ones we propose,
the resulting CIs will be longer than the CIs we propose, which do not penalize
``overcoverage.''

\section{Monte Carlo}\label{monte_carlo_sec}

To study the finite-sample performance of the FLCI that we propose, and compare
its performance to other approaches, this section conducts a Monte Carlo
analysis of the conditional mean estimation problem considered in
\Cref{sec:applications}.

We consider Monte Carlo designs with conditional mean functions
\begin{align*}
  f_{1}(x)&=\frac{\SC}{2}(x^{2}- 2\textsf{s}(\abs{x}-0.25)), \\
  f_{2}(x)&=\frac{\SC}{2}(x^{2}-2\textsf{s}(\abs{x}-0.2)^{2}
            +2\textsf{s}(\abs{x}-0.5)
            -2\textsf{s}(\abs{x}-0.65)), \\
  f_{3}(x)&=\frac{\SC}{2}((x+1)^{2}-2\textsf{s}(x+0.2)+2\textsf{s}(x-0.2)
            -2\textsf{s}(x-0.4) +2\textsf{s}(x-0.7)-0.92),
\end{align*}
where $\textsf{s}(x)=(x)_{+}^{2}=\max\{x, 0\}^{2}$ is the square of the plus
function, and $\SC\in\{2,6\}$, giving a total of 6 designs. In all cases,
$x_{i}$ is drawn from a uniform distribution with support $[-1,1]$ (so that the
design is random), $u_i\sim N(0,1/4)$, and the sample size is $n=500$.
\Cref{fig:mc-fkt} plots these designs. The regression function for each
design lies in $\FHol{2}(\SC)$ for the corresponding $\SC$. To ensure that our
results, discussed below, are not sensitive to the choice of the error
distribution or the distribution for the running variable,
in \begin{NoHyper}\Cref{sec:addit-monte-carlo}\end{NoHyper}, we also consider
designs with $x_{i}$ drawn from a beta distribution, designs with log-normal and
heteroskedastic errors, and designs with different error variance. Finally, we
also show in the appendix that the results remain effectively the same when the
function $\textsf{s}(\cdot)$ is replaced by a smooth approximating
function.\footnote{\label{fn:rbc-simulation}The RBC method considered below
  assumes that the conditional mean function be at least three times
  continuously differentiable in the neighborhood of $0$. Since the functions
  $f_{1}, f_{2}$ and $f_{3}$ are not globally three times continuously
  differentiable, depending on the neighborhood definition, this assumption is
  arguably violated. The results in the appendix are nearly identical to those
  reported here, implying that the performance of the RBC method is not driven
  by this lack of smoothness.}

For each design, we implement the optimal FLCI centered at a local linear
estimate with a triangular kernel and MSE optimal bandwidth, as described in
\Cref{sec:pract-impl}, for each choice of $\SC\in\{2,6\}$, and with $\SC$
calibrated using the rule-of-thumb (ROT) described in
\Cref{sec:pract-impl}. The implementations with $\SC\in\{2,6\}$ allow us
to gauge the effect of using an appropriately calibrated $\SC$, compared to a
choice of $\SC$ that is either too conservative or too liberal by a factor of 3.
The ROT calibration chooses $\SC$ automatically, but requires additional
conditions in order to have correct coverage (see \Cref{sec:pract-impl}).

In addition to these FLCIs, we consider seven other CIs
(\begin{NoHyper}\Cref{sec:addit-monte-carlo}\end{NoHyper} considers one more
method). The first five are different implementations of the robust
bias-corrected (RBC) CIs proposed by CCT (discussed in
\Cref{sec:comp-with-other}). Implementing these CIs requires two bandwidth
choices: a bandwidth for the local linear estimator, and a pilot bandwidth that
is used to construct an estimate of its bias. The first two CIs use bandwidth
choices justified by pointwise-in-$f$ asymptotics. The first CI uses a plug-in
estimate of $\hpt$ defined in~\eqref{pointwise_rmse_optimal_bw_eq}, as
implemented by \citet{ccf15}, and an analogous estimate for the pilot bandwidth.
The second CI, also implemented by \citet{ccf15}, uses bandwidth estimates for
both bandwidths that optimize the pointwise asymptotic coverage error (CE) among
CIs that use usual $z_{1-\alpha/2}$ critical value. This CI can be considered a
particular form of undersmoothing. The third CI sets both the pilot bandwidth
and the main bandwidth to the plug-in estimate of $\hpt$. For the next three
CIs, we consider bandwidths justified by uniform-in-$f$ asymptotics. For the
fourth and fifth CIs, we set both the main and the pilot bandwidth to $\hrmse$
with $\SC=2$, and $\SC=6$, respectively. For the sixth CI, we set both
bandwidths to $\hhrmse{\Mrot}$. Finally, we consider a conventional CI centered
at a plug-in bandwidth estimate of $\hpt$, using the rule-of-thumb estimator of
\citet[][Chapter 4.2]{fg96}. All CIs are computed at the nominal $95\%$ coverage
level.

\Cref{tab:mc} reports the results. The FLCIs perform well when the correct
$\SC$ is used. As expected, they suffer from undercoverage if $\SC$ is chosen
too small, or suboptimal length when $\SC$ is chosen too large. The ROT choice
of $\SC$ appears to do a reasonable job of having good coverage and length in
these designs without requiring knowledge of the true smoothness constant.
However, as discussed in \Cref{sec:pract-impl}, this ROT choice
imposes additional restrictions on the parameter space, so one must take care in
extrapolating these results to other designs.

As predicted by the theory in \Cref{sec:comp-with-other}, the RBC CIs
also have good coverage when implemented using the $\hrmse$ bandwidth, and they
are less sensitive to the choice of $\SC$ than the corresponding FLCIs, at the
expense of being on average about 25\% longer. RBC CIs with bandwidth given by
$\hhrmse{\Mrot}$ also achieve good coverage, but they are again about 25\%
longer than the corresponding FLCIs.

The CIs based on bandwidths justified by pointwise-in-$f$ asymptotics (rows 1,
2, 3, and 7 for each design in the table) all have very poor coverage for at least
one of the designs. Our analysis in \Cref{sec:comp-with-other} suggests that
this is due to the tuning parameter choices required by these bandwidths.
Indeed, looking at the average of the bandwidth over the Monte Carlo draws (also
reported in \Cref{tab:mc}), it can be seen that the bandwidths tend to be
much larger than those that estimate $\hrmse$. This is even the case for the CE
bandwidth, which is intended to minimize coverage errors.

Overall, the Monte Carlo analysis suggests that default approaches to
nonparametric CI construction (bias-correction or undersmoothing relative to
plug-in bandwidths) can lead to severe undercoverage when implemented using
bandwidths justified by pointwise-in-$f$ asymptotics. Bias-corrected CIs such as
the one proposed by CCT can have good coverage if one starts from the minimax
RMSE bandwidth, although they will be wider than FLCIs proposed in this paper.

\section{Empirical illustration}\label{application_sec}
To illustrate the implementation of feasible versions of the
CIs~\eqref{eq:infeasible-CI}, we use a subset of the dataset from
\citet{LuMi07}.

In 1965, when the Head Start federal program launched, the Office of Economic
Opportunity provided technical assistance to the 300 poorest counties in the
United States to develop Head Start funding proposals. \citet{LuMi07} use this
cutoff in technical assistance to look at intent-to-treat effects of the Head
Start program on a variety of outcomes using as a running variable the county's
poverty rate relative to the poverty rate of the 300th poorest county (which had
poverty rate equal to approximately 59.2\%). We focus here on their main
finding, the effect on child mortality due to causes addressed as part of Head
Start's health services. See \citet{LuMi07} for a detailed description of this
variable. Relative to the dataset used in \citet{LuMi07}, we remove one
duplicate entry and one outlier, which after discarding counties with partially
missing data leaves us with 3,103 observations, with 294 of them above the
poverty cutoff.

\Cref{fig:lm-rawdata} plots the data (to reduce the noise in the outcome
variable, we plot bin averages of size 25). To estimate the discontinuity in
mortality rates, \citet{LuMi07} use a uniform kernel\footnote{\citet{LuMi07}
  state that the estimates were obtained using a triangular kernel. However, due
  to a bug in the code, the results reported in the paper were actually obtained
  using a uniform kernel.} and consider bandwidths equal to 9, 18, and 36. This
yields point estimates equal to $-1.90$, $-1.20$ and $-1.11$ respectively, which
are large effects given that the average mortality rate for counties not
receiving technical assistance was $2.15$ per 100,000. The $p$-values reported
in the paper, based on bootstrapping the $t$-statistic (which ignores any
potential bias in the estimates), are 0.036, 0.081, and 0.027. The standard
errors for these estimates, obtained using the nearest neighbor method (with
$J=3$) are $1.04$, $0.70$, and $0.52$.

These bandwidth choices are optimal in the sense that they minimize the RMSE
expression~\eqref{eq:fs-mse-point} if $\SC=0.040$, $0.0074$, and $0.0014$,
respectively. Thus, for these bandwidths to be optimal, one has to be very
optimistic about the smoothness of the regression function. In comparison, the
rule of thumb method for estimating $\SC$ discussed in \Cref{sec:pract-impl}
yields $\Mrot=0.299$, implying $\hrmse$ estimate $4.0$, and the point estimate
$-3.17$. For these smoothness parameters, the critical values based on the
finite-sample bias-sd ratio are given by $2.165$, $2.187$, $2.107$ and $2.202$
respectively, which is very close to the asymptotic value
$\cv_{.95}(1/2)=2.181$. The resulting 95\% confidence intervals are given by
\begin{align*}
  (-4.143, 0.353), && (-2.720, 0.323), &&(-2.215, -0.013),
  &&\text{and}&&(-6.352, 0.010),
\end{align*}
respectively. The $p$-values based on these estimates are given by $0.100$,
$0.125$, $0.047$, and $0.051$. These $p$-values are larger than those reported in
the paper, as they take into account the potential bias of the estimates.

Using a triangular kernel helps to tighten the confidence intervals by a few
percentage points in length, as predicted by the relative asymptotic efficiency
results from \Cref{tab:holder-eff}, yielding
\begin{align*}
  (-4.138, 0.187), &&(-2.927, 0.052), &&(-2.268, -0.095), &&\text{and}
  &&(-5.980, -0.322)
\end{align*}
The underlying optimal bandwidths are given by $11.6$, $23.1$, $45.8$, and $4.9$
respectively. The $p$-values associated with these estimates are $0.074$,
$0.059$, $0.033$, and $0.028$, tightening the $p$-values based on the uniform
kernel.

These results indicate that unless one is very optimistic about the smoothness
of the regression function, the uncertainty associated with the magnitude of the
effect of Head Start assistance on child mortality is much higher than
originally reported. This is due mainly to the relatively large bandwidths used
by \citet{LuMi07}, which imply an optimistic bound on the smoothness of the
regression function if we assume that such bandwidths are close to optimal for
MSE\@. Interestingly, while the more conservative smoothness bound in our
benchmark specification leads to much wider CIs, the point estimate is larger in
magnitude, so that one still finds a statistically significant effect at a 5 or
10\% level, depending on the kernel.

\clearpage

\begin{appendices}
\crefalias{section}{appsec}
\section{Proofs of theorems in Section~\ref{results_sec}}\label{proofs_sec}

\subsection{Proof of Theorem~\ref{single_R_thm}}
Parts~\ref{item:theorem-2} and~\ref{item:theorem-3} follow from
part~\ref{item:theorem-1} and simple calculations. To prove
part~\ref{item:theorem-1}, note that, if it did not hold, there would be a
bandwidth sequence $h_{n}$ such that
\begin{equation*}
  \liminf_{n\to\infty} \SC^{r-1} n^{r/2} R(\hat T(h_{n};k)) < S(k)^{r}B(k)^{1-r}\inf_t
  t^{r-1}\tilde{R}(t,1).
\end{equation*}
By \Cref{large_small_h_eq}, the bandwidth sequence $h_{n}$ must
satisfy $\liminf_{n\to\infty} h_n(n\SC^2)^{1/[2(\gamma_b-\gamma_s)]}>0$ and
$\limsup_{n\to\infty} h_n(n\SC^2)^{1/[2(\gamma_b-\gamma_s)]}<\infty$. Thus, by
\Cref{Rt_eq},
\begin{equation*}
  \SC^{r-1}n^{r/2}
  R(\hat{T}(h_{n};k)) =S(k)^{r}B(k)^{1-r}t_n^{r-1}\tilde R(t_n,1)+o(1),
\end{equation*}
where $t_n=h_n^{\gamma_b-\gamma_s}B(k)/\allowbreak (n^{-1/2}S(k))$. This
contradicts the display above.

\subsection{Proof of Theorem~\ref{two_R_thm}}

The second statement (relative efficiency) is immediate from~\eqref{Rt_eq}. For
the first statement (coverage), fix $\varepsilon>0$ and let
$\sd_n=n^{-1/2}(\hrmse)^{\gamma_s}S(k)$ so that
$\sd_n/\hatse(\hrmse;k)\stackrel{p}{\to} 1$ uniformly over $f\in\mathcal{F}$.
Note that, by \Cref{single_R_thm} and the fact that $t^*_{\RMSE}=\sqrt{1/r-1}$,
\begin{equation*}
  \tilde R_{\FLCI, \alpha+\varepsilon}(\hat T(\hrmse;k)) =
  \sd_n \cdot \cv_{1-\alpha-\varepsilon}(\sqrt{1/r-1})(1+o(1)),
\end{equation*}
and similarly for
$\tilde{R}_{\FLCI, \alpha-\varepsilon}(\hat{T}(\hrmse;k))$.
Since $\cv_{1-\alpha}(\sqrt{1/r-1})$ is strictly decreasing in $\alpha$, it
follows that there exists $\eta>0$ such that, with probability approaching 1
uniformly over $f\in\mathcal{F}$,
\begin{multline*}
  R_{\FLCI, \alpha+\varepsilon}(\hat T(\hrmse;k))
  <\hatse(\hat T(\hrmse;k)) \cdot
  \cv_{1-\alpha}(\sqrt{1/r-1})\\
  <(1-\eta)R_{\FLCI, \alpha-\varepsilon}(\hat{T}
  (\hrmse;k)).
\end{multline*}
Thus,
\begin{multline*}
  \liminf_n\inf_{f\in\mathcal{F}}P\left(T(f)\in\left\{\hat{T} (\hrmse;k) \pm
      \hatse(\hat{T}(\hrmse;k)) \cdot \cv_{1-\alpha}(\sqrt{1/r-1})\right\}
  \right) \\
  \ge \liminf_n\inf_{f\in\mathcal{F}}P\left(T(f)\in\left\{ \hat{T}(\hrmse;k)\pm
      R_{\FLCI, \alpha+\varepsilon}(\hat{T}(\hrmse;k))\right\} \right) \ge
  1-\alpha-\varepsilon,
\end{multline*}
and
\begin{multline*}
  \limsup_n\inf_{f\in\mathcal{F}}P\left(T(f)\in\left\{\hat{T}(\hrmse;k)\pm
      \hatse(\hat T(\hrmse;k)) \cdot \cv_{1-\alpha}(\sqrt{1/r-1})\right\}
  \right) \\
  \le \limsup_n\inf_{f\in\mathcal{F}}P\left(T(f)\in\left\{\hat{T} (\hrmse;k)\pm
      R_{\FLCI, \alpha-\varepsilon}(\hat{T}(\hrmse;k))(1-\eta)\right\}
  \right) \le 1-\alpha+\varepsilon,
\end{multline*}
where the last inequality follows by definition of
$R_{\FLCI, \alpha-\varepsilon}(\hat T(\hrmse;k))$. Taking
$\varepsilon\to 0$ gives the result.

\end{appendices}

\onehalfspacing
\bibliography{../np-testing-library}

\begin{table}[p]
  \centering
  \begin{threeparttable}
    \caption{Critical values $\cv_{1-\alpha}(\cdot)$}\label{tab:cvs}
    \begin{tabular*}{0.9\linewidth}{@{\extracolsep{\fill}}llrrr@{}}
      &    & \multicolumn{3}{c}{$\alpha$}\\
      \cmidrule(rl){3-5}
      $r$&    $t$& $0.01$ & 0.05 & 0.1\\
      \midrule
      & 0.0 & 2.576 & 1.960 & 1.645 \\
      6/7 &0.408& 2.764 & 2.113 & 1.777 \\
      4/5 & 0.5 & 2.842 & 2.181 & 1.839 \\
      2/3 & 0.707& 3.037 & 2.362 & 2.008 \\
      1/2 &  1.0 & 3.327 & 2.646 & 2.284 \\
      &  1.5 & 3.826 & 3.145 & 2.782 \\
      &  2.0 & 4.326 & 3.645 & 3.282\\[1ex]
    \end{tabular*}
    \begin{tablenotes}
    \item \emph{Notes:} Critical values $\cv_{1-\alpha}(t)$ and
      $\cv_{1-\alpha}(\sqrt{1/r-1})$, for the FLCIs
      in~\eqref{eq:honest-two-sided-ci} and \eqref{eq:honest-two-sided-ci-rmse},
      corresponding to the $1-\alpha$ quantiles of the $\abs{N(t,1)}$ and
      $\abs{N(\sqrt{1/r-1},1)}$ distributions, where $t$ is the bias-sd ratio,
      and $r$ is the rate exponent. For $t\geq 2$,
      $\cv_{1-\alpha}(t)\approx t+z_{1-\alpha/2}$ up to 3 decimal places for
      these values of $\alpha$.
    \end{tablenotes}
  \end{threeparttable}
\end{table}

\begin{table}[p]
  \centering
  \begin{threeparttable}
    \caption{Relative efficiency of local polynomial estimators for the function
      class $\FSY{p}(\SC)$.}\label{tab:taylor-eff}
    \begin{tabular*}{0.9\linewidth}{@{\extracolsep{\fill}}@{}lcllllll@{}}
      & & \multicolumn{3}{c}{Boundary Point} &\multicolumn{3}{c}{Interior point}\\
      \cmidrule(rl){3-5}\cmidrule(rl){6-8}
      Kernel &  Order & $p=1$ & $p=2$ & $p=3$& $p=1$ & $p=2$ & $p=3$\\
      \midrule
      \multirow{3}{*}{\begin{tabular}{l}Uniform\\ $\1{\abs{u}\leq 1}$\end{tabular}}
      &0 & 0.9615 & & &               0.9615 &        &       \\
      &1 & 0.5724 & 0.9163 & &        0.9615 & 0.9712 &       \\
      &2 & 0.4121 & 0.6387 & 0.8671 & 0.7400 & 0.7277 & 0.9267\\[1ex]
      \multirow{3}{*}{\begin{tabular}{l}Triangular\\ $(1-\abs{u})_{+}$\end{tabular}}
      &0 & 1 & & &                  1      &        &       \\
      &1 & 0.6274 & 0.9728& &       1      & 0.9943 &       \\
      &2 & 0.4652 & 0.6981 & 0.9254&0.8126 & 0.7814 & 0.9741\\[1ex]
      \multirow{3}{*}{\begin{tabular}{l}Epanechnikov\\ $\frac{3}{4}(1-u^{2})_{+}$\end{tabular}}
      &0 & 0.9959 &  &             &0.9959 &        &       \\
      &1 & 0.6087 & 0.9593 &       &0.9959 & 1      &       \\
      &2 & 0.4467 & 0.6813 & 0.9124&0.7902 & 0.7686 & 0.9672 \\[1ex]
    \end{tabular*}
    \begin{tablenotes}
    \item \emph{Notes:} Efficiency is relative to the optimal equivalent kernel
      $k^{*}_{SY}$. The functional $T(f)$ corresponds to the value of $f$ at a
      point.
    \end{tablenotes}
  \end{threeparttable}
\end{table}

\begin{table}[p]
  \centering
  \begin{threeparttable}
    \caption{Relative efficiency of local polynomial estimators for the function
      class $\FHol{p}(\SC)$.}\label{tab:holder-eff}
    \begin{tabular*}{0.9\linewidth}{@{\extracolsep{\fill}}@{}lclll lll@{}}
      & & \multicolumn{3}{c}{Boundary Point} &\multicolumn{3}{c}{Interior point}\\
      \cmidrule(rl){3-5}\cmidrule(rl){6-8}
      Kernel &  Order & $p=1$ & $p=2$ & $p=3$& $p=1$ & $p=2$ & $p=3$\\
      \midrule
      \multirow{3}{*}{\begin{tabular}{l}Uniform\\ $\1{\abs{u}\leq 1}$\end{tabular}}
      &0 & 0.9615 &        &        &0.9615 &        &       \\
      &1 & 0.7211 & 0.9711 &        &0.9615 & 0.9662 &       \\
      &2 & 0.5944 & 0.8372 & 0.9775 &0.8800 & 0.9162 & 0.9790\\[1ex]
      \multirow{3}{*}{\begin{tabular}{l}Triangular\\ $(1-\abs{u})_{+}$\end{tabular}}
      &0 & 1      &        &  &1\\
      &1 & 0.7600 & 0.9999 & & 1      &0.9892\\
      &2 & 0.6336 & 0.8691 &1& 0.9263 & 0.9487 & 1\\[1ex]
      \multirow{3}{*}{\begin{tabular}{l}Epanechnikov\\ $\frac{3}{4}(1-u^{2})_{+}$\end{tabular}}
      &0 & 0.9959 &        &      &0.9959 &        & \\
      &1 & 0.7471 & 0.9966 &      &0.9959 & 0.9949 & \\
      &2 & 0.6186 & 0.8602 &0.9974&0.9116 & 0.9425 & 1\\[1ex]
    \end{tabular*}
    \begin{tablenotes}
    \item \emph{Notes:} For $p=1,2$, efficiency is relative to the optimal
      kernel, for $p=3$, efficiency is relative to the local quadratic estimator
      with triangular kernel. The functional $T(f)$ corresponds to the value of
      $f$ at a point.
    \end{tablenotes}
  \end{threeparttable}
\end{table}

\begin{table}[p]
  \centering
  \begin{threeparttable}
    \caption{Gains from imposing global
      smoothness}\label{tab:global-smoothness-gains}
    \begin{tabular*}{0.9\linewidth}{@{\extracolsep{\fill}}@{}lllllll@{}}
      & \multicolumn{3}{c}{Boundary Point} &\multicolumn{3}{c}{Interior point}\\
      \cmidrule(rl){2-4}\cmidrule(rl){5-7}
      Kernel & $p=1$ & $p=2$ & $p=3$& $p=1$ & $p=2$ & $p=3$\\
      \midrule
      Uniform     &  1&   0.855& 0.764&   1&      1   & 0.848\\
      Triangular  &  1&   0.882& 0.797&   1&      1   & 0.873\\
      Epanechnikov&  1&   0.872& 0.788&   1&      1   & 0.866\\
      Optimal     &  1&   0.906&      &   1&    0.995 &\\[1ex]
    \end{tabular*}
    \begin{tablenotes}
    \item \emph{Notes:} The table gives the relative asymptotic risk of local
      polynomial estimators of order $p-1$ and a given kernel under the class
      $\FHol{p}(\SC)$ relative to the risk under $\FSY{p}(\SC)$ given
      in~\Cref{eq:global-smooothness}. ``Optimal'' refers to using the optimal
      kernel under a given smoothness class.
    \end{tablenotes}
  \end{threeparttable}
\end{table}

\begin{table}[p]
  \centering
  \begin{threeparttable}
    \caption{Performance of RBC CIs based on $\hrmse$ bandwidth for local linear
      regression under $\FSY{2}$ and $\FHol{2}$.}\label{tab:cct-minimax-mse}
    \begin{tabular*}{0.9\linewidth}{@{\extracolsep{\fill}}@{}llccc ccc@{}}
      &&\multicolumn{3}{c}{$\FSY{2}$} & \multicolumn{3}{@{}c@{}}{$\FHol{2}$} \\
      \cmidrule(lr){3-5}    \cmidrule(lr){6-8}
      \phantom{q}& Kernel & Length & Coverage & $t_{\text{RBC}}$
                                   & Length & Coverage & $t_{\text{RBC}}$ \\
      \cmidrule(l){1-8}
      \multicolumn{2}{l}{Boundary}\\
      \cmidrule(rl){1-2}
      &Uniform        &1.35&0.931& 0.400&1.35&0.948& 0.138\\
      &Triangular     &1.32&0.932& 0.391&1.32&0.947& 0.150\\
      &Epanechnikov   &1.33&0.932& 0.393&1.33&0.947& 0.148\\[1ex]
      \multicolumn{2}{l}{Interior}\\
      \cmidrule(rl){1-2}
      &Uniform        &1.35&0.941& 0.279&1.35&0.949& 0.086\\
      &Triangular     &1.27&0.940& 0.297&1.27&0.949& 0.110\\
      &Epanechnikov   &1.30&0.940& 0.298&1.30&0.949& 0.105
    \end{tabular*}
    \begin{tablenotes}
    \item \emph{Legend:} Length---CI length relative to 95\% FLCI based on a
      local linear estimator and the same kernel and bandwidth $\hrmse$;
      $t_{\text{RBC}}$---ratio of the worst-case bias to standard deviation;
    \end{tablenotes}
  \end{threeparttable}
\end{table}

\begin{landscape}
  \renewcommand{\arraystretch}{1.05} %
  \footnotesize
  \begin{ThreePartTable}
    \begin{TableNotes}
    \item \emph{Legend:} SE---average standard error; $E[h]$---average (over
      Monte Carlo draws) bandwidth; Cov---coverage of CIs (in \%); RL---relative
      (to optimal FLCI) length.
    \item \emph{Bandwidth descriptions:} $\hmsedpi$---plugin estimate of
      pointwise MSE optimal bandwidth (bw); $\bmsedpi$---analog for estimate of
      the bias; $\hcedpi$---plugin estimate of coverage error optimal bw;
      $\bcedpi$---analog for estimate of the bias; The implementation of
      \citet{ccf15} is used for all four bws. $\hhrmse{2}$, $\hhrmse{6}$---RMSE
      optimal bw, assuming $\SC=2$, and $\SC=6$, respectively.
      $\hfgrot$---\citet{fg96} rule of thumb; $\hhrmse{\Mrot}$---RMSE optimal
      bw, using rule-of-thumb for $\SC$. 50,000 Monte Carlo draws.
    \end{TableNotes}
    \begin{longtable}{@{}lll rllrl c rllrl ll@{}}
      \caption{Monte Carlo simulation: Inference at a point.}\label{tab:mc}\\
      &&&\multicolumn{5}{c@{}}{$\SC=2$}&&\multicolumn{5}{c}{$\SC=6$}\\
      \cmidrule(rl){4-8}\cmidrule(rl){10-14}
      &Method & Bandwidth & Bias& SE &   $E[h]$&   Cov&   RL&&Bias& SE &   $E_{m}[h]$&   Cov&   RL\\
      \midrule
      \endfirsthead%
      \caption*{Monte Carlo simulation: baseline DGP (continued)}\\
      &&&\multicolumn{5}{c@{}}{$\SC=2$}&&\multicolumn{5}{c}{$\SC=6$}\\
      \cmidrule(rl){4-8}\cmidrule(rl){10-14}
      &Method & Bandwidth & Bias& SE &   $E[h]$&   Cov&   RL&&Bias& SE &   $E_{m}[h]$&   Cov&   RL\\
      \midrule
      \endhead%
      \endfoot%
      \insertTableNotes%
      \endlastfoot%

      \multicolumn{2}{@{}l}{Design 1}\\
      \cmidrule(r){1-2} \phantom{a}
      &RBC         &$h=\hmsedpi$, $b=\bmsedpi$  & 0.063& 0.035& 0.75& 55.6& 0.73&&  0.157& 0.036& 0.62&  0.1& 0.61\\
      &RBC         &$h=\hcedpi$, $b=\bcedpi$    & 0.030& 0.041& 0.45& 85.8& 0.85&&  0.059& 0.045& 0.34& 72.4& 0.76\\
      &RBC         &$h=b=\hmsedpi$              & 0.025& 0.042& 0.75& 93.1& 0.88&&  0.042& 0.047& 0.62& 89.1& 0.78\\
      &RBC         &$h=b=\hhrmse{2}$            & 0.001& 0.061& 0.36& 94.5& 1.27&&  0.002& 0.061& 0.36& 94.5& 1.01\\
      &RBC         &$h=b=\hhrmse{6}$            & 0.000& 0.076& 0.23& 94.2& 1.58&&  0.000& 0.075& 0.23& 94.2& 1.26\\
      &RBC         &$h=b=\hhrmse{\Mrot}$        & 0.000& 0.078& 0.22& 93.9& 1.64&&  0.000& 0.097& 0.14& 93.4& 1.63\\
      &Conventional&$\hfgrot$                   & 0.032& 0.036& 0.56& 76.6& 0.76&&  0.049& 0.046& 0.31& 77.4& 0.77\\
      &FLCI, $\SC=2$ &$\hhrmse{2}$              & 0.021& 0.043& 0.36& 94.9& 1.00&&  0.065& 0.043& 0.36& 75.2& 0.80\\
      &FLCI, $\SC=6$ &$\hhrmse{6}$              & 0.009& 0.054& 0.23& 96.6& 1.25&&  0.028& 0.053& 0.23& 94.7& 1.00\\
      &FLCI, $\SC=\Mrot$&$\hhrmse{\Mrot}$       & 0.008& 0.056& 0.22& 95.6& 1.29&&  0.010& 0.069& 0.14& 96.3& 1.30\\
      \multicolumn{2}{@{}l}{Design 2}\\
      \cmidrule(r){1-2}
      &RBC         &$h=\hmsedpi$, $b=\bmsedpi$  & 0.043& 0.035& 0.77& 75.9& 0.72&&  0.129& 0.035& 0.77&  4.6& 0.58\\
      &RBC         &$h=\hcedpi$, $b=\bcedpi$    & 0.028& 0.040& 0.49& 87.4& 0.83&&  0.074& 0.041& 0.44& 54.1& 0.69\\
      &RBC         &$h=b=\hmsedpi$              & 0.026& 0.041& 0.77& 90.9& 0.87&&  0.077& 0.042& 0.77& 53.0& 0.70\\
      &RBC         &$h=b=\hhrmse{2}$            & 0.002& 0.061& 0.36& 94.5& 1.27&&  0.006& 0.061& 0.36& 94.4& 1.01\\
      &RBC         &$h=b=\hhrmse{6}$            & 0.000& 0.076& 0.23& 94.2& 1.58&&  0.000& 0.075& 0.23& 94.2& 1.26\\
      &RBC         &$h=b=\hhrmse{\Mrot}$        & 0.001& 0.068& 0.30& 94.0& 1.43&&  0.000& 0.083& 0.20& 93.8& 1.38\\
      &Conventional&$\hfgrot$                   & 0.032& 0.032& 0.78& 74.4& 0.67&&  0.073& 0.040& 0.44& 53.0& 0.66\\
      &FLCI, $\SC=2$ &$\hhrmse{2}$              & 0.020& 0.043& 0.36& 95.1& 1.00&&  0.061& 0.043& 0.36& 78.1& 0.80\\
      &FLCI, $\SC=6$ &$\hhrmse{6}$              & 0.009& 0.054& 0.23& 96.6& 1.25&&  0.028& 0.053& 0.23& 94.7& 1.00\\
      &FLCI, $\SC=\Mrot$&$\hhrmse{\Mrot}$       & 0.013& 0.048& 0.30& 94.3& 1.13&&  0.020& 0.059& 0.20& 94.3& 1.10\\
      \multicolumn{2}{@{}l}{Design 3}\\
      \cmidrule(r){1-2}
      &RBC         &$h=\hmsedpi$, $b=\bmsedpi$  & -0.043& 0.035& 0.77& 75.7& 0.72&& -0.123& 0.035& 0.74&  9.9& 0.59\\
      &RBC         &$h=\hcedpi$, $b=\bcedpi$    & -0.026& 0.040& 0.49& 88.1& 0.83&& -0.063& 0.043& 0.43& 64.2& 0.71\\
      &RBC         &$h=b=\hmsedpi$              & -0.024& 0.042& 0.77& 90.8& 0.87&& -0.066& 0.043& 0.74& 60.3& 0.71\\
      &RBC         &$h=b=\hhrmse{2}$            & -0.002& 0.061& 0.36& 94.5& 1.27&& -0.007& 0.061& 0.36& 94.4& 1.01\\
      &RBC         &$h=b=\hhrmse{6}$            &  0.000& 0.076& 0.23& 94.2& 1.58&&  0.000& 0.075& 0.23& 94.2& 1.26\\
      &RBC         &$h=b=\hhrmse{\Mrot}$        &  0.000& 0.074& 0.25& 94.2& 1.54&&  0.000& 0.092& 0.16& 93.6& 1.54\\
      &Conventional&$\hfgrot$                   & -0.032& 0.033& 0.72& 74.7& 0.69&& -0.065& 0.042& 0.39& 62.0& 0.70\\
      &FLCI, $\SC=2$ &$\hhrmse{2}$              & -0.020& 0.043& 0.36& 95.0& 1.00&& -0.060& 0.043& 0.36& 78.1& 0.80\\
      &FLCI, $\SC=6$ &$\hhrmse{6}$              & -0.009& 0.054& 0.23& 96.5& 1.25&& -0.027& 0.053& 0.23& 94.7& 1.00\\
      &FLCI, $\SC=\Mrot$&$\hhrmse{\Mrot}$       & -0.010& 0.052& 0.25& 95.6& 1.22&& -0.013& 0.065& 0.16& 96.1& 1.22\\
    \end{longtable}
  \end{ThreePartTable}
\end{landscape}

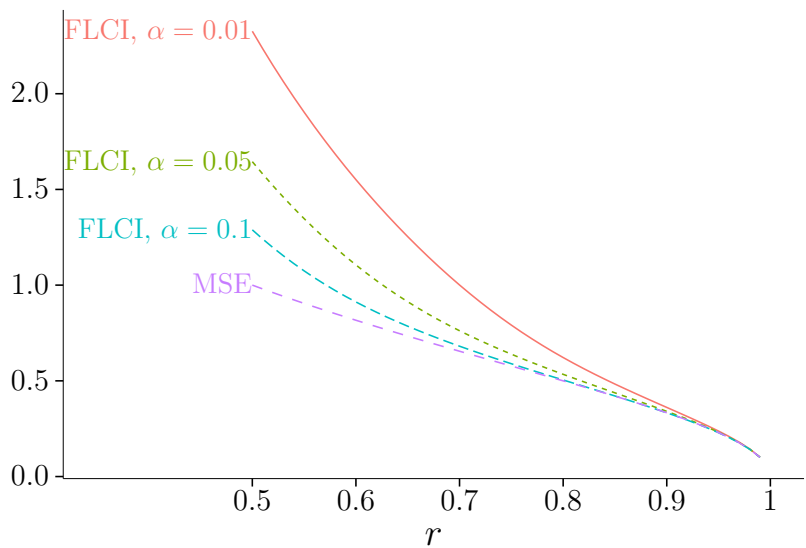
\begin{figure}[p]
  \centering%
\begin{tikzpicture}[x=1pt,y=1pt]
\definecolor{fillColor}{RGB}{255,255,255}
\path[use as bounding box,fill=fillColor,fill opacity=0.00] (0,0) rectangle (325.21,216.81);
\begin{scope}
\path[clip] (  0.00,  0.00) rectangle (325.21,216.81);
\definecolor{drawColor}{RGB}{255,255,255}
\definecolor{fillColor}{RGB}{255,255,255}

\path[draw=drawColor,line width= 0.6pt,line join=round,line cap=round,fill=fillColor] (  0.00,  0.00) rectangle (325.21,216.81);
\end{scope}
\begin{scope}
\path[clip] ( 34.16, 30.69) rectangle (319.71,211.31);
\definecolor{fillColor}{RGB}{255,255,255}

\path[fill=fillColor] ( 34.16, 30.69) rectangle (319.71,211.31);
\definecolor{drawColor}{RGB}{0,0,0}

\path[draw=drawColor,line width= 0.6pt,line join=round] (107.04,203.10) --
	(111.04,196.37) --
	(115.03,189.90) --
	(119.02,183.67) --
	(123.02,177.68) --
	(127.01,171.90) --
	(131.01,166.33) --
	(135.00,160.95) --
	(138.99,155.76) --
	(142.99,150.74) --
	(146.98,145.90) --
	(150.98,141.21) --
	(154.97,136.68) --
	(158.96,132.29) --
	(162.96,128.04) --
	(166.95,123.93) --
	(170.94,119.95) --
	(174.94,116.10) --
	(178.93,112.38) --
	(182.93,108.78) --
	(186.92,105.31) --
	(190.91,101.97) --
	(194.91, 98.75) --
	(198.90, 95.66) --
	(202.90, 92.69) --
	(206.89, 89.84) --
	(210.88, 87.12) --
	(214.88, 84.52) --
	(218.87, 82.02) --
	(222.86, 79.64) --
	(226.86, 77.35) --
	(230.85, 75.15) --
	(234.85, 73.04) --
	(238.84, 71.01) --
	(242.83, 69.04) --
	(246.83, 67.13) --
	(250.82, 65.27) --
	(254.82, 63.44) --
	(258.81, 61.65) --
	(262.80, 59.89) --
	(266.80, 58.13) --
	(270.79, 56.37) --
	(274.78, 54.59) --
	(278.78, 52.79) --
	(282.77, 50.94) --
	(286.77, 49.01) --
	(290.76, 46.96) --
	(294.75, 44.72) --
	(298.75, 42.16) --
	(302.74, 38.95);

\path[draw=drawColor,line width= 0.6pt,dash pattern=on 2pt off 2pt ,line join=round] (107.04,152.84) --
	(111.04,148.09) --
	(115.03,143.52) --
	(119.02,139.14) --
	(123.02,134.93) --
	(127.01,130.88) --
	(131.01,127.00) --
	(135.00,123.28) --
	(138.99,119.71) --
	(142.99,116.29) --
	(146.98,113.03) --
	(150.98,109.91) --
	(154.97,106.94) --
	(158.96,104.10) --
	(162.96,101.41) --
	(166.95, 98.84) --
	(170.94, 96.39) --
	(174.94, 94.06) --
	(178.93, 91.84) --
	(182.93, 89.71) --
	(186.92, 87.68) --
	(190.91, 85.73) --
	(194.91, 83.86) --
	(198.90, 82.06) --
	(202.90, 80.32) --
	(206.89, 78.64) --
	(210.88, 77.01) --
	(214.88, 75.42) --
	(218.87, 73.88) --
	(222.86, 72.36) --
	(226.86, 70.88) --
	(230.85, 69.42) --
	(234.85, 67.98) --
	(238.84, 66.56) --
	(242.83, 65.15) --
	(246.83, 63.74) --
	(250.82, 62.33) --
	(254.82, 60.93) --
	(258.81, 59.51) --
	(262.80, 58.07) --
	(266.80, 56.61) --
	(270.79, 55.12) --
	(274.78, 53.59) --
	(278.78, 52.00) --
	(282.77, 50.33) --
	(286.77, 48.56) --
	(290.76, 46.64) --
	(294.75, 44.52) --
	(298.75, 42.06) --
	(302.74, 38.91);

\path[draw=drawColor,line width= 0.6pt,dash pattern=on 4pt off 2pt ,line join=round] (107.04,126.55) --
	(111.04,123.04) --
	(115.03,119.70) --
	(119.02,116.55) --
	(123.02,113.56) --
	(127.01,110.75) --
	(131.01,108.08) --
	(135.00,105.56) --
	(138.99,103.17) --
	(142.99,100.91) --
	(146.98, 98.77) --
	(150.98, 96.72) --
	(154.97, 94.77) --
	(158.96, 92.91) --
	(162.96, 91.13) --
	(166.95, 89.42) --
	(170.94, 87.78) --
	(174.94, 86.19) --
	(178.93, 84.65) --
	(182.93, 83.17) --
	(186.92, 81.72) --
	(190.91, 80.31) --
	(194.91, 78.93) --
	(198.90, 77.58) --
	(202.90, 76.26) --
	(206.89, 74.97) --
	(210.88, 73.69) --
	(214.88, 72.42) --
	(218.87, 71.17) --
	(222.86, 69.93) --
	(226.86, 68.70) --
	(230.85, 67.47) --
	(234.85, 66.25) --
	(238.84, 65.02) --
	(242.83, 63.78) --
	(246.83, 62.54) --
	(250.82, 61.29) --
	(254.82, 60.01) --
	(258.81, 58.72) --
	(262.80, 57.40) --
	(266.80, 56.05) --
	(270.79, 54.65) --
	(274.78, 53.20) --
	(278.78, 51.69) --
	(282.77, 50.09) --
	(286.77, 48.38) --
	(290.76, 46.52) --
	(294.75, 44.44) --
	(298.75, 42.01) --
	(302.74, 38.90);

\path[draw=drawColor,line width= 0.6pt,dash pattern=on 4pt off 4pt ,line join=round] (107.04,105.26) --
	(111.04,103.79) --
	(115.03,102.36) --
	(119.02,100.95) --
	(123.02, 99.57) --
	(127.01, 98.21) --
	(131.01, 96.88) --
	(135.00, 95.56) --
	(138.99, 94.26) --
	(142.99, 92.98) --
	(146.98, 91.72) --
	(150.98, 90.47) --
	(154.97, 89.24) --
	(158.96, 88.02) --
	(162.96, 86.81) --
	(166.95, 85.62) --
	(170.94, 84.43) --
	(174.94, 83.26) --
	(178.93, 82.09) --
	(182.93, 80.93) --
	(186.92, 79.78) --
	(190.91, 78.63) --
	(194.91, 77.49) --
	(198.90, 76.35) --
	(202.90, 75.21) --
	(206.89, 74.08) --
	(210.88, 72.94) --
	(214.88, 71.80) --
	(218.87, 70.66) --
	(222.86, 69.52) --
	(226.86, 68.37) --
	(230.85, 67.21) --
	(234.85, 66.05) --
	(238.84, 64.87) --
	(242.83, 63.68) --
	(246.83, 62.47) --
	(250.82, 61.25) --
	(254.82, 60.00) --
	(258.81, 58.73) --
	(262.80, 57.42) --
	(266.80, 56.08) --
	(270.79, 54.68) --
	(274.78, 53.24) --
	(278.78, 51.72) --
	(282.77, 50.12) --
	(286.77, 48.41) --
	(290.76, 46.54) --
	(294.75, 44.46) --
	(298.75, 42.02) --
	(302.74, 38.90);
\end{scope}
\begin{scope}
\path[clip] ( 34.16, 30.69) rectangle (319.71,211.31);
\definecolor{drawColor}{RGB}{0,0,0}

\node[text=drawColor,anchor=base east,inner sep=0pt, outer sep=0pt, scale=  0.90] at (107.04,102.16) {MSE};

\node[text=drawColor,anchor=base east,inner sep=0pt, outer sep=0pt, scale=  0.90] at (107.04,123.45) {FLCI, $\alpha=0.1$};

\node[text=drawColor,anchor=base east,inner sep=0pt, outer sep=0pt, scale=  0.90] at (107.04,149.74) {FLCI, $\alpha=0.05$};

\node[text=drawColor,anchor=base east,inner sep=0pt, outer sep=0pt, scale=  0.90] at (107.04,200.00) {FLCI, $\alpha=0.01$};
\end{scope}
\begin{scope}
\path[clip] (  0.00,  0.00) rectangle (325.21,216.81);
\definecolor{drawColor}{RGB}{0,0,0}

\path[draw=drawColor,line width= 0.2pt,line join=round] ( 34.16, 30.69) --
	( 34.16,211.31);
\end{scope}
\begin{scope}
\path[clip] (  0.00,  0.00) rectangle (325.21,216.81);
\definecolor{drawColor}{gray}{0.30}

\node[text=drawColor,anchor=base east,inner sep=0pt, outer sep=0pt, scale=  0.88] at ( 29.21, 28.45) {0.0};

\node[text=drawColor,anchor=base east,inner sep=0pt, outer sep=0pt, scale=  0.88] at ( 29.21, 65.34) {0.5};

\node[text=drawColor,anchor=base east,inner sep=0pt, outer sep=0pt, scale=  0.88] at ( 29.21,102.22) {1.0};

\node[text=drawColor,anchor=base east,inner sep=0pt, outer sep=0pt, scale=  0.88] at ( 29.21,139.11) {1.5};

\node[text=drawColor,anchor=base east,inner sep=0pt, outer sep=0pt, scale=  0.88] at ( 29.21,176.00) {2.0};
\end{scope}
\begin{scope}
\path[clip] (  0.00,  0.00) rectangle (325.21,216.81);
\definecolor{drawColor}{gray}{0.20}

\path[draw=drawColor,line width= 0.6pt,line join=round] ( 31.41, 31.49) --
	( 34.16, 31.49);

\path[draw=drawColor,line width= 0.6pt,line join=round] ( 31.41, 68.37) --
	( 34.16, 68.37);

\path[draw=drawColor,line width= 0.6pt,line join=round] ( 31.41,105.26) --
	( 34.16,105.26);

\path[draw=drawColor,line width= 0.6pt,line join=round] ( 31.41,142.14) --
	( 34.16,142.14);

\path[draw=drawColor,line width= 0.6pt,line join=round] ( 31.41,179.03) --
	( 34.16,179.03);
\end{scope}
\begin{scope}
\path[clip] (  0.00,  0.00) rectangle (325.21,216.81);
\definecolor{drawColor}{RGB}{0,0,0}

\path[draw=drawColor,line width= 0.2pt,line join=round] ( 34.16, 30.69) --
	(319.71, 30.69);
\end{scope}
\begin{scope}
\path[clip] (  0.00,  0.00) rectangle (325.21,216.81);
\definecolor{drawColor}{gray}{0.20}

\path[draw=drawColor,line width= 0.6pt,line join=round] (107.04, 27.94) --
	(107.04, 30.69);

\path[draw=drawColor,line width= 0.6pt,line join=round] (146.98, 27.94) --
	(146.98, 30.69);

\path[draw=drawColor,line width= 0.6pt,line join=round] (186.92, 27.94) --
	(186.92, 30.69);

\path[draw=drawColor,line width= 0.6pt,line join=round] (226.86, 27.94) --
	(226.86, 30.69);

\path[draw=drawColor,line width= 0.6pt,line join=round] (266.80, 27.94) --
	(266.80, 30.69);

\path[draw=drawColor,line width= 0.6pt,line join=round] (306.74, 27.94) --
	(306.74, 30.69);
\end{scope}
\begin{scope}
\path[clip] (  0.00,  0.00) rectangle (325.21,216.81);
\definecolor{drawColor}{gray}{0.30}

\node[text=drawColor,anchor=base,inner sep=0pt, outer sep=0pt, scale=  0.88] at (107.04, 19.68) {0.5};

\node[text=drawColor,anchor=base,inner sep=0pt, outer sep=0pt, scale=  0.88] at (146.98, 19.68) {0.6};

\node[text=drawColor,anchor=base,inner sep=0pt, outer sep=0pt, scale=  0.88] at (186.92, 19.68) {0.7};

\node[text=drawColor,anchor=base,inner sep=0pt, outer sep=0pt, scale=  0.88] at (226.86, 19.68) {0.8};

\node[text=drawColor,anchor=base,inner sep=0pt, outer sep=0pt, scale=  0.88] at (266.80, 19.68) {0.9};

\node[text=drawColor,anchor=base,inner sep=0pt, outer sep=0pt, scale=  0.88] at (306.74, 19.68) {1};
\end{scope}
\begin{scope}
\path[clip] (  0.00,  0.00) rectangle (325.21,216.81);
\definecolor{drawColor}{RGB}{0,0,0}

\node[text=drawColor,anchor=base,inner sep=0pt, outer sep=0pt, scale=  1.10] at (176.94,  7.64) {$r$};
\end{scope}
\end{tikzpicture}
  \caption{Optimal ratio of the worst-case bias to standard deviation for fixed
    length CIs (FLCI), and maximum MSE (MSE) performance
    criteria.}\label{fig:bias-variance-flci}
\end{figure}

\begin{figure}[p]
  \centering%
  \input{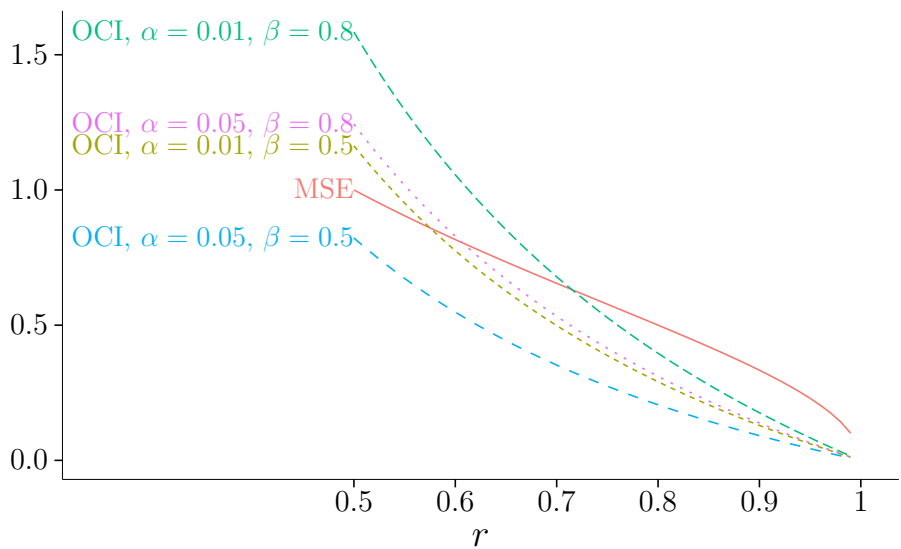}
  \caption{Optimal ratio of the worst-case bias to standard deviation for
    one-sided CIs (OCI), and maximum MSE (MSE) performance
    criteria.}\label{fig:bias-variance-oci}
\end{figure}

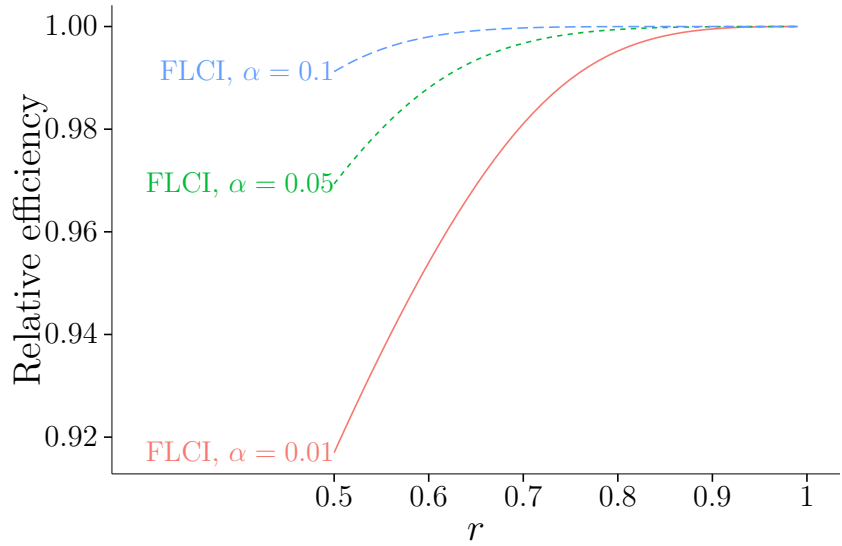
\begin{figure}[p]
  \centering%
\begin{tikzpicture}[x=1pt,y=1pt]
\definecolor{fillColor}{RGB}{255,255,255}
\path[use as bounding box,fill=fillColor,fill opacity=0.00] (0,0) rectangle (325.21,216.81);
\begin{scope}
\path[clip] (  0.00,  0.00) rectangle (325.21,216.81);
\definecolor{drawColor}{RGB}{255,255,255}
\definecolor{fillColor}{RGB}{255,255,255}

\path[draw=drawColor,line width= 0.6pt,line join=round,line cap=round,fill=fillColor] (  0.00,  0.00) rectangle (325.21,216.81);
\end{scope}
\begin{scope}
\path[clip] ( 38.56, 30.69) rectangle (319.71,211.31);
\definecolor{fillColor}{RGB}{255,255,255}

\path[fill=fillColor] ( 38.56, 30.69) rectangle (319.71,211.31);
\definecolor{drawColor}{RGB}{0,0,0}

\path[draw=drawColor,line width= 0.6pt,line join=round] (124.36, 38.90) --
	(128.02, 46.89) --
	(131.67, 54.75) --
	(135.32, 62.48) --
	(138.97, 70.06) --
	(142.62, 77.49) --
	(146.27, 84.77) --
	(149.92, 91.88) --
	(153.58, 98.82) --
	(157.23,105.57) --
	(160.88,112.14) --
	(164.53,118.51) --
	(168.18,124.68) --
	(171.83,130.63) --
	(175.48,136.36) --
	(179.14,141.86) --
	(182.79,147.13) --
	(186.44,152.14) --
	(190.09,156.91) --
	(193.74,161.41) --
	(197.39,165.65) --
	(201.04,169.62) --
	(204.70,173.33) --
	(208.35,176.76) --
	(212.00,179.92) --
	(215.65,182.82) --
	(219.30,185.46) --
	(222.95,187.86) --
	(226.60,190.01) --
	(230.26,191.94) --
	(233.91,193.66) --
	(237.56,195.18) --
	(241.21,196.51) --
	(244.86,197.67) --
	(248.51,198.68) --
	(252.16,199.54) --
	(255.82,200.27) --
	(259.47,200.89) --
	(263.12,201.40) --
	(266.77,201.82) --
	(270.42,202.16) --
	(274.07,202.43) --
	(277.72,202.64) --
	(281.38,202.80) --
	(285.03,202.92) --
	(288.68,203.00) --
	(292.33,203.05) --
	(295.98,203.08) --
	(299.63,203.09) --
	(303.28,203.10);

\path[draw=drawColor,line width= 0.6pt,dash pattern=on 2pt off 2pt ,line join=round] (124.36,142.47) --
	(128.02,147.09) --
	(131.67,151.52) --
	(135.32,155.75) --
	(138.97,159.78) --
	(142.62,163.59) --
	(146.27,167.20) --
	(149.92,170.59) --
	(153.58,173.76) --
	(157.23,176.72) --
	(160.88,179.47) --
	(164.53,182.00) --
	(168.18,184.33) --
	(171.83,186.47) --
	(175.48,188.41) --
	(179.14,190.18) --
	(182.79,191.77) --
	(186.44,193.20) --
	(190.09,194.49) --
	(193.74,195.63) --
	(197.39,196.65) --
	(201.04,197.55) --
	(204.70,198.35) --
	(208.35,199.05) --
	(212.00,199.67) --
	(215.65,200.20) --
	(219.30,200.67) --
	(222.95,201.07) --
	(226.60,201.42) --
	(230.26,201.71) --
	(233.91,201.97) --
	(237.56,202.18) --
	(241.21,202.36) --
	(244.86,202.51) --
	(248.51,202.64) --
	(252.16,202.74) --
	(255.82,202.82) --
	(259.47,202.89) --
	(263.12,202.95) --
	(266.77,202.99) --
	(270.42,203.02) --
	(274.07,203.05) --
	(277.72,203.06) --
	(281.38,203.08) --
	(285.03,203.09) --
	(288.68,203.09) --
	(292.33,203.10) --
	(295.98,203.10) --
	(299.63,203.10) --
	(303.28,203.10);

\path[draw=drawColor,line width= 0.6pt,dash pattern=on 4pt off 2pt ,line join=round] (124.36,185.73) --
	(128.02,187.81) --
	(131.67,189.70) --
	(135.32,191.41) --
	(138.97,192.95) --
	(142.62,194.32) --
	(146.27,195.54) --
	(149.92,196.62) --
	(153.58,197.57) --
	(157.23,198.40) --
	(160.88,199.13) --
	(164.53,199.76) --
	(168.18,200.30) --
	(171.83,200.77) --
	(175.48,201.17) --
	(179.14,201.51) --
	(182.79,201.80) --
	(186.44,202.05) --
	(190.09,202.25) --
	(193.74,202.42) --
	(197.39,202.56) --
	(201.04,202.68) --
	(204.70,202.77) --
	(208.35,202.85) --
	(212.00,202.91) --
	(215.65,202.96) --
	(219.30,203.00) --
	(222.95,203.02) --
	(226.60,203.05) --
	(230.26,203.06) --
	(233.91,203.08) --
	(237.56,203.08) --
	(241.21,203.09) --
	(244.86,203.09) --
	(248.51,203.10) --
	(252.16,203.10) --
	(255.82,203.10) --
	(259.47,203.10) --
	(263.12,203.10) --
	(266.77,203.10) --
	(270.42,203.10) --
	(274.07,203.10) --
	(277.72,203.10) --
	(281.38,203.10) --
	(285.03,203.10) --
	(288.68,203.10) --
	(292.33,203.10) --
	(295.98,203.10) --
	(299.63,203.10) --
	(303.28,203.10);
\end{scope}
\begin{scope}
\path[clip] ( 38.56, 30.69) rectangle (319.71,211.31);
\definecolor{drawColor}{RGB}{0,0,0}

\node[text=drawColor,anchor=base east,inner sep=0pt, outer sep=0pt, scale=  0.90] at (124.36, 35.80) {FLCI, $\alpha=0.01$};

\node[text=drawColor,anchor=base east,inner sep=0pt, outer sep=0pt, scale=  0.90] at (124.36,139.37) {FLCI, $\alpha=0.05$};

\node[text=drawColor,anchor=base east,inner sep=0pt, outer sep=0pt, scale=  0.90] at (124.36,182.63) {FLCI, $\alpha=0.1$};
\end{scope}
\begin{scope}
\path[clip] (  0.00,  0.00) rectangle (325.21,216.81);
\definecolor{drawColor}{RGB}{0,0,0}

\path[draw=drawColor,line width= 0.2pt,line join=round] ( 38.56, 30.69) --
	( 38.56,211.31);
\end{scope}
\begin{scope}
\path[clip] (  0.00,  0.00) rectangle (325.21,216.81);
\definecolor{drawColor}{gray}{0.30}

\node[text=drawColor,anchor=base east,inner sep=0pt, outer sep=0pt, scale=  0.88] at ( 33.61, 41.83) {0.92};

\node[text=drawColor,anchor=base east,inner sep=0pt, outer sep=0pt, scale=  0.88] at ( 33.61, 81.39) {0.94};

\node[text=drawColor,anchor=base east,inner sep=0pt, outer sep=0pt, scale=  0.88] at ( 33.61,120.95) {0.96};

\node[text=drawColor,anchor=base east,inner sep=0pt, outer sep=0pt, scale=  0.88] at ( 33.61,160.51) {0.98};

\node[text=drawColor,anchor=base east,inner sep=0pt, outer sep=0pt, scale=  0.88] at ( 33.61,200.07) {1.00};
\end{scope}
\begin{scope}
\path[clip] (  0.00,  0.00) rectangle (325.21,216.81);
\definecolor{drawColor}{gray}{0.20}

\path[draw=drawColor,line width= 0.6pt,line join=round] ( 35.81, 44.86) --
	( 38.56, 44.86);

\path[draw=drawColor,line width= 0.6pt,line join=round] ( 35.81, 84.42) --
	( 38.56, 84.42);

\path[draw=drawColor,line width= 0.6pt,line join=round] ( 35.81,123.98) --
	( 38.56,123.98);

\path[draw=drawColor,line width= 0.6pt,line join=round] ( 35.81,163.54) --
	( 38.56,163.54);

\path[draw=drawColor,line width= 0.6pt,line join=round] ( 35.81,203.10) --
	( 38.56,203.10);
\end{scope}
\begin{scope}
\path[clip] (  0.00,  0.00) rectangle (325.21,216.81);
\definecolor{drawColor}{RGB}{0,0,0}

\path[draw=drawColor,line width= 0.2pt,line join=round] ( 38.56, 30.69) --
	(319.71, 30.69);
\end{scope}
\begin{scope}
\path[clip] (  0.00,  0.00) rectangle (325.21,216.81);
\definecolor{drawColor}{gray}{0.20}

\path[draw=drawColor,line width= 0.6pt,line join=round] (124.36, 27.94) --
	(124.36, 30.69);

\path[draw=drawColor,line width= 0.6pt,line join=round] (160.88, 27.94) --
	(160.88, 30.69);

\path[draw=drawColor,line width= 0.6pt,line join=round] (197.39, 27.94) --
	(197.39, 30.69);

\path[draw=drawColor,line width= 0.6pt,line join=round] (233.91, 27.94) --
	(233.91, 30.69);

\path[draw=drawColor,line width= 0.6pt,line join=round] (270.42, 27.94) --
	(270.42, 30.69);

\path[draw=drawColor,line width= 0.6pt,line join=round] (306.94, 27.94) --
	(306.94, 30.69);
\end{scope}
\begin{scope}
\path[clip] (  0.00,  0.00) rectangle (325.21,216.81);
\definecolor{drawColor}{gray}{0.30}

\node[text=drawColor,anchor=base,inner sep=0pt, outer sep=0pt, scale=  0.88] at (124.36, 19.68) {0.5};

\node[text=drawColor,anchor=base,inner sep=0pt, outer sep=0pt, scale=  0.88] at (160.88, 19.68) {0.6};

\node[text=drawColor,anchor=base,inner sep=0pt, outer sep=0pt, scale=  0.88] at (197.39, 19.68) {0.7};

\node[text=drawColor,anchor=base,inner sep=0pt, outer sep=0pt, scale=  0.88] at (233.91, 19.68) {0.8};

\node[text=drawColor,anchor=base,inner sep=0pt, outer sep=0pt, scale=  0.88] at (270.42, 19.68) {0.9};

\node[text=drawColor,anchor=base,inner sep=0pt, outer sep=0pt, scale=  0.88] at (306.94, 19.68) {1};
\end{scope}
\begin{scope}
\path[clip] (  0.00,  0.00) rectangle (325.21,216.81);
\definecolor{drawColor}{RGB}{0,0,0}

\node[text=drawColor,anchor=base,inner sep=0pt, outer sep=0pt, scale=  1.10] at (179.14,  7.64) {$r$};
\end{scope}
\begin{scope}
\path[clip] (  0.00,  0.00) rectangle (325.21,216.81);
\definecolor{drawColor}{RGB}{0,0,0}

\node[text=drawColor,rotate= 90.00,anchor=base,inner sep=0pt, outer sep=0pt, scale=  1.10] at ( 13.08,121.00) {Relative efficiency};
\end{scope}
\end{tikzpicture}
  \caption{Efficiency of fixed-length CIs based on minimax MSE bandwidth
    relative to fixed-length CIs based on optimal
    bandwidth.}\label{fig:eff-mse-flci}
\end{figure}

\begin{figure}[p]
  \centering
  \input{./figure_4.tex}
  \caption{Monte Carlo simulation designs 1--3, and
    $\SC=2$.}\label{fig:mc-fkt}
\end{figure}

\begin{figure}[p]
  \centering \input{./figure_5.tex}
  \caption{Average county mortality rate per 100,000 for children aged 5--9 over
    1973--83 due to causes addressed as part of Head Start's health services
    (labeled ``Mortality rate'') plotted against poverty rate in 1960 relative
    to the 300th poorest county. Each point corresponds to an average for 25
    counties. Data are from \citet{LuMi07}.}\label{fig:lm-rawdata}
\end{figure}

\end{document}


\maketitle

\renewcommand{\theequation}{S\arabic{equation}}
\renewcommand{\thetable}{S\arabic{table}}
\renewcommand{\thefigure}{S\arabic{figure}}

\begin{appendices}
\crefalias{section}{sappsec}
\crefalias{subsection}{sappsubsec}
\crefalias{subsubsection}{sappsubsubsec}
\setcounter{section}{1}

These supplemental materials contain further appendices and additional tables
and figures. \Cref{verification_sec} verifies our regularity conditions
for some examples, and includes proofs of the results in
\begin{NoHyper}\Cref{sec:theoretical-results}\end{NoHyper}. \Cref{sec:addit-appl}
discusses two additional applications: estimation of density at a point, and
estimating a bidder valuation in first price auctions.
\Cref{sec:addit-deta-appl} contains additional details for the applications in
\begin{NoHyper}\Cref{sec:applications}\end{NoHyper}.
\Cref{data-driven_bw_sec_append} presents a formal analysis of the rule-of-thumb
choice of $\SC$ proposed in
\begin{NoHyper}\Cref{sec:pract-impl}\end{NoHyper}. Finally,
\Cref{sec:addit-monte-carlo} contains additional Monte Carlo results.

\section{Verification of regularity conditions}\label{verification_sec}

We verify the main
condition \begin{NoHyper}\eqref{performance_approx_eq}\end{NoHyper} in some
applications. \Cref{sufficient_conditions_sec} gives sufficient
conditions for \begin{NoHyper}\eqref{performance_approx_eq}\end{NoHyper} which
do not require convergence of moments. \Cref{white_noise_sec} shows that
\begin{NoHyper}\eqref{performance_approx_eq}\end{NoHyper} holds
in the Gaussian white noise model under a mild extension of conditions in
\citet{DoLo92}. Thus, the results apply to estimating, among other things, a
function or one of its derivatives evaluated at a given point, when the function
is observed in the white noise model. By equivalence results in \citet{BrLo96}
and \citet{nussbaum_asymptotic_1996}, our results also apply
when the function of interest is a density or conditional mean.
\Cref{sec:local-polyn-estim}
verifies \begin{NoHyper}\eqref{performance_approx_eq}\end{NoHyper} directly for
local polynomial estimators in the nonparametric regression setting, and
\Cref{sec:fuzzy-rd-suppl} verifies it for in the fuzzy RD application.

\subsection{Sufficient conditions for main regularity condition}\label{sufficient_conditions_sec}

This \namecref{sufficient_conditions_sec} gives sufficient conditions for the
main condition
\begin{NoHyper}\eqref{performance_approx_eq}\end{NoHyper}.  In particular, we
show that a version of \begin{NoHyper}\eqref{bias_var_scale_eq}\end{NoHyper}
stated in terms of convergence in distribution, rather than convergence of
moments, suffices for \begin{NoHyper}\eqref{performance_approx_eq}\end{NoHyper}
for the FLCI and OCI criteria, and for a truncated version of the RMSE
criterion. Such conditions are appropriate for functionals that involve smooth
nonlinear transformations, which preserve convergence in distribution but may
not preserve convergence of moments: we show in \Cref{delta_method_sec}
that a version of the delta method can be used to verify our conditions in such
cases.

As in the main text, we consider a general setup where, for each $n$ (which
typically denotes sample size), data are drawn from some distribution $P_f$,
which also implicitly depends on $n$, for some $f$.
Let $\mathcal{F}_n\subseteq\mathcal{F}$ be a sequence of function classes, and
let $T:\mathcal{F}\to \mathbb{R}$.  Let $\hat T=\hat T(h;k)$ be a sequence of
estimators indexed implicitly by $n$, and by a kernel $k$ and bandwidth
$h=h_n$, which also depends on $n$.  The function class $\mathcal{F}_n$ is indexed by a
sequence of constants $\SC_n$.

To make concise statements
about uniform-in-$f$ convergence, we introduce some additional notation.  For a
random variable $W_{n, f}$ indexed by the sample size $n$ and the distribution
$f$, we use $W_{n, f_n}\underset{f_n}{\overset{d}{\to}} \mathcal{L}$ to denote
that the distribution of $W_{n, f_n}$ converges in distribution to $\mathcal{L}$
under the sequence $f_n$.  When this holds for all sequences
$f_n\in\mathcal{F}_n$ for some sequence of sets $\mathcal{F}_n$, we write
$W_{n, f}\underset{\mathcal{F}_n}{\overset{d}{\to}} \mathcal{L}$, and we say that
$W_{n, f}$ converges in distribution to $\mathcal{L}$ uniformly over
$\mathcal{F}_n$.  When the limiting law $\mathcal{L}$ is a point mass at some
constant $a$, we write $W_{n, f_n}\underset{f_n}{\overset{p}{\to}} a$ and when
the convergence holds for all $f_n\in\mathcal{F}_n$, we write
$W_{n, f}\underset{\mathcal{F}_n}{\overset{p}{\to}} a$ and say that $W_{n, f}$
converges in probability to $a$ uniformly over $\mathcal{F}_n$.

We make the following assumption on the estimators $\hat T(h;k)$. This
assumption is similar to the
condition \begin{NoHyper}\eqref{bias_var_scale_eq}\end{NoHyper} in the main
text, but uses convergence in distribution rather than convergence of moments.

\begin{assumption}\label{general_bias_var_assump}
  For some sequences of random variables $Z_{n, h, f}$ and $b_{n, h, f}$, we have
  \begin{equation*}
    \hat T(h;k) = T(f) + h^{\gamma_b}\SC_n b_{n, h, f} + h^{\gamma_s}n^{-1/2} Z_{n, h, f}
  \end{equation*}
  where, for some sequence of constants $b_{n, h, f}^*$ and some $S(k)$ and
  $B(k)$,
  $\abs{b_{n, h, f} - b_{n, h, f}^*}\underset{\mathcal{F}_n}{\overset{p}{\to}}
  0$ and
  \begin{equation*}
    \lim_{n\to\infty} \sup_{f\in\mathcal{F}_n} b_{n, h, f}^* = B(k),
    \quad \lim_{n\to\infty} \inf_{f\in\mathcal{F}_n} b_{n, h, f}^* = -B(k), \quad
    Z_{n, h, f} \underset{\mathcal{F}_n}{\overset{d}{\to}} N(0,S(k)^{2}).
  \end{equation*}
\end{assumption}
We verify our main
condition \begin{NoHyper}\eqref{performance_approx_eq}\end{NoHyper} for a class
of performance criteria constructed as follows. Given a loss function
$\ell\colon\mathbb{R}\to\mathbb{R}^+$, let
$\tilde r_{\ell}(b_{0}, s) = E_{Z\sim N(0,1)}\ell(b_{0} + s Z)$ denote the risk
of an estimator that's normally distributed with standard deviation $s$ and bias
$b_{0}$. Let
\begin{equation*}
  \tilde{\rho}_{\ell}(b, s) = \sup_{b_{0}\in [-b, b]} \tilde r_{\ell}(b_{0}, s),
  \quad\text{and}\quad \tilde R_{\ell, \alpha}(b, s)=\inf \left\{\chi \colon
    \tilde \rho_{\ell}(b\chi^{-1}, s\chi^{-1})\le \alpha \right\}
\end{equation*}
denote its worst-case risk over the all biases bounded by $b$ in absolute
value, and the smallest scaling of the worst-case bias and the standard
deviation such that its worst-case risk is bounded by $\alpha$. Similarly, for
an estimator $\hat{T}$ of $T(f)$, let
\begin{equation*}
  \rho_{\ell, \chi}\left(\hat T; \mathcal{F}_n \right)
  =\sup_{f\in\mathcal{F}_n} E_f\ell\left(\chi^{-1}\left(\hat T-T(f) \right) \right),
  \quad\text{and}\quad
  R_{\ell, \alpha}(\hat T; \mathcal{F}_n) = \inf \left\{\chi: \rho_{\ell, \chi}\left(\hat T; \mathcal{F}_n \right)\le \alpha \right\}.
\end{equation*}
Note that if we set $\ell_{\FLCI}(x)=\1{\abs{x}>1}$, then
$R_{\ell_{\FLCI}, \alpha}$ and $\tilde{R}_{\ell_{\FLCI}, \alpha}$ yield the
performance criteria $R_{\FLCI, \alpha}$ and $\tilde{R}_{\FLCI, \alpha}$ as
defined in the main text. Similarly, $R_{\ell_{\RMSE},1}$ and
$\tilde{R}_{\ell_{\RMSE}, 1}$, where $\ell_{\RMSE}(x)=x^{2}$, give the
performance criteria $R_{\RMSE}$ and $\tilde{R}_{\RMSE}$ given in the main text.

To cover performance criteria such as OCI which are constructed from
requirements on multiple loss functions, we use the following construction. Let
$\ell_1,\ldots, \ell_{m}$ be loss functions and let $\alpha_1,\ldots, \alpha_m$ be
given. Let $\lambda\colon (0,\infty)^m\to (0,\infty)$ be continuous and
homogeneous of degree one (i.e.\ it satisfies $\lambda(ax)=a\lambda(x)$ for any
$a>0$). If $m=1$, one can take $\lambda$ to be the identity function. Let
\begin{align*}
  R(\hat T(h;k))&=\lambda(R_{\ell_1,\alpha_1}(\hat T(h;k)), \ldots,
                  R_{\ell_m, \alpha_m}(\hat T(h;k))), \\
  \tilde R(b, s)&=\lambda(R_{\ell_1,\alpha_1}(b, s), \ldots,
                 R_{\ell_m, \alpha_m}(b, s)).
\end{align*}
Note that since
$\tilde R_{\ell_{j}, \alpha_{j}}(t b, t s) =t\inf \{t^{-1}\chi \colon
\tilde{\rho}_{\ell_{j}}(t b\chi^{-1}, t s\chi^{-1})\le \alpha_{j} \} =t
\tilde{R}_{\ell_{j}, \alpha_{j}}(b, s)$, $\tilde{R}$
satisfies~\begin{NoHyper}\eqref{eq:homo-1}\end{NoHyper}. To show how this
generalization covers the OCI criterion $R_{\OCI, \alpha, \beta}$ defined in the
main text, define $\ell_{+}(x)=\1{x > 1}$ and $\ell_{-}(x)=\1{x< -1}$. Then
$R_{\ell_{+}, \alpha}(\hat T;\mathcal{F}_n)$ is the smallest value of $\chi_{+}$
such that $\hor{\hat T-\chi_+, \infty}$ is a one-sided CI with coverage
$1-\alpha$, since
$\rho_{\ell_{+}, \chi_+}(\hat T;\mathcal{F}_n)=\sup_{f\in\mathcal{F}_n}
P_f(\chi_{+}^{-1}(\hat T - T(f)) > 1)=\sup_{f\in\mathcal{F}_n} P_f(\hat T -
\chi_{+} > T(f))$ gives the probability of not covering $T(f)$. The worst-case
$\beta$ quantile of excess length of this CI is the smallest value of $\chi_-$
such that
$\inf_{f\in\mathcal{F}_n} P_f(T(f) - \hat T + \chi_{+} \le \chi_-)\ge \beta$, or
equivalently,
$\rho_{\ell_{-}, \chi_- - \chi_+}(\hat T;\mathcal{F}_n)
=\sup_{f\in\mathcal{F}_n} P_f(T(f) - \hat T > \chi_- - \chi_+)
=\sup_{f\in\mathcal{F}_n} P_f(T(f) - \hat{T} + \chi_{+} > \chi_-)\le 1-\beta$.
Thus, the worst case $\beta$-quantile of excess length of a one-sided CI based
on $\hat{T}$ is given by
$R_{\ell_{+}, \alpha}(\hat T;\mathcal{F}_n)+R_{\ell_{-},1-\beta}(\hat{T};
\mathcal{F}_n)=R_{\OCI, \alpha, \beta}(\hat{T})$. Similarly,
$\tilde R_{\ell_{+}, \alpha}(b, s)+\tilde R_{\ell_{-}, 1-\beta}(b, s)$ gives the criterion
$\tilde{R}_{\OCI, \alpha, \beta}(b, s)$ as defined in the main text.

We make the following assumption on each of the loss functions $\ell$.
\begin{assumption}\label{loss_function_assump}
  (i) $\ell:\mathbb{R}\to \hor{0,\infty}$ is bounded, weakly decreasing on
  $(-\infty,0)$ and weakly increasing on $(0,\infty)$, and continuous almost
  everywhere, and there does not exist a constant function that is almost
  everywhere equal to $\ell$. (ii) $\tilde b\mapsto \tilde r_{\ell}(\tilde b, s)$
  is quasiconvex.
\end{assumption}
For symmetric loss functions, part (ii) follows from part (i) by Anderson's
lemma.

It is immediate that the loss functions $\ell_{+}$, $\ell_{-}$, and
$\ell_{\FLCI}$ satisfy this assumption. The loss $\ell_{\RMSE}$, on the other
hand, does not satisfy this assumption because it is unbounded. However, note
that, for any $c>0$, \Cref{loss_function_assump} holds for the loss
function $\ell_{c}(x)=\min\{x^2, c^2\}$. Since
$\lim_{c\to\infty}R_{\ell_{c},1}(\hat{T}, \mathcal{F}_{n})=R_{\ell_{\RMSE},1}(\hat{T}, \mathcal{F}_{n})$,
and
$\lim_{c\to\infty}\tilde{R}_{\ell_{c},1}(b, s)=\tilde R_{\ell_{\RMSE},1}(b, s)$,
we may interpret this criterion as a truncated version of RMSE\@.

\begin{theorem}\label{general_sufficient_condition_thm}
  Let $h_n$ be a sequence with
  \begin{equation}\label{eq:hn-condition}
  0<\liminf_n h_n (n
  M^2)^{1/[2(\gamma_b-\gamma_s)]}\le \limsup_n h_n (n
  M^2)^{1/[2(\gamma_b-\gamma_s)]}<\infty.
  \end{equation}
  Suppose that $\hat T(h;k)$ satisfies \Cref{general_bias_var_assump}
  for the sequence $h=h_n$. Let $R(\hat T(h;k))$ and $\tilde R(b, s)$ be given
  above, where $\ell_1,\ldots, \ell_{m}$ are loss functions satisfying
  \Cref{loss_function_assump}, and suppose that
  $\tilde R_{\ell_j, \alpha_j}(b, s)>0$ for all $b\ge 0$ and $s>0$ for
  $j=1,\ldots, m$. Then
  \begin{NoHyper}\eqref{performance_approx_eq}\end{NoHyper}
  holds for $R$ and $\tilde R$. Furthermore, if $b_{n, h, f}=b_{n, h, f}^*$,
  $E_{f}Z_{n, h, f}=0$ and $E_{f}Z_{n, h, f}^2\to S(k)^{2}$ uniformly over
  $f\in\mathcal{F}_n$, then
  $\sup_{f\in\mathcal{F}} E_f(\hat T(h;k)-T(f))=-\inf_{f\in\mathcal{F}}
  E_f(\hat{T}(h;k)-T(f))(1+o(1))=h^{\gamma_b}B(k)(1+o(1))$, and
  $\sd_f(\hat T(h;k))=h^{\gamma_s}n^{-1/2}S(k)(1+o(1))$ uniformly over
  $f\in\mathcal{F}_n$,
  and \begin{NoHyper}\eqref{performance_approx_eq}\end{NoHyper} holds with $R$
  and $\tilde R$ given by $R_{\RMSE}$ and $\tilde R_{\RMSE}$.
\end{theorem}

The theorem implies that if \Cref{general_bias_var_assump} holds for bandwidth
sequences $h_{n}$ satisfying~\Cref{eq:hn-condition}, minimizing the criterion
$\lim_{c\to\infty}\lim_{n\to\infty}n^{r/2}M^{r-1}R_{\ell_{c}}(\hat{T}(h;k))$
discussed in \begin{NoHyper}\cref{fn:truncation}\end{NoHyper} in the main text,
where $\ell_{c}$ is the truncated squared error loss defined above, is
equivalent to minimizing the asymptotic RMSE:
\begin{multline*}
  \lim_{c\to\infty}\lim_{n\to\infty}n^{r/2}M^{r-1}R_{\ell_{c}}(\hat{T}(h;k))
  =S(k)^{r}B(k)^{1-r}\lim_{c\to\infty} t^{r-1} \tilde{R}_{\ell_{c}}(t,1)\\
  =S(k)^{r}B(k)^{1-r} t^{r-1} \tilde{R}_{\ell_{\RMSE},1}(t,1).
\end{multline*}
Thus, under this criterion, the optimal bandwidth is given by $\hrmse$.

To prove \Cref{general_sufficient_condition_thm}, we first note some
properties of loss and risk functions in our setup. Note that, under
\Cref{loss_function_assump}, $E\ell(W_n)\to EW$ for any sequence of
random variables $W_n\stackrel{d}{\to}W$ such that $W$ is continuously
distributed (this follows from the continuous mapping theorem and the fact that
$\ell$ is bounded). This also implies that $\tilde r_{\ell}(\tilde b, s)$ is
continuous in $\tilde b$ and $s$ (since $s_{n}Z+\tilde b_n\stackrel{d}{\to} sZ+b$
for $Z\sim N(0,1)$ and $\tilde b_n\to \tilde b$, $s_n\to s$). Also, by part
(ii),
$\tilde \rho_{\ell}(\chi^{-1}b, \chi^{-1}s)=\max_{\tilde b\in\{-b, b\}}E_{Z\sim
  N(0,1)}\ell\left(\chi^{-1}\left(Zs+b \right) \right)$, which is continuous
in $(b, s, \chi)$, and is strictly decreasing in $\chi$ (since $\ell(\chi^{-1}t)$
is weakly decreasing in $\chi$ for each $t$, and, for any $0<\chi<\tilde \chi$,
there is a positive measure set of values of $t$ such that
$\ell(\chi^{-1}t)>\ell(\tilde\chi^{-1}t)$ for $t$ on this set). This implies
that $\tilde R_{\ell, \alpha}(b, s)$, taken as a function of $\alpha$, is the
inverse of the strictly increasing function
$\chi\mapsto \tilde \rho_{\ell}(b\chi^{-1}, s\chi^{-1})$. Since convergence of a
sequence of strictly increasing functions to a continuous, strictly increasing
function implies convergence of their inverse, this implies that
$\tilde R_{\ell, \alpha}(b, s)$ is continuous in $(b, s)$.

We will use the following lemma.

\begin{lemma}\label{general_asymptotic_performance_lemma}
  Let $b, s$ be given. Suppose that $\ell$ satisfies
  \Cref{loss_function_assump}. Suppose that, for any sequence $f_n$,
  there exists $\tilde b\in[-b, b]$ and a subsequence along which
  $a_n (\hat T-T(f_n)) \underset{f_n}{\overset{d}{\to}} N(\tilde b,s^2)$.
  Furthermore, suppose that there exists a sequence $f_n$ such that
  $a_n (\hat T-T(f_n)) \underset{f_n}{\overset{d}{\to}} N(b,s^2)$, and a
  sequence $f_n$ such that
  $a_n (\hat T-T(f_n)) \underset{f_n}{\overset{d}{\to}} N(-b,s^2)$. Then
  $\lim_{n\to\infty}\rho_{\ell, \chi/a_n}(\hat{T};\mathcal{F}_n)=\tilde\rho_{\ell}(\chi^{-1}b, \chi^{-1}s)$
  %
  %
  %
  and
  $\lim_{n\to\infty} a_{n}R_{\ell, \alpha}(\hat T;\mathcal{F}_n)=\tilde R_{\ell, \alpha}(b, s)$.
\end{lemma}
\begin{proof}
  To show
  $\limsup_n\rho_{\ell, \chi/a_n}(\hat T;\mathcal{F}_n)\le
  \tilde\rho_{\ell}(\chi^{-1}b, \chi^{-1}s)$ it suffices to show that, for every
  sequence $f_n$, there is a subsequence along which
  $E_{f_n}\ell\left(a_n\chi^{-1}\left(\hat T-T(f_n) \right) \right)$ converges
  to a constant that is no greater than
  $\tilde\rho_{\ell}(\chi^{-1}b, \chi^{-1}s)$. By assumption, there exists a
  $\tilde b\in[-b, b]$ and a subsequence along which
  $a_n (\hat{T}-T(f_n)) \underset{f_n}{\overset{d}{\to}} N(\tilde b,s^2)$,
  which, under the assumptions on the loss function, implies
  $E_{f_n}\ell\left(a_n\chi^{-1}\left(\hat T-T(f_n) \right) \right)\to
  \tilde{r}_{\ell}(\chi^{-1}\tilde b, \chi^{-1}s)\le
  \rho_{\ell}(\chi^{-1}b, \chi^{-1}s)$ along this subsequence. To show that this
  $\limsup$ is a limit and the inequality is an equality, note that, letting
  $f_n$ be a sequence such that
  $a_n (\hat T-T(f_n)) \underset{f_n}{\overset{d}{\to}} N(b,s^2)$, we have
  $\rho_{\ell, \chi/a_n}(\hat T;\mathcal{F}_n)\ge E_{f_n}\ell\left(
    \chi^{-1}\left(\hat T-T(f_n) \right) \right)\to \tilde r_{\ell}(\chi^{-1}
  b, \chi^{-1}s)$. Similarly, taking a sequence for which the limiting
  distribution is $N(-b,s^2)$, we have
  $\liminf_n \rho_{\ell, \chi/a_n}(\hat T;\mathcal{F}_n)\ge
  \tilde{r}_{\ell}(-\chi^{-1} b, \chi^{-1}s)$. Noting that, under
  \Cref{loss_function_assump}, $\rho_{\ell}(\chi^{-1}b, \chi^{-1}s)$ is
  equal to either $\tilde r_{\ell}(\chi^{-1}\tilde b, \chi^{-1}s)$ or
  $\tilde r_{\ell}(-b\chi^{-1}\tilde b, \chi^{-1}s)$ (or both), it now follows
  that
  $\liminf_n\rho_{\ell, \chi/a_n}(\hat T:\mathcal{F}_n)\ge
  \tilde\rho_{\ell}(\chi^{-1}b, \chi^{-1}s)$. Thus,
  $\lim_{n\to\infty}\rho_{\ell, \chi/a_n}(\hat T:\mathcal{F}_n)=
  \tilde\rho_{\ell}(\chi^{-1}b, \chi^{-1}s)$.

  To derive the limit of $R_{\ell, \alpha}(\hat T;\mathcal{F}_n)$, first note
  that $\rho_{\ell, \chi}(\hat T;\mathcal{F}_n)$ is weakly decreasing in $\chi$
  for any $\chi>0$ for each $n$, since $\ell(\chi^{-1}t)$ is weakly decreasing
  in $\chi$ for all $t$ under \Cref{loss_function_assump}. Also,
  $\tilde{\rho}_{\ell}(\chi^{-1}b, \chi^{-1}s)$ is strictly decreasing in $\chi$.
  Thus, for $\chi> \tilde R_{\ell, \alpha}(b, s)$, we have
  $\tilde \rho_{\ell}(\chi^{-1}b, \chi^{-1}s)<\alpha$ so that, for large enough
  $n$, we have $\rho_{\ell, \chi/a_n}(\hat T;\mathcal{F}_n)<\alpha$ for all
  $\tilde\chi\ge \chi$, which implies
  $R_{\ell, \alpha}(\hat T;\mathcal{F}_n)\le \chi/a_n$. Similarly, for
  $\chi< \tilde R_{\ell, \alpha}(b, s)$, we have
  $\tilde \rho_{\ell}(\chi^{-1}b, \chi^{-1}s)>\alpha$ so that, for large enough
  $n$, we have $\rho_{\ell, \chi/a_n}(\hat T;\mathcal{F}_n)>\alpha$ for all
  $\tilde\chi\le \chi$, which implies
  $R_{\ell, \alpha}(\hat T;\mathcal{F}_n)\ge \chi/a_n$. Thus, for any $\eta>0$,
  we have, for large enough $n$,
  $\tilde R_{\ell, \alpha}(b, s)-\eta\le a_n R_{\ell, \alpha}(\hat{T};
  \mathcal{F}_n)\le \tilde R_{\ell, \alpha}(b, s)+\eta$. It follows that
  $a_n R_{\ell, \alpha}(\hat T;\mathcal{F}_n)\to \tilde R_{\ell, \alpha}(b, s)$.
\end{proof}

We are now ready to prove \Cref{general_sufficient_condition_thm}.

\begin{proof}[Proof of \Cref{general_sufficient_condition_thm}]
  The last statement (regarding convergence of standard deviation and worst-case
  bias and RMSE) follows immediately from the assumptions. To
  show \begin{NoHyper}\eqref{performance_approx_eq}\end{NoHyper} for $R$ and
  $\tilde R$ constructed from loss functions $\ell_1,\ldots, \ell_m$ satisfying
  \Cref{loss_function_assump}, it suffices to show that, for every
  subsequence, there exists a further subsequence along which
  $R(\hat T(h;k))= \tilde{R}(h^{\gamma_b}\SC B(k),
  h^{\gamma_s}n^{-1/2}S(k))(1+o(1))$. By the conditions on $h_n$, we can choose
  this subsequence so that
  $h_n (n\SC_n^2)^{1/[2(\gamma_b-\gamma_s)]}\to h_\infty$ for some $h_\infty>0$.

  Along this subsequence, we have
\begin{equation*}
  h_n^{\gamma_b}\SC_n = h_{\infty}^{\gamma_b}(n \SC_n^2)^{-\gamma_b/[2(\gamma_b-\gamma_s)]} \SC_n (1+o(1))
  = h_{\infty}^{\gamma_b} \SC_n^{1-r} n^{-r/2}(1+o(1))
\end{equation*}
and
\begin{equation*}
  h_n^{\gamma_s}n^{-1/2}
  = h_{\infty}^{\gamma_s} (n \SC_n^2)^{-\gamma_s/[2(\gamma_b-\gamma_s)]}n^{-1/2}(1+o(1))
  = h_{\infty}^{\gamma_s} n^{-r/2}
  \SC_n^{1-r} (1+o(1)).
\end{equation*}
  Thus, on this subsequence, the conditions of \Cref{general_asymptotic_performance_lemma} hold
  with $a_n=M_n^{r-1}n^{r/2}$, $b=h_{\infty}^{\gamma_b}B(k)$ and
  $s=h_{\infty}^{\gamma_s}S(k)$, so that, for each $j=1,\ldots, m$,
  \begin{equation*}
    M_n^{r-1}n^{r/2}R_{\ell_j, \alpha_j}(\hat T(h_n;k);\mathcal{F}_n)\to
    \tilde{R}_{\ell_j, \alpha_j}(h_\infty^{\gamma_b}B(k), h_{\infty}^{\gamma_s}S(k)).
  \end{equation*}
  Also, on this subsequence, using homogeneity and continuity of $\tilde R_{\ell, \alpha}$,
  \begin{multline*}
    M_n^{r-1}n^{r/2}\tilde R_{\ell_j, \alpha_j}(h_n^{\gamma_b}\SC_n B(k), h_n^{\gamma_s}n^{-1/2}S(k)) \\
    = \tilde R_{\ell_j, \alpha_j}(M_n^{r-1}n^{r/2}h_n^{\gamma_b}\SC_n
    B(k), M_n^{r-1}n^{r/2}h_n^{\gamma_s}n^{-1/2}S(k)) \to
    \tilde{R}_{\ell_j, \alpha_j}(h_\infty^{\gamma_b}
    B(k), h_\infty^{\gamma_s}S(k)).
  \end{multline*}
  Combining this with the previous display and using homogeneity of the function
  $\lambda$, it follows that
  \begin{NoHyper}\eqref{performance_approx_eq}\end{NoHyper} holds along this
  subsequence, which gives the result.
\end{proof}

\subsubsection{Delta method}\label{delta_method_sec}

Let $\mathcal{F}_n\subseteq\mathcal{F}$ be a sequence of function classes, and
let $L\colon\mathcal{F}\to \mathbb{R}^m$. We are interested in a parameter
$T(f)=\phi(L(f))$, where $\phi\colon\mathbb{R}^m\to\mathbb{R}$. To cover cases
where $\phi$ may be nonlinear, we assume that $\mathcal{F}_n$ is localized
around a particular value $L^*$ in the range of $L$:
\begin{equation*}
  L(f_n)\to L^*\;\text{for all sequences}\; f_n\in\mathcal{F}_n.
\end{equation*}
This localization of the parameter space plays a similar role to local
asymptotic efficiency results in parametric and regular semiparametric settings
\citep[see, for example, Theorem 8.11 in][]{van_der_vaart_asymptotic_1998}.

We now show that, if $\hat{L}(h;k)$ satisfies a multivariate version of
\Cref{general_bias_var_assump} and $\phi$ is smooth, then
\Cref{general_bias_var_assump} holds for
$\hat{T}(h;k)=\phi(\hat{L}(h;k))$, with $B(k)$ and $S(k)$ defined below.
This is essentially a version of the delta method applied to our setup.

\begin{assumption}\label{general_bias_var_multivariate_assump}
  The function $\phi$ is continuously differentiable at $L^*$, with Jacobian matrix
  $\phi'(L)$ and,
  for some sequences of random vectors $Z_{n, h, f}$ and $b_{n, h, f}$, we have
  \begin{equation*}
    \hat{L}(h;k) = L(f) + h^{\gamma_b}\SC_n b_{n, h, f} + h^{\gamma_s}n^{-1/2}
    Z_{n, h, f},
  \end{equation*}
  where, for a uniformly bounded sequence of constant vectors
  $b_{n, h, f}^*\in\mathbb{R}^m$ and some $\Sigma(k)$ and $B(k)$,
  $|b_{n, h, f} - b_{n, h, f}^*| \underset{\mathcal{F}_n}{\overset{p}{\to}} 0$
  and
  \begin{equation*}
    \lim_{n\to\infty} \sup_{f\in\mathcal{F}_n} \phi'(L^*)b_{n, h, f}^* = B(k),
    \quad \lim_{n\to\infty} \inf_{f\in\mathcal{F}_n} \phi'(L^*)b_{n, h, f}^* =
    -B(k), \quad
    Z_{n, h, f} \underset{\mathcal{F}_n}{\overset{d}{\to}} N(0,\Sigma(k)).
  \end{equation*}
\end{assumption}

\begin{theorem}\label{delta_method_thm}
  Suppose that \Cref{general_bias_var_multivariate_assump} holds, and put
  $S(k)^{2}=\phi'(L^*)\Sigma(k)\phi'(L^*)'$. Then, if $h^{\gamma_b}\SC_n\to 0$
  and $h^{\gamma_s}n^{-1/2}\to 0$, \Cref{general_bias_var_assump} holds for
  $\hat{T}(h;k)=\phi(\hat{L}(h;k))$.
\end{theorem}
\begin{proof}
  First, note that the conditions on the bandwidth imply
  $\hat L \underset{\mathcal{F}_n}{\overset{p}{\to}} L^*$. Then, by a Taylor
  expansion, for some $\tilde{L}=\tilde{L}(\hat L, L(f))$ on the line segment
  between $\hat L$ and $L(f)$, we have
\begin{multline*}
  \phi(\hat L) - \phi(L(f)) = \phi'(\tilde L) [\hat L - L(f)] \\
  = \phi'(\tilde L) [h^{\gamma_b}\SC_n b_{n, h, f} + h^{\gamma_s}n^{-1/2} Z_{n,
    h, f}] = h^{\gamma_b}\SC_n \tilde b_{n, h, f} + h^{\gamma_s}n^{-1/2}
  \tilde{Z}_{n, h, f},
\end{multline*}
where
$\tilde Z_{n, h, f} = \phi'(\tilde L)Z_{n, h, f}
\underset{\mathcal{F}_n}{\overset{d}{\to}} N(0, S(k)^{2})$
by the continuous mapping theorem and
$\tilde b_{n, h, f} = \phi'(\tilde{L}) b_{n, h, f}$ satisfies
$|\tilde b_{n, h, f} - \tilde b_{n, h, f}^*| = |\phi'(\tilde L)b_{n, h, f} -
\phi'(L^*)b_{n, h, f}^*| \underset{\mathcal{F}_n}{\overset{p}{\to}} 0$ where
$\tilde b_{n, h, f}^*=\phi'(L^*)b_{n, h, f}^*$. Thus,
\Cref{general_bias_var_assump} holds with $\tilde b_{n, h, f}$ playing
the role of $b_{n, h, f}$, and $\tilde b_{n, h, f}^*$ playing the role of
$b_{n, h, f}^*$.
\end{proof}

If the function class $\mathcal{F}_n$ places separate restrictions on each
mapping $x\mapsto f_j(x)$ for $j=1, \dotsc, m$, then the set of limits of the
biases $b_{n, h, f}^*$ will take the form
$[-\bar B_{1}(k), \bar B_{1}(k)]\times \dotsb\times [-\bar B_m(k),
\bar{B}_m(k)]$. In this case, the limiting worst-case bias takes the form
\begin{equation}\label{eq:Bk-separate-restrictions}
  B(k)=\sum_{j=1}^m\abs{\phi_j'(L^*)\bar B_j(k)}.
\end{equation}
Note that, while \Cref{delta_method_thm} shows that
\Cref{general_bias_var_assump} is preserved under smooth nonlinear
transformations, such a statement does not hold for a version of this assumption
stated in terms of moments, rather than weak convergence. For such a result, one
needs to either use truncation or place stronger conditions on the class of
estimators. This is analogous to parametric and regular semiparametric settings
such as instrumental variables, in which the asymptotic variance may only be
finite if defined in terms of convergence in distribution.

\subsection{Gaussian white noise model}\label{white_noise_sec}

The approximation \begin{NoHyper}\eqref{performance_approx_eq}\end{NoHyper}
holds as an exact equality (i.e.\ with the $o(1)$ term equal to zero) for the
RMSE, OCI, and FLCI criteria in the Gaussian white noise model whenever the
problem renormalizes in the sense of \citet{DoLo92}. We show this below, using
notation taken mostly from that paper. Consider a Gaussian white noise model
\begin{equation*}
  Y(dt)=(Kf)(t)\, dt +(\sigma/\sqrt{n}) W(dt), \quad t\in\mathbb{R}^d.
\end{equation*}
We are interested in estimating the linear functional $T(f)$ where $f$ is known
to be in the class $\mathcal{F}=\{f\colon J_2(f)\le C\}$ where
$J_2(f):\mathcal{F}\to\mathbb{R}$ and $C\in\mathbb{R}$ are given. Let
$\mathcal{U}_{a, b}$ denote the renormalization operator
$\mathcal{U}_{a, b}f(t)=af(bt)$. Suppose that $T$, $J_2$, and the inner product
are homogeneous: $T(\mathcal{U}_{a, b}f)=ab^{s_0}T(f)$,
$J_2(\mathcal{U}_{a, b}f)=ab^{s_2}J_2(f)$ and $\langle
K\mathcal{U}_{a_1,b}f, K\mathcal{U}_{a_2,b}g\rangle=a_1a_2b^{2s_1}\langle
Kf, Kg\rangle$. These are the same conditions as in \citet{DoLo92} except for the
last one, which is slightly stronger since it must hold for the inner product
rather than just the norm.

Consider the class of linear estimators based on a given kernel $k$:
\begin{equation*}
  \hat{T}(h;k) =h^{s_h}\int (Kk(\cdot/h))(t)\, dY(t)
     =h^{s_h}\int [K\mathcal{U}_{1,h^{-1}}k](t)\, dY(t)
\end{equation*}
for some exponent $s_h$ to be determined below. The worst-case bias of this
estimator is
\begin{equation*}
  \maxbias(\hat T(h;k)) =\sup_{J_2(f)\le C} \left|T(f)-h^{s_h}\langle
    Kk(\cdot/h), Kf\rangle\right|.
\end{equation*}
Note that $J_2(f)\le C$ iff.\ $f=\mathcal{U}_{h^{s_2}, h^{-1}}\tilde f$ for some
$\tilde f$ with $J_2(\tilde f)=J_2(\mathcal{U}_{h^{-s_2}, h}f)=J_2(f)\le C$. This
gives
\begin{align*}
  \maxbias(\hat{T}(h;k))
  &=\sup_{J_2(f)\le C} \left|T(\mathcal{U}_{h^{s_2}, h^{-1}}f)-h^{s_h}\langle Kk(\cdot/h), K\mathcal{U}_{h^{s_2}, h^{-1}}f\rangle\right| \\
  &=\sup_{J_2(f)\le C} \left|h^{s_2-s_0}T(f)-h^{s_h+s_2-2s_1}\langle
    Kk(\cdot), Kf\rangle\right|.
\end{align*}
If we set $s_h=-s_0+2s_1$ so that $s_2-s_0=s_h+s_2-2s_1$, the problem will
renormalize, giving
\begin{equation*}
  \maxbias(\hat T(h;k)) =h^{s_2-s_0} \maxbias(\hat T(1;k)).
\end{equation*}
The variance does not depend on $f$ and is given by
\begin{align*}
  \var_f(\hat T(h;k))&=h^{2s_h}(\sigma^2/n)\langle
  K\mathcal{U}_{1,h^{-1}}k, K\mathcal{U}_{1,h^{-1}}k\rangle
  =h^{2s_h-2s_1}(\sigma^2/n)\langle Kk, Kk\rangle \\
  &=h^{-2s_0+2s_1}(\sigma^2/n)\langle Kk, Kk\rangle.
\end{align*}
Thus, \begin{NoHyper}\Cref{bias_var_scale_eq}\end{NoHyper} holds with
$\gamma_b=s_2-s_0$, $\gamma_s=s_1-s_0$,
\begin{equation*}
  B(k)=\maxbias(\hat T(1;k)) =\sup_{J_2(f)\le C}
  \abs{T(f)-\langle Kk, Kf\rangle},
\end{equation*}
and $S(k)=\sigma \|Kk\|$ and with both $o(1)$ terms equal to zero. This implies
that~\begin{NoHyper}\eqref{performance_approx_eq}\end{NoHyper} holds with the
$o(1)$ term equal to zero, since the estimator is normally distributed.

\subsection{Local polynomial estimators in fixed design
  regression}\label{sec:local-polyn-estim}

This \namecref{sec:local-polyn-estim}
proves \begin{NoHyper}\Cref{theorem:maximum-bias-lp} and
  \Cref{eq:sd-rescaling-lp} in \Cref{sec:inference-point-theory}\end{NoHyper}.

We begin by deriving the worst-case bias of a general linear estimator
\begin{equation*}
  \hat T=\sum_{i=1}^n w(x_i)y_i
\end{equation*}
under Hölder and Taylor classes. For both $\FSY{p}(M)$ and $\FHol{p}(M)$ the
worst-case bias is infinite unless $\sum_{i=1}^n w(x_i)=1$ and $\sum_{i=1}^n
w(x_i)x^j=0$ for $j=1,\ldots,p-1$, so let us assume that $w(\cdot)$ satisfies
these conditions. For $f\in\FSY{p}(M)$, we can write $f(x)=\sum_{j=0}^{p-1}
x^{j}f^{(j)}(0)/j!+r(x)$ with $\abs{r(x)}\le M \abs{x}^p/{p!}$. As noted by
\citet{SaYl78}, this gives the bias under $f$ as $\sum_{i=1}^n w(x_i)r(x_i)$,
which is maximized at $r(x)=M\sign(w(x)) \abs{x}^p/{p!}$, giving
$\maxbias_{\FSY{p}}(\hat T)=\SC\sum_{i=1}^{n} \abs{w(x_i)x}^p/{p!}$.

For $f\in\FHol{p}(\SC)$, the $(p-1)$th derivative is Lipschitz and hence
absolutely continuous. Furthermore, since $\sum_{i=1}^n w(x_i)=1$ and
$\sum_{i=1}^n w(x_i)x^j=0$, the bias at $f$ is the same as the bias at $x\mapsto
f(x)-\sum_{j=0}^{p-1}x^{j}f^{(j)}(0)/{j!}$, so we can assume without loss of
generality that $f(0)=f'(0)=\cdots=f^{(p-1)}(0)$. This allows us to apply the
following lemma.

\begin{lemma}\label{integration_by_parts_lemma}
  Let $\nu$ be a finite measure on $\mathbb{R}$ (with the Lebesgue
  $\sigma$-algebra) with finite support and let
  $w\colon\mathbb{R}\to\mathbb{R}$ be a bounded measurable function with finite
  support. Let $f$ be $p-1$ times differentiable with bounded $p$th derivative
  on a set of Lebesgue measure $1$ and with
  $f(0)=f'(0)=f''(0)=\dotsb=f^{(p-1)}(0)=0$. Then
  \begin{equation*}
    \int_{0}^\infty w(x)f(x)\, d\nu(x) =\int_{s=0}^\infty
    \bar{w}_{p, \nu}(s)f^{(p)}(s)\, ds
  \end{equation*}
  and
  \begin{equation*}
    \int_{-\infty}^0 w(x)f(x)\, d\nu(x) =\int_{s=-\infty}^0
    \bar{w}_{p, \nu}(s)f^{(p)}(s)\, ds
  \end{equation*}
  where
  \begin{equation*}
    \bar w_{p, \nu}(s) =\begin{cases}
      \int_{x=s}^\infty \frac{w(x)(x-s)^{p-1}}{(p-1)!}\, d\nu(x) & s\ge 0 \\
      \int_{x=-\infty}^s \frac{w(x)(s-x)^{p-1}(-1)^p}{(p-1)!}\, d\nu(x) & s<0.
    \end{cases}
  \end{equation*}
\end{lemma}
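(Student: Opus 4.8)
The plan is to substitute Taylor's theorem with the integral form of the remainder into the two left-hand sides and then interchange the order of integration by Fubini's theorem; this is the same computation as $p$-fold integration by parts, which is what the name of the lemma refers to. I treat the half-lines $[0,\infty)$ and $(-\infty,0]$ separately. Since $f^{(p)}$ is bounded, $f^{(p-1)}$ is Lipschitz, hence absolutely continuous, so the fundamental theorem of calculus and the integral-remainder Taylor expansion apply; and because $f(0)=f'(0)=\dots=f^{(p-1)}(0)=0$, for every $x$ in the domain of $f$,
\[
  f(x)=\int_0^x \frac{(x-t)^{p-1}}{(p-1)!}\,f^{(p)}(t)\, dt ,
\]
where $\int_0^x$ denotes the oriented integral (equal to $-\int_{[x,0]}$ when $x<0$).

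For $x\ge 0$, substitute this expansion into $\int_0^\infty w(x)f(x)\, d\nu(x)$ and exchange the order of integration. The interchange is legitimate by Fubini's theorem, because $w$ is bounded, $\nu$ is a finite measure with bounded support, $f^{(p)}$ is bounded, and $t\mapsto(x-t)^{p-1}$ is bounded on that support, so the integrand is absolutely integrable on the product space. Since $\{(x,t):0\le t\le x\}=\{(x,t):t\ge0,\ x\ge t\}$,
\[
  \int_0^\infty w(x)f(x)\, d\nu(x)=\int_{t=0}^\infty f^{(p)}(t)\Bigl(\int_{x=t}^\infty \frac{w(x)(x-t)^{p-1}}{(p-1)!}\, d\nu(x)\Bigr) dt=\int_0^\infty \bar w_{p,\nu}(t)\,f^{(p)}(t)\, dt ,
\]
which is the first claim. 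Equivalently, one integrates by parts $p$ times, using $W_k(x)=\int_x^\infty w(u)(u-x)^{k-1}/(k-1)!\, d\nu(u)$ as the $k$-fold antiderivative of $w\, d\nu$; the boundary terms vanish at $\infty$ by the bounded-support assumption and at $0$ because $f^{(j)}(0)=0$ for $j=0,\dots,p-1$.

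For $x\le 0$ the same argument applies, but the sign bookkeeping must be done \emph{carefully}: for $x<0$ the oriented integral $\int_0^x$ contributes a factor $-1$ relative to $\int_{[x,0]}$, the inner integral runs over $\{t:x\le t\le 0\}$, and after exchanging orders the region becomes $\{(x,t):t\le 0,\ x\le t\}$. Writing the polynomial factor with its nonnegative argument, $(x-t)^{p-1}=(-1)^{p-1}(t-x)^{p-1}$, and collecting this with the orientation sign produces the factor $(-1)^p$ appearing in the definition of $\bar w_{p,\nu}$ on $s<0$; equivalently, each of the $p$ integrations by parts over $(-\infty,0]$ contributes one factor $-1$, because the antiderivative of $w\, d\nu$ vanishing at $-\infty$ is $\int_{-\infty}^x w\, d\nu$, with no compensating sign (unlike the $\infty$ case). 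Either way one obtains
\[
  \int_{-\infty}^0 w(x)f(x)\, d\nu(x)=\int_{-\infty}^0 \bar w_{p,\nu}(t)\,f^{(p)}(t)\, dt .
\]

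The routine ingredients — the integral-remainder form of Taylor's theorem and the verification of Fubini's hypotheses (or, in the integration-by-parts version, the vanishing of the boundary terms) — are standard given the stated regularity. The one step that genuinely needs care, and that I would expect to be the main source of slips, is the sign and orientation accounting on the negative half-line: checking that the orientation of $\int_0^x$ together with the parity of the polynomial factor combine to reproduce precisely the coefficient in the displayed definition of $\bar w_{p,\nu}$.
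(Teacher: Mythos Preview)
Your proposal is correct and follows essentially the same route as the paper: write $f(x)$ via the integral form of Taylor's remainder (using $f^{(j)}(0)=0$) and then apply Fubini to swap the $d\nu(x)$ and $dt$ integrals. The only cosmetic difference is that for the negative half-line the paper obtains the second display by applying the first to $f(-x)$, $w(-x)$, $\nu(-\cdot)$ rather than tracking the orientation and parity signs directly as you do.
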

%
%
%
%
%
%
%
\begin{proof}
  By the Fundamental Theorem of Calculus and the fact that the first $p-1$
  derivatives at $0$ are $0$, we have
  \begin{equation*}
    f(x)=\int_{t_1=0}^x\int_{t_2=0}^{t_1}\cdots\int_{t_{p}=0}^{t_{p-1}}
    f^{(p)}(t_{p})\, dt_{p}\cdots dt_2dt_1 =\int_{s=0}^x
    \frac{f^{(p)}(s)(x-s)^{p-1}}{(p-1)!} \, ds.
  \end{equation*}
  Thus, by Fubini's Theorem,
  \begin{align*}
    \int_{x=0}^\infty w(x)f(x)\, d\nu(x) &=\int_{x=0}^\infty w(x) \int_{s=0}^x
    \frac{f^{(p)}(s)(x-s)^{p-1}}{(p-1)!}
    \, ds d\nu(x) \\
    &=\int_{s=0}^\infty f^{(p)}(s) \int_{x=s}^\infty
    \frac{w(x)(x-s)^{p-1}}{(p-1)!} \, d\nu(x) ds
  \end{align*}
  which gives the first display in the lemma. The second display in the lemma
  follows from applying the first display with $f(-x)$, $w(-x)$ and $\nu(-x)$
  playing the roles of $f(x)$, $w(x)$ and $\nu(x)$.
\end{proof}

Applying \Cref{integration_by_parts_lemma} with $\nu$ given by the counting
measure that places mass $1$ on each of the $x_i$'s ($\nu(A)=\#\{i\colon x_i\in
A\})$, it follows that the bias under $f$ is given by $\int w(x)f(x)\, d\nu=\int
\bar w_{p, \nu}(s)f^{(p)}(s)\, ds$. This is maximized over $f\in\FHol{p}(\SC)$ by
taking $f^{(p)}(s)=\SC\sign(\bar{w}_{p, \nu}(s))$, which gives
$\maxbias_{\FHol{p}(\SC)}(\hat{T})=\SC\int \abs{\bar{w}_{p, \nu}(s)}\, ds$.

We collect these results in the following theorem.
\begin{theorem}\label{finite_sample_bias_thm}
  For a linear estimator $\hat T=\sum_{i=1}^n w(x_i)y_i$ such that $\sum_{i=1}^n
  w(x_i)=1$ and $\sum_{i=1}^n w(x_i)x^j=0$ for $j=1,\ldots, p-1$,
  \begin{equation*}
    \maxbias_{\FSY{p}(\SC)}(\hat T)=\SC \sum_{i=1}^n \abs{w(x_i)x}^p/p!
    \quad\text{and}\quad
    \maxbias_{\FHol{p}(\SC)}(\hat{T})=\SC\int\abs{\bar{w}_{p, \nu}(s)}\, ds
  \end{equation*}
  where $\bar w_{p, \nu}(s)$ is as defined in
  \Cref{integration_by_parts_lemma} with $\nu$ given by the counting
  measure that places mass $1$ on each of the $x_i$'s.
\end{theorem}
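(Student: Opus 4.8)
This statement merely collects the two worst-case bias computations sketched in the two paragraphs preceding it, so the proof consists of making each of them precise. In both cases the plan is the same: (i) use the moment conditions $\sum_{i}w(x_i)=1$ and $\sum_{i}w(x_i)x_i^j=0$, $j=1,\dots,p-1$, to cancel the polynomial part of the relevant expansion of $f$ so that the bias $\hat T-T(f)$ becomes a linear functional of a single ``remainder'' object; (ii) maximize that functional over the constraint set defining the class; and (iii) exhibit an $f$ in the class that attains the bound.

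For $\FSY{p}(M)$, write $f(x)=\sum_{j=0}^{p-1}x^jf^{(j)}(0)/j!+r(x)$ with $\abs{r(x)}\le M\abs{x}^p/p!$; the moment conditions kill the polynomial term, leaving bias $=\sum_{i}w(x_i)r(x_i)$. The only binding constraint on the vector $\bigl(r(x_1),\dots,r(x_n)\bigr)$ is the pointwise envelope $\abs{r(x_i)}\le M\abs{x_i}^p/p!$ — any such vector being realized by some admissible $r$, e.g.\ $r(x)=M\sign(w(x))\abs{x}^p/p!$ — so the supremum equals $\sum_i M\abs{w(x_i)}\abs{x_i}^p/p!$, as claimed; this is the Sacks--Ylvisaker argument. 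For $\FHol{p}(M)$, first subtract from $f$ its degree-$(p-1)$ Taylor polynomial at $0$: this does not change the bias, again by the moment conditions, so we may assume $f(0)=\dots=f^{(p-1)}(0)=0$. Membership in $\FHol{p}(M)$ makes $f^{(p-1)}$ Lipschitz, hence absolutely continuous, so $f^{(p)}$ exists a.e.\ with $\abs{f^{(p)}}\le M$, and Lemma~\ref{integration_by_parts_lemma} applies with $\nu$ the counting measure on $\{x_1,\dots,x_n\}$ (which has finite support). Writing $\sum_i w(x_i)f(x_i)=\int w\,f\,d\nu$, splitting over $x\ge 0$ and $x<0$, and applying the two displays of the lemma gives bias $=\int\bar w_{p,\nu}(s)f^{(p)}(s)\,ds$ — a convergent integral, since finite support of $\nu$ makes $\bar w_{p,\nu}$ compactly supported. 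Maximizing this linear functional of $f^{(p)}$ over $\{\,\abs{f^{(p)}}\le M\,\}$ then yields the stated worst-case bias, attained by $f^{(p)}=M\sign(\bar w_{p,\nu})$.

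The steps requiring care are the attainability claims. In the Taylor case one must check that a prescribed sign pattern of the remainder at the finitely many design points is compatible with a single function in the class, which is immediate because the envelope constraint is pointwise in $x$. In the Hölder case one must check that $f^{(p)}=M\sign(\bar w_{p,\nu})$ — which is only bounded and measurable — integrates back up (starting from vanishing derivatives at $0$) to a genuine $C^{p-1}$ function with $M$-Lipschitz $(p-1)$st derivative, i.e.\ to an element of $\FHol{p}(M)$; here compact support of $\bar w_{p,\nu}$ is what keeps the construction well defined. The remaining technical point is verifying the hypotheses of Lemma~\ref{integration_by_parts_lemma}, in particular that a Lipschitz $(p-1)$st derivative gives a bounded $p$th derivative on a full-measure set, which is immediate from absolute continuity of $f^{(p-1)}$.
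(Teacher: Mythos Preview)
Your proposal is correct and follows essentially the same route as the paper: for the Taylor class you use the Sacks--Ylvisaker remainder argument with $r(x)=M\sign(w(x))\abs{x}^p/p!$, and for the H\"older class you reduce to $f(0)=\dots=f^{(p-1)}(0)=0$ via the moment conditions, invoke Lemma~\ref{integration_by_parts_lemma} with $\nu$ the counting measure, and then maximize the resulting linear functional of $f^{(p)}$ subject to $\abs{f^{(p)}}\le M$. If anything, you are slightly more careful than the paper about attainability in the H\"older case---checking that $f^{(p)}=M\sign(\bar w_{p,\nu})$ integrates back to an element of $\FHol{p}(M)$---which the paper leaves implicit.
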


Note that, for $t>0$ and any $q$,
\begin{align}\label{wq_integral_eq}
  \int_{s=t}^\infty \overline w_{q, \nu}(s)\, ds& =\int_{s=t}^\infty
  \int_{x=s}^\infty\frac{w(x)(x-s)^{q-1}}{(q-1)!}\, d\nu(x) ds =\int_{x=t}^\infty
  \int_{s=t}^x
  \frac{w(x)(x-s)^{q-1}}{(q-1)!}\, ds d\nu(x) \nonumber \\
  &=\int_{x=t}^\infty w(x) \left[\frac{-(x-s)^{q}}{q!}\right]_{s=t}^x\, d\nu(x)
  =\int_{x=t}^\infty \frac{w(x)(x-t)^{q}}{q!}\, d\nu(x) =\bar w_{q+1,\nu}(t).
\end{align}
Let us define $\bar w_{0,\nu}(x)=w(x)$, so that this holds for $q=0$ as well.

For the boundary case with $p=2$, the bias is given by (using the fact that the
support of $\nu$ is contained in $\hor{0,\infty}$)
\begin{equation*}
  \int_0^{\infty} w(x)f(x)\,
  d\nu(x)=\int_{0}^{\infty}\bar{w}_{2,\nu}(x)f^{(2)}(x)\, dx
  \quad\text{where}\quad \bar w_{2,\nu}(s)=\int_{x=s}^{\infty} w(x)(x-s)\,
  d\nu(x).
\end{equation*}
For a local linear estimator based on a kernel with nonnegative weights and
support $[-A, A]$, the equivalent kernel $w(x)$ is positive at $x=0$ and negative
at $x=A$ and changes signs once. From~\eqref{wq_integral_eq}, it follows that,
for some $0\le b\le A$, $\bar w_{1,\nu}(x)$ is negative for $x>b$ and
nonnegative for $x<b$. Applying~\eqref{wq_integral_eq} again, this also holds
for $\bar w_{2,\nu}(x)$. Thus, if $\bar w_{2,\nu}(\tilde s)$ were strictly
positive for any $\tilde s> 0$, we would have to have $\bar w_{2,\nu}(s)$
nonnegative for $s\in [0,\tilde s]$. Since $\bar w_{2,\nu}(0)=\sum_{i=1}^n
w(x_i)x_i=0$, we have
\begin{equation*}
  0< \bar w_{2,\nu}(0)-\bar w_{2,\nu}(\tilde s) =-\int_{x=0}^{\tilde{s}}
  w(x)(x-\tilde s)\, d\nu(x)
\end{equation*}
which implies that $\int_{x=\underline s}^{\overline s}\, w(x)d\nu(x)<0$ for
some $0\le \underline s<\overline s<\tilde s$. Since $w(x)$ is positive for
small enough $x$ and changes signs only once, this means that, for some $s^*\le
\tilde s$, we have $w(x)\ge 0$ for $0\le x\le s^*$ and $\int_{x=0}^{s^*}\,
w(x)d\nu(x)>0$. But this is a contradiction, since it means that
$\bar{w}_{2,\nu}(s^*)=-\int_{0}^{s^*}w(x)(x-s^*)\, d\nu(x)<0$. Thus,
$\bar{w}_{2,\nu}(s)$ is weakly negative for all $s$, which implies that the bias
is maximized at $f(x)=-(\SC/2)x^2$.

We now provide a proof for
\begin{NoHyper}\Cref{theorem:maximum-bias-lp}\end{NoHyper}
by proving the result for a more general sequence of estimators of the form
\begin{equation*}
  \hat T=\frac{1}{nh}\sum_{i=1}^n \tilde{k}_n(x_i/h)y_{i},
\end{equation*}
where $\tilde{k}_{n}$ satisfies $\frac{1}{nh}\sum_{i=1}^n \tilde{k}_n(x_i/h)=1$
and $\frac{1}{nh}\sum_{i=1}^n \tilde k_n(x_i/h)x_i^j=0$ for $j=1,\ldots,p-1$. We
further assume
\begin{assumption}\label{kstar_assump}
  The support and magnitude of $\tilde k_n$ are bounded uniformly over $n$, and,
  for some $\tilde k$, $\sup_{u\in\mathbb{R}} |\tilde k_n(u)-\tilde k(u)|\to 0$.
\end{assumption}
\begin{theorem}\label{taylor_holder_asymptotic_thm}
  Suppose \begin{NoHyper}\Cref{x_assump}\end{NoHyper} and \Cref{kstar_assump}
  hold. Then for any bandwidth sequence $h_{n}$ such that $n h_{n}\to \infty$,
  $\liminf_{n} h_n (n\SC^2)^{1/(2p+1)}>0$, and
  $\limsup_{n} h_n (n\SC^2)^{1/(2p+1)}<\infty$,
  \begin{equation*}
    \maxbias_{\FSY{p}(\SC)}(\hat{T}) =
    \frac{\SC h_{n}^p}{p!}\tilde{\mathcal{B}}_{p}^{\text{T}}(\tilde k)(1+o(1)), \qquad
    \tilde{\mathcal{B}}_{p}^{\text{T}}(\tilde k)=d \int_{\mathcal{X}} |u^p
    \tilde{k}(u)|\, du
  \end{equation*}
  and
  \begin{multline*}
    \maxbias_{\FHol{p}(\SC)}(\hat{T}) =
    \frac{\SC h_{n}^p}{p!}\tilde{\mathcal{B}}_{p}^{\textnormal{Höl}}(\tilde{k})
    (1+o(1)), \\ \tilde{\mathcal{B}}_{p}^{\textnormal{Höl}}(\tilde{k}) =d p
    \int_{t=0}^\infty \left|\int_{u\in\mathcal{X}, |u|\ge
        t}\tilde{k}(u)(|u|-t)^{p-1}\, du\right|\, dt.
  \end{multline*}
  If \begin{NoHyper}\Cref{sigma_assump}\end{NoHyper} holds as well,
  then
  \begin{equation*}
    \sd(\hat T) =h_{n}^{-1/2}n^{-1/2}S(\tilde k)(1+o(1)),
  \end{equation*}
  where
  $S(\tilde k)=d^{1/2}\sigma(0)\sqrt{\int_{\mathcal{X}} \tilde k(u)^2\, du}$,
  and~\begin{NoHyper}\eqref{performance_approx_eq}\end{NoHyper} holds for the
  RMSE, FLCI and OCI performance criteria with $\gamma_b=p$ and $\gamma_s=-1/2$.
\end{theorem}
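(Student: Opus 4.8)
The plan is to combine the exact finite-sample bias formulas of Theorem~\ref{finite_sample_bias_thm} with a Riemann-sum approximation that, after rescaling the design by $h_n$, replaces the empirical design by its limiting density $d$ at the evaluation point (Assumption~\ref{x_assump}), and then to lift the resulting bias and standard-deviation expansions to the RMSE, FLCI and OCI statements via a triangular-array central limit theorem. Throughout, the rate restriction on $h_n$ is used only through its consequences $h_n\to0$ and $n h_n\to\infty$. The estimator has weights $w(x_i)=\tilde k_n(x_i/h_n)/(n h_n)$, which by hypothesis satisfy $\sum_i w(x_i)=1$ and $\sum_i w(x_i)x_i^j=0$, so Theorem~\ref{finite_sample_bias_thm} applies; for the Taylor class it gives $\maxbias_{\FSY{p}(M)}(\hat T)=(M/p!)\sum_{i=1}^n\Abs{w(x_i)}\,\Abs{x_i}^p=(M h_n^p/p!)\cdot(n h_n)^{-1}\sum_{i=1}^n\Abs{\tilde k_n(x_i/h_n)}\,\Abs{x_i/h_n}^p$. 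The workhorse lemma is the elementary fact that, for any bounded $g$ with support in a fixed compact set, $\frac{1}{n h_n}\sum_{i=1}^n g(x_i/h_n)\to d\int g(u)\,du$, which is exactly where Assumption~\ref{x_assump} enters. Applying it with $g(u)=\Abs{u^p\tilde k(u)}$, and using Assumption~\ref{kstar_assump} to replace $\tilde k_n$ by $\tilde k$ at uniform $o(1)$ cost (uniformly bounded support and magnitude, and $\sup_u\Abs{\tilde k_n(u)-\tilde k(u)}\to0$), yields the claimed expansion with $\tilde{\mathcal B}_p^{\text T}(\tilde k)=d\int_{\mathcal X}\Abs{u^p\tilde k(u)}\,du$.

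For the Hölder class, Theorem~\ref{finite_sample_bias_thm} gives $\maxbias_{\FHol{p}(M)}(\hat T)=M\int\Abs{\bar w_{p,\nu}(s)}\,ds$ with $\nu$ the counting measure on $\{x_i\}$. Substituting $w$ and rescaling $s=h_n t$, the definition of $\bar w_{p,\nu}$ gives, for $t\ge0$,
\begin{equation*}
  \frac{\bar w_{p,\nu}(h_n t)}{h_n^{p-1}}=\frac{1}{(p-1)!}\cdot\frac{1}{n h_n}\sum_{i\colon x_i/h_n\ge t}\tilde k_n(x_i/h_n)\,(x_i/h_n-t)^{p-1},
\end{equation*}
which by the same Riemann-sum limit converges pointwise in $t$ to $\frac{d}{(p-1)!}\int_{u\in\mathcal X,\ u\ge t}\tilde k(u)(u-t)^{p-1}\,du$, and analogously for $t<0$ (the parity factors in $\bar w_{p,\nu}$ wash out inside the absolute value). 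Since $\int\Abs{\bar w_{p,\nu}(s)}\,ds=h_n^p\int\Abs{\bar w_{p,\nu}(h_n t)/h_n^{p-1}}\,dt$, it remains to pass the limit through this outer integral and collect the $t>0$ and $t<0$ pieces, using $d/(p-1)!=dp/p!$, which gives $\tilde{\mathcal B}_p^{\textnormal{Höl}}(\tilde k)$. This interchange of limit and integral is the main obstacle: it requires a single integrable majorant valid for all large $n$, which is available because the uniformly bounded support of $\tilde k_n$ forces $\bar w_{p,\nu}(h_n t)/h_n^{p-1}$ to vanish once $\Abs{t}$ exceeds a fixed constant, while the uniform bound on $\Abs{\tilde k_n}$ controls it on that fixed interval, so dominated convergence applies.

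For the standard deviation, Assumption~\ref{sigma_assump} gives $\var(\hat T)=\sum_i w(x_i)^2\sigma^2(x_i)=\frac{1}{n h_n}\cdot\frac{1}{n h_n}\sum_i\tilde k_n(x_i/h_n)^2\sigma^2(x_i)$; replacing $\sigma^2(x_i)$ by $\sigma^2(0)$ up to $o(1)$ on the shrinking support of the summand (continuity of $\sigma^2$) and invoking the Riemann-sum limit with $g(u)=\tilde k(u)^2$ gives $\var(\hat T)=\frac{1}{n h_n}d\sigma^2(0)\int_{\mathcal X}\tilde k(u)^2\,du\,(1+o(1))$, i.e.\ $\sd(\hat T)=h_n^{-1/2}n^{-1/2}S(\tilde k)(1+o(1))$. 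Because the rate restriction makes $\maxbias(\hat T)$ and $\sd(\hat T)$ of the same order, the RMSE part of~\eqref{performance_approx_eq} (with $\gamma_b=p$, $\gamma_s=-1/2$) follows at once from $\operatorname{RMSE}=(\maxbias^2+\var)^{1/2}$. For the FLCI and OCI criteria one additionally needs $(\hat T-\mathbb E\hat T)/\sd(\hat T)\indist N(0,1)$; writing $\hat T-\mathbb E\hat T=\sum_i w(x_i)(y_i-\mathbb E y_i)$, a Lyapunov CLT applies since $\Abs{w(x_i)}\le C/(n h_n)$ on $\asymp n h_n$ nonzero terms, so the Lyapunov ratio is $O((n h_n)^{-\delta/2})\to0$ given the $(2+\delta)$th-moment bound in Assumption~\ref{sigma_assump}. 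Combining asymptotic normality with the bias and standard-deviation expansions yields~\eqref{performance_approx_eq} for FLCI and OCI, since those criteria are continuous functions — bounded away from $0$ on the relevant compact range of the bias-to-standard-deviation ratio — of that ratio evaluated at the normal distribution.
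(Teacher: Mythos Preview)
Your argument for the bias expansions (Taylor and H\"older) and for the standard deviation is essentially the paper's: apply Theorem~\ref{finite_sample_bias_thm}, rescale by $h_n$, replace $\tilde k_n$ by $\tilde k$ uniformly via Assumption~\ref{kstar_assump}, invoke the Riemann-sum limit from Assumption~\ref{x_assump}, and for the H\"older outer integral use dominated convergence with the uniform support/magnitude bounds providing the majorant. This is exactly what the paper does, including the display~\eqref{x_kng_limit} and the explicit compact-support majorant $K_1\cdot I(t\le K_s)$.

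The one substantive difference is in the final step. The paper does not invoke a CLT at all: it simply asserts that $\hat T$ is \emph{exactly} normally distributed with constant variance, so that~\eqref{performance_approx_eq} follows from continuity of $\tilde R$. In other words, the paper is working under a Gaussian-errors assumption, which makes asymptotic normality trivial. Your Lyapunov argument is correct and would extend the result to non-Gaussian errors with a $(2+\delta)$th-moment bound, but under the paper's setup it is unnecessary machinery. If you intended to cover the non-Gaussian case, your route is the right one and buys genuine generality; if you are matching the paper's assumptions, you can simply note exact normality and skip the CLT.
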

\begin{proof}
  Let $K_{s}$ denote the bound on the support of $\tilde{k}_{n}$, and $K_{m}$
  denote the bound on the magnitude of $\tilde{k}_{n}$.

  The first result for Taylor classes follows immediately since
  \begin{equation*}
    \maxbias_{\FSY{p}(\SC)}(\hat T)
    =\frac{\SC}{p!}h^p\frac{1}{nh}\sum_{i=1}^n |\tilde k_n(x_i/h)||x_i/h|^p
    =\left(\frac{\SC}{p!}h^p d\int_{\mathcal{X}} |\tilde k(u)||u|^p\, du\right)
    (1+o(1)),
  \end{equation*}
  where the first equality follows from \Cref{finite_sample_bias_thm} and
  the second equality follows from the fact that for any function $g(u)$ that is
  bounded over $u$ in compact sets,
  \begin{multline}\label{x_kng_limit}
    \Abs{\frac{1}{nh}\sum_{i=1}^n \tilde{k}_n(x_i/h)g(x_i/h)-d
      \int_{\mathcal{X}} k(u)g(u)\, du} \\
    \leq \Abs{\frac{1}{nh}\sum_{i=1}^n \tilde{k}(x_i/h)g(x_i/h)-d
      \int_{\mathcal{X}} k(u)g(u)\, du}+ \frac{1}{nh}\sum_{i=1}^{n}
    \Abs{\tilde{k}_n(x_i/h)g(x_i/h)-\tilde{k}(x_{i}/h)g(x_{i}/h)}\\
    \leq o(1)+ \frac{1}{nh}\sum_{i=1}^{n}\1{\abs{x_{i}/h} \leq
      K_{s}}\sup_{u\in[-K_{s}, K_{s}]}\abs{g(u)}\cdot \sup_{u\in[-K_{s},
      K_{s}]}\abs{\tilde{k}_{n}(u)-\tilde{k}(u)}=o(1),
  \end{multline}
  where the second line follows by triangle inequality, the third line by
  \begin{NoHyper}\Cref{x_assump}\end{NoHyper} applied to the first
  summand (with $x\mapsto\tilde k(x)g(x)$ playing the role of $g(\cdot)$ in
  \begin{NoHyper}\Cref{x_assump}\end{NoHyper}), and the last equality follows by
  \begin{NoHyper}\Cref{x_assump}\end{NoHyper} applied to the first term, and
  \Cref{kstar_assump} applied to the last term.

  For Hölder classes,
  \begin{equation*}
    \maxbias_{\FHol{p}(\SC)}(\hat T(h;\tilde k_n)) =\SC\int
    \abs{\bar w_{p, \nu}(s)}\, ds
  \end{equation*}
  by \Cref{finite_sample_bias_thm} where $\bar w_{p, \nu}$ is as defined
  in that theorem with $w(x)=\frac{1}{nh}\tilde k_n(x/h)$. We have, for $s>0$,
  \begin{align*}
    \bar w_{p, \nu}(s)&=\int_{x\ge s}
    \frac{\frac{1}{nh}\tilde{k}_n(x/h)(x-s)^{p-1}}{(p-1)!}\, d\nu(x)
    =\frac{1}{nh}\sum_{i=1}^n \frac{\tilde k_n(x_i/h)(x_i-s)^{p-1}}{(p-1)!}\1{x_i\ge s} \\
    &=h^{p-1}\frac{1}{nh}\sum_{i=1}^n
    \frac{\tilde{k}_n(x_i/h)(x_i/h-s/h)^{p-1}}{(p-1)!}\1{x_i/h\ge s/h}.
  \end{align*}
  Thus, by \Cref{x_kng_limit}, for $t\geq 0$,
  $h^{-(p-1)}\bar{w}_{p, \nu}(t\cdot h)\to d\cdot \bar w_p(t)$, where
  \begin{equation*}
    \bar{w}_p(t)=\int_{u\ge t}\frac{\tilde k(u)(u-t)^{p-1}}{(p-1)!}\, du
  \end{equation*}
  (i.e. $\bar w_p(t)$ denotes $\bar w_{p, \nu}(t)$ when $w=\tilde k$ and $\nu$ is
  the Lebesgue measure). Furthermore,
  \begin{equation*}
    |h^{-(p-1)}\bar w_{p, \nu}(t\cdot h)|
    \le \left[\frac{K_{m}}{nh}\sum_{i=1}^n \frac{\1{0\le x_i/h\le
          K_{s}}(x_i/h)^{p-1}}{(p-1)!}\right]\cdot \1{t\le K_{s}} \leq K_1\cdot
    \1{t\le K_{s}},
  \end{equation*}
  where the last inequality holds for some $K_1$ by
  \begin{NoHyper}\Cref{x_assump}\end{NoHyper}. Thus,
  \begin{equation*}
    M\int_{s\ge 0} |\bar w_{p, \nu}(s)|\, ds
    =h^{p} M \int_{t\ge 0} |h^{-(p-1)}\bar w_{p, \nu}(t\cdot h)|\, dt =h^p M
    \left[d\int_{t\ge 0} |\bar w_{p}(t)|\, dt\right](1+o(1))
  \end{equation*}
  by the Dominated Convergence Theorem. Combining this with a symmetric argument
  for $t\le 0$ gives the result.

  For the second part of the theorem, the variance of $\hat{T}$ doesn't depend
  on $f$, and equals
  \begin{equation*}
    \var(\hat{T})=
    \frac{1}{n^2h^2}\sum_{i=1}^n \tilde k_n(x_i/h)^2\sigma^2(x_i) =\frac{1}{nh}
    \tilde{S}_n^2,\qquad{\text{where}}\quad
    \tilde{S}_n^2=\frac{1}{nh}\sum_{i=1}^n \tilde k_n(x_i/h)^2\sigma^2(x_i).
  \end{equation*}
  By the triangle inequality,
  \begin{multline*}
    \Abs{\tilde{S}^{2}_{n}-d\sigma^{2}(0)\int_{\mathcal{X}}\tilde{k}(u)^{2}\, d
      u} \\\leq \sup_{\abs{x}\leq hK_{s}} \Abs{\tilde{k}_{n}(x/h)^{2}\sigma^{2}(x)
      -\tilde{k}(x/h)^{2}\sigma^{2}(0)}\cdot
    \frac{1}{nh}\sum_{i=1}^{n}\1{\abs{x_{i}/h}\leq K_{s}}
    \\
    +\sigma^{2}(0)\Abs{\frac{1}{nh}\sum_{i=1}^{n}\tilde{k}(x_{i}/h)^{2}
      -d\int_{\mathcal{X}}\tilde{k}(u)^{2}\, d u}=o(1),
  \end{multline*}
  where the equality follows by
  \begin{NoHyper}\Cref{x_assump}\end{NoHyper} applied to the second
  summand and the second term of the first summand, and
  \begin{NoHyper}\Cref{sigma_assump}\end{NoHyper}
  and \Cref{kstar_assump} applied to the first term of the first summand. This
  gives the second display in the theorem.

  To show the last statement (verification of
  \begin{NoHyper}\Cref{performance_approx_eq}\end{NoHyper}), we note
  that the above arguments show that \Cref{general_bias_var_assump} holds with
  $b_{n, h, f}=b_{n, h, f}^*$ equal to the bias of the estimator and
  $E_{f}Z_{n, h, f}^2\to S(k)$ uniformly over $\mathcal{F}$, so long as we can
  verify the uniform central limit theorem for
  $Z_{n, h, f}=(nh)^{1/2}[\hat T-E_f\hat{T}]=(nh)^{-1/2}\sum_{i=1}^n
  \tilde{k}_n(x_i/h)u_i$. By the conditions on the errors $u_i$, this follows
  from the Lindeberg central limit theorem so long as
  $\max_{i} [(nh)^{-2}k_n(x_i/u)]^2/(nh)^{-1}=\max_{i} n h k_n(x_i/u)/(nh)\to
  0$. By uniform boundedness of the kernel $k_n$, this holds so long as
  $nh\to\infty$.
\end{proof}

The local polynomial estimator takes the form given above with
\begin{equation*}
  \tilde k_n(u)=e_1'\left(\frac{1}{nh}\sum_{i=1}^n
    k(x_i/h)m_q(x_i/h)m_q(x_i/h)'\right)^{-1}m_q(u)k(u).
\end{equation*}
If $k$ is bounded with bounded support, then, under
\begin{NoHyper}\Cref{x_assump}\end{NoHyper}
this sequence satisfies \Cref{kstar_assump} with
\begin{equation*}
  \tilde k(u)=e_1'\left(d\int_{\mathcal{X}} k(u)m_q(u)m_q(u)'\, d
    u\right)^{-1}m_q(u)k(u) =d^{-1} k^*_q(u),
\end{equation*}
where $k^*_q$ is the equivalent kernel defined in
\begin{NoHyper}\Cref{equivalent_kernel_eq}\end{NoHyper}. \begin{NoHyper}\Cref{theorem:maximum-bias-lp}
  and \Cref{eq:sd-rescaling-lp}\end{NoHyper} then follow immediately by applying
\Cref{taylor_holder_asymptotic_thm} with this choice of $\tilde k_n$ and
$\tilde k$.

\subsection{Fuzzy RD}\label{sec:fuzzy-rd-suppl}

We consider the sequence of parameter spaces
$\mathcal{F}_{n}\subseteq \mathcal{F}(\SC_{1}, \SC_{2})$, such that
$L(f_{n})\to L^{*}$ for all sequences $f_{n}\in\mathcal{F}_{n}$. Here
$L^{*}\in\mathbb{R}^{2}$ is a fixed vector such that $L^{*}_{2}\neq 0$. Let
$\SC=\SC_{1}$, and suppose \begin{NoHyper}\Cref{x_assump}\end{NoHyper} holds
(since the ratio $\SC_{1}/\SC_{2}$ is fixed, it suffices to verify the
assumption for $\SC=\SC_{1}$). Assume also that the random variables
$\{u_i\}_{i=1}^n$ are independent with $E u_{i}=0$, $\var(u_{i})=\Omega(x_i)$
and $E (u^{2}_{1i}+u_{2i}^{2})^{1+\eta}\le 1/\eta$ for some $\eta>0$, and that
the covariance function $\Omega(x)$ is left- and right- continuous at $x=0$ with
$\Omega_{+}(0)=\lim_{x\downarrow 0}\Omega(x)>0$ and
$\Omega_{-}(0)=\lim_{x\uparrow 0}\Omega(x)>0$. It then follows by adapting
arguments in the proof
of \begin{NoHyper}\Cref{theorem:maximum-bias-lp}\end{NoHyper} that for any
bandwidth sequence $h_{n}$ with $n h_n\to\infty$ and
$0<\liminf_n h_n (n\SC^2)^{1/(2p+1)}<\limsup_n h_n (n\SC^2)^{1/(2p+1)}<\infty$,
\begin{equation*}
  \hat{L}(h;k)=L(f)+h^{2}
  \begin{pmatrix}
    M_{1}b^{*}_{n, h, f,1}\\
    M_{2}b^{*}_{n, h, f,2}
  \end{pmatrix}+\frac{1}{\sqrt{nh}}Z_{n, h, f},
\end{equation*}
where $Z_{n, h, f}$ converges in distribution to $N(0,\Sigma(k))$ uniformly over
$\mathcal{F}_{n}$ with
\begin{equation*}
  \Sigma(k)=\int_{0}^{\infty}k_{1}^{*}(u)^{2}\, \dd u \cdot
  (\Omega_{+}(0)+\Omega_{-}(0))/d,
\end{equation*}
and
$b^{*}_{n, h, f, j}=\sum_{i=1}^{n}(w_{+}(x_{i})+w_{-}(x_{i}))f_{j}(x_{i})/M_{j}$
for $j=1, 2$, and the limits of these biases lie in the set
$[\tilde{B}(k), -\tilde{B}(k)]^{2}$, where
$\tilde{B}(k)=\int_{0}^{\infty}u^{2}k_{1}^{*}(u)\, \dd u$.
From~\eqref{eq:Bk-separate-restrictions}, we obtain that
\Cref{general_bias_var_multivariate_assump} holds with $\gamma_{b}=2$,
$\gamma_{s}=-1/2$, and
\begin{equation*}
  B(k)=-(\abs{\phi_{1}'(L^{*})}+\SC_{2}/\SC_{1}\abs{\phi_{2}'(L^{*})})
  \int_{0}^{\infty}u^{2}k_{1}^{*}(u)\, \dd u
  =-\frac{1+\SC_{2}/\SC_{1}\abs{L_{1}^{*}/L_{2}^{*}}}{\abs{L_{2}^{*}}}
  \int_{0}^{\infty}u^{2}k_{1}^{*}(u)\, \dd u.
\end{equation*}
Thus, by \Cref{delta_method_thm}, condition
\begin{NoHyper}\eqref{performance_approx_eq}\end{NoHyper} holds for FLCI, OCI,
and truncated RMSE with
\begin{equation*}
  S(k)^{2}=\frac{\int_{0}^{\infty}k_{1}^{*}(u)^{2}\, \dd u}{d}
  \frac{\varsigma_{+}^{2}(0;L_{1}^{*}/L_{2}^{*})+
    \varsigma_{+}^{2}(0;L_{1}^{*}/L_{2}^{*})}{(L_{2}^{*})^{2}},
\end{equation*}
where $\varsigma^{2}(x;T)=(1, -T)\Omega(x)(1, -T)'$,
$\varsigma_{+}^{2}(0;T)=\lim_{x\downarrow 0}\varsigma^{2}(x;T)$, and
$\varsigma_{-}^{2}(0;T)=\lim_{x\uparrow 0}\varsigma^{2}(x;T)$.

The expressions for $\operatorname{avar}(\hat{T}(h;k))$ and
$\operatorname{\overline{abias}}(\hat{T}(h; k))$ in the main text then follow by
observing that
$\sum_{i=1}^{n}\tilde{w}^{n}(x_{i};h, k)^{2}\phi'(L(f))\Omega(x_{i})
\phi'(L(f))'=S(k)/nh(1+o(1))$, and
$(\abs{\phi_{1}'(L(f))}\SC_{1}+\abs{\phi_{2}'(L(f))}\SC_{2})
\sum_{i=1}^{n}\tilde{w}^{n}(x_{i};h, k)/2=\SC_{1}h^{2}B(k)(1+o(1))$.

\section{Additional applications}\label{sec:addit-appl}

This \namecref{sec:addit-appl} considers additional applications not considered
in the main text, using the sufficient conditions from
\Cref{sufficient_conditions_sec}. \Cref{density_sec} verifies our conditions in
the density setting, and \Cref{auctions_sec} applies these results to a problem
in the auctions literature.

\subsection{Density estimation}\label{density_sec}

Consider estimating a density at a point, which we normalize to $0$. We observe
$\{X_i\}_{i=1}^n$ iid with density $f$ on the intersection of $\mathcal{X}$ and
some neighborhood of $0$, where either $\mathcal{X}=\mathbb{R}$ or
$\mathcal{X}=\hor{0,\infty}$. We are interested in $T(f)=f(0)$. Let
$\hat T=\hat{T}(h;k)=\frac{1}{nh}\sum_{i=1}^n k(X_i/h)$ be a kernel estimate
where $k$ is a kernel with $\int_{\mathcal{X}} k(u)\, du=1$ and finite support.
Let $\mathcal{F}=\mathcal{F}(\SC)$ denote the Hölder class $\FHol{p}(\SC)$ or
Taylor class $\FSY{p}(\SC)$ of order $p$, as defined in the paper. Assume that
the kernel $k$ satisfies $\int_{\mathcal{X}}u^{j}k(u)\, du=0$ for
$j=1,\ldots,p-1$. Let $f^*>0$ be given, and let $a_n$ be a sequence converging
to zero more slowly than any polynomial. Let $\mathcal{F}(\SC, [-a, a])$ denote
the class for which the Hölder or Taylor condition is imposed only for
$x\in [-a, a]\cap\mathcal{X}$, and let
$\mathcal{F}_n=\mathcal{F}(\SC_n;[-a_n, a_n])\cap \{f: |f(x)-f^*|\le
a_n\;\text{all}\; x\in [-a_n, a_n]\cap \mathcal{X}, \, f(x)\ge 0\;\text{all $x$}, \,
\int f(x)\, dx=1\}$.

We show that \begin{NoHyper}\eqref{performance_approx_eq}\end{NoHyper} holds for
the performance criteria considered in the main text by verifying
\Cref{general_bias_var_assump}. This gives a generalization of the
results in \citet{SaYl81}, who consider RMSE optimal kernels in Taylor classes,
to performance criteria other than RMSE, and to cover Hölder classes in addition
to Taylor classes. Note that $\mathcal{F}_n$ localizes the parameter space
around a density with $T(f)=f^*$, similar to \Cref{delta_method_sec}.
This differs slightly from \citet{SaYl81}, who consider a fixed parameter space
$\mathcal{F}$ which only places an upper bound $f^*$ on $f(0)$. However, the
result given below is essentially the same, since the worst-case risk over this
class is taken in a shrinking neighborhood of $f^*$ (i.e.\ the worst-case risk
is the same as in our setup). Also, note that we only impose the Hölder or
Taylor condition in the set $[-a_n, a_n]$, although we would obtain the same
result if we did not impose this condition so long as $\SC_n$ increases slowly
enough so that the function can be extended to satisfy the smoothness condition
outside of $[-a_n, a_n]$.
%
%
%

\begin{theorem}\label{density_thm}
  For any bandwidth sequence with
  $h_n\to 0$,
  $h_n^p\SC_n \to 0$, $n h_n\to\infty$ and
  \begin{equation*}
  0<\liminf_n h_n (n
  \SC^2)^{1/(2p-1)}\le \limsup_n h_n (n
  \SC^2)^{1/(2p-1)}<\infty,
  \end{equation*}
  the kernel density estimator satisfies \Cref{general_bias_var_assump} with
  $S(k)=\sqrt{f^*\int_{\mathcal{X}}k(u)^2\, du}$, $B(k)$ given in
  \begin{NoHyper}\Cref{theorem:maximum-bias-lp}\end{NoHyper} and with
  $\gamma_b=p$ and $\gamma_s=-1/2$. In
  particular, \begin{NoHyper}\eqref{performance_approx_eq}\end{NoHyper} holds
  for the FLCI and OCI criteria. Furthermore, we can take
  $b_{n, h, f}=b_{n, h, f}^*$ to be nonrandom, and $E_{f}Z_{n, h, f}=0$ and
  $E_{f}Z_{n, h, f}^2\to S(k)$ uniformly over $\mathcal{F}_n$, so that
  \begin{NoHyper}\eqref{performance_approx_eq}\end{NoHyper}
  holds for the RMSE criterion.
\end{theorem}
\begin{proof}
  We have
\begin{equation}\label{density_estimate_expansion_eq}
  \hat T(h;k) = T(f) + h^p \SC b_{n, h, f} + (nh)^{-1/2}Z_{n, h, f}
\end{equation}
where
\begin{equation*}
  b_{n, h, f} = h^{-p}\SC^{-1}[E_f\hat T(h;k)-T(f)] = h^{-p}\SC^{-1}\frac{1}{h}\int_{\mathcal{X}}k(x/h)[f(x)-f(0)]\, dx
\end{equation*}
is nonrandom and can be taken to be equal to $b_{n, h, f}^*$, and
\begin{equation*}
Z_{n, h, f}=\frac{1}{\sqrt{nh}}\sum_{i=1}^n [k(X_i/h) - E_{f}k(X_i/h)].
\end{equation*}
Once $h_n$ is small enough relative to $a_n$ and $f^*$,
the set of possible biases for the class $\mathcal{F}_n$ will be the same as
for the Taylor or Hölder class $\mathcal{F}(\SC)$, without the additional
local restriction of $f(x)$ for $x$ near zero, or the restriction that $f$ be a
density (note, in particular, that, letting $C$ be a bound on the support of the
kernel $k$, the bias depends only
on $f(x)$ for $x$ in $[-C h_n, C h_n]$, and that the first $p-1$ derivatives of
$f$ at zero can be
taken to be equal to zero without loss of generality, so that, for any function
$f$ satisfying the Hölder or Taylor condition, $f(x)$ is bounded from below
by $f^*-a_n- \tilde C \SC_{n}h_n^{p}$ on this set for some constant $\tilde C$;
this function can then be extrapolated so that it is positive on $[-a_n, a_n]$
while maintaining the Hölder or Taylor condition, and then defined outside
of $[-a_n, a_n]$ so that it integrates to one), so that
\begin{equation*}
  \left\{b_{n, h, f} : f\in\mathcal{F}_n \right\} = \left\{h^{-p}\SC^{-1}
    \frac{1}{h}\int_{\mathcal{X}}k(x/h)[f(x)-f(0)]\, dx : f\in\mathcal{F}(M) \right\}.
\end{equation*}
By the renormalization property of $\mathcal{F}$ ($f\in\mathcal{F}(1)$ iff.
$x\mapsto h^{p}\SC f(x/h)$ is in $\mathcal{F}(\SC)$), the set in the above
display remains the same if $h$ and $\SC$ are each replaced by $1$. Thus, the
expressions for asymptotic bias derived in
\begin{NoHyper}\Cref{theorem:maximum-bias-lp}\end{NoHyper} holds exactly
with $\gamma_b=p$ and $B(k)$ given in
\begin{NoHyper}\Cref{theorem:maximum-bias-lp}\end{NoHyper}
(with $k$ playing the role of the
equivalent kernel, $k^*_q$).
For the variance, we have
\begin{equation*}
  \var_f(Z_{n, h, f}) = \frac{1}{h}\int_{\mathcal{X}} k(x/h)^2f(x)\, dx - \frac{1}{h} \left[\int_{\mathcal{X}} k(x/h)f(x)\, dx \right]^2.
\end{equation*}
The second term converges to $0$ uniformly over $\mathcal{F}_n$, and the first
term converges to $f^*\int_{\mathcal{X}} k(u)^2\, du$
uniformly over $\mathcal{F}_n$.
To verify the Lindeberg condition for asymptotic normality, note that
$\frac{1}{nh}\sum_{i=1}^n E_{f}K(X_i/h)^2\1{K(X_i/h)^2\ge \varepsilon n h}\to 0$
uniformly over $f\in\mathcal{F}_n$ since $nh\to\infty$.
\end{proof}

\subsection{First price auctions}\label{auctions_sec}

Our results for density estimation and nonparametric regression can be combined
with the delta method (\Cref{delta_method_thm}) to verify our conditions
for nonlinear functions of densities and nonparametric regression functions
evaluated at finitely many points.  To illustrate, we consider a setting from
the auctions literature involving a nonlinear function of a density.

\citet{guerre_optimal_2000} consider the problem of recovering valuations from
bids in a first price auction setting.  Here, we consider a simple version of
their setting with no covariates, and the same number of bidders in each
auction.  We observe $n$ total bids from symmetric independent private value
sealed bid auctions with $I>1$ bidders each, with independent valuations.  The
bids $\{X_i\}_{i=1}^n$ are then iid and, letting $f$
denote their density, the valuation for a bidder with bid $X_i=x$ is given by
\begin{equation*}
\xi(x;f, I) = x + \frac{1}{I-1} \frac{\int_{-\infty}^x f(t)\, dt}{f(x)}
\end{equation*}
\citep[Equation (3) in][]{guerre_optimal_2000}. Consider the problem of
estimating $T(f)=\xi(x_0;f, I)$ at a particular point $x_0$. Let
$\mathcal{F}_{GPV, n}$ be defined in the same way as the class $\mathcal{F}_n$
defined in \Cref{density_sec} with $\mathcal{X}=\mathbb{R}$, but with an
additional local restriction on the cumulative distribution function (CDF)
$\int_{-\infty}^{x}f(t)\, dt$:
$\mathcal{F}_{GPV, n}=\mathcal{F}_n\cap \{f: |\int_{-\infty}^x f(t) \, dt -
F^*|\le a_n\}$ where $F^*\in (0,1)$ is given.

Let
$\hat L(h;k)=(\hat L_{1}(h;k), \hat L_{2}(h, k))=\left(\frac{1}{n}\sum_{i=1}^n
  \1{X_i\le x_0}, \frac{1}{nh}\sum_{i=1}^{n}k((X_i-x_0)/h)\right)$, where $k$ is
a kernel satisfying the conditions in \Cref{density_sec} and $h$ satisfies the
conditions of \Cref{density_thm} for some $p$. Let
$\phi(L)=x_0+\frac{1}{I-1}\frac{L_1}{L_2}$. Then a plug-in estimator of $T(f)$
is given by $\hat{T}(h;k)=\phi(\hat{L}(h;k))$. To
verify \begin{NoHyper}\eqref{performance_approx_eq}\end{NoHyper}, we verify
\Cref{general_bias_var_multivariate_assump}. First, note that, by a slight
generalization of \Cref{density_thm}, $\hat L_2(h;k)$
satisfies~\eqref{density_estimate_expansion_eq}, where $b_{n, h, f}$ is
nonrandom and, for large enough $n$, ranges over the set
$[-B_{2}(k), B_{2}(k)]$, with $B_{2}(k)$ given by $B(k)$ in
\begin{NoHyper}\Cref{theorem:maximum-bias-lp}\end{NoHyper}, and with
$Z_{n, h, f}$ converging to a $N(0,S_{2}(k))$ distribution uniformly over
$\mathcal{F}_{GPV, n}$, where $S_{2}(k)=\sqrt{f^*\int k(u)\, du}$. (This follows
from the arguments in \Cref{density_thm} along with the observation that
the local restriction on $\int_{-\infty}^x f(t)\, dt$ does not restrict the set
of possible biases $b_{n, h, f}$ for large enough $n$.) Also, $\hat L_{1}(h;k)$
satisfies
$\hat L_{1}(h;k)=L_{1}(f)+h^{\gamma_b}M_{n} b_{n, h, f, 1} +
h^{\gamma_s}n^{-1/2}Z_{n, h, f, 1}$ with $\gamma_s=-1/2$, where
$b_{n, h, f, 1}=0$ and
$Z_{n, h, f, 1}=n^{1/2}h^{-\gamma_s}\left(\hat L_{1}(h;k)-L_{1}(h;k) \right)$
converges in probability to zero uniformly over $\mathcal{F}_{GPV, n}$. Thus,
\Cref{general_bias_var_multivariate_assump} holds with $b_{n, h, f}$
ranging over the set $\{0\}\times [-B_{2}(k), B_{2}(k)]$ and with
$\Sigma(k)= \bigl(\begin{smallmatrix}
  0 & 0 \\
  0 & S_{2}(k)
  \end{smallmatrix}
  \bigr)$ and $\phi'(L^*)=\frac{1}{I-1}[\frac{1}{f^*}, -\frac{F^*}{f^*}]$. It
  follows that \begin{NoHyper}\eqref{performance_approx_eq}\end{NoHyper} holds
  for the FLCI and OCI criteria, with $\gamma_s=-1/2$ and $\gamma_b=p$,
  $B(k)=B_{2}(k)\frac{F^*}{(I-1)f^*}$, and
  $S(k)=S_{2}(k)\frac{{F^*}^2}{(I-1)^2{f^*}^2}$. Note, however, that, since a
  density estimator appears in the denominator of the estimator of $T(f)$, the
  RMSE may not even be finite, and so truncation will be needed to apply our
  results to the RMSE criterion.

  We note that the class $\mathcal{F}_{GPV, n}$ places assumptions conditions
  directly on the bid distribution, and does not incorporate additional
  restrictions that may arise from the assumption that $f$ arises from an
  equilibrium in a first price auction model. We leave for future research
  whether such restrictions place sharper bounds on the bias, as well as the
  question of deriving primitive conditions on the value distribution for our
  smoothness assumptions on the bid distribution. Such questions are addressed
  by \citet{guerre_optimal_2000}, although they focus on a slightly different
  setting, since they consider rate optimality in the supremum norm for
  estimation of the value distribution (rather than asymptotic constants for
  estimation of the function $\xi(x;f, I)$ at a given point $x_0$).

\section{Additional details for applications}\label{sec:addit-deta-appl}
This \namecref{sec:addit-deta-appl} gives additional details for applications in
\begin{NoHyper}\Cref{sec:applications}\end{NoHyper}.
\Cref{sec:rd-with-different} calculates the efficiency gain from using different
bandwidths on either side of the cutoff in sharp RD\@.
\Cref{sec:optim-kern-details} gives details of optimal kernel calculations
discussed in \begin{NoHyper}\Cref{sec:inference-point-theory}\end{NoHyper}.
\Cref{sec:kernel-constants} gives the kernels constants
$\int_{\mathcal{X}}k^{*}_{\po}(u)^{2}\,\dd u$, and $\mathcal{B}_{p, \po}(k)$ for
selected kernels.

\subsection{Regression discontinuity with different band\-widths on either side of
  the cutoff}\label{sec:rd-with-different}

We consider a slightly more general setup than that considered in
\begin{NoHyper}\Cref{sec:sharp-regr-disc}\end{NoHyper}. Consider
estimating a parameter $T(f)$, $f\in\mathcal{F}$, using a class of estimators
$\hat{T}(h_{+}, h_{-};k)$ indexed by two bandwidths $h_{-}$ and $h_{+}$. Suppose
that the worst-case (over $\mathcal{F}$) performance of
$\hat{T}(h_{+}, h_{-};k)$ according to a given criterion satisfies
\begin{equation}\label{eq:two-bandwidths}
  R(\hat{T}(h_{+}, h_{-};k))=\tilde{R}(\SC B(k)(h_{-}^{\gamma_{b}}+h_{+}^{\gamma_{b}}),
  n^{-1/2}(S_{+}(k)^{2}h_{+}^{2\gamma_{s}}+S_{-}(k)^{2}h_{-}^{2\gamma_{s}})^{1/2})(1+o(1)),
\end{equation}
where $\tilde{R}(b, s)$ denotes the value of the criterion when
$\hat{T}(h_{+}, h_{-};k)-T(f)\sim N(b, s^{2})$, and $S(k)>0$ and $B(k)>0$. Assume
that $\tilde{R}$ satisfies~\begin{NoHyper}\eqref{eq:homo-1}\end{NoHyper}.

In the RD application in \begin{NoHyper}\Cref{sec:sharp-regr-disc}\end{NoHyper},
if \begin{NoHyper}\Cref{x_assump,sigma_assump}\end{NoHyper} hold (with the
requirement that $\sigma^{2}(x)$ is continuous $0$ replaced by right- and
left-continuity of $\sigma^{2}_{+}(x)$ and $\sigma^{2}_{-}(x)$), then
Condition~\eqref{eq:two-bandwidths} holds with $\gamma_{s}=-1/2$,
$\gamma_{b}=2$,
$S_{+}(k)=\sigma^{2}_{+}(0)\int_{0}^{\infty}k_{1}^{*}(u)^{2}\, \dd u/d$,
$S_{-}(k)=\sigma^{2}_{-}(0)\int_{0}^{\infty}k_{1}^{*}(u)^{2}\, \dd u/d$, and
$B(k)=-\int_{0}^{\infty}u^{2}k_{1}^{*}(u)\dd u/2$.

Let $\rho=h_{+}/h_{-}$ denote the ratio of the bandwidths, and let $t$ denote
the ratio of the leading worst-case bias and standard deviation terms,
\begin{equation*}
  t=\frac{\SC B(k)(h_{-}^{\gamma_{b}}+h_{+}^{\gamma_{b}})}{
    n^{-1/2}(S_{+}(k)^{2}h_{+}^{2\gamma_{s}}+S_{-}(k)^{2}h_{-}^{2\gamma_{s}})^{1/2}
  }= h_{-}^{\gamma_{b}-\gamma_{s}}\frac{\SC B(k)(1+\rho^{\gamma_{b}})}{
    n^{-1/2}
    (S_{+}(k)^{2}\rho^{2\gamma_{s}}+S_{-}(k)^{2})^{1/2} }.
\end{equation*}
Substituting $h_{+}=\rho h_{-}$ and $h_{-}=(tn^{-1/2}
(S_{+}(k)^{2}\rho^{2\gamma_{s}}+S_{-}(k)^{2})^{1/2}\SC^{-1}B(k)^{-1}(1+\rho^{\gamma_{b}})^{-1}
)^{1/(\gamma_{b}-\gamma_{s})}$ into~\eqref{eq:two-bandwidths} and using
linearity of $\tilde{R}$ gives
\begin{equation*}
  \begin{split}
    R(\hat{T}(h_{+}, h_{-};k))&=
    \tilde{R}(\SC B(k)h_{-}^{\gamma_{b}}(1+\rho^{\gamma_{b}}),
    h_{-}^{\gamma_{s}}n^{-1/2}(S_{+}(k)^{2}\rho^{2\gamma_{s}}+S_{-}(k)^{2})^{1/2})(1+o(1))\\
    &=\SC^{1-r}n^{-r/2} (1+\varsigma(k)^{2} \rho^{2\gamma_{s}})^{r/2}
    \left(1+\rho^{\gamma_{b}}\right)^{1-r}S_{-}(k)^{r}B(k)^{1-r}
    \tilde{R}(t,1)(1+o(1)),
  \end{split}
\end{equation*}
where $r=\gamma_{b}/(\gamma_{b}-\gamma_{s})$ is the rate exponent, and
$\varsigma(k)=S_{+}(k)/S_{-}(k)$ is the ratio of the variance constants.
Therefore, the optimal bias-sd ratio is given by
$t^{*}_{R}=\argmin_{t>0}\tilde{R}(t,1)$, and depends only on the performance
criterion. The optimal bandwidth ratio $\rho$ is given by
\begin{equation*}
  \rho_{*}=\argmin_{\rho}(1+\varsigma(k)^{2} \rho^{2\gamma_{s}})^{r/2}
  \left(1+\rho^{\gamma_{b}}\right)^{1-r}=\varsigma(k)^{\frac{2}{\gamma_{b}-2\gamma_{s}}},
\end{equation*}
and doesn't depend on the performance criterion.

Consequently, inference that restricts the two bandwidths to be the same
(i.e.~restricting $\rho=1$) has asymptotic efficiency given by
\begin{equation*}
  \begin{split}
    \lim_{n\to\infty} \frac{\min_{h_{+}, h_{-}}R(\hat{T}(h_{+}, h_{-};k))}{
      \min_{h}R(\hat{T}(h;k))}&= \left(\frac{(1+\varsigma(k)^{2}
        \rho_{*}^{2\gamma_{s}})^{\gamma_{b}/2}
        \left(1+\rho_{*}^{\gamma_{b}}\right)^{-\gamma_{s}}}{
        (1+\varsigma(k)^{2})^{\gamma_{b}/2} 2^{-\gamma_{s}}}
    \right)^{\frac{1}{\gamma_{b}-\gamma_{s}}} \\
    & = 2^{r-1} \frac{\left(1+ \varsigma(k)^{\frac{2r}{2-r}}\right)^{1-r/2}}{
      (1+\varsigma(k)^{2})^{r/2}}.
  \end{split}
\end{equation*}
In the RD application in
\begin{NoHyper}\Cref{sec:sharp-regr-disc}\end{NoHyper},
$\varsigma(k)=\sigma_{+}(0)/\sigma_{-}(0)$, and $r=4/5$. The display above
implies that the efficiency of restricting the bandwidths to be the same on
either side of the cutoff is at least 99.0\% if
$2/3\leq \sigma_{+}/\sigma_{-}\leq 3/2$, and the efficiency is still 94.5\% when
the ratio of standard deviations equals 3. There is therefore little gain from
allowing the bandwidths to be different.

\subsection{Optimal kernels for inference at a
  point}\label{sec:optim-kern-details}

The optimal equivalent kernel under the Taylor class $\FSY{p}(\SC)$ solves
\begin{NoHyper}\Cref{eq:minimax-problem}\end{NoHyper} in the main
text. The solution is given by
\begin{equation*}
  k_{SY, p}(u)=\left(b+\textstyle\sum_{j=1}^{p-1}\alpha_{j}u^{j}-\abs{u}^{p}\right)_{+}
  -\left(b+\textstyle\sum_{j=1}^{p-1}\alpha_{j}u^{j}+\abs{u}^{p}\right)_{-},
\end{equation*}
the coefficients $b$ and $\alpha$ solving
\begin{align*}
  \int_{\mathcal{X}}u^{j}k_{SY, p}(u)\, \dd u&=0,\qquad j=1,\dotsc,p-1,&\text{and}&&
  \int_{\mathcal{X}}k_{SY, p}(u)\, \dd u&=1.
\end{align*}
For $p=1$, the triangular kernel $k_{\text{Tri}}(u)=(1-\abs{u})_{+}$ is optimal
both in the interior and on the boundary. In the interior for $p=2$,
$\alpha_{1}=0$ solves the problem, yielding the Epanechnikov kernel
$k_{\text{Epa}}(u)=\frac{3}{4}(1-u^{2})_{+}$ after rescaling. For other cases,
the solution can be easily found numerically. \Cref{fig:sy} plots the
optimal equivalent kernels for $p=2$, $3$, and $4$, rescaled to be supported on
$[0,1]$ and $[-1,1]$ in the boundary and interior case, respectively.

The optimal equivalent kernel under the Hölder class $\FHol{2}(\SC)$ has the
form of a quadratic spline with infinite number of knots on a compact interval.
In particular, in the interior, the optimal kernel is given by
$f_{\text{Höl},2}^{\text{Int}}(u)/\int_{-\infty}^{\infty}f^{\text{Int}}_{\text{Höl},2}(u)\, \dd
u$, where
\begin{equation*}
  f^{\text{Int}}_{\text{Höl},2}(u)=1-\frac{1}{2}x^{2}+\sum_{j=0}^{\infty}(-1)^{j}(\abs{x}-k_{j})_{+}^{2},
\end{equation*}
and the knots $k_{j}$ are given by
$k_{j}=\frac{(1+q)^{1/2}}{1-q^{1/2}}(2-q^{j/2}-q^{(j+1)/2})$, where $q$ is a
constant $q=(3+{\sqrt{33}-\sqrt{26+6\sqrt{33}}})^{2}/16$.

At the boundary, the
optimal kernel is given by
$f_{\text{Höl},2}^{\text{Bd}}(u)/\int_{-\infty}^{\infty}f^{\text{Bd}}_{\text{Höl},2}(u)\, \dd
u$, where
\begin{equation*}
  f_{\text{Höl},2}^{\text{Bd}}(u)=(1-x_{0}x+x^{2}/2)\1{0\leq x\leq x_{0}}
  +(1-x_{0}^{2})f^{\text{Int}}_{\text{Höl},2}((x-x_{0})/(x_{0}^{2}-1))\1{x>x_{0}},
\end{equation*}
with $x_{0}\approx 1.49969$, so that for $x>x_{0}$, the optimal boundary kernel
is given by a rescaled version of the optimal interior kernel. The optimal
kernels are plotted in \Cref{fig:hol2}.

\subsection{Kernel constants}\label{sec:kernel-constants}
For the uniform, triangular, and Epanechnikov kernels, the kernel constants
$\int_{\mathcal{X}}k^{*}_{\po}(u)^{2}\, \dd u$,
$\mathcal{B}_{p, \po}^{\text{T}}(k)$, and
$\mathcal{B}^{\textnormal{Höl}}_{p, \po}(k)$ discussed in
\begin{NoHyper}\Cref{sec:inference-point-theory}\end{NoHyper} involve integrals that can be
computed in closed form. \Cref{tab:lp-boundary-constants} gives these constants
for the case in which the point of interest is an interior point, and
\Cref{tab:lp-interior-constants} gives them for the boundary case.

\section{Data-driven Bandwidths}\label{data-driven_bw_sec_append}

This \namecref{data-driven_bw_sec_append} considers CIs with the bandwidth
chosen based on the data, with the smoothness constant $\SC$ treated as unknown.
In particular, we formalize the statements
in \begin{NoHyper}\Cref{sec:pract-impl}\end{NoHyper} regarding honesty and
near-optimality of CIs based on the rule-of-thumb bandwidth suggested in that
\namecref{sec:pract-impl}, over a regularity class that imposes further
restrictions.

Consider the regression setting in
\begin{NoHyper}\Cref{sec:setup-estimators}\end{NoHyper}. Let $\mathcal{F}(\SC)$
denote the Taylor or Hölder class defined in
\begin{NoHyper}\Cref{sec:inference-point-theory}\end{NoHyper}, which
places the bound $\SC$ on the $p$th derivative of the regression function. Let
$\mathcal{F}(\SC;\eta)$ denote the class that imposes this bound only over
$x\in [-\eta, \eta]$.
%
We note that all of our asymptotic results for $\mathcal{F}(\SC)$ hold for
$\mathcal{F}(\SC;\eta)$ as well.
%
Let $\hat T_{\po}(h;k)$ denote the $\po$th order local polynomial estimator,
with $\po\ge p-1$. Let
$h_n=h(\SC)=(n^{-1/2}S(k)t/(\SC B(k)))^{1/(\gamma_b-\gamma_s)}$ denote a
sequence of bandwidths corresponding to bias-sd ratio $t$. Here, $B(k)$ and
$S(k)$ are given in
\begin{NoHyper}\Cref{theorem:maximum-bias-lp}\end{NoHyper}
and $\gamma_b=p$ and $\gamma_{s}=-1/2$.
Let $r=2p/(2p-1)$ denote the rate exponent.
It follows from the results in the main text that the CI $\{\hat T_{\po}(h_n;k)\pm \widehat{\se}(h_n;k)\cdot \cv_{1-\alpha}(t)\}$ has
correct asymptotic coverage, and it is near-optimal if highly efficient choices
for $t$ and $k$ are used.

We consider the CI
$\{\hat T_{\po}(\hat h;k)\pm \widehat{\se}(\hat h;k)\cdot \cv_{1-\alpha}(t)\}$,
which uses a data-driven bandwidth $\hat h$ to estimate the optimal bandwidth
$h_n=h(\SC)$, thereby avoiding the requirement of prior knowledge of $\SC$. As
discussed in the main text, results from \citet{low97}, \citet{CaLo04} and
\citet{ArKo18optimal} imply that it is impossible for such a CI to achieve
coverage and near-optimality over $\mathcal{F}(\SC;\eta)$ when $\SC$ is unknown.
We therefore consider a class $\mathcal{G}(\SC)\subsetneq \mathcal{F}(\SC;\eta)$
that imposes additional conditions that allow $M$ to be estimated consistently.
We allow $\mathcal{G}(\SC)$ to depend directly on the sample size as well, but
we leave this implicit in the notation.
%
\Cref{est_h_general_results} presents results under high level
consistency conditions on $\hat{h}$ over the class $\mathcal{G}(\SC)$.
\Cref{polynomial_approx_conditions_sec} defines a particular class
$\mathcal{G}(\SC)$ that formalizes the notion that local smoothness of $f$ is no
smaller than its smoothness at large scales, and verifies that the rule-of-thumb
bandwidth suggested in \begin{NoHyper}\Cref{sec:pract-impl}\end{NoHyper}
leads to honest CIs over this class. \Cref{est_h_lower_bounds_sec}
derives asymptotic efficiency bounds that show formally that the CI with
rule-of-thumb bandwidth considered in
\Cref{polynomial_approx_conditions_sec} is highly efficient over the
class $\mathcal{G}(\SC)$. In particular, it is impossible to substantively
improve upon this CI using the additional restrictions in the class
$\mathcal{G}(\SC)$. \Cref{est_h_limiting_model_sec} presents auxiliary
results and intuition for the efficiency bounds presented in
\Cref{est_h_lower_bounds_sec}.

\subsection{General results for estimated
  \texorpdfstring{$h$}{h}}\label{est_h_general_results}

We maintain \begin{NoHyper}\Cref{x_assump,sigma_assump}\end{NoHyper}. We make
the following additional assumptions on the kernel.

\begin{assumption}\label{kernel_assump_est_h}
  The kernel $k$ is bounded and Lipschitz continuous with finite support.
\end{assumption}

\begin{theorem}\label{general_est_h_theorem}
  Let $h(\SC)=(n^{-1/2}S(k)t/(\SC B(k)))^{2/(2p+1)}$ where $t>0$. Let $\hat h$
  be a bandwidth sequence, which may depend on the data, such that
  $\hat{h}/h(\SC)\stackrel{p}{\to} 1$ and $n h(\SC)\to \infty$ uniformly over
  $\cup_{\SC\in[\underline \SC_n, \overline \SC_n]} \mathcal{G}(\SC)$, where
  $\mathcal{G}(\SC)\subset \mathcal{F}(\SC;\eta)$. Let $\widehat\se(h;k)$ be a
  standard error such that $\widehat\se(\hat h;k)/\sd_f(\hat h;k)$ converges in
  probability to one uniformly over
  $\cup_{\SC\in[\underline \SC_n, \overline \SC_n]} \mathcal{G}(\SC)$. Let
  \begin{NoHyper}\Cref{sigma_assump}\end{NoHyper} and
  \Cref{kernel_assump_est_h} hold, and let
  \begin{NoHyper}\Cref{x_assump}\end{NoHyper} hold for any sequence
  $\SC_n\in[\underline\SC_n, \overline\SC_n]$. Then
  \begin{equation*}
    \liminf_{n\to\infty}\inf_{f\in\cup_{\SC\in[\underline
        \SC_n, \overline \SC_n]}
      \mathcal{G}(\SC)} P_f\left(T(f)\in \left(\hat T_{\po}(\hat h;k)\pm \widehat\se(\hat h;k)\cv_{1-\alpha}(t) \right) \right)\ge 1-\alpha.
  \end{equation*}
The length of the CI satisfies
\begin{equation*}
\lim_{n\to\infty}\sup_{M\in[\underline M_n, \overline M_n]}\sup_{f\in \mathcal{G}(M)}
  P_f\left(\left| \frac{2 \widehat\se(\hat h;k) \cv_{1-\alpha}(t)}
    {2 n^{-r/2} \SC^{1-r} S(k)^{r}B(k)^{1-r}  t^{r-1} \cv_{1-\alpha}(t)}
  - 1 \right| > \delta \right)
  \to 0
\end{equation*}
for any $\delta>0$.
\end{theorem}

To prove this theorem, let $\SC_n\in[\underline \SC_n, \overline \SC_n]$ be
given, and let $f_n$ be a sequence of functions in $\mathcal{G}(\SC_n)$.  Let $h_n=h(\SC_n)$.
For any sequence $c_n\to 0$, the coverage probability under $f_n$ is bounded from below by
\begin{equation*}
  P_{f_n}\left(\left| \frac{\hat T_{\po}(h_n;k)-T(f_n)}{\widehat \se(\hat h;k)} \right|\le \cv_{1-\alpha}(t)(1-c_n) \right)
  - P_{f_n}\left(\left| \frac{\hat T_{\po}(\hat h;k)-\hat T_{\po}(h_n;k)}{\widehat \se(\hat h;k)} \right|> \cv_{1-\alpha}(t)c_n \right).
\end{equation*}
For the first term, we first note that
\begin{NoHyper}\Cref{two_R_thm}\end{NoHyper} continues to hold with
$\sqrt{1/r-1}$ replaced by $t$ and $\hrmse$ replaced by $h_n$, with obvious
modifications to the proof. The first term is asymptotically bounded from below
by $1-\alpha$ by
\begin{NoHyper}\Cref{theorem:maximum-bias-lp}\end{NoHyper} and this
generalization of \begin{NoHyper}\Cref{two_R_thm}\end{NoHyper}, applied
with $\widehat\se(\hat h;k)(1-c_n)$ playing the role of the standard error in
\begin{NoHyper}\Cref{two_R_thm}\end{NoHyper}
(note that, by
\begin{NoHyper}\Cref{theorem:maximum-bias-lp}\end{NoHyper}
and
the assumptions on $\hat h$, $\widehat\se(\hat h;k)/[n^{-1/2}h_n^{-1/2}S(k)]$
converges in probability to one under $f_n$). The second term will converge to
zero for $c_n$ decreasing slowly enough so long as
$\sqrt{nh_n}\left(\hat T_{\po}(\hat h;k)-\hat T_{\po}(h_n;k) \right)$ converges
in probability to zero (again using the fact that
$\widehat\se(\hat h;k)/[n^{-1/2}h_n^{-1/2}S(k)]$ converges in probability to
one).

Let
\begin{equation*}
  a_n(h)=\left(\frac{1}{n h}\sum_{i=1}^{n}k(x_i/h)m_{\po}(x_i/h)m_{\po}(x_i/h)' \right)^{-1}e_1,
  \quad
  b_n(x_i;h)=\frac{1}{nh}m_{\po}(x_i/h)k(x_i/h)
\end{equation*}
and let $w_{\po}^n(x;h, k)=a_n(h)'b_n(x_i;h)$.
We have
\begin{multline}\label{T_diff_eq_est_h}
  \sqrt{n h_n}\left[\hat T_{\po}(h_n;k) - \hat T_{\po}(\hat h;k) \right]
    = \sqrt{n h_n}\sum_{i=1}^n [w_{\po}^n(x_i;h_n, k)-w_{\po}^n(x_i;\hat h, k)] y_i \\
  = \sqrt{n h_n}\sum_{i=1}^n [w_{\po}^n(x_i;h_n, k)-w_{\po}^n(x_i;\hat h, k)] f(x_i)\\
   + \sqrt{n h_n}\sum_{i=1}^n [w_{\po}^n(x_i;h_n, k)-w_{\po}^n(x_i;\hat h, k)] u_i.
\end{multline}
Using a Taylor approximation to $f(x_i)$ around $x=0$ and the fact that
$\sum_{i=1}^{n}w_{\po}^n(x_i;h, k)x_i^j=0$ for $j<p$, it follows that the first term is
bounded by
\begin{equation*}
  \sqrt{n h_n}\SC_n \sum_{i=1}^n |w_{\po}^n(x_i;h_n, k)-w_{\po}^n(x_i;\hat h, k)| \frac{\abs{x_{i}}^{p}}{p!}
  =\frac{t S(k)}{B(k)p!}\sum_{i=1}^n |w_{\po}^n(x_i;h_n, k)-w_{\po}^n(x_i;\hat h, k)|\abs*{\frac{x_{i}}{h_{n}}}^{p},
\end{equation*}
where we substitute
$M_n=t n^{-1/2}S(k)/(B(k)h_n^{p+1/2})$.
Letting $C$ be a bound on the support of the kernel $k$, we have $|x_i|\le
C\max\{\hat h, h_n\}$ for any $x_i$ such that the summand is nonzero.  Thus, on
the event $\hat h\le 2 h_n$, the above display is bounded by $\frac{(2C)^p t
  S(k)}{B(k)p!}$ times
\begin{equation*}
  \sum_{i=1}^n |w_{\po}^n(x_i;h_n, k)-w_{\po}^n(x_i;\hat h, k)|.
\end{equation*}
Using the fact that
$w_{\po}^n(x_i;h_n, k)-w_{\po}^n(x_i;\hat{h}, k)=a_n(h_n)'[b_n(x_i;h_n)-b_n(x_i;\hat{h})]
+ [a_n(h)-a_n(\hat h)]' b_n(x_i;\hat{h})$, it follows that the above display is
bounded by
\begin{equation*}
  \|a_n(h_n)\| \sum_{i=1}^n \| b_n(x_i;h_n)-b_n(x_i;\hat h) \|
  + \|a_n(h_n)-a_n(\hat h)\| \sum_{i=1}^n \|b_n(x_i;\hat h) \|.
\end{equation*}
Similarly, the last term in~\eqref{T_diff_eq_est_h} is bounded by
\begin{equation*}
  \|a_n(h_n)\| \left\|\sqrt{n h_n}\sum_{i=1}^n [b_n(x_i;h_n)-b_n(x_i;\hat h)] u_i\right\|
  + \|a_n(h_n)-a_n(\hat h)\| \left\|\sqrt{n h_n}\sum_{i=1}^n b_n(x_i;\hat h) u_i\right\|.
\end{equation*}

Both of these quantities converge in probability to zero by the following lemma.

\begin{lemma}
  Suppose that \begin{NoHyper}\Cref{x_assump}\end{NoHyper} and
  \Cref{kernel_assump_est_h} hold. Let $\tilde g(x)=k(x)x^j$ or
  $\tilde g(x)=|k(x)x^j|$ for some $j\ge 0$. Then
  \begin{equation*}
    \lim_{\delta\to 0}\limsup_{n\to\infty} \sup_{s\in[1-\delta,1+\delta]} \frac{1}{n h_n}\sum_{i=1}^n
    \left| \tilde g(x_i/(s h_n))-\tilde g(x_i/ h_n) \right|= 0.
  \end{equation*}
  and
  \begin{equation*}
    \lim_{\delta\to 0}\limsup_{n\to\infty} \sup_{s\in[1-\delta,1+\delta]} \left| \frac{1}{n s h_n}\sum_{i=1}^n
      \tilde g(x_i/(s h_n))- d\int_{\mathcal{X}}\tilde g(u)\, du \right| = 0.
  \end{equation*}
  If, in addition, \begin{NoHyper}\Cref{sigma_assump}\end{NoHyper} holds, then,
  for all $\varepsilon>0$,
  \begin{equation*}
    \lim_{\delta\to 0}\limsup_{n\to\infty} \sup_{s\in[1-\delta,1+\delta]} P\left(\sup_{s\in[1-\delta,1+\delta]}
    \left| \frac{1}{\sqrt{nh_n}}\sum_{i=1}^n [\tilde g(x_i/(s h_n))-\tilde g(x_i/h_n)]u_i\right|>\varepsilon \right) = 0.
  \end{equation*}
\end{lemma}
\begin{proof}
  By \begin{NoHyper}\Cref{x_assump}\end{NoHyper}, the second display in the
  lemma follows from the first. By \Cref{kernel_assump_est_h}, for large enough
  $C$,
  $|\tilde g(u)-\tilde g(u')|\le C\abs{u-u'}\1{\max\{\abs{u}, \abs{u'}\}\le C}$.
  Thus, the first display in the lemma is bounded by
  \begin{multline*}
    \lim_{\delta\to 0}\limsup_{n\to\infty} \sup_{s\in[1-\delta,1+\delta]}\frac{1}{n h_n}\sum_{i=1}^n C\cdot |s^{-1}-1| \1{\abs{x_i/h_n}\le 2C} \\
    = \lim_{\delta\to 0} \left[\sup_{s\in[1-\delta,1+\delta]}|s^{-1}-1| \right]
    \limsup_{n\to\infty} \frac{1}{n h_n}\sum_{i=1}^n
    C\cdot\1{|x_i/h_n|\le 2C} \\
    = \lim_{\delta\to 0}\left[\sup_{s\in[1-\delta,1+\delta]}|s^{-1}-1|
    \right]\int_{\mathcal{X}}\1{u\le 2C}\, du\cdot C = 0.
  \end{multline*}
  For the second part of the lemma, we have, for $s, \tilde s$ in a small enough
  neighborhood of $1$, letting $\overline\sigma^2$ denote a bound on
  $\sigma^2(x)$ in a neighborhood of zero,
  \begin{multline*}
    E\left(\sum_{i=1}^n\left[\frac{1}{\sqrt{nh_n}}\tilde g(x_i/(sh_n))-\tilde g(x_i/(\tilde sh_n)) \right]u_i \right)^2
      \le \overline\sigma^2\frac{1}{nh_n}\sum_{i=1}^n \left[\tilde g(x_i/(sh_n))-\tilde g(x_i/(\tilde sh_n)) \right]^2 \\
    \le \overline\sigma^2\frac{1}{nh_n}\sum_{i=1}^n C^2 \abs{x_i/h_n}^2 \abs{s^{-1}-\tilde s^{-1}}^2\1{|x_i/h_n|\le 2C}.
  \end{multline*}
  For large enough $n$, this is bounded by $|s^{-1}-\tilde s^{-1}|^2$
  times a constant that does not depend on $n$.
  The result now follows from Example 2.2.12 in \citet{van_der_vaart_weak_1996}.
  %
  %
  %
  %
  %
\end{proof}

Finally, for the last statement of the theorem, note that the length of the CI
is given by $2 \widehat\se(\hat h;k) \cv_{1-\alpha}(t)$ which, under the
sequence $f_n$, is equal to a $1+o_P(1)$ term times
\begin{equation*}
  2 n^{-1/2}h_n^{-1/2}S(k) \cv_{1-\alpha}(t)
  = 2 n^{-r/2} \SC_n^{1-r} S(k)^{r}B(k)^{1-r} t^{r-1} \cv_{1-\alpha}(t).
\end{equation*}

\subsection{Bounds based on global polynomial approximations}\label{polynomial_approx_conditions_sec}

We now verify the conditions of \Cref{general_est_h_theorem} in a
particular setting. In particular, we consider classes $\mathcal{G}$ that relate
$\SC$ to a global polynomial approximation to the regression function, along
with a plug-in bandwidth $\hat h$ based on this assumption.

Let $\mathcal{F}(\SC)$ be the Taylor or Hölder class of order $p$, and let
$\mathcal{F}(\SC;\eta)$ denote the class that imposes this bound only over
$x\in [-\eta, \eta]$. Let $\tilde p\ge p$ be given. Let $Q_{\tilde p}f$ denote
the minimum mean squared error $\tilde p$th order polynomial predictor for the
regression function $f$:
\begin{equation*}
  Q_{\tilde p}f=\arg\min_{h} \int (f(x)-h(x))^2 d(x)\sigma^2(x)\, dx
\end{equation*}
where the minimum is taken over polynomials of order $\tilde p$. Here, $d(x)$ is
such that the $x_i$'s behave as if drawn from a distribution with density $x_i$,
as formalized in the \Cref{density_polynomial_predictor_assump} below.

Let $x_{\min}, x_{\max}$ be given with $-\infty<x_{\min}<x_{\max}<\infty$.  Let
\begin{equation*}
  J(f)=J(f;\tilde p, x_{\min}, x_{\max})=\sup_{x\in [x_{\min}, x_{\max}]}|[Q_{\tilde p}f]^{(p)}(x)|
\end{equation*}
denote the maximum $p$th derivative of the minimum mean squared error
$\tilde{p}$th order approximation of $f$.

Let $\varepsilon>0$ be given. Let
\begin{gather*}
  \mathcal{Q}(\SC, \tilde p, x_{\min}, x_{\max}, \varepsilon)=\{f: J(f) = \varepsilon\SC\},\\
  \mathcal{G}(\SC)=\mathcal{G}(\SC;\tilde p, \varepsilon, \eta, x_{\min}, x_{\max})
                                                             = \mathcal{F}(\SC, \eta)\cap \mathcal{Q}(\SC, \tilde p, x_{\min}, x_{\max}, \varepsilon)\cap \{f: \sup_{x}|f(x)|\le K\},
\end{gather*}
where $K$ is some large constant, and
\begin{equation*}
  \mathcal{H}(\underline \SC, \overline \SC)=\cup_{M\in[\underline M, \overline M]}
  \mathcal{G}(\SC;\tilde p, \varepsilon, \eta, x_{\min}, x_{\max}).
\end{equation*}
This class formalizes the notion that the $p$th derivative in a neighborhood of
zero is bounded by $\varepsilon^{-1}$ times the maximum $p$th derivative of a
global $\tilde p$th order global polynomial approximation. Setting
$\varepsilon=1$ corresponds to the suggestion in
the main text.

Let
\begin{equation*}
  \hat Q_{\tilde p}=\arg\min_{h} \sum_{i=1}^n (y_i-h(x_i))^2,\quad
  \hat J=\sup_{x\in [x_{\min}, x_{\max}]}|\hat Q_{\tilde p}^{(p)}(x)|
\end{equation*}

We make the following additional assumption on the $x_i$'s.

\begin{assumption}\label{density_polynomial_predictor_assump}
  For some bounded function $d(x)$ and a sequence $c_n$ with $c_n\to\infty$ and $c_n/\sqrt{n}\to 0$, we have, for each $j=0,\ldots, \tilde p$,
  \begin{equation*}
    c_n\left| \frac{1}{n}\sum_{i=1}^n x_{i}^{j}f_n(x_i)- \int u^{j}f_n(u)d(u)\, du \right|\to 0
  \end{equation*}
  for any uniformly bounded sequence of functions $f_n$.
  Furthermore, the $\tilde p+1$ by $\tilde p+1$ matrix with $(j, \ell)$th element
  given by $\int u^{j+\ell-2}d(u)\, du$ is invertible.
\end{assumption}

Given a sequence $c_n$ satisfying the conditions of
\Cref{density_polynomial_predictor_assump}, if the $x_i$'s are drawn iid from a
distribution with density $d(x)$ for which all moments are finite, then
\Cref{density_polynomial_predictor_assump} will hold with probability
approaching one.

We note the following consistency result for $\hat J$.

\begin{lemma}\label{Jf_consistency_lemma}
  Suppose \begin{NoHyper}\Cref{sigma_assump}\end{NoHyper} holds with
  $\sigma^2(x)$ bounded and that \Cref{density_polynomial_predictor_assump}
  holds. Then $c_n|\hat J-J(f)|\stackrel{p}{\to} 0$ uniformly over
  $\{f: \sup_x|f(x)|\le K\}$.
\end{lemma}
\begin{proof}
  Let $A$ denote the $\tilde p+1$ by $\tilde p+1$ matrix with $(j, \ell)$th
  element given by $\int u^{j+\ell-2}d(u)\, du$, and let $\hat A$ denote the
  sample analogue with $(j, \ell)$th element given by
  $\frac{1}{n}\sum_{i=1}^{n}x_i^{j+\ell-2}$. Let $b_f$ be the
  $(\tilde p+1)\times 1$ vector with $j$th element $\int u^j f(u)d(u)\, du$ and
  $\hat b$ be the sample analogue with $j$th element
  $\frac{1}{n}\sum_{i=1}^n x_i^{j-1} y_i$. Then $A^{-1}b_f$ gives the
  coefficients of the polynomial $Q_{\tilde p}f$, and $\hat A^{-1}\hat b$ gives
  the coefficients of the polynomial $\hat Q$. Let $s(A, b)$ denote the function
  that takes the maximum of the $p$th derivative of this polynomial over
  $[x_{\min}, x_{\max}]$, so that $J(f)=s(A, b_f)$ and $\hat{J}=s(\hat A, \hat b)$.
  Note that $|s(\hat A, \hat b) - s(A, b_f)|$ is bounded by
  $\max\{\|\hat A-A\|, \|\hat b-b_f\|\}$ times a constant that does not depend on
  $f$, so it suffices to show that $c_n\max\{\|\hat A-A\|, \|\hat b-b_f\|\}$
  converges in probability to zero uniformly over bounded $f$.

  We have $c_n\|\hat A-A\|\to 0$ by \Cref{density_polynomial_predictor_assump}.
  The $j$th element of $c_n(\hat b-b_f)$ is given by
  \begin{equation*}
    \frac{c_n}{n}\sum_{i=1}^n u_{i}x_i^{j-1}
    + c_n\left(\frac{1}{n}\sum_{i=1}^{n}f(x_i)x_i^{j-1} - \int f(u)u^{j-1}d(u)\, du \right).
  \end{equation*}
  The expectation of the square of the first term converges to zero, since it is
  bounded by $c_n^2/n^2$ times a sequence that converges to a constant by
  \Cref{density_polynomial_predictor_assump}. The last term converges to zero
  uniformly over bounded $f$ by \Cref{density_polynomial_predictor_assump}.
  Thus, $c_n\|\hat b-b_f\|\stackrel{p}{\to} 0$ uniformly over bounded $f$.
\end{proof}

Let $\SC_n$ and $\varepsilon_n$ be given, and consider honesty over the sequence
of classes
$\mathcal{G}(\SC_n;\tilde p, \varepsilon_n, \allowbreak \eta, x_{\min},
x_{\max})$.
Let $t$ be given, and let
$\hat h=(n^{-1/2}\hat S(k)t/(\hat\SC \hat{B}(k)))^{2/(2p+1)}$ where
$\hat S(k)/S(k)$ and $\hat B(k)/B(k)$ converge in probability to one uniformly
over $\mathcal{G}(\SC_n)$ (as discussed in
\begin{NoHyper}\Cref{sec:pract-impl}\end{NoHyper}, we can also
directly minimize the sample analogue of the criterion such that $t$ is the
asymptotically optimal bias-sd ratio). Then $\hat h$ will satisfy the conditions
of \Cref{general_est_h_theorem} so long as $\hat h/h(\SC_n)$ converges in
probability to one uniformly over $\mathcal{G}(\SC_n)$, where
\begin{equation*}
  h(\SC)=(n^{-1/2}S(k)t/(\SC B(k)))^{2/(2p+1)}.
\end{equation*}
For this, it suffices that $\hat{\SC}/\SC_n$ converges in probability to one
uniformly over $\mathcal{G}(\SC_n)$.

According to \Cref{Jf_consistency_lemma}, we can use the estimate
$\hat \SC =\varepsilon^{-1}\hat J$, which gives
\begin{equation*}
  \frac{\hat \SC}{\SC_n}-1=\frac{\varepsilon_n^{-1}[\hat J-J(f)]}{\SC_n}
  =o_P(1/(\varepsilon_n\SC_{n}c_n))
\end{equation*}
uniformly over
$\mathcal{G}(M;\tilde p, \varepsilon_n, \eta, x_{\min}, x_{\max})$. If
\Cref{density_polynomial_predictor_assump} holds for any $c_n$ with
$c_n/\sqrt{n}\to 0$, then this can be made to go to zero so long as
$\varepsilon_n M_n\sqrt{n}\to\infty$. Thus, the resulting CI is honest over the
class $\mathcal{H}(\underline M_n, \overline M_n)$ so long as
$\varepsilon_n \underline M_n\sqrt{n}\to\infty$, and such that
\begin{NoHyper}\Cref{x_assump}\end{NoHyper} holds for the sequences
$\underline M_n$ and $\overline M_n$. Note also that, if one uses
$\hat M=\tilde\varepsilon^{-1}\hat J$ where $\tilde\varepsilon<\varepsilon$
(thereby choosing $\varepsilon$ to be ``too small''), then the resulting CI will
be wider, but will still have correct coverage.

While \begin{NoHyper}\Cref{x_assump}\end{NoHyper} is stated as a high level
condition, note that, in order for this condition to hold with probability
approaching one when the $x_i$'s are drawn iid from a distribution satisfying
appropriate regularity conditions, we will need $nh_n\to\infty$ and $h_n\to 0$
for the given sequence $h_n$. This will be ensured for any sequence
$M_n\in [\underline M_n, \overline M_n]$ iff. $\underline M_n$ satisfies
$n\underline M_n^2\to\infty$ and $\overline M_n$ satisfies
$\overline M_n/n^p\to 0$ so that
$n (n\overline M_n^2)^{-1/(2p+1)}=n^{2p/(2p+1)}\overline M_n^{-2/(2p+1)}\to
\infty$. Also, note that we have assumed a uniform bound on the magnitude of the
regression function, which means that $\varepsilon_n\overline M_n$ must be
bounded uniformly over $n$ (although this condition could likely be relaxed).

\subsection{Lower bounds}\label{est_h_lower_bounds_sec}

The CI in \Cref{general_est_h_theorem} has the property that the ratio of its length
to the length of an ``oracle'' FLCI that uses the unknown true $M$ converges to
one. If the optimal kernel is used and the bias-sd ratio is chosen to be optimal
for FLCI length, then this CI is efficient among FLCIs over the class
$\mathcal{F}(\SC;\eta)$. Furthermore, it is highly efficient among all CIs that
are honest over the class $\mathcal{F}(\SC;\eta)$, since one can apply bounds
such as Corollary 3.3 in \citet{ArKo18optimal}. However, these results do not
apply to the class $\mathcal{G}(\SC)$ over which the feasible CI with estimated
optimal bandwidth has coverage, since
$\mathcal{G}(\SC)\subsetneq \mathcal{F}(\SC;\eta)$: they do not rule out the
possibility that this restricted class might allow for a more informative CI\@.
To address this, we now derive efficiency bounds for the class
$\mathcal{G}(\SC)=\mathcal{G}(\SC;\tilde p, \varepsilon, \eta, x_{\min}, x_{\max})$
used in \Cref{polynomial_approx_conditions_sec}.

\begin{theorem}\label{est_h_lower_bound_thm}
  Let $\SC$, $\varepsilon$, $\eta$ and $[x_{\min}, x_{\max}]$ be given. Suppose
  that \begin{NoHyper}\Cref{x_assump,sigma_assump}\end{NoHyper} hold with
  $\sigma(x)$ bounded from above and below away from zero and $u_i$ following a
  normal distribution, and that \Cref{density_polynomial_predictor_assump} holds
  with $d(x)$ strictly positive on some open set in
  $\mathbb{R}\backslash [-\eta, \eta]$. Then, if the constant $K$ used to define
  $\mathcal{G}(\SC)$ is large enough, the following holds. For any sequence of
  CIs $\{\hat T\pm \hat \chi\}$ with asymptotic coverage at least $1-\alpha$
  under $\mathcal{G}(M)$,
  \begin{equation*}
    \lim_{C\to\infty}\liminf_n \inf_{f\in\mathcal{G}(M)}
    E_{f_n} \min\{2n^{r/2}\hat \chi, C\}
    \ge
    \frac{2 M^{1-r} S(k^*)^{r}B(k^*)^{1-r}}{r^r(1-r)^{r-1}}
    \int_{z=-\infty}^{z_{1-\alpha}}(z_{1-\alpha}-z)^{r}\, d\Phi(z)
  \end{equation*}
  where $k^*$ minimizes $S(k^*)^{r}B(k^*)^{1-r}$.
\end{theorem}

If $\hat h$ and $\widehat{\se}(h;k)$ satisfy the conditions of
\Cref{general_est_h_theorem}, then, by
\Cref{est_h_lower_bound_thm}, the relative efficiency of any CI
$\{\hat T\pm \hat{\chi}\}$ to
$\{\hat T_q(\hat h;k)\pm \widehat{\se}(\hat h;k)\cv_{1-\alpha}(t)\}$ satisfies
the lower bound
\begin{multline*}
  \lim_{C\to\infty}\liminf_n \sup_{f\in\mathcal{G}(M)}
    \frac{E_{f} \min\{2n^{r/2}\hat \chi, C\}}
    {E_{f} \min\{2n^{r/2}\widehat{\se}(\hat k;k)\cv_{1-\alpha}(t), C\}} \\
  \ge
    \frac{\int_{z=-\infty}^{z_{1-\alpha}}(z_{1-\alpha}-z)^{r}\, d\Phi(z)}
    {r^r(1-r)^r \inf_{\tilde t} \tilde t^{r-1}\cv_{1-\alpha}(\tilde t)}
    \cdot \frac{S(k^*)^{r} B(k^*)^{1-r}}{S(k)^{r} B(k)^{1-r}}
    \cdot \frac{\inf_{\tilde t} \cv_{1-\alpha}(\tilde t)}{t^{r-1}\cv_{1-\alpha}(t)}.
\end{multline*}
The first term is the lower bound in Theorem E.1 of \citet{ArKo18optimal}, which
corresponds to the lower bound in Corollary 3.3 of that paper applied to the
case where the modulus $\omega(\delta)$ is proportional to $\delta^r$ (as is the
case in the relevant limiting experiment in the present setting; see
\Cref{est_h_limiting_model_sec}). The second term is the relative
efficiency of the kernel $k$, and the final term is the efficiency of the
bias-sd ratio used in the bandwidth $\hat h$ relative to the optimal bias-sd
ratio for FLCI construction.

We now prove \Cref{est_h_lower_bound_thm}. We begin by noting some
properties of the optimal kernel $k^*$.

\begin{lemma}\label{opt_kern_lemma_est_h}
  Let $\kappa^*$ solve
\begin{equation*}
  \max_{\kappa}\kappa(0)\quad\text{s.t.}\quad \int_{\mathcal{X}}\kappa(u)^2\, du\le 1,\, \kappa\in\mathcal{F}(1)
\end{equation*}
and let $k^*(x)=\kappa^*(x)/\int_{\mathcal{X}}\kappa(u)\, du$. Then $k^*$ has
finite support, and it minimizes $S(k)^{r}B(k)^{1-r}$ over kernels $k$.
Furthermore, $S(k^*)=[\sigma^2(0)/d]^{1/2}r \kappa^*(0)$ and
$B(k^*)=(1-r)\kappa^*(0)$, so that
$S(k^*)^{r}B(k^*)^{1-r}=[\sigma^2(0)/d]^{r/2}r^r(1-r)^{1-r}\kappa^*(0)$.
\end{lemma}
\begin{proof}
  The result follows from \citet{low95} and \citet{DoLo92}. See
  \Cref{est_h_optimal_kernel_sec}.
\end{proof}

The next lemma uses functions constructed from $\kappa^*$ to derive testing bounds.

\begin{lemma}\label{asymptotic_power_lemma}
  Suppose that the conditions of \Cref{est_h_lower_bound_thm} hold. Given
  $c\in\mathbb{R}$, let
  $\mathcal{K}_{c, n}=\{f\colon f(0)=c n^{-p/(2p+1)}\}\cap \mathcal{G}(\SC)$. Then, if
  the constant $K$ used to define $\mathcal{G}(\SC)$ is larger than a constant
  that depends only on $\varepsilon$ and $M$, there exists a sequence of
  functions $\tilde \kappa_{0,n}\in \mathcal{K}_{0,n}$ such that the following
  holds. For any $c\in\mathbb{R}$ and any sequence of tests with asymptotic size
  $\alpha$ under $\mathcal{K}_{c, n}$, the asymptotic power under
  $\tilde\kappa_{0,c}$ is no greater than
  \begin{equation*}
    \Phi\left(|c/\kappa^*(0)|^{(2p+1)/(2p)}\SC^{-1/(2p)}[d/\sigma^2(0)]^{1/2} -
      z_{1-\alpha} \right).
  \end{equation*}
\end{lemma}
\begin{proof}
  It suffices to prove the result for $c>0$. Let $A$ and $b_f$ be defined as in
  the proof of \Cref{Jf_consistency_lemma}, so that the coefficients of the
  minimum mean squared error $\tilde p$th order polynomial predictor are given
  by $A^{-1}b_f$. We first note that, under the conditions of the lemma, there
  exist bounded functions $f_1,\ldots, f_{\tilde p+1}$ supported on
  $\mathbb{R}\backslash [-\eta, \eta]$ such that the vectors
  $b_{f_1}, \ldots, b_{f_{\tilde p+1}}$ are linearly independent. Thus, these
  vectors span $\mathbb{R}^{\tilde p+1}$, which means that there exist functions
  $g_1,\ldots, g_{f_{\tilde p+1}}$, which are linear combinations of the $f_j$'s
  (and therefore also bounded and supported on
  $\mathbb{R}\backslash [-\eta, \eta]$) such that $b_{g_j}=e_j$ for each $j$,
  where $e_j$ denotes the $j$th standard basis vector.

We construct functions in the sets $\mathcal{K}_{c, n}$ as follows.
Let $\tilde g$ be a bounded function supported on $\mathbb{R}\backslash [-\eta, \eta]$
such that $J(\tilde g)=\varepsilon\SC$.  This function can be constructed by
finding a polynomial such that the supremum of the $p$th derivative over
$[x_{\min}, x_{\max}]$ is equal to $\varepsilon\SC$, and constructing a function
with the given polynomial predictor coefficients as a linear combination of the $g_j$s defined above.
%
Given a function $f$ supported on $[-\eta, \eta]$, the function
$b_{f,1}g_1+b_{f,2}g_2+\cdots+b_{f, \tilde p+1}g_{\tilde p+1}$
is supported on $\mathbb{R}\backslash [-\eta, \eta]$ and has the same polynomial
predictor coefficients as $f$.  Thus, the function
$f-(b_{f,1}g_1+b_{f,2}g_2+\cdots+b_{f, \tilde p+1}g_{\tilde p+1})+\tilde g$
has the same polynomial predictor coefficients as $\tilde g$.  It therefore
follows that, if $f\in\mathcal{F}(\SC;\eta)$ and $K$ is larger than some
constant that depends only on an upper bound for the elements of $b_{f}$ and the
functions $g_1,\ldots, g_{\tilde p+1}$ and $\tilde g$, this function will be in
$\mathcal{G}(\SC)$.

Let $\tilde \kappa_{c, \SC, n}$ be defined in this way with
the function $\kappa_{c, \SC, n}$ playing the role of $f$, where
$\kappa_{c, \SC, n}(x)=\SC h_{c, n}^p\kappa^*(x/h_{c, n})$ with $h_{c, n}=\tilde c
n^{-1/(2p+1)}$ where $\tilde c=|c/[M\kappa^*(0)]|^{1/p}$.
Note that $\kappa_{c, \SC, n}\in\mathcal{F}(\SC)$ by the renormalization property of
Taylor and Hölder classes.  Thus, once $n$ is large enough that the support
of $\kappa_{c, \SC, n}$ is contained in $[-\eta, \eta]$, we will have
$\tilde\kappa_{c, \SC, n}\in \mathcal{K}_{c, n}$.

It follows that, for large enough $n$, the power under $\tilde\kappa_{0,\SC, n}$ of a level $\alpha_n$ test of $\mathcal{K}_{c, n}$ is
bounded by the power under $\tilde\kappa_{0,\SC, n}$ of a test with rejection
probability no greater than $\alpha_n$ under $\tilde\kappa_{c, \SC, n}$.
By the Neyman-Pearson lemma and standard calculations, this is no greater than
$\Phi(s_n-z_{1-\alpha_n})$ where
\begin{multline*}
  s_n^2=\sum_{i=1}^n \left[\tilde\kappa_{c, \SC, n}(x_i)-\tilde\kappa_{0,\SC, n}(x_i) \right]^2\sigma^{-2}(x_i) =
  \SC^2 h_{c, n}^{2p}\sum_{i=1}^n \kappa^*(x_i/h_{c, n})^2\sigma^{-2}(x_i)\\
  + \sum_{i=1}^n \left[\sum_{j=1}^{\tilde p+1} g_{j}(x_i)\sigma^{-2}(x_i) \int \SC h_{c, n}^{p}\kappa^*(u/h_{c, n})u^{j-1}d(u)\, du \right]^2.
\end{multline*}
Note that
$h_{c, n}^{2p}= \tilde c^{2p} n^{-2p/(2p+1)}
=n^{-1}\tilde c^{2p+1}n^{1/(2p+1)}\tilde c^{-1}
= (n \tilde c n^{-1/(2p+1)})^{-1}\tilde c^{2p+1}$.
Thus, the first term equals
$\tilde c^{2p+1}\SC^2\frac{1}{n h_{c, n}}\sum_{i=1}^n \kappa^*(x_i/h_{c, n})^2\sigma^2(x_i)
  \to \sigma^{-2}(0)\tilde c^{2p+1}\SC^2d \int_{\mathcal{X}} \kappa^*(u)^2\, du$.
The last term is bounded from above by a constant times
\begin{equation*}
  n \left[h_{c, n}^p \int\kappa^*(u/h_{c, n})\, du \right]^2
  = n \left[h_{c, n}^{p+1} \int\kappa^*(v)\, dv \right]
  = n^{1-(2p+2)/(2p+1)} \tilde c^{(2p+2)/p} \left[\int\kappa^*(u)\, du \right]^2
  \to 0.
\end{equation*}
The result then follows by plugging in $\tilde c$ and noting $\int\kappa^*(u)^2\, du=1$.
\end{proof}

To derive the lower bound on expected length, we argue as in the proof of
Theorem C.2 in \citet{ArKo18sensitivity}. Consider the set
$\mathcal{I}(m)=\{\tilde c_n j/m: j\in \mathbb{Z}, \, |j|\le m^2\}$ where
$\tilde c_n=\kappa^*(0)\SC^{1/(2p+1)}[\sigma^2(0)/d]^{p/(2p+1)}n^{-p/(2p+1)}$.
Let $\hat T\pm \hat \chi$ be a CI with asymptotic coverage at least $1-\alpha$
over $\mathcal{G}(\SC)$, and let $\mathcal{N}(n, m)$ denote the number of
elements in $\mathcal{I}(m)$ that are in this confidence interval. Note that
$\min\{2\hat \chi, 2 \tilde c_n m\}\ge \tilde c_n [\mathcal{N}(m, n)-1]/m$. Let
$\kappa_{0,n}$ and $\mathcal{K}_{c, n}$ be as defined in
\Cref{asymptotic_power_lemma}. Let $\psi_{n, j}$ denote the test that
rejects when the point $\tilde c_n j/m\in\mathcal{N}(n, m)$ is not in the CI
$\hat T\pm \hat{\chi}$. Then $\psi_{n, j}$ is an asymptotically level $\alpha$
test of $\mathcal{K}_{c, n}$, so, by \Cref{asymptotic_power_lemma},
\begin{equation*}
  E_{\kappa_{0,n}} \mathcal{N}(m, n)
  =\sum_{j=-m^2}^{m^2} (1-E_{\kappa_{0,n}}\psi_{n, j})
  \ge \sum_{j=-m^2}^{m^2} (1-\Phi(|j/m|^{(2p+1)/2p}-z_{1-\alpha})) + o(1).
\end{equation*}
Thus, for all $m\in\mathbb{N}$,
$\lim_{C\to\infty}\liminf_{n}E_{\kappa_{0,n}}\min\{2\tilde c_n^{-1}\hat \chi, C\}$
is bounded from below by
\begin{multline*}
  \frac{1}{m}\sum_{j=-m^2}^{m^2} \Phi(z_{1-\alpha} - |j/m|^{(2p+1)/(2p)})
  = \frac{1}{m}\sum_{j=-m^2}^{m^2} \int \1{|j/m|^{(2p+1)/(2p)}\le z_{1-\alpha}-z}\, d\Phi(z) \\
  = \frac{1}{m}\sum_{j=-m^2}^{m^2} \int \1{|j|\le (z_{1-\alpha}-z)^{2p/(2p+1)} m}\, d\Phi(z) \\
  \ge
  \int_{z=-\infty}^{z_{1-\alpha}}\frac{1}{m}\min\left\{2\left[(z_{1-\alpha}-z)^{2p/(2p+1)}
      m - 1 \right], m\right\}\, d\Phi(z).
\end{multline*}
This converges to
$2\int_{z=-\infty}^{z_{1-\alpha}}(z_{1-\alpha}-z)^{2p/(2p+1)}\, d\Phi(z)$
by the Dominated Convergence Theorem.
Thus,
\begin{multline*}
  \lim_{C\to\infty}\liminf_n E_{\kappa_{0,n}} \min\{2n^{p/(2p+1)}\hat \chi, C\} \\
  \ge 2 \kappa^*(0) M^{1/(2p+1)}[\sigma^2(0)/d]^{p/(2p+1)}
  \int_{z=-\infty}^{z_{1-\alpha}}(z_{1-\alpha}-z)^{2p/(2p+1)}\, d\Phi(z) \\
  = 2 \kappa^*(0) M^{1-r}[\sigma^2(0)/d]^{r/2}
  \int_{z=-\infty}^{z_{1-\alpha}}(z_{1-\alpha}-z)^{r}\, d\Phi(z).
\end{multline*}
Plugging in
$S(k^*)^{r}B(k^*)^{1-r}=[\sigma^2(0)/d]^{r/2}r^r(1-r)^{1-r}\kappa^*(0)$
gives the result.

\subsection{Limiting model and optimal kernel}\label{est_h_limiting_model_sec}

In this \namecref{est_h_limiting_model_sec} we derive the properties of the optimal kernel given in
\Cref{opt_kern_lemma_est_h}. To do so, we apply results from \citet{low95}
and \citet{DoLo92} to the limiting model
\begin{equation}\label{est_h_limiting_model_eq}
  Y(dt) = f(t)\, dt + \lambda W(dt), \quad t\in\mathcal{X}
\end{equation}
where $\mathcal{X}=\mathbb{R}$ in the case where the point of interest is on the
interior of the support of $x_i$ and $\mathcal{X}=\hor{0, \infty}$ when it is on
the boundary. We also use this limiting model to give some intuitive motivation
for the efficiency bound in \Cref{est_h_lower_bound_thm}.

The white noise model~\eqref{est_h_limiting_model_eq} is the same model as in
\Cref{white_noise_sec}, with $\lambda$ playing the role of
$\sigma/\sqrt{n}$ in that \namecref{white_noise_sec}. \citet{BrLo96} establish a formal sense in
which this white noise model, with $\lambda$ replaced by the function
$\lambda_n(t)=[\sigma^2(t)/(n d(t))]^{1/2}$, is asymptotically equivalent to the
fixed design regression model. Since the asymptotic behavior of our estimators
and bounds depends only on $x_i$ in a shrinking neighborhood of zero, we then
expect that $\lambda_n(t)$ can be replaced by the constant function
$\lambda_n(0)$. For technical reasons, however, the proof of
\Cref{est_h_lower_bound_thm} uses direct arguments, rather than appealing
to the equivalence results of \citet{BrLo96} (in particular, these results do
not apply immediately for Taylor classes, or when smoothness is only assumed in
the neighborhood $[-\eta, \eta]$).

\subsubsection{Kernel estimators}

Let $k$ be a kernel with $\int_{\mathcal{X}} k(u)\, du=1$ and
$\int_{\mathcal{X}} k(u)u^j\, du=0$ for $j=1,\ldots,p-1$. The kernel $k$ will
play the role of the equivalent kernel $k^*_q$ in
\begin{NoHyper}\Cref{sec:inference-point-theory}\end{NoHyper}. A linear estimator in
the white noise model takes the form
\begin{equation*}
\hat T(h;k)=h^{-1}\int k(t)\, dY(t).
\end{equation*}
Since this falls into the \citet{DoLo92} framework given in
\Cref{white_noise_sec}, it follows that
\begin{NoHyper}\Cref{bias_var_scale_eq}\end{NoHyper}
holds with the $o(1)$ terms equal to zero. Indeed, under $f\in\mathcal{F}(\SC)$,
$\hat T(h;k)$ follows a normal distribution with bias
\begin{equation*}
h^{-1}\int_{\mathcal{X}} k(t/h)(f(t)-f(0))\, dt
  = \int_{\mathcal{X}} k(u)(f(h u)-f(0))\, du
  = M h^{p} \int_{\mathcal{X}} k(u)(\tilde f(u)-\tilde f(0))\, du
\end{equation*}
where $\tilde f(u)=\SC^{-1}h^{-p}f(h u)$ is in $\mathcal{F}(1)$ iff. $f\in
\mathcal{F}(\SC)$, by the renormalization property of the Hölder and Taylor class.
The variance is given by
\begin{equation*}
  \lambda^2 h^{-2} \int_{\mathcal{X}} k(t/h)^2\, dt
  = \lambda^2h^{-1} \int_{\mathcal{X}} k(u)^2\, du.
\end{equation*}
Thus, if we take $\lambda=[\sigma^2(0)/(n d)]^{1/2}$,
\begin{NoHyper}\Cref{bias_var_scale_eq}\end{NoHyper}
holds with
$S(k)=\sigma(0)d^{-1/2}\sqrt{\int_{\mathcal{X}}k(u)\, du}$,
$B(k)=\sup_{\tilde f\in\mathcal{F}(1)}\int_{\mathcal{X}}
k(u)(\tilde{f}(u)-\tilde{f}(0))\, du$, $\gamma_b=p$ and $\gamma_s=-1/2$. Note
that $S(k)$ matches Equation (5) with $k$ playing the role of the equivalent
kernel $k^*_q$ in Equation (5). In addition, $B(k)$ matches the expression given
in \begin{NoHyper}\Cref{theorem:maximum-bias-lp}\end{NoHyper}
(this can be shown by deriving $B(k)$
using the arguments in the proof of this theorem).

\subsubsection{Modulus of continuity}

The modulus of continuity for the limiting model, as defined in
\citet{donoho94}, is given by
\begin{equation*}
  \omega(\delta) = 2\sup_{f} f(0)\quad\text{s.t.}\quad \int_{\mathcal{X}}f(x)^2\, dx\le \delta^2/4,\quad f\in\mathcal{F}(\SC).
\end{equation*}
Let $f^*_{\delta, \SC}$ denote the solution to this problem. Note that the
function $\kappa^*$ defined in \Cref{opt_kern_lemma_est_h} is given by
$f^*_{2,1}$. By \citet{DoLo92}, we have
$f^*_{\delta, \SC}(x)=\SC \tilde h_{\delta, M}^p\kappa^*(x/\tilde h_{\delta, M})$
where $\tilde h_{\delta, \SC}=(\delta/(2\SC))^{2/(2p+1)}$, which gives
\begin{equation*}
\omega(\delta)=2\SC (\delta/(2\SC))^{2p/(2p+1)}\kappa^*(0)
%
=(2\SC)^{1-r} \delta^{r}\kappa^*(0)
\end{equation*}
where $r=2p/(2p+1)$ is the rate exponent.
Note that
\begin{equation*}
\omega'(\delta)
  =r(2\SC)^{1-r} \delta^{r-1}\kappa^*(0)
  =r\delta^{-1}\omega(\delta).
\end{equation*}

\subsubsection{Optimal kernel}\label{est_h_optimal_kernel_sec}

By \citet{low95}, the bias-sd optimizing kernel takes the form
$t\mapsto f^*_{\delta, \SC}(t)/\int_{\mathcal{X}}f^*_{\delta, \SC}(u)\, du$ for
some $\delta$, so this implies that
$k^*(t)=\kappa^*(t)/\int_{\mathcal{X}}\kappa^*(u)\, du$ is the optimal kernel.
For Taylor classes, the support can be seen to be compact by examining the
formula given in
\begin{NoHyper}\Cref{sec:inference-point-theory}\end{NoHyper}. For Hölder
classes, this can be shown indirectly \citep[see][]{lt00}. The worst-case bias
of the estimate with bandwidth $h_{\delta, \SC}$ is given by
\begin{equation*}
(1/2)(\omega(\delta)-\delta \omega'(\delta))
  =(1/2)\omega(\delta)(1-r)
  =(1/2)(1-r)(2\SC)^{1-r} \delta^{r}\kappa^*(0)
  =\SC (1-r) \kappa^*(0)h_{\delta, \SC}^p
\end{equation*}
where we substitute $\delta=2\SC h_{\delta, \SC}^{(2p+1)/2}$ in the last step.
This gives the formula $B(k^*)=(1-r)\kappa^*(0)$.
The standard deviation is given by
\begin{equation*}
 \lambda\omega'(\delta)=\lambda r(2\SC)^{1-r} \delta^{r-1}\kappa^*(0)
  = \lambda r \kappa^*(0) h_{\delta, \SC}^{-1/2}
  = [\sigma^2(0)/d]^{1/2}r \kappa^*(0) n^{-1/2}h_{\delta, \SC}^{-1/2},
\end{equation*}
which gives $S(k^*)=[\sigma^2(0)/d]^{1/2}r \kappa^*(0)$.
Thus, the leading term in the minimax performance is
$S(k^*)^{r}B(k^*)^{1-r}=[\sigma^2(0)/d]^{r/2}r^r(1-r)^{1-r}\kappa^*(0)$.

\subsubsection{Optimal FLCI and efficiency bound}

We now show that the efficiency bound in \Cref{est_h_lower_bound_thm}
corresponds to the bound given in Corollary 3.3 in \citet{ArKo18optimal},
applied to the class $\mathcal{F}$ in the limiting
model~\eqref{est_h_limiting_model_eq}. Thus, \Cref{est_h_lower_bound_thm}
can be interpreted as showing that this efficiency bound holds in a formal
asymptotic sense, with $\mathcal{F}(M;\eta)$ replaced by the smaller class
$\mathcal{G}(M)$. We note that, for Taylor classes, such a bound is given for
the class $\mathcal{F}(M)$ in Theorem E.1 in \citet{ArKo18optimal}.
\Cref{est_h_lower_bound_thm} shows that this efficiency bound holds for
$\mathcal{G}(M)$.

First, we derive the length of the optimal FLCI, which is the denominator of the
expression in Corollary 3.3 in \citet{ArKo18optimal}.
The bias-sd ratio is
\begin{equation*}
  t_\delta
   = \frac{(1/2)(1-r)(2\SC)^{1-r} \delta^{r}\kappa^*(0)}{\lambda r(2\SC)^{1-r} \delta^{r-1}\kappa^*(0)}
  =(1/2)(1/r-1)\delta/\lambda.
\end{equation*}
Since optimizing over the bandwidth is equivalent to optimizing over $\delta$,
it follows that the optimal FLCI has length
\begin{multline*}
  \inf_{\delta} 2\cv_{1-\alpha}(t_{\delta})\cdot \lambda\omega'(\delta)
  = \inf_{\delta} 2\cv_{1-\alpha}(t_{\delta})\cdot \lambda r(2\SC)^{1-r} \delta^{r-1}\kappa^*(0) \\
  = \inf_{\delta} 2\cv_{1-\alpha}(t_{\delta})\cdot \lambda r(2\SC)^{1-r} t_{\delta}^{r-1}\lambda^{r-1}(1/r-1)^{1-r}2^{r-1}\kappa^*(0) \\
  = \lambda^r \SC^{1-r}r(1/r-1)^{1-r}\kappa^*(0) \inf_{\delta} 2\cv_{1-\alpha}(t_{\delta})\cdot t_{\delta}^{r-1}.
\end{multline*}
Plugging in $\lambda=[\sigma^2(0)/(n d)]^{1/2}$ and
$S(k^*)^{n}B(k^*)^{1-r}=[\sigma^2(0)/d]^{r/2}r^r(1-r)^{1-r}\kappa^*(0)$ gives
$2 n^{-r/2}M^{1-r}S(k^*)^{r}B(k^*)^{1-r}\inf_{\delta}
\cv_{1-\alpha}(t_{\delta})\cdot t_\delta^{r-1}$, which is the asymptotic length
of the CI given in \Cref{general_est_h_theorem} with $k$ and $h$ chosen
optimally.

The lower bound given the numerator of the expression in Corollary 3.3 in \citet{ArKo18optimal} is
\begin{equation*}
  \int_{z=-\infty}^{z_{1-\alpha}} \omega(2\lambda (z_{1-\alpha}-z))\, dz
  = (2\SC)^{1-r} \kappa^*(0) 2^r\lambda^r
  \int_{z=-\infty}^{z_{1-\alpha}} (z_{1-\alpha}-z)^r\, dz.
\end{equation*}
Plugging in $\lambda=[\sigma^2(0)/(n d)]^{1/2}$ and
$S(k^*)^{r}B(k^*)^{1-r}=[\sigma^2(0)/d]^{r/2}r^r(1-r)^{1-r}\kappa^*(0)$ gives
$2 n^{-r/2}M^{1-r}\frac{S(k^*)^{r}
  B(k^*)^{1-r}}{r^r(1-r)^{1-r}}\int_{z=-\infty}^{z_{1-\alpha}}
(z_{1-\alpha}-z)^r\, dz$, which is the asymptotic lower bound given in
\Cref{est_h_lower_bound_thm}.

\section{Additional Monte Carlo results}\label{sec:addit-monte-carlo}

In this \namecref{sec:addit-monte-carlo}, we revisit the simulation study from
\begin{NoHyper}\Cref{monte_carlo_sec}\end{NoHyper} in the paper, and
consider an additional method for constructing CIs, as well as a number of
variations on the DGP\@.

In particular, we also consider a conventional CI based on the coverage-error
optimal bandwidth $\hcedpi$, which can be considered a form of undersmoothing,
but without any bias correction. \Cref{tab:mc2} reports the results for Designs
1--3 with this additional methods added. Using the bandwidth $\hcedpi$ leads to
better coverage of conventional CIs relative to $\hfgrot$ when $\SC=2$, but
worse coverage when $\SC=6$.

Next, we investigate the robustness of the results to a number of variations on
the baseline design. \Cref{tab:mc3} reports the results when $x_{i}$ is
drawn from a $\operatorname{Beta}(2, 5)$ distribution. In \Cref{tab:mc4},
to consider the effects of heteroskedasticity, we draw the errors form the
distribution $\mathcal{N}(0, 1/4(1+\sqrt{\abs{x_{i}}})^{2})$, while $x_{i}$ is
drawn from a uniform distribution, as in the baseline. In \Cref{tab:mc5},
$x_{i}\sim\operatorname{Beta}(2, 5)$ distribution, and
$u_{i}\sim\mathcal{N}(0, 1/4(1+\sqrt{\abs{x_{i}}})^{2})$. In
\Cref{tab:mc6}, we draw $u_{i}$ from a log-normal distribution, scaled to
have mean zero and variance $1/4$, while $x_{i}$ is drawn from a uniform
distribution. \Cref{tab:mc7} reports the results for $u_{i}$ drawn from a
log-normal distribution, scaled to have mean zero and variance $1/4$, and
$x_{i}\sim\operatorname{Beta}(2, 5)$. \Cref{tab:mc8} returns to the
baseline specification, but with $u_{i}\sim \mathcal{N}(0,1/16)$. Finally, in
\Cref{tab:mc9} we consider a smooth approximation to the functions
$f_{1}, f_{2}$, and $f_{3}$. In particular, we replace the function
$\textsf{s}(\cdot)$ in the definition of these functions by the function
$s_{\lambda}(x)=-\Li_{2}(-e^{\lambda x})/\lambda^{2}$, where
$\Li_{2}(x)=-\int_{0}^{x}\frac{\log(1-s)}{s}ds$ is the dilogarithm function.
%
The function $s_{\lambda}$ is analytic for any $\lambda$, and it converges to
$\textsf{s}$ as $\lambda\to\infty$. We set $\lambda=40$.

The results in \Cref{tab:mc9} are nearly identical to those in \Cref{tab:mc2},
indicating that the lack of differentiability is not driving the results. The
FLCIs perform well for all designs in terms of coverage when the correct or
conservative $\SC$ is used, or when one uses $\Mrot$. The coverage is at least
92.5\% in all designs except \Cref{tab:mc6}, where the coverage, where the FLCIs
undercover slightly for Design 3, with coverage around 90\%.
%
%
%
The RBC CIs with bandwidth chosen based on uniform-in-$f$ asymptotics (either
$\hhrmse{2}$, $\hhrmse{6}$, or $\hhrmse{\Mrot}$) also perform well in terms of
coverage, with coverage at least 93\% for all designs, although they are longer
than FLCI CIs.
%
%
The remaining CIs, based on pointwise-in-$f$ asymptotics, suffer from poor
coverage in these alternative specifications, just like in the baseline
specification in the main text.

\end{appendices}

\bibliography{../np-testing-library}

\begin{table}[p]
  \centering
  \caption{Kernel constants for standard deviation and maximum bias of
    local polynomial regression estimators of order $\po$ for selected kernels.
    Inference at a boundary point}\label{tab:lp-boundary-constants}
  \vspace{1ex}
  \begin{tabular}{@{}lclllllll@{}}
    & & & \multicolumn{3}{c}{$\mathcal{B}_{p, \po}^{\textnormal{T}}(k)=\int_{0}^{1}
      \abs{u^{p}k^{*}_{\po}(u)}\, \dd u$} &\multicolumn{3}{c}{
      $\mathcal{B}_{p, \po}^{\textnormal{Höl}}(k)$}\\
    \cmidrule(rl){4-6}\cmidrule(rl){7-9}
    Kernel ($k(u)$) & $\po$ & \multicolumn{1}{c}{$\int_{0}^{1}
      k^{*}_{\po}(u)^{2}\, \dd u$} & $p=1$ & $p=2$ & $p=3$& $p=1$ & $p=2$ & $p=3$\\
    \midrule
    \multirow{3}{*}{\begin{tabular}{l}Uniform\\ $\1{\abs{u}\leq 1}$\end{tabular}}
    &0 & 1 & $\frac{1}{2}$ &  &                 & $\frac{1}{2}$  \\
    &1 & 4 & $\frac{16}{27}$ & $\frac{59}{162}$ & &
    $\frac{8}{27}$ & $\frac{1}{6}$  \\
    &2 & 9 & 0.7055 & 0.4374 & 0.3294 & 0.2352& $\frac{216}{3125}$  &$\frac{1}{20}$\\[1ex]
    \multirow{3}{*}{\begin{tabular}{l}Triangular\\ $(1-\abs{u})_{+}$\end{tabular}}
    &0 & $\frac{4}{3}$ & $\frac{1}{3}$ &  &
    & $\frac{1}{3}$ & & \\
    &1 & $\frac{24}{5}$ &$\frac{3}{8}$ & $\frac{3}{16}$ &
    & $\frac{27}{128}$& $\frac{1}{10}$\\
    &2 & $\frac{72}{7}$ & 0.4293 & 0.2147 & 0.1400 &
    0.1699 & $\frac{32}{729}$ &$\frac{1}{35}$ \\[1ex]
    \multirow{3}{*}{\begin{tabular}{l}Epanechnikov\\ $\frac{3}{4}(1-u^{2})_{+}$\end{tabular}}
    &0 & $\frac{6}{5}$ & $\frac{3}{8}$ &  & &
    $\frac{3}{8}$\\
    &1 & 4.498 & 0.4382 & 0.2290 & &
    0.2369& $\frac{11}{95}$  &\\
    &2 & 9.816 & 0.5079 & 0.2662 & 0.1777 &
    0.1913& 0.0508  &$\frac{15}{448}$\\[1ex]
  \end{tabular}
\end{table}

\begin{table}[p]
  \centering
  \caption{Kernel constants for standard deviation and maximum bias of
    local polynomial regression estimators of order $\po$ for selected kernels.
    Inference at an interior point.}\label{tab:lp-interior-constants}
  \vspace{1ex}
 \begin{tabular}{@{}lclllllll@{}}
    & & & \multicolumn{3}{c}{$\mathcal{B}_{p, \po}^{\textnormal{T}}(k)=\int_{-1}^{1}
      \abs{u^{p}k^{*}_{\po}(u)}\, \dd u$} &\multicolumn{3}{c}{
      $\mathcal{B}_{p, \po}^{\textnormal{Höl}}(k)$}\\
    \cmidrule(rl){4-6}\cmidrule(rl){7-9}
    Kernel & $\po$ & \multicolumn{1}{c}{$\int_{-1}^{1} k^{*}_{\po}(u)^{2}\, \dd u$} & $p=1$ & $p=2$ & $p=3$& $p=1$ & $p=2$ & $p=3$\\
    \midrule
    \multirow{3}{*}{\begin{tabular}{l}Uniform\\ $\1{\abs{u}\leq 1}$\end{tabular}}
    &0 & $\frac{1}{2}$ & $\frac{1}{2}$   &      &     &
    $\frac{1}{2}$\\
    &1 & $\frac{1}{2}$ & $\frac{1}{2}$ & $\frac{1}{3}$     &     &
    $\frac{1}{2}$ & $\frac{1}{3}$\\
    &2 & $\frac{9}{8}$ & 0.4875& 0.2789  & 0.1975 &
    0.2898& 0.0859  &  $\frac{1}{16}$\\[1ex]
    \multirow{3}{*}{\begin{tabular}{l}Triangular\\ $(1-\abs{u})_{+}$\end{tabular}}
    &0 & $\frac{2}{3}$ & $\frac{1}{3}$   &      &    &$\frac{1}{3}$
    \\
    &1 & $\frac{2}{3}$ & $\frac{1}{3}$   & $\frac{1}{6}$     & &
    $\frac{1}{3}$  & $\frac{1}{6}$     &    \\
    &2 &$\frac{456}{343}$&0.3116&0.1399  & 0.0844 &
    0.2103&0.0517  &$\frac{8}{245}$\\[1ex]
    \multirow{3}{*}{\begin{tabular}{l}Epanechnikov\\ $\frac{3}{4}(1-u^{2})_{+}$\end{tabular}}
    &0 & $\frac{3}{5}$ &$\frac{3}{8}$    &      &     &
    $\frac{3}{8}$\\
    &1 & $\frac{3}{5}$ & $\frac{3}{8}$    & $\frac{1}{5}$     &     &
    $\frac{3}{8}$    & $\frac{1}{5}$     &    \\
    &2 & $\frac{5}{4}$ &0.3603 & 0.1718  & 0.1067 &
    0.2347& 0.0604  &$\frac{5}{128}$ \\[1ex]
  \end{tabular}
\end{table}

\begin{landscape}
  \renewcommand{\arraystretch}{1.1} %
  \footnotesize
  \begin{ThreePartTable}
    \begin{TableNotes}
    \item \emph{Legend:} SE---average standard error; $E[h]$---average (over
      Monte Carlo draws) bandwidth; Cov---coverage of CIs (in \%); RL---relative
      (to optimal FLCI) length.
    \item \emph{Bandwidth descriptions:} $\hmsedpi$---plugin estimate of
      pointwise MSE optimal bandwidth (bw); $\bmsedpi$---analog for estimate of
      the bias; $\hcedpi$---plugin estimate of coverage error optimal bw;
      $\bcedpi$---analog for estimate of the bias; The implementation of
      \citet{ccf15} is used for all four bws. $\hhrmse{2}$, $\hhrmse{6}$---RMSE
      optimal bw, assuming $\SC=2$, and $\SC=6$, respectively.
      $\hfgrot$---\citet{fg96} rule of thumb; $\hhrmse{\Mrot}$---RMSE optimal
      bw, using rule-of-thumb for $\SC$. 50,000 Monte Carlo draws.
    \end{TableNotes}
    \begin{longtable}{@{}lll rllrl c rllrl ll@{}}
      \caption{Monte Carlo simulation: baseline DGP}\label{tab:mc2}\\
      &&&\multicolumn{5}{c@{}}{$\SC=2$}&&\multicolumn{5}{c}{$\SC=6$}\\
      \cmidrule(rl){4-8}\cmidrule(rl){10-14}
      &Method & Bandwidth & Bias& SE &   $E[h]$&   Cov&   RL&&Bias& SE &   $E_{m}[h]$&   Cov&   RL\\
      \midrule
      \endfirsthead%
      \caption*{Monte Carlo simulation: baseline DGP (continued)}\\
      &&&\multicolumn{5}{c@{}}{$\SC=2$}&&\multicolumn{5}{c}{$\SC=6$}\\
      \cmidrule(rl){4-8}\cmidrule(rl){10-14}
      &Method & Bandwidth & Bias& SE &   $E[h]$&   Cov&   RL&&Bias& SE &   $E_{m}[h]$&   Cov&   RL\\
      \midrule
      \endhead%
      \endfoot%
      \insertTableNotes%
      \endlastfoot%

      \multicolumn{2}{@{}l}{Design 1}\\
      \cmidrule(r){1-2} \phantom{a}
      &RBC         &$h=\hmsedpi$, $b=\bmsedpi$  & 0.063& 0.035& 0.75& 55.6& 0.73&&  0.157& 0.036& 0.62&  0.1& 0.61\\
      &RBC         &$h=b=\hmsedpi$              & 0.025& 0.042& 0.75& 93.1& 0.88&&  0.042& 0.047& 0.62& 89.1& 0.78\\
      &RBC         &$h=\hcedpi$, $b=\bcedpi$    & 0.030& 0.041& 0.45& 85.8& 0.85&&  0.059& 0.045& 0.34& 72.4& 0.76\\
      &RBC         &$h=b=\hhrmse{2}$            & 0.001& 0.061& 0.36& 94.5& 1.27&&  0.002& 0.061& 0.36& 94.5& 1.01\\
      &RBC         &$h=b=\hhrmse{6}$            & 0.000& 0.076& 0.23& 94.2& 1.58&&  0.000& 0.075& 0.23& 94.2& 1.26\\
      &RBC         &$h=b=\hhrmse{\Mrot}$        & 0.000& 0.078& 0.22& 93.9& 1.64&&  0.000& 0.097& 0.14& 93.4& 1.63\\
      &Conventional&$\hfgrot$                   & 0.032& 0.036& 0.56& 76.6& 0.76&&  0.049& 0.046& 0.31& 77.4& 0.77\\
      &Conventional&$\hcedpi$                   & 0.029& 0.039& 0.45& 85.2& 0.81&&  0.058& 0.044& 0.34& 72.3& 0.74\\
      &FLCI, $\SC=2$ &$\hhrmse{2}$              & 0.021& 0.043& 0.36& 94.9& 1.00&&  0.065& 0.043& 0.36& 75.2& 0.80\\
      &FLCI, $\SC=6$ &$\hhrmse{6}$              & 0.009& 0.054& 0.23& 96.6& 1.25&&  0.028& 0.053& 0.23& 94.7& 1.00\\
      &FLCI, $\SC=\Mrot$&$\hhrmse{\Mrot}$       & 0.008& 0.056& 0.22& 95.6& 1.29&&  0.010& 0.069& 0.14& 96.3& 1.30\\
      \multicolumn{2}{@{}l}{Design 2}\\
      \cmidrule(r){1-2}
      &RBC         &$h=\hmsedpi$, $b=\bmsedpi$  & 0.043& 0.035& 0.77& 75.9& 0.72&&  0.129& 0.035& 0.77&  4.6& 0.58\\
      &RBC         &$h=b=\hmsedpi$              & 0.026& 0.041& 0.77& 90.9& 0.87&&  0.077& 0.042& 0.77& 53.0& 0.70\\
      &RBC         &$h=\hcedpi$, $b=\bcedpi$    & 0.028& 0.040& 0.49& 87.4& 0.83&&  0.074& 0.041& 0.44& 54.1& 0.69\\
      &RBC         &$h=b=\hhrmse{2}$            & 0.002& 0.061& 0.36& 94.5& 1.27&&  0.006& 0.061& 0.36& 94.4& 1.01\\
      &RBC         &$h=b=\hhrmse{6}$            & 0.000& 0.076& 0.23& 94.2& 1.58&&  0.000& 0.075& 0.23& 94.2& 1.26\\
      &RBC         &$h=b=\hhrmse{\Mrot}$        & 0.001& 0.068& 0.30& 94.0& 1.43&&  0.000& 0.083& 0.20& 93.8& 1.38\\
      &Conventional&$\hfgrot$                   & 0.032& 0.032& 0.78& 74.4& 0.67&&  0.073& 0.040& 0.44& 53.0& 0.66\\
      &Conventional&$\hcedpi$                   & 0.028& 0.037& 0.49& 85.9& 0.78&&  0.076& 0.039& 0.44& 50.1& 0.66\\
      &FLCI, $\SC=2$ &$\hhrmse{2}$              & 0.020& 0.043& 0.36& 95.1& 1.00&&  0.061& 0.043& 0.36& 78.1& 0.80\\
      &FLCI, $\SC=6$ &$\hhrmse{6}$              & 0.009& 0.054& 0.23& 96.6& 1.25&&  0.028& 0.053& 0.23& 94.7& 1.00\\
      &FLCI, $\SC=\Mrot$&$\hhrmse{\Mrot}$       & 0.013& 0.048& 0.30& 94.3& 1.13&&  0.020& 0.059& 0.20& 94.3& 1.10\\
      \multicolumn{2}{@{}l}{Design 3}\\
      \cmidrule(r){1-2}
      &RBC         &$h=\hmsedpi$, $b=\bmsedpi$  & -0.043& 0.035& 0.77& 75.7& 0.72&& -0.123& 0.035& 0.74&  9.9& 0.59\\
      &RBC         &$h=b=\hmsedpi$              & -0.024& 0.042& 0.77& 90.8& 0.87&& -0.066& 0.043& 0.74& 60.3& 0.71\\
      &RBC         &$h=\hcedpi$, $b=\bcedpi$    & -0.026& 0.040& 0.49& 88.1& 0.83&& -0.063& 0.043& 0.43& 64.2& 0.71\\
      &RBC         &$h=b=\hhrmse{2}$            & -0.002& 0.061& 0.36& 94.5& 1.27&& -0.007& 0.061& 0.36& 94.4& 1.01\\
      &RBC         &$h=b=\hhrmse{6}$            &  0.000& 0.076& 0.23& 94.2& 1.58&&  0.000& 0.075& 0.23& 94.2& 1.26\\
      &RBC         &$h=b=\hhrmse{\Mrot}$        &  0.000& 0.074& 0.25& 94.2& 1.54&&  0.000& 0.092& 0.16& 93.6& 1.54\\
      &Conventional&$\hfgrot$                   & -0.032& 0.033& 0.72& 74.7& 0.69&& -0.065& 0.042& 0.39& 62.0& 0.70\\
      &Conventional&$\hcedpi$                   & -0.028& 0.037& 0.49& 85.7& 0.78&& -0.074& 0.040& 0.43& 52.0& 0.66\\
      &FLCI, $\SC=2$ &$\hhrmse{2}$              & -0.020& 0.043& 0.36& 95.0& 1.00&& -0.060& 0.043& 0.36& 78.1& 0.80\\
      &FLCI, $\SC=6$ &$\hhrmse{6}$              & -0.009& 0.054& 0.23& 96.5& 1.25&& -0.027& 0.053& 0.23& 94.7& 1.00\\
      &FLCI, $\SC=\Mrot$&$\hhrmse{\Mrot}$       & -0.010& 0.052& 0.25& 95.6& 1.22&& -0.013& 0.065& 0.16& 96.1& 1.22\\
    \end{longtable}
  \end{ThreePartTable}
\end{landscape}

\begin{landscape}
  \renewcommand{\arraystretch}{1.1} %
  \footnotesize
  \begin{ThreePartTable}
    \begin{TableNotes}
    \item \emph{Legend:} SE---average standard error; $E[h]$---average (over
      Monte Carlo draws) bandwidth; Cov---coverage of CIs (in \%); RL---relative
      (to optimal FLCI) length.
    \item \emph{Bandwidth descriptions:} $\hmsedpi$---plugin estimate of
      pointwise MSE optimal bandwidth (bw); $\bmsedpi$---analog for estimate of
      the bias; $\hcedpi$---plugin estimate of coverage error optimal bw;
      $\bcedpi$---analog for estimate of the bias; The implementation of
      \citet{ccf15} is used for all four bws. $\hhrmse{2}$, $\hhrmse{6}$---RMSE
      optimal bw, assuming $\SC=2$, and $\SC=6$, respectively.
      $\hfgrot$---\citet{fg96} rule of thumb; $\hhrmse{\Mrot}$---RMSE optimal
      bw, using rule-of-thumb for $\SC$. 50,000 Monte Carlo draws.
    \end{TableNotes}
    \begin{longtable}{@{}lll rllrl c rllrl ll@{}}
      \caption{Monte Carlo simulation: beta distribution
        for $x_{i}$}\label{tab:mc3}\\
      &&&\multicolumn{5}{c@{}}{$\SC=2$}&&\multicolumn{5}{c}{$\SC=6$}\\
      \cmidrule(rl){4-8}\cmidrule(rl){10-14}
      &Method & Bandwidth & Bias& SE &   $E[h]$&   Cov&   RL&&Bias& SE &   $E_{m}[h]$&   Cov&   RL\\
      \midrule
      \endfirsthead%
      \caption*{Monte Carlo simulation: beta distribution
      for $x_{i}$ (continued)}\\
      &&&\multicolumn{5}{c@{}}{$\SC=2$}&&\multicolumn{5}{c}{$\SC=6$}\\
      \cmidrule(rl){4-8}\cmidrule(rl){10-14}
      &Method & Bandwidth & Bias& SE &   $E[h]$&   Cov&   RL&&Bias& SE &   $E_{m}[h]$&   Cov&   RL\\
      \midrule
      \endhead%
      \endfoot%
      \insertTableNotes%
      \endlastfoot%

      \multicolumn{2}{@{}l}{Design 1}\\
      \cmidrule(r){1-2} \phantom{a}
      &RBC         &$h=\hmsedpi$, $b=\bmsedpi$  &0.030& 0.037& 0.56& 85.6& 0.83&&  0.056& 0.041& 0.43& 64.8& 0.74 \\
      &RBC         &$h=b=\hmsedpi$              &0.009& 0.044& 0.56& 93.7& 0.98&&  0.009& 0.050& 0.43& 91.7& 0.92 \\
      &RBC         &$h=\hcedpi$, $b=\bcedpi$    &0.009& 0.044& 0.38& 93.1& 0.99&&  0.011& 0.049& 0.29& 92.6& 0.90 \\
      &RBC         &$h=b=\hhrmse{2}$            &0.001& 0.054& 0.36& 94.6& 1.21&&  0.003& 0.054& 0.37& 94.6& 0.98 \\
      &RBC         &$h=b=\hhrmse{6}$            &0.000& 0.068& 0.23& 94.3& 1.53&&  0.000& 0.068& 0.23& 94.4& 1.24 \\
      &RBC         &$h=b=\hhrmse{\Mrot}$        &0.000& 0.073& 0.21& 94.1& 1.62&&  0.000& 0.089& 0.14& 93.8& 1.61 \\
      &Conventional&$\hfgrot$                   &0.025& 0.038& 0.53& 85.6& 0.84&&  0.038& 0.045& 0.29& 83.8& 0.82 \\
      &Conventional&$\hcedpi$                   &0.019& 0.040& 0.38& 90.3& 0.90&&  0.038& 0.045& 0.29& 82.4& 0.81 \\
      &FLCI, $\SC=2$ &$\hhrmse{2}$              &0.019& 0.041& 0.36& 94.7& 1.00&&  0.058& 0.041& 0.37& 77.0& 0.81 \\
      &FLCI, $\SC=6$ &$\hhrmse{6}$              &0.009& 0.050& 0.23& 96.5& 1.23&&  0.025& 0.050& 0.23& 94.7& 1.00 \\
      &FLCI, $\SC=\Mrot$&$\hhrmse{\Mrot}$       &0.007& 0.053& 0.21& 96.1& 1.31&&  0.009& 0.064& 0.14& 96.3& 1.29 \\
      \multicolumn{2}{@{}l}{Design 2}\\
      \cmidrule(r){1-2}
      &RBC         &$h=\hmsedpi$, $b=\bmsedpi$  & 0.027& 0.037& 0.57& 88.0& 0.82&&  0.073& 0.038& 0.53& 49.0& 0.68\\
      &RBC         &$h=b=\hmsedpi$              & 0.013& 0.043& 0.57& 93.2& 0.97&&  0.032& 0.045& 0.53& 84.3& 0.82\\
      &RBC         &$h=\hcedpi$, $b=\bcedpi$    & 0.014& 0.043& 0.40& 92.7& 0.96&&  0.032& 0.045& 0.36& 84.8& 0.83\\
      &RBC         &$h=b=\hhrmse{2}$            & 0.003& 0.054& 0.36& 94.6& 1.21&&  0.007& 0.054& 0.37& 94.5& 0.98\\
      &RBC         &$h=b=\hhrmse{6}$            & 0.000& 0.068& 0.23& 94.3& 1.53&&  0.000& 0.068& 0.23& 94.4& 1.24\\
      &RBC         &$h=b=\hhrmse{\Mrot}$        & 0.001& 0.068& 0.25& 94.2& 1.51&&  0.001& 0.075& 0.20& 94.0& 1.35\\
      &Conventional&$\hfgrot$                   & 0.026& 0.035& 0.70& 85.1& 0.79&&  0.060& 0.039& 0.43& 62.8& 0.71\\
      &Conventional&$\hcedpi$                   & 0.019& 0.039& 0.40& 90.8& 0.88&&  0.050& 0.041& 0.36& 72.2& 0.75\\
      &FLCI, $\SC=2$ &$\hhrmse{2}$              & 0.018& 0.041& 0.36& 94.9& 1.00&&  0.055& 0.041& 0.37& 79.1& 0.81\\
      &FLCI, $\SC=6$ &$\hhrmse{6}$              & 0.009& 0.050& 0.23& 96.5& 1.23&&  0.025& 0.050& 0.23& 94.7& 1.00\\
      &FLCI, $\SC=\Mrot$&$\hhrmse{\Mrot}$       & 0.009& 0.049& 0.25& 95.7& 1.22&&  0.019& 0.054& 0.20& 94.2& 1.09\\
      \multicolumn{2}{@{}l}{Design 3}\\
      \cmidrule(r){1-2}
      &RBC         &$h=\hmsedpi$, $b=\bmsedpi$  & -0.031& 0.037& 0.55& 86.2& 0.83&& -0.070& 0.039& 0.49& 52.9& 0.71\\
      &RBC         &$h=b=\hmsedpi$              & -0.012& 0.044& 0.55& 93.9& 0.98&& -0.024& 0.047& 0.49& 89.4& 0.85\\
      &RBC         &$h=\hcedpi$, $b=\bcedpi$    & -0.011& 0.044& 0.39& 92.9& 0.99&& -0.018& 0.049& 0.31& 91.3& 0.89\\
      &RBC         &$h=b=\hhrmse{2}$            & -0.002& 0.054& 0.36& 94.6& 1.21&& -0.007& 0.054& 0.37& 94.5& 0.98\\
      &RBC         &$h=b=\hhrmse{6}$            &  0.000& 0.068& 0.23& 94.3& 1.53&&  0.000& 0.068& 0.23& 94.3& 1.24\\
      &RBC         &$h=b=\hhrmse{\Mrot}$        &  0.000& 0.072& 0.22& 94.2& 1.60&&  0.000& 0.085& 0.15& 93.9& 1.54\\
      &Conventional&$\hfgrot$                   & -0.025& 0.036& 0.65& 85.3& 0.80&& -0.051& 0.041& 0.37& 72.1& 0.75\\
      &Conventional&$\hcedpi$                   & -0.018& 0.040& 0.39& 91.3& 0.89&& -0.040& 0.044& 0.31& 81.6& 0.79\\
      &FLCI, $\SC=2$ &$\hhrmse{2}$              & -0.018& 0.041& 0.36& 95.1& 1.00&& -0.054& 0.041& 0.37& 79.6& 0.81\\
      &FLCI, $\SC=6$ &$\hhrmse{6}$              & -0.008& 0.050& 0.23& 96.6& 1.23&& -0.024& 0.050& 0.23& 94.8& 1.00\\
      &FLCI, $\SC=\Mrot$&$\hhrmse{\Mrot}$       & -0.007& 0.052& 0.22& 96.1& 1.29&& -0.011& 0.061& 0.15& 96.3& 1.23\\
    \end{longtable}
  \end{ThreePartTable}
\end{landscape}

\begin{landscape}
  \renewcommand{\arraystretch}{1.1} %
  \footnotesize
  \begin{ThreePartTable}
    \begin{TableNotes}
    \item \emph{Legend:} SE---average standard error; $E[h]$---average (over
      Monte Carlo draws) bandwidth; Cov---coverage of CIs (in \%); RL---relative
      (to optimal FLCI) length.
    \item \emph{Bandwidth descriptions:} $\hmsedpi$---plugin estimate of
      pointwise MSE optimal bandwidth (bw); $\bmsedpi$---analog for estimate of
      the bias; $\hcedpi$---plugin estimate of coverage error optimal bw;
      $\bcedpi$---analog for estimate of the bias; The implementation of
      \citet{ccf15} is used for all four bws. $\hhrmse{2}$, $\hhrmse{6}$---RMSE
      optimal bw, assuming $\SC=2$, and $\SC=6$, respectively.
      $\hfgrot$---\citet{fg96} rule of thumb; $\hhrmse{\Mrot}$---RMSE optimal
      bw, using rule-of-thumb for $\SC$. 50,000 Monte Carlo draws.
    \end{TableNotes}
    \begin{longtable}{@{}lll rllrl c rllrl ll@{}}
      \caption{Monte Carlo simulation: heteroskedastic errors}\label{tab:mc4}\\
      &&&\multicolumn{5}{c@{}}{$\SC=2$}&&\multicolumn{5}{c}{$\SC=6$}\\
      \cmidrule(rl){4-8}\cmidrule(rl){10-14}
      &Method & Bandwidth & Bias& SE &   $E[h]$&   Cov&   RL&&Bias& SE &   $E_{m}[h]$&   Cov&   RL\\
      \midrule
      \endfirsthead%
      \caption*{Monte Carlo simulation: heteroskedastic errors (continued)}\\
      &&&\multicolumn{5}{c@{}}{$\SC=2$}&&\multicolumn{5}{c}{$\SC=6$}\\
      \cmidrule(rl){4-8}\cmidrule(rl){10-14}
      &Method & Bandwidth & Bias& SE &   $E[h]$&   Cov&   RL&&Bias& SE &   $E_{m}[h]$&   Cov&   RL\\
      \midrule
      \endhead%
      \endfoot%
      \insertTableNotes%
      \endlastfoot%

      \multicolumn{2}{@{}l}{Design 1}\\
      \cmidrule(r){1-2} \phantom{a}
      &RBC         &$h=\hmsedpi$, $b=\bmsedpi$  & 0.058& 0.049& 0.69& 78.8& 0.83&&  0.160& 0.050& 0.63&  6.7& 0.70\\
      &RBC         &$h=b=\hmsedpi$              & 0.019& 0.058& 0.69& 94.3& 0.97&&  0.044& 0.060& 0.63& 91.0& 0.84\\
      &RBC         &$h=\hcedpi$, $b=\bcedpi$    & 0.029& 0.054& 0.45& 90.4& 0.91&&  0.065& 0.057& 0.37& 76.3& 0.80\\
      &RBC         &$h=b=\hhrmse{2}$            & 0.003& 0.070& 0.43& 94.5& 1.17&&  0.006& 0.070& 0.42& 94.5& 0.99\\
      &RBC         &$h=b=\hhrmse{6}$            & 0.000& 0.083& 0.28& 94.4& 1.40&&  0.000& 0.084& 0.27& 94.4& 1.18\\
      &RBC         &$h=b=\hhrmse{\Mrot}$        & 0.001& 0.087& 0.27& 94.2& 1.46&&  0.000& 0.105& 0.16& 93.8& 1.47\\
      &Conventional&$\hfgrot$                   & 0.038& 0.045& 0.73& 81.1& 0.76&&  0.074& 0.053& 0.40& 67.6& 0.75\\
      &Conventional&$\hcedpi$                   & 0.028& 0.051& 0.45& 89.8& 0.86&&  0.064& 0.055& 0.37& 75.1& 0.77\\
      &FLCI, $\SC=2$ &$\hhrmse{2}$              & 0.029& 0.052& 0.43& 95.1& 1.00&&  0.083& 0.052& 0.42& 73.6& 0.83\\
      &FLCI, $\SC=6$ &$\hhrmse{6}$              & 0.013& 0.061& 0.28& 97.3& 1.20&&  0.036& 0.062& 0.27& 94.8& 1.00\\
      &FLCI, $\SC=\Mrot$&$\hhrmse{\Mrot}$       & 0.012& 0.064& 0.27& 96.4& 1.25&&  0.013& 0.077& 0.16& 96.9& 1.25\\
      \multicolumn{2}{@{}l}{Design 2}\\
      \cmidrule(r){1-2}
      &RBC         &$h=\hmsedpi$, $b=\bmsedpi$  & 0.040& 0.049& 0.69& 87.2& 0.83&&  0.121& 0.049& 0.69& 29.9& 0.69\\
      &RBC         &$h=b=\hmsedpi$              & 0.022& 0.058& 0.69& 93.5& 0.97&&  0.064& 0.058& 0.69& 79.9& 0.81\\
      &RBC         &$h=\hcedpi$, $b=\bcedpi$    & 0.026& 0.054& 0.46& 91.4& 0.90&&  0.074& 0.054& 0.44& 69.6& 0.76\\
      &RBC         &$h=b=\hhrmse{2}$            & 0.005& 0.069& 0.43& 94.5& 1.16&&  0.014& 0.070& 0.43& 94.0& 0.98\\
      &RBC         &$h=b=\hhrmse{6}$            & 0.000& 0.083& 0.28& 94.4& 1.39&&  0.001& 0.083& 0.27& 94.4& 1.17\\
      &RBC         &$h=b=\hhrmse{\Mrot}$        & 0.003& 0.081& 0.31& 94.5& 1.36&&  0.003& 0.090& 0.24& 93.5& 1.26\\
      &Conventional&$\hfgrot$                   & 0.034& 0.043& 0.85& 84.1& 0.73&&  0.091& 0.047& 0.61& 49.6& 0.66\\
      &Conventional&$\hcedpi$                   & 0.027& 0.050& 0.46& 90.7& 0.85&&  0.076& 0.051& 0.44& 65.0& 0.72\\
      &FLCI, $\SC=2$ &$\hhrmse{2}$              & 0.026& 0.051& 0.43& 95.6& 1.00&&  0.076& 0.052& 0.43& 77.7& 0.83\\
      &FLCI, $\SC=6$ &$\hhrmse{6}$              & 0.013& 0.061& 0.28& 97.3& 1.20&&  0.037& 0.061& 0.27& 94.8& 1.00\\
      &FLCI, $\SC=\Mrot$&$\hhrmse{\Mrot}$       & 0.015& 0.060& 0.31& 96.5& 1.18&&  0.029& 0.066& 0.24& 92.8& 1.08\\
      \multicolumn{2}{@{}l}{Design 3}\\
      \cmidrule(r){1-2}
      &RBC         &$h=\hmsedpi$, $b=\bmsedpi$  & -0.040& 0.049& 0.69& 87.2& 0.83&& -0.118& 0.049& 0.69& 33.2& 0.69\\
      &RBC         &$h=b=\hmsedpi$              & -0.020& 0.058& 0.69& 93.4& 0.97&& -0.058& 0.058& 0.69& 81.1& 0.81\\
      &RBC         &$h=\hcedpi$, $b=\bcedpi$    & -0.024& 0.054& 0.46& 91.4& 0.90&& -0.066& 0.055& 0.44& 74.3& 0.77\\
      &RBC         &$h=b=\hhrmse{2}$            & -0.005& 0.069& 0.43& 94.5& 1.17&& -0.014& 0.070& 0.42& 93.9& 0.98\\
      &RBC         &$h=b=\hhrmse{6}$            &  0.000& 0.083& 0.28& 94.4& 1.39&& -0.001& 0.084& 0.27& 94.3& 1.17\\
      &RBC         &$h=b=\hhrmse{\Mrot}$        & -0.002& 0.084& 0.29& 94.4& 1.41&& -0.001& 0.099& 0.19& 93.9& 1.38\\
      &Conventional&$\hfgrot$                   & -0.035& 0.044& 0.82& 83.2& 0.73&& -0.085& 0.049& 0.53& 56.2& 0.69\\
      &Conventional&$\hcedpi$                   & -0.026& 0.050& 0.46& 90.4& 0.85&& -0.075& 0.052& 0.44& 65.1& 0.72\\
      &FLCI, $\SC=2$ &$\hhrmse{2}$              & -0.026& 0.051& 0.43& 95.6& 1.00&& -0.075& 0.052& 0.42& 78.2& 0.83\\
      &FLCI, $\SC=6$ &$\hhrmse{6}$              & -0.013& 0.061& 0.28& 97.2& 1.20&& -0.037& 0.062& 0.27& 94.7& 1.00\\
      &FLCI, $\SC=\Mrot$&$\hhrmse{\Mrot}$       & -0.013& 0.062& 0.29& 96.8& 1.22&& -0.018& 0.072& 0.19& 96.5& 1.18\\
    \end{longtable}
  \end{ThreePartTable}
\end{landscape}

\begin{landscape}
  \renewcommand{\arraystretch}{1.1} %
  \footnotesize
  \begin{ThreePartTable}
    \begin{TableNotes}
    \item \emph{Legend:} SE---average standard error; $E[h]$---average (over
      Monte Carlo draws) bandwidth; Cov---coverage of CIs (in \%); RL---relative
      (to optimal FLCI) length.
    \item \emph{Bandwidth descriptions:} $\hmsedpi$---plugin estimate of
      pointwise MSE optimal bandwidth (bw); $\bmsedpi$---analog for estimate of
      the bias; $\hcedpi$---plugin estimate of coverage error optimal bw;
      $\bcedpi$---analog for estimate of the bias; The implementation of
      \citet{ccf15} is used for all four bws. $\hhrmse{2}$, $\hhrmse{6}$---RMSE
      optimal bw, assuming $\SC=2$, and $\SC=6$, respectively.
      $\hfgrot$---\citet{fg96} rule of thumb; $\hhrmse{\Mrot}$---RMSE optimal
      bw, using rule-of-thumb for $\SC$. 50,000 Monte Carlo draws.
    \end{TableNotes}
    \begin{longtable}{@{}lll rllrl c rllrl ll@{}}
      \caption{Monte Carlo simulation: heteroskedastic errors and beta distribution
      for $x_{i}$}\label{tab:mc5}\\
      &&&\multicolumn{5}{c@{}}{$\SC=2$}&&\multicolumn{5}{c}{$\SC=6$}\\
      \cmidrule(rl){4-8}\cmidrule(rl){10-14}
      &Method & Bandwidth & Bias& SE &   $E[h]$&   Cov&   RL&&Bias& SE &   $E_{m}[h]$&   Cov&   RL\\
      \midrule
      \endfirsthead%
      \caption*{Monte Carlo simulation: heteroskedastic errors and beta distribution
      for $x_{i}$ (continued)}\\
      &&&\multicolumn{5}{c@{}}{$\SC=2$}&&\multicolumn{5}{c}{$\SC=6$}\\
      \cmidrule(rl){4-8}\cmidrule(rl){10-14}
      &Method & Bandwidth & Bias& SE &   $E[h]$&   Cov&   RL&&Bias& SE &   $E_{m}[h]$&   Cov&   RL\\
      \midrule
      \endhead%
      \endfoot%
      \insertTableNotes%
      \endlastfoot%

      \multicolumn{2}{@{}l}{Design 1}\\
      \cmidrule(r){1-2} \phantom{a}
      &RBC         &$h=\hmsedpi$, $b=\bmsedpi$  & 0.027& 0.050& 0.50& 90.8& 0.90&&  0.062& 0.052& 0.44& 72.1& 0.79\\
      &RBC         &$h=b=\hmsedpi$              & 0.006& 0.059& 0.50& 94.5& 1.05&&  0.011& 0.062& 0.44& 93.0& 0.94\\
      &RBC         &$h=\hcedpi$, $b=\bcedpi$    & 0.009& 0.057& 0.37& 94.0& 1.03&&  0.015& 0.060& 0.31& 92.4& 0.92\\
      &RBC         &$h=b=\hhrmse{2}$            & 0.003& 0.062& 0.44& 94.6& 1.11&&  0.008& 0.062& 0.43& 94.4& 0.95\\
      &RBC         &$h=b=\hhrmse{6}$            & 0.000& 0.075& 0.27& 94.5& 1.35&&  0.000& 0.076& 0.26& 94.5& 1.16\\
      &RBC         &$h=b=\hhrmse{\Mrot}$        & 0.001& 0.083& 0.23& 94.3& 1.50&&  0.001& 0.096& 0.16& 94.0& 1.46\\
      &Conventional&$\hfgrot$                   & 0.029& 0.048& 0.64& 88.3& 0.86&&  0.054& 0.052& 0.36& 78.5& 0.80\\
      &Conventional&$\hcedpi$                   & 0.018& 0.052& 0.37& 92.4& 0.94&&  0.043& 0.055& 0.31& 84.0& 0.83\\
      &FLCI, $\SC=2$ &$\hhrmse{2}$              & 0.025& 0.049& 0.44& 94.9& 1.00&&  0.073& 0.050& 0.43& 76.0& 0.85\\
      &FLCI, $\SC=6$ &$\hhrmse{6}$              & 0.012& 0.057& 0.27& 97.1& 1.18&&  0.033& 0.058& 0.26& 94.7& 1.00\\
      &FLCI, $\SC=\Mrot$&$\hhrmse{\Mrot}$       & 0.008& 0.063& 0.23& 97.0& 1.31&&  0.013& 0.070& 0.16& 96.8& 1.24\\
      \multicolumn{2}{@{}l}{Design 2}\\
      \cmidrule(r){1-2}
      &RBC         &$h=\hmsedpi$, $b=\bmsedpi$  & 0.024& 0.050& 0.50& 91.8& 0.89&&  0.069& 0.050& 0.49& 69.8& 0.76\\
      &RBC         &$h=b=\hmsedpi$              & 0.010& 0.058& 0.50& 94.3& 1.05&&  0.025& 0.059& 0.49& 91.3& 0.90\\
      &RBC         &$h=\hcedpi$, $b=\bcedpi$    & 0.012& 0.057& 0.37& 93.8& 1.02&&  0.032& 0.058& 0.36& 89.0& 0.88\\
      &RBC         &$h=b=\hhrmse{2}$            & 0.006& 0.061& 0.44& 94.5& 1.10&&  0.017& 0.062& 0.44& 93.6& 0.94\\
      &RBC         &$h=b=\hhrmse{6}$            & 0.001& 0.075& 0.27& 94.5& 1.35&&  0.001& 0.075& 0.27& 94.5& 1.15\\
      &RBC         &$h=b=\hhrmse{\Mrot}$        & 0.001& 0.081& 0.25& 94.3& 1.46&&  0.002& 0.085& 0.22& 94.0& 1.29\\
      &Conventional&$\hfgrot$                   & 0.028& 0.047& 0.73& 89.6& 0.84&&  0.072& 0.048& 0.57& 64.1& 0.74\\
      &Conventional&$\hcedpi$                   & 0.018& 0.052& 0.37& 92.7& 0.93&&  0.050& 0.053& 0.36& 80.8& 0.80\\
      &FLCI, $\SC=2$ &$\hhrmse{2}$              & 0.023& 0.049& 0.44& 95.4& 1.00&&  0.068& 0.049& 0.44& 79.3& 0.85\\
      &FLCI, $\SC=6$ &$\hhrmse{6}$              & 0.012& 0.057& 0.27& 97.1& 1.18&&  0.034& 0.057& 0.27& 94.8& 1.00\\
      &FLCI, $\SC=\Mrot$&$\hhrmse{\Mrot}$       & 0.009& 0.061& 0.25& 97.1& 1.27&&  0.023& 0.063& 0.22& 94.8& 1.12\\
      \multicolumn{2}{@{}l}{Design 3}\\
      \cmidrule(r){1-2}
      &RBC         &$h=\hmsedpi$, $b=\bmsedpi$  & -0.027& 0.050& 0.50& 91.0& 0.90&& -0.071& 0.051& 0.47& 69.0& 0.78\\
      &RBC         &$h=b=\hmsedpi$              & -0.008& 0.059& 0.50& 94.5& 1.05&& -0.020& 0.060& 0.47& 92.8& 0.92\\
      &RBC         &$h=\hcedpi$, $b=\bcedpi$    & -0.011& 0.057& 0.37& 93.8& 1.03&& -0.021& 0.060& 0.32& 91.6& 0.92\\
      &RBC         &$h=b=\hhrmse{2}$            & -0.005& 0.061& 0.44& 94.6& 1.10&& -0.016& 0.062& 0.44& 93.7& 0.94\\
      &RBC         &$h=b=\hhrmse{6}$            &  0.000& 0.075& 0.27& 94.5& 1.35&&  0.000& 0.076& 0.27& 94.5& 1.15\\
      &RBC         &$h=b=\hhrmse{\Mrot}$        &  0.000& 0.083& 0.23& 94.3& 1.49&&  0.000& 0.092& 0.17& 94.2& 1.40\\
      &Conventional&$\hfgrot$                   & -0.027& 0.047& 0.70& 89.4& 0.85&& -0.066& 0.050& 0.49& 69.4& 0.76\\
      &Conventional&$\hcedpi$                   & -0.017& 0.052& 0.37& 93.0& 0.93&& -0.043& 0.054& 0.32& 84.5& 0.82\\
      &FLCI, $\SC=2$ &$\hhrmse{2}$              & -0.022& 0.049& 0.44& 95.5& 1.00&& -0.067& 0.049& 0.44& 79.9& 0.85\\
      &FLCI, $\SC=6$ &$\hhrmse{6}$              & -0.011& 0.057& 0.27& 97.2& 1.18&& -0.032& 0.057& 0.27& 94.9& 1.00\\
      &FLCI, $\SC=\Mrot$&$\hhrmse{\Mrot}$       & -0.008& 0.062& 0.23& 97.1& 1.30&& -0.014& 0.068& 0.17& 96.8& 1.21\\
    \end{longtable}
  \end{ThreePartTable}
\end{landscape}

\begin{landscape}
  \renewcommand{\arraystretch}{1.1} %
  \footnotesize
  \begin{ThreePartTable}
    \begin{TableNotes}
    \item \emph{Legend:} SE---average standard error; $E[h]$---average (over
      Monte Carlo draws) bandwidth; Cov---coverage of CIs (in \%); RL---relative
      (to optimal FLCI) length.
    \item \emph{Bandwidth descriptions:} $\hmsedpi$---plugin estimate of
      pointwise MSE optimal bandwidth (bw); $\bmsedpi$---analog for estimate of
      the bias; $\hcedpi$---plugin estimate of coverage error optimal bw;
      $\bcedpi$---analog for estimate of the bias; The implementation of
      \citet{ccf15} is used for all four bws. $\hhrmse{2}$, $\hhrmse{6}$---RMSE
      optimal bw, assuming $\SC=2$, and $\SC=6$, respectively.
      $\hfgrot$---\citet{fg96} rule of thumb; $\hhrmse{\Mrot}$---RMSE optimal
      bw, using rule-of-thumb for $\SC$. 50,000 Monte Carlo draws.
    \end{TableNotes}
    \begin{longtable}{@{}lll rllrl c rllrl ll@{}}
      \caption{Monte Carlo simulation: log-normal errors}\label{tab:mc6}\\
      &&&\multicolumn{5}{c@{}}{$\SC=2$}&&\multicolumn{5}{c}{$\SC=6$}\\
      \cmidrule(rl){4-8}\cmidrule(rl){10-14}
      &Method & Bandwidth & Bias& SE &   $E[h]$&   Cov&   RL&&Bias& SE &   $E_{m}[h]$&   Cov&   RL\\
      \midrule
      \endfirsthead%
      \caption*{Monte Carlo simulation: log-normal errors (continued)}\\
      &&&\multicolumn{5}{c@{}}{$\SC=2$}&&\multicolumn{5}{c}{$\SC=6$}\\
      \cmidrule(rl){4-8}\cmidrule(rl){10-14}
      &Method & Bandwidth & Bias& SE &   $E[h]$&   Cov&   RL&&Bias& SE &   $E_{m}[h]$&   Cov&   RL\\
      \midrule
      \endhead%
      \endfoot%
      \insertTableNotes%
      \endlastfoot%

      \multicolumn{2}{@{}l}{Design 1}\\
      \cmidrule(r){1-2} \phantom{a}
      &RBC         &$h=\hmsedpi$, $b=\bmsedpi$  & 0.062& 0.034& 0.73& 57.3& 0.73&&  0.151& 0.035& 0.60&  0.2& 0.62\\
      &RBC         &$h=b=\hmsedpi$              & 0.022& 0.041& 0.73& 94.5& 0.88&&  0.036& 0.045& 0.60& 91.5& 0.78\\
      &RBC         &$h=\hcedpi$, $b=\bcedpi$    & 0.042& 0.037& 0.55& 83.0& 0.79&&  0.111& 0.037& 0.50& 18.8& 0.66\\
      &RBC         &$h=b=\hhrmse{2}$            & 0.001& 0.058& 0.35& 91.1& 1.24&&  0.003& 0.057& 0.35& 91.5& 1.01\\
      &RBC         &$h=b=\hhrmse{6}$            & 0.000& 0.070& 0.23& 89.7& 1.52&&  0.000& 0.070& 0.23& 89.6& 1.23\\
      &RBC         &$h=b=\hhrmse{\Mrot}$        & 0.000& 0.072& 0.22& 89.3& 1.56&&  0.000& 0.087& 0.14& 87.6& 1.54\\
      &Conventional&$\hfgrot$                   & 0.032& 0.034& 0.55& 78.7& 0.74&&  0.048& 0.044& 0.31& 82.2& 0.77\\
      &Conventional&$\hcedpi$                   & 0.041& 0.034& 0.55& 81.2& 0.73&&  0.107& 0.035& 0.50& 16.4& 0.61\\
      &FLCI, $\SC=2$ &$\hhrmse{2}$              & 0.021& 0.041& 0.35& 96.2& 1.00&&  0.062& 0.041& 0.35& 79.1& 0.81\\
      &FLCI, $\SC=6$ &$\hhrmse{6}$              & 0.009& 0.051& 0.23& 95.5& 1.23&&  0.027& 0.050& 0.23& 96.2& 1.00\\
      &FLCI, $\SC=\Mrot$&$\hhrmse{\Mrot}$       & 0.007& 0.052& 0.22& 94.6& 1.27&&  0.010& 0.064& 0.14& 94.8& 1.27\\
      \multicolumn{2}{@{}l}{Design 2}\\
      \cmidrule(r){1-2}
      &RBC         &$h=\hmsedpi$, $b=\bmsedpi$  &  0.042& 0.033& 0.76& 80.5& 0.72&&  0.127& 0.033& 0.76&  2.5& 0.59\\
      &RBC         &$h=b=\hmsedpi$              &  0.024& 0.040& 0.76& 93.3& 0.86&&  0.073& 0.040& 0.76& 52.2& 0.70\\
      &RBC         &$h=\hcedpi$, $b=\bcedpi$    &  0.033& 0.036& 0.56& 89.8& 0.79&&  0.097& 0.037& 0.55& 19.0& 0.64\\
      &RBC         &$h=b=\hhrmse{2}$            &  0.002& 0.057& 0.35& 91.3& 1.24&&  0.006& 0.057& 0.35& 91.9& 1.01\\
      &RBC         &$h=b=\hhrmse{6}$            &  0.000& 0.070& 0.23& 89.6& 1.52&&  0.000& 0.070& 0.23& 89.7& 1.23\\
      &RBC         &$h=b=\hhrmse{\Mrot}$        &  0.002& 0.063& 0.29& 90.3& 1.37&&  0.000& 0.075& 0.19& 89.0& 1.32\\
      &Conventional&$\hfgrot$                   &  0.032& 0.030& 0.76& 77.4& 0.66&&  0.072& 0.037& 0.43& 53.3& 0.66\\
      &Conventional&$\hcedpi$                   &  0.034& 0.033& 0.56& 87.8& 0.72&&  0.099& 0.034& 0.55& 11.4& 0.59\\
      &FLCI, $\SC=2$ &$\hhrmse{2}$              &  0.019& 0.041& 0.35& 96.4& 1.00&&  0.059& 0.041& 0.35& 83.5& 0.81\\
      &FLCI, $\SC=6$ &$\hhrmse{6}$              &  0.009& 0.051& 0.23& 95.6& 1.23&&  0.027& 0.051& 0.23& 96.5& 1.00\\
      &FLCI, $\SC=\Mrot$&$\hhrmse{\Mrot}$       &  0.013& 0.046& 0.29& 94.7& 1.11&&  0.019& 0.055& 0.19& 94.9& 1.08\\
      \multicolumn{2}{@{}l}{Design 3}\\
      \cmidrule(r){1-2}
      &RBC         &$h=\hmsedpi$, $b=\bmsedpi$  & -0.043& 0.034& 0.76& 67.1& 0.73&& -0.121& 0.034& 0.72& 12.3& 0.60\\
      &RBC         &$h=b=\hmsedpi$              & -0.024& 0.040& 0.76& 83.9& 0.86&& -0.065& 0.041& 0.72& 56.1& 0.72\\
      &RBC         &$h=\hcedpi$, $b=\bcedpi$    & -0.030& 0.037& 0.55& 78.7& 0.80&& -0.077& 0.039& 0.51& 45.8& 0.69\\
      &RBC         &$h=b=\hhrmse{2}$            & -0.002& 0.057& 0.36& 90.7& 1.23&& -0.006& 0.057& 0.36& 90.1& 1.00\\
      &RBC         &$h=b=\hhrmse{6}$            &  0.000& 0.069& 0.23& 89.7& 1.50&&  0.000& 0.069& 0.23& 89.7& 1.22\\
      &RBC         &$h=b=\hhrmse{\Mrot}$        &  0.000& 0.068& 0.25& 89.6& 1.48&&  0.000& 0.083& 0.16& 88.1& 1.46\\
      &Conventional&$\hfgrot$                   & -0.031& 0.033& 0.71& 69.3& 0.72&& -0.063& 0.040& 0.39& 54.9& 0.71\\
      &Conventional&$\hcedpi$                   & -0.033& 0.034& 0.55& 74.2& 0.73&& -0.093& 0.035& 0.51& 27.8& 0.61\\
      &FLCI, $\SC=2$ &$\hhrmse{2}$              & -0.020& 0.041& 0.36& 89.9& 1.00&& -0.059& 0.041& 0.36& 70.1& 0.81\\
      &FLCI, $\SC=6$ &$\hhrmse{6}$              & -0.009& 0.050& 0.23& 93.0& 1.23&& -0.027& 0.050& 0.23& 88.4& 1.00\\
      &FLCI, $\SC=\Mrot$&$\hhrmse{\Mrot}$       & -0.010& 0.050& 0.25& 91.8& 1.22&& -0.012& 0.060& 0.16& 91.6& 1.21\\
    \end{longtable}
  \end{ThreePartTable}
\end{landscape}

\begin{landscape}
  \renewcommand{\arraystretch}{1.1} %
  \footnotesize
  \begin{ThreePartTable}
    \begin{TableNotes}
    \item \emph{Legend:} SE---average standard error; $E[h]$---average (over
      Monte Carlo draws) bandwidth; Cov---coverage of CIs (in \%); RL---relative
      (to optimal FLCI) length.
    \item \emph{Bandwidth descriptions:} $\hmsedpi$---plugin estimate of
      pointwise MSE optimal bandwidth (bw); $\bmsedpi$---analog for estimate of
      the bias; $\hcedpi$---plugin estimate of coverage error optimal bw;
      $\bcedpi$---analog for estimate of the bias; The implementation of
      \citet{ccf15} is used for all four bws. $\hhrmse{2}$, $\hhrmse{6}$---RMSE
      optimal bw, assuming $\SC=2$, and $\SC=6$, respectively.
      $\hfgrot$---\citet{fg96} rule of thumb; $\hhrmse{\Mrot}$---RMSE optimal
      bw, using rule-of-thumb for $\SC$. 50,000 Monte Carlo draws.
    \end{TableNotes}
    \begin{longtable}{@{}lll rllrl c rllrl ll@{}}
      \caption{Monte Carlo simulation: log-normal errors and beta distribution
      for $x_{i}$}\label{tab:mc7}\\
      &&&\multicolumn{5}{c@{}}{$\SC=2$}&&\multicolumn{5}{c}{$\SC=6$}\\
      \cmidrule(rl){4-8}\cmidrule(rl){10-14}
      &Method & Bandwidth & Bias& SE &   $E[h]$&   Cov&   RL&&Bias& SE &   $E_{m}[h]$&   Cov&   RL\\
      \midrule
      \endfirsthead%
      \caption*{Monte Carlo simulation: log-normal errors and beta distribution
      for $x_{i}$ (continued)}\\
      &&&\multicolumn{5}{c@{}}{$\SC=2$}&&\multicolumn{5}{c}{$\SC=6$}\\
      \cmidrule(rl){4-8}\cmidrule(rl){10-14}
      &Method & Bandwidth & Bias& SE &   $E[h]$&   Cov&   RL&&Bias& SE &   $E_{m}[h]$&   Cov&   RL\\
      \midrule
      \endhead%
      \endfoot%
      \insertTableNotes%
      \endlastfoot%

      \multicolumn{2}{@{}l}{Design 1}\\
      \cmidrule(r){1-2} \phantom{a}
      &RBC         &$h=\hmsedpi$, $b=\bmsedpi$  & 0.027& 0.035& 0.55& 88.4& 0.82&&  0.049& 0.039& 0.41& 65.1& 0.73\\
      &RBC         &$h=b=\hmsedpi$              & 0.006& 0.041& 0.55& 91.7& 0.96&&  0.007& 0.047& 0.41& 87.0& 0.90\\
      &RBC         &$h=\hcedpi$, $b=\bcedpi$    & 0.011& 0.041& 0.46& 92.2& 0.94&&  0.015& 0.045& 0.41& 88.9& 0.85\\
      &RBC         &$h=b=\hhrmse{2}$            & 0.001& 0.052& 0.36& 91.7& 1.19&&  0.004& 0.051& 0.36& 91.9& 0.98\\
      &RBC         &$h=b=\hhrmse{6}$            & 0.000& 0.064& 0.22& 90.1& 1.48&&  0.000& 0.064& 0.22& 90.2& 1.21\\
      &RBC         &$h=b=\hhrmse{\Mrot}$        & 0.000& 0.067& 0.21& 89.8& 1.56&&  0.000& 0.081& 0.13& 88.5& 1.54\\
      &Conventional&$\hfgrot$                   & 0.024& 0.035& 0.52& 87.9& 0.82&&  0.037& 0.043& 0.28& 87.9& 0.81\\
      &Conventional&$\hcedpi$                   & 0.026& 0.036& 0.46& 91.3& 0.82&&  0.064& 0.037& 0.41& 51.5& 0.70\\
      &FLCI, $\SC=2$ &$\hhrmse{2}$              & 0.019& 0.039& 0.36& 96.2& 1.00&&  0.055& 0.039& 0.36& 81.0& 0.82\\
      &FLCI, $\SC=6$ &$\hhrmse{6}$              & 0.008& 0.047& 0.22& 95.7& 1.22&&  0.024& 0.047& 0.22& 96.1& 1.00\\
      &FLCI, $\SC=\Mrot$&$\hhrmse{\Mrot}$       & 0.006& 0.049& 0.21& 95.0& 1.28&&  0.009& 0.059& 0.13& 94.9& 1.27\\
      \multicolumn{2}{@{}l}{Design 2}\\
      \cmidrule(r){1-2}
      &RBC         &$h=\hmsedpi$, $b=\bmsedpi$  & 0.024& 0.035& 0.56& 91.4& 0.81&&  0.067& 0.036& 0.52& 49.0& 0.68\\
      &RBC         &$h=b=\hmsedpi$              & 0.010& 0.041& 0.56& 92.2& 0.95&&  0.028& 0.042& 0.52& 81.6& 0.81\\
      &RBC         &$h=\hcedpi$, $b=\bcedpi$    & 0.015& 0.040& 0.47& 92.9& 0.93&&  0.037& 0.042& 0.45& 81.9& 0.79\\
      &RBC         &$h=b=\hhrmse{2}$            & 0.002& 0.052& 0.36& 91.8& 1.19&&  0.007& 0.051& 0.36& 92.4& 0.98\\
      &RBC         &$h=b=\hhrmse{6}$            & 0.000& 0.064& 0.22& 90.0& 1.48&&  0.000& 0.064& 0.22& 90.1& 1.21\\
      &RBC         &$h=b=\hhrmse{\Mrot}$        & 0.001& 0.062& 0.25& 90.4& 1.44&&  0.000& 0.069& 0.19& 89.5& 1.31\\
      &Conventional&$\hfgrot$                   & 0.026& 0.033& 0.69& 88.9& 0.77&&  0.059& 0.037& 0.43& 64.9& 0.70\\
      &Conventional&$\hcedpi$                   & 0.023& 0.036& 0.47& 92.5& 0.82&&  0.068& 0.036& 0.45& 48.7& 0.68\\
      &FLCI, $\SC=2$ &$\hhrmse{2}$              & 0.018& 0.039& 0.36& 96.4& 1.00&&  0.053& 0.039& 0.36& 84.5& 0.82\\
      &FLCI, $\SC=6$ &$\hhrmse{6}$              & 0.008& 0.047& 0.22& 95.8& 1.22&&  0.024& 0.047& 0.22& 96.4& 1.00\\
      &FLCI, $\SC=\Mrot$&$\hhrmse{\Mrot}$       & 0.009& 0.046& 0.25& 95.3& 1.19&&  0.018& 0.050& 0.19& 95.0& 1.08\\
      \multicolumn{2}{@{}l}{Design 3}\\
      \cmidrule(r){1-2}
      &RBC         &$h=\hmsedpi$, $b=\bmsedpi$  & -0.032& 0.035& 0.55& 77.2& 0.82&& -0.068& 0.038& 0.47& 50.6& 0.72\\
      &RBC         &$h=b=\hmsedpi$              & -0.015& 0.041& 0.55& 88.9& 0.95&& -0.025& 0.045& 0.47& 85.4& 0.85\\
      &RBC         &$h=\hcedpi$, $b=\bcedpi$    & -0.014& 0.041& 0.44& 87.8& 0.96&& -0.022& 0.045& 0.37& 85.4& 0.86\\
      &RBC         &$h=b=\hhrmse{2}$            & -0.002& 0.051& 0.36& 91.0& 1.18&& -0.006& 0.051& 0.36& 90.5& 0.97\\
      &RBC         &$h=b=\hhrmse{6}$            &  0.000& 0.064& 0.23& 90.1& 1.47&&  0.000& 0.064& 0.23& 90.1& 1.21\\
      &RBC         &$h=b=\hhrmse{\Mrot}$        &  0.000& 0.066& 0.22& 89.7& 1.53&&  0.000& 0.077& 0.15& 88.7& 1.47\\
      &Conventional&$\hfgrot$                   & -0.025& 0.035& 0.64& 77.5& 0.81&& -0.050& 0.040& 0.36& 63.6& 0.76\\
      &Conventional&$\hcedpi$                   & -0.023& 0.036& 0.44& 82.3& 0.84&& -0.055& 0.038& 0.37& 62.4& 0.73\\
      &FLCI, $\SC=2$ &$\hhrmse{2}$              & -0.018& 0.039& 0.36& 90.1& 1.00&& -0.053& 0.039& 0.36& 71.1& 0.82\\
      &FLCI, $\SC=6$ &$\hhrmse{6}$              & -0.008& 0.047& 0.23& 93.1& 1.22&& -0.024& 0.047& 0.23& 88.6& 1.00\\
      &FLCI, $\SC=\Mrot$&$\hhrmse{\Mrot}$       & -0.007& 0.049& 0.22& 92.6& 1.27&& -0.011& 0.056& 0.15& 91.9& 1.22\\
    \end{longtable}
  \end{ThreePartTable}
\end{landscape}

\begin{landscape}
  \renewcommand{\arraystretch}{1.1} %
  \footnotesize
  \begin{ThreePartTable}
    \begin{TableNotes}
    \item \emph{Legend:} SE---average standard error; $E[h]$---average (over
      Monte Carlo draws) bandwidth; Cov---coverage of CIs (in \%); RL---relative
      (to optimal FLCI) length.
    \item \emph{Bandwidth descriptions:} $\hmsedpi$---plugin estimate of
      pointwise MSE optimal bandwidth (bw); $\bmsedpi$---analog for estimate of
      the bias; $\hcedpi$---plugin estimate of coverage error optimal bw;
      $\bcedpi$---analog for estimate of the bias; The implementation of
      \citet{ccf15} is used for all four bws. $\hhrmse{2}$, $\hhrmse{6}$---RMSE
      optimal bw, assuming $\SC=2$, and $\SC=6$, respectively.
      $\hfgrot$---\citet{fg96} rule of thumb; $\hhrmse{\Mrot}$---RMSE optimal
      bw, using rule-of-thumb for $\SC$. 50,000 Monte Carlo draws.
    \end{TableNotes}
    \begin{longtable}{@{}lll rllrl c rllrl ll@{}}
      \caption{Monte Carlo simulation: $\sd(u_{i})=1/4$}\label{tab:mc8}\\
      &&&\multicolumn{5}{c@{}}{$\SC=2$}&&\multicolumn{5}{c}{$\SC=6$}\\
      \cmidrule(rl){4-8}\cmidrule(rl){10-14}
      &Method & Bandwidth & Bias& SE &   $E[h]$&   Cov&   RL&&Bias& SE &   $E_{m}[h]$&   Cov&   RL\\
      \midrule
      \endfirsthead%
      \caption*{Monte Carlo simulation: $\sd(u_{i})=1/4$ (continued)}\\
      &&&\multicolumn{5}{c@{}}{$\SC=2$}&&\multicolumn{5}{c}{$\SC=6$}\\
      \cmidrule(rl){4-8}\cmidrule(rl){10-14}
      &Method & Bandwidth & Bias& SE &   $E[h]$&   Cov&   RL&&Bias& SE &   $E_{m}[h]$&   Cov&   RL\\
      \midrule
      \endhead%
      \endfoot%
      \insertTableNotes%
      \endlastfoot%

      \multicolumn{2}{@{}l}{Design 1}\\
      \cmidrule(r){1-2} \phantom{a}
      &RBC         &$h=\hmsedpi$, $b=\bmsedpi$  & 0.058& 0.018& 0.68&  4.5& 0.64&&  0.116& 0.020& 0.49&  0.0& 0.57\\
      &RBC         &$h=b=\hmsedpi$              & 0.019& 0.022& 0.68& 90.2& 0.80&&  0.017& 0.026& 0.49& 91.2& 0.76\\
      &RBC         &$h=\hcedpi$, $b=\bcedpi$    & 0.024& 0.022& 0.39& 77.4& 0.78&&  0.041& 0.025& 0.28& 60.8& 0.72\\
      &RBC         &$h=b=\hhrmse{2}$            & 0.000& 0.035& 0.27& 94.3& 1.26&&  0.000& 0.035& 0.28& 94.3& 1.01\\
      &RBC         &$h=b=\hhrmse{6}$            & 0.000& 0.043& 0.18& 93.8& 1.57&&  0.000& 0.043& 0.18& 93.8& 1.26\\
      &RBC         &$h=b=\hhrmse{\Mrot}$        & 0.000& 0.045& 0.16& 93.7& 1.64&&  0.000& 0.056& 0.11& 93.0& 1.62\\
      &Conventional&$\hfgrot$                   & 0.022& 0.021& 0.38& 76.3& 0.77&&  0.028& 0.026& 0.24& 78.9& 0.77\\
      &Conventional&$\hcedpi$                   & 0.023& 0.021& 0.39& 76.9& 0.76&&  0.040& 0.024& 0.28& 61.0& 0.71\\
      &FLCI, $\SC=2$ &$\hhrmse{2}$              & 0.013& 0.025& 0.27& 94.7& 1.00&&  0.038& 0.025& 0.28& 73.9& 0.80\\
      &FLCI, $\SC=6$ &$\hhrmse{6}$              & 0.005& 0.031& 0.18& 96.5& 1.25&&  0.016& 0.031& 0.18& 94.5& 1.00\\
      &FLCI, $\SC=\Mrot$&$\hhrmse{\Mrot}$       & 0.004& 0.032& 0.16& 96.2& 1.30&&  0.006& 0.040& 0.11& 96.2& 1.29\\
      \multicolumn{2}{@{}l}{Design 2}\\
      \cmidrule(r){1-2}
      &RBC         &$h=\hmsedpi$, $b=\bmsedpi$  & 0.043& 0.017& 0.77& 28.8& 0.63&&  0.128& 0.017& 0.76&  0.0& 0.51\\
      &RBC         &$h=b=\hmsedpi$              & 0.026& 0.021& 0.77& 76.5& 0.75&&  0.075& 0.021& 0.76&  5.5& 0.61\\
      &RBC         &$h=\hcedpi$, $b=\bcedpi$    & 0.026& 0.020& 0.47& 70.2& 0.73&&  0.061& 0.022& 0.37& 24.8& 0.64\\
      &RBC         &$h=b=\hhrmse{2}$            & 0.000& 0.035& 0.27& 94.3& 1.26&&  0.001& 0.035& 0.28& 94.3& 1.01\\
      &RBC         &$h=b=\hhrmse{6}$            & 0.000& 0.043& 0.18& 93.8& 1.57&&  0.000& 0.043& 0.18& 93.8& 1.25\\
      &RBC         &$h=b=\hhrmse{\Mrot}$        & 0.000& 0.038& 0.24& 93.5& 1.38&&  0.000& 0.047& 0.15& 93.6& 1.37\\
      &Conventional&$\hfgrot$                   & 0.029& 0.018& 0.57& 58.5& 0.66&&  0.048& 0.023& 0.32& 46.0& 0.67\\
      &Conventional&$\hcedpi$                   & 0.027& 0.019& 0.47& 66.1& 0.69&&  0.062& 0.021& 0.37& 21.8& 0.62\\
      &FLCI, $\SC=2$ &$\hhrmse{2}$              & 0.012& 0.025& 0.27& 94.8& 1.00&&  0.039& 0.024& 0.28& 73.6& 0.80\\
      &FLCI, $\SC=6$ &$\hhrmse{6}$              & 0.005& 0.031& 0.18& 96.5& 1.25&&  0.016& 0.030& 0.18& 94.6& 1.00\\
      &FLCI, $\SC=\Mrot$&$\hhrmse{\Mrot}$       & 0.009& 0.027& 0.24& 92.9& 1.09&&  0.011& 0.033& 0.15& 95.4& 1.10\\
      \multicolumn{2}{@{}l}{Design 3}\\
      \cmidrule(r){1-2}
      &RBC         &$h=\hmsedpi$, $b=\bmsedpi$  & -0.042& 0.017& 0.76& 32.3& 0.63&& -0.107& 0.018& 0.63&  1.7& 0.54\\
      &RBC         &$h=b=\hmsedpi$              & -0.023& 0.021& 0.76& 77.5& 0.76&& -0.048& 0.023& 0.63& 45.4& 0.67\\
      &RBC         &$h=\hcedpi$, $b=\bcedpi$    & -0.024& 0.021& 0.46& 75.0& 0.75&& -0.046& 0.023& 0.35& 49.5& 0.68\\
      &RBC         &$h=b=\hhrmse{2}$            &  0.000& 0.035& 0.27& 94.3& 1.26&& -0.001& 0.035& 0.28& 94.4& 1.01\\
      &RBC         &$h=b=\hhrmse{6}$            &  0.000& 0.043& 0.18& 93.8& 1.57&&  0.000& 0.043& 0.18& 93.8& 1.25\\
      &RBC         &$h=b=\hhrmse{\Mrot}$        &  0.000& 0.042& 0.19& 93.8& 1.54&&  0.000& 0.053& 0.12& 93.3& 1.53\\
      &Conventional&$\hfgrot$                   & -0.026& 0.019& 0.49& 65.2& 0.69&& -0.041& 0.024& 0.29& 58.7& 0.70\\
      &Conventional&$\hcedpi$                   & -0.026& 0.019& 0.46& 66.5& 0.69&& -0.057& 0.022& 0.35& 30.4& 0.63\\
      &FLCI, $\SC=2$ &$\hhrmse{2}$              & -0.012& 0.025& 0.27& 94.7& 1.00&& -0.038& 0.025& 0.28& 73.9& 0.80\\
      &FLCI, $\SC=6$ &$\hhrmse{6}$              & -0.005& 0.031& 0.18& 96.4& 1.25&& -0.016& 0.031& 0.18& 94.5& 1.00\\
      &FLCI, $\SC=\Mrot$&$\hhrmse{\Mrot}$       & -0.006& 0.030& 0.19& 95.8& 1.22&& -0.007& 0.037& 0.12& 96.2& 1.23\\
    \end{longtable}
  \end{ThreePartTable}
\end{landscape}

\begin{landscape}
  \renewcommand{\arraystretch}{1.1} %
  \footnotesize
  \begin{ThreePartTable}
    \begin{TableNotes}
    \item \emph{Legend:} SE---average standard error; $E[h]$---average (over
      Monte Carlo draws) bandwidth; Cov---coverage of CIs (in \%); RL---relative
      (to optimal FLCI) length.
    \item \emph{Bandwidth descriptions:} $\hmsedpi$---plugin estimate of
      pointwise MSE optimal bandwidth (bw); $\bmsedpi$---analog for estimate of
      the bias; $\hcedpi$---plugin estimate of coverage error optimal bw;
      $\bcedpi$---analog for estimate of the bias; The implementation of
      \citet{ccf15} is used for all four bws. $\hhrmse{2}$, $\hhrmse{6}$---RMSE
      optimal bw, assuming $\SC=2$, and $\SC=6$, respectively.
      $\hfgrot$---\citet{fg96} rule of thumb; $\hhrmse{\Mrot}$---RMSE optimal
      bw, using rule-of-thumb for $\SC$. 50,000 Monte Carlo draws.
    \end{TableNotes}
    \begin{longtable}{@{}lll rllrl c rllrl ll@{}}
      \caption{Monte Carlo simulation: smooth DGP with $\lambda=40$}\label{tab:mc9}\\
      &&&\multicolumn{5}{c@{}}{$\SC=2$}&&\multicolumn{5}{c}{$\SC=6$}\\
      \cmidrule(rl){4-8}\cmidrule(rl){10-14}
      &Method & Bandwidth & Bias& SE &   $E[h]$&   Cov&   RL&&Bias& SE &   $E_{m}[h]$&   Cov&   RL\\
      \midrule
      \endfirsthead%
      \caption*{Monte Carlo simulation: smooth DGP, with $\lambda=40$ (continued)}\\
      &&&\multicolumn{5}{c@{}}{$\SC=2$}&&\multicolumn{5}{c}{$\SC=6$}\\
      \cmidrule(rl){4-8}\cmidrule(rl){10-14}
      &Method & Bandwidth & Bias& SE &   $E[h]$&   Cov&   RL&&Bias& SE &   $E_{m}[h]$&   Cov&   RL\\
      \midrule
      \endhead%
      \endfoot%
      \insertTableNotes%
      \endlastfoot%

      \multicolumn{2}{@{}l}{Design 1}\\
      \cmidrule(r){1-2} \phantom{a}
      &RBC         &$h=\hmsedpi$, $b=\bmsedpi$  & 0.062& 0.035& 0.74& 57.7& 0.73&&  0.151& 0.036& 0.61&  0.2& 0.61\\
      &RBC         &$h=b=\hmsedpi$              & 0.024& 0.042& 0.74& 93.3& 0.88&&  0.039& 0.047& 0.61& 90.1& 0.78\\
      &RBC         &$h=\hcedpi$, $b=\bcedpi$    & 0.029& 0.041& 0.46& 86.1& 0.85&&  0.059& 0.045& 0.34& 72.6& 0.76\\
      &RBC         &$h=b=\hhrmse{2}$            & 0.001& 0.061& 0.36& 94.5& 1.27&&  0.003& 0.061& 0.36& 94.5& 1.01\\
      &RBC         &$h=b=\hhrmse{6}$            & 0.000& 0.076& 0.23& 94.2& 1.58&&  0.000& 0.075& 0.23& 94.2& 1.26\\
      &RBC         &$h=b=\hhrmse{\Mrot}$        & 0.000& 0.078& 0.22& 93.9& 1.63&&  0.000& 0.097& 0.14& 93.4& 1.63\\
      &Conventional&$\hfgrot$                   & 0.032& 0.036& 0.57& 77.0& 0.75&&  0.050& 0.046& 0.32& 76.9& 0.77\\
      &Conventional&$\hcedpi$                   & 0.028& 0.039& 0.46& 85.7& 0.80&&  0.057& 0.044& 0.34& 72.8& 0.74\\
      &FLCI, $\SC=2$ &$\hhrmse{2}$              & 0.021& 0.043& 0.36& 95.0& 1.00&&  0.063& 0.043& 0.36& 76.2& 0.80\\
      &FLCI, $\SC=6$ &$\hhrmse{6}$              & 0.009& 0.054& 0.23& 96.6& 1.25&&  0.027& 0.053& 0.23& 94.7& 1.00\\
      &FLCI, $\SC=\Mrot$&$\hhrmse{\Mrot}$       & 0.008& 0.055& 0.22& 95.6& 1.29&&  0.010& 0.069& 0.14& 96.3& 1.29\\
      \multicolumn{2}{@{}l}{Design 2}\\
      \cmidrule(r){1-2}
      &RBC         &$h=\hmsedpi$, $b=\bmsedpi$  &  0.041& 0.035& 0.77& 77.4& 0.72&&  0.124& 0.035& 0.77&  5.4& 0.58\\
      &RBC         &$h=b=\hmsedpi$              &  0.024& 0.042& 0.77& 91.4& 0.87&&  0.072& 0.042& 0.77& 58.0& 0.70\\
      &RBC         &$h=\hcedpi$, $b=\bcedpi$    &  0.026& 0.040& 0.49& 88.1& 0.83&&  0.071& 0.041& 0.44& 56.4& 0.69\\
      &RBC         &$h=b=\hhrmse{2}$            &  0.002& 0.061& 0.36& 94.5& 1.27&&  0.007& 0.061& 0.36& 94.4& 1.01\\
      &RBC         &$h=b=\hhrmse{6}$            &  0.000& 0.076& 0.23& 94.2& 1.58&&  0.000& 0.075& 0.23& 94.2& 1.26\\
      &RBC         &$h=b=\hhrmse{\Mrot}$        &  0.002& 0.068& 0.30& 94.0& 1.43&&  0.000& 0.083& 0.20& 93.8& 1.38\\
      &Conventional&$\hfgrot$                   &  0.030& 0.032& 0.78& 76.0& 0.67&&  0.071& 0.040& 0.44& 54.7& 0.66\\
      &Conventional&$\hcedpi$                   &  0.027& 0.037& 0.49& 86.7& 0.77&&  0.072& 0.039& 0.44& 52.5& 0.66\\
      &FLCI, $\SC=2$ &$\hhrmse{2}$              &  0.019& 0.043& 0.36& 95.3& 1.00&&  0.058& 0.043& 0.36& 80.0& 0.80\\
      &FLCI, $\SC=6$ &$\hhrmse{6}$              &  0.009& 0.054& 0.23& 96.6& 1.25&&  0.027& 0.053& 0.23& 94.8& 1.00\\
      &FLCI, $\SC=\Mrot$&$\hhrmse{\Mrot}$       &  0.013& 0.048& 0.30& 94.5& 1.13&&  0.019& 0.059& 0.20& 94.4& 1.10\\
      \multicolumn{2}{@{}l}{Design 3}\\
      \cmidrule(r){1-2}
      &RBC         &$h=\hmsedpi$, $b=\bmsedpi$  & -0.041& 0.035& 0.77& 77.0& 0.72&& -0.119& 0.035& 0.74& 11.0& 0.59\\
      &RBC         &$h=b=\hmsedpi$              & -0.023& 0.042& 0.77& 91.3& 0.87&& -0.064& 0.042& 0.74& 62.4& 0.71\\
      &RBC         &$h=\hcedpi$, $b=\bcedpi$    & -0.025& 0.040& 0.49& 88.6& 0.83&& -0.061& 0.043& 0.43& 66.1& 0.71\\
      &RBC         &$h=b=\hhrmse{2}$            & -0.002& 0.061& 0.36& 94.5& 1.27&& -0.007& 0.061& 0.36& 94.3& 1.01\\
      &RBC         &$h=b=\hhrmse{6}$            &  0.000& 0.076& 0.23& 94.2& 1.58&& -0.001& 0.075& 0.23& 94.2& 1.26\\
      &RBC         &$h=b=\hhrmse{\Mrot}$        & -0.001& 0.074& 0.25& 94.2& 1.54&&  0.000& 0.092& 0.16& 93.6& 1.54\\
      &Conventional&$\hfgrot$                   & -0.030& 0.033& 0.72& 75.9& 0.69&& -0.062& 0.042& 0.39& 63.4& 0.70\\
      &Conventional&$\hcedpi$                   & -0.027& 0.037& 0.49& 86.4& 0.78&& -0.071& 0.040& 0.43& 53.9& 0.66\\
      &FLCI, $\SC=2$ &$\hhrmse{2}$              & -0.019& 0.043& 0.36& 95.1& 1.00&& -0.058& 0.043& 0.36& 79.8& 0.80\\
      &FLCI, $\SC=6$ &$\hhrmse{6}$              & -0.009& 0.054& 0.23& 96.5& 1.25&& -0.027& 0.053& 0.23& 94.7& 1.00\\
      &FLCI, $\SC=\Mrot$&$\hhrmse{\Mrot}$       & -0.010& 0.052& 0.25& 95.7& 1.22&& -0.013& 0.065& 0.16& 96.1& 1.22\\
    \end{longtable}
  \end{ThreePartTable}
\end{landscape}

\begin{figure}[p]
  \centering%
  \input{./figure_s1a.tex}
  \input{./figure_s1b.tex}
  \caption{Optimal equivalent kernels for Taylor class $\FSY{p}(\SC)$ on the
    interior, and in the boundary, rescaled to be supported on
    $[0,1]$ on the boundary and $[-1,1]$ in the interior.}\label{fig:sy}
\end{figure}

\begin{figure}[p]
  \centering%
  \input{./figure_s2a.tex}%
  \input{./figure_s2b.tex}
  \caption{Optimal equivalent kernels for Hölder class $\FHol{2}(\SC)$ on the
    interior, and in the boundary, rescaled to be supported on
    $[0,1]$ on the boundary and $[-1,1]$ in the interior.}\label{fig:hol2}
\end{figure}